\documentclass[pra,onecolumn,superscriptaddress,nofootinbib,11pt,tightenlines]{revtex4}
\usepackage[a4paper, left=0.8in, right=0.8in, top=0.9in, bottom=0.9in]{geometry}
\usepackage{lmodern}

\usepackage{cmap} 
\usepackage[utf8]{inputenc}
\usepackage[english]{babel}
\usepackage[T1]{fontenc}
\usepackage{amsmath,amssymb,amsthm,mathrsfs,amsfonts}
\usepackage{booktabs,siunitx}
\usepackage{bm}
\usepackage{cancel}
\usepackage{xspace}
\usepackage{stmaryrd}
\usepackage{tikz-cd}
\usepackage{enumerate}

\usepackage{comment}

\usepackage{physics}
\usepackage{complexity}
\usepackage{appendix}
\usepackage{mathtools}
\usepackage[dvipsnames]{xcolor}
\definecolor{blueviolet}{rgb}{0.2, 0.2, 0.6}
\definecolor{webgreen}{rgb}{0,.5,0}
\definecolor{webbrown}{rgb}{.6,0,0}
\usepackage[
	bookmarks=false,
	colorlinks=true, 
	urlcolor=webbrown,
	linkcolor=blueviolet,
	citecolor=PineGreen,
	pdfstartpage=1,
	pdfstartview={FitH},  
	bookmarksopen=false
	]{hyperref}
\allowdisplaybreaks
\usepackage{tikz}
\usepackage{dsfont}
\usepackage{braket}
\usetikzlibrary{tikzmark}  
\usepackage{natbib}
\usepackage{listings}
\usepackage{makecell}
\usepackage{array}
\usepackage{longtable}
\usepackage{float}
\usepackage{needspace}
\usepackage{titlesec}
\usepackage{multirow}

\titleformat*{\paragraph}{\bfseries}

\newtheorem{corollary}{Corollary}
\newtheorem{definition}{Definition}
\newtheorem{lemma}{Lemma}

\newtheorem{proposition}{Proposition}
\newtheorem{remark}{Remark}

\newtheorem{theorem}{Theorem}

\usepackage{bbm}

\newcommand{\Lrep}[1]{L (#1)}
\newcommand{\Rrep}[1]{R (#1)}
\newcommand{\bin}{\mathtt{b}}
\renewcommand{\b}[1]{\mathtt b(#1)}

\newcommand{\ringR}{\mathcal{R}}
\newcommand{\fieldk}{\mathrm{k}}

\newcommand{\coker}{\operatorname{coker}}

\newcommand{\rowspace}{\operatorname{rowspace}}

\newcommand{\quop}{\textup{\textsc{quop}}\xspace}
\newcommand{\quops}{\textup{\textsc{quop}s}\xspace}
\newcommand{\otimesR}{\otimes_\ringR}

\newcommand{\orbit}{\mathcal{O}}
\renewcommand{\order}[1]{\operatorname{ord}(#1)}
\newcommand{\wt}[1]{\operatorname{wt}(#1)}

\newcommand{\RepCode}{\operatorname{Rep}}
\newcommand{\LP}{\operatorname{LP}}
\definecolor{darkpurple}{HTML}{330A66}

\DeclareFixedFont{\ttb}{T1}{txtt}{bx}{n}{9} 
\DeclareFixedFont{\ttm}{T1}{txtt}{m}{n}{9}  

\DeclareMathOperator{\supp}{\mathrm{supp}}
\DeclareMathOperator{\Tor}{Tor}
\newcommand{\bvec}[1]{\b{\operatorname{vec}(#1)}}

\usetikzlibrary{positioning,arrows.meta,calc,backgrounds}

\definecolor{ink}{HTML}{253346}      
\definecolor{sub}{HTML}{5F6B7A}      
\definecolor{accA}{HTML}{6D72D8}     
\definecolor{accB}{HTML}{7C6EE6}
\definecolor{accC}{HTML}{8B7CF6}
\definecolor{accD}{HTML}{A78BFA}
\definecolor{accE}{HTML}{0077B6}     
\definecolor{accF}{HTML}{118AB2}
\definecolor{accG}{HTML}{D95F02}     
\definecolor{accH}{HTML}{E76F00}
\definecolor{accI}{HTML}{F28E00}
\colorlet{levI}{accB}
\colorlet{levII}{accE}
\colorlet{levIII}{accH}
\colorlet{levId}{accA!80!black}
\colorlet{levIId}{accE!80!black}
\colorlet{levIIId}{accG!85!black}

\newcommand{\icoFlag}[1]{%
  \draw[#1, line width=1.6pt, line cap=round] (-0.20,-0.42) -- (-0.20,0.42);
  \fill[#1] (-0.20,0.42) -- (0.32,0.27) -- (-0.20,0.12) -- cycle;
  \draw[#1, line width=1.2pt, line cap=round] (-0.34,-0.42) -- (-0.06,-0.42);}
\newcommand{\icoSliders}[1]{%
  \foreach \x in {-0.26,0,0.26}{
    \draw[#1!55, line width=1.3pt, line cap=round] (\x,-0.36) -- (\x,0.36);}
  \fill[#1] (-0.26,0.14) circle (0.08);
  \fill[#1] (0,-0.12) circle (0.08);
  \fill[#1] (0.26,0.24) circle (0.08);}
\newcommand{\icoGrid}[1]{%
  \foreach \i in {-1,0,1}{\foreach \j in {-1,0,1}{
    \fill[#1!45] (\i*0.30,\j*0.30) circle (0.06);}}
  \fill[#1] (0.30,0) circle (0.075);
  \draw[#1, line width=1.2pt] (0.30,0) circle (0.165);}
\newcommand{\icoSearch}[1]{%
  \fill[#1!45] (-0.20,0.18) circle (0.045);
  \fill[#1!45] (-0.02,0.06) circle (0.045);
  \fill[#1!45] (-0.05,0.26) circle (0.045);
  \draw[#1, line width=1.6pt] (-0.10,0.14) circle (0.27);
  \draw[#1, line width=2.4pt, line cap=round] (0.11,-0.07) -- (0.36,-0.36);}
\newcommand{\icoCircuit}[1]{%
  \draw[#1!55, line width=1.2pt] (-0.42,0.17) -- (0.42,0.17);
  \draw[#1!55, line width=1.2pt] (-0.42,-0.17) -- (0.42,-0.17);
  \draw[#1, line width=1.3pt] (-0.16,0.17) -- (-0.16,-0.17);
  \fill[#1] (-0.16,0.17) circle (0.06);
  \draw[#1, line width=1.3pt] (-0.16,-0.17) circle (0.10);
  \draw[#1, line width=1.3pt] (-0.16,-0.27) -- (-0.16,-0.07);
  \filldraw[fill=#1!18, draw=#1, line width=1.3pt]
    (0.13,0.06) rectangle (0.35,0.28);}
\newcommand{\icoPlot}[1]{%
  \draw[#1, line width=1.3pt, line cap=round]
    (-0.36,0.36) -- (-0.36,-0.30) -- (0.42,-0.30);
  \draw[#1, line width=1.5pt]
    (-0.27,0.25) .. controls (-0.05,0.00) and (0.10,-0.12) .. (0.34,-0.19);
  \fill[#1] (-0.27,0.25) circle (0.05);
  \fill[#1] (0.02,-0.045) circle (0.05);
  \fill[#1] (0.34,-0.19) circle (0.05);}
\newcommand{\icoGear}[1]{%
  \foreach \a in {0,45,...,315}{
    \draw[#1, line width=3.2pt, line cap=butt] (\a:0.235) -- (\a:0.40);}
  \draw[#1, line width=1.5pt] (0,0) circle (0.27);
  \draw[#1, line width=1.5pt] (0,0) circle (0.11);}
\newcommand{\icoMove}[1]{%
  \fill[#1!45] (-0.32,0.16) circle (0.06);
  \fill[#1!45] (-0.08,0.16) circle (0.06);
  \fill[#1!45] (-0.32,-0.20) circle (0.06);
  \fill[#1!45] (-0.08,-0.20) circle (0.06);
  \fill[#1] (0.16,0.16) circle (0.075);
  \draw[#1, line width=1.0pt, dash pattern=on 1.4pt off 1.2pt] (0.16,-0.20) circle (0.085);
  \draw[#1, line width=1.2pt, -{Stealth[length=1.7mm]}]
    (0.28,0.10) .. controls (0.40,0.02) and (0.40,-0.06) .. (0.29,-0.14);}
\newcommand{\icoChipAtom}[1]{%
  \foreach \p in {-0.15,0,0.15}{
    \draw[#1, line width=1.4pt, line cap=round] (\p,0.26)  -- (\p,0.40);
    \draw[#1, line width=1.4pt, line cap=round] (\p,-0.26) -- (\p,-0.40);
    \draw[#1, line width=1.4pt, line cap=round] (0.26,\p)  -- (0.40,\p);
    \draw[#1, line width=1.4pt, line cap=round] (-0.26,\p) -- (-0.40,\p);}
  \draw[#1, line width=1.5pt, rounded corners=1.2pt] (-0.26,-0.26) rectangle (0.26,0.26);
  \draw[#1, line width=0.9pt] (0,0) ellipse (0.16 and 0.06);
  \draw[#1, line width=0.9pt, rotate=60]  (0,0) ellipse (0.16 and 0.06);
  \draw[#1, line width=0.9pt, rotate=-60] (0,0) ellipse (0.16 and 0.06);
  \fill[#1] (0,0) circle (0.04);}

\tikzset{
  box/.style={
    draw=ink, line width=0.5pt, rounded corners=3pt,
    inner sep=6pt, anchor=center},
  io/.style={
    draw=ink, line width=0.5pt, rounded corners=3pt,
    inner sep=6pt, anchor=center},
  badge/.style={
    circle, draw=none, minimum size=0.48cm, inner sep=0pt,
    font=\bfseries\fontsize{13}{13}\selectfont, text=white},
  flow/.style={
    draw=ink, line width=1.2pt, rounded corners=3pt,
    -{Stealth[length=2.2mm,width=2.2mm]}},
  fb/.style={
    draw=sub, line width=1.0pt, dashed, dash pattern=on 3pt off 2.5pt,
    rounded corners=4pt, -{Stealth[length=2mm,width=2mm]}},
}

\newcommand{\card}[1]{%
  \begin{minipage}[c][1.30cm][c]{3.85cm}\raggedright
    \hspace*{1.05cm}%
    \begin{minipage}[c]{2.25cm}\raggedright\hyphenpenalty=10000\exhyphenpenalty=10000
      {\color{ink}\fontsize{9.8}{10.5}\selectfont\bfseries #1}%
    \end{minipage}%
  \end{minipage}}

\definecolor{deepblue}{rgb}{0,0,0.5}
\definecolor{deepred}{rgb}{0.6,0,0}
\definecolor{deepgreen}{rgb}{0,0.5,0}
\newcommand\pythonstyle{\lstset{
language=Python,
basicstyle=\ttm,
morekeywords={self},              
keywordstyle=\ttb\color{deepblue},
emph={MyClass,__init__},          
emphstyle=\ttb\color{deepred},    
stringstyle=\color{deepgreen},
frame=tb,                         
showstringspaces=false
}}
\lstnewenvironment{python}[1][]
{
\pythonstyle
\lstset{#1}
}
{}

\newcommand\pythoninline[1]{{\pythonstyle\lstinline!#1!}}

\renewcommand{\R}{\mathcal{R}}

\newcommand{\Z}{\mathbb{Z}}

\newcommand{\F}{\mathbb{F}}

\newcommand{\indicator}{\mathds{1}}

\newcommand{\MergeCode}{\text{Merge}}

\DeclareMathOperator{\im}{\mathrm{im}}
\DeclareMathOperator{\dist}{\mathrm{dist}}

\newtheorem{example}{Example}

\usepackage[noend]{algpseudocode}
\usepackage{algorithm,algorithmicx}
\algrenewcommand\alglinenumber[1]{\sf\scriptsize\color{blue}{#1}}
\algrenewcommand\algorithmicrequire{\textbf{Input:}}
\algrenewcommand\algorithmicensure{\textbf{Output:}}

\newcommand{\quis}{\mathcal{I}}
\newcommand{\bquis}{\mathcal{I}_B}
\newcommand{\hquis}{\mathcal{I}_H}
\newcommand{\fquis}{\mathcal{I}_F}

\renewcommand{\thesection}{\Roman{section}}
\renewcommand{\thesubsection}{\Alph{subsection}}

\makeatletter
\renewcommand{\p@subsection}{\thesection.}
\renewcommand{\p@subsubsection}{\thesection.\thesubsection.}
\makeatother

\newcommand{\Hx}{H_X}
\newcommand{\Hz}{H_Z}
\newcommand{\dx}{d_x}
\newcommand{\dz}{d_z}
\newcommand{\NSx}{\mathcal{N}_x}
\newcommand{\NSMx}{N_x}
\newcommand{\NSMxsk}{\tilde{N}_x}
\newcommand{\NSMxskcp}{\tilde{N}_x'}
\newcommand{\NSMxskcpreff}{\tilde{N}_{x,REFF}'}
\newcommand{\NSz}{\mathcal{N}_z}
\newcommand{\NSMz}{N_z}
\newcommand{\kerx}{\nu_x}
\newcommand{\kerz}{\nu_z}
\DeclareMathOperator{\rowspan}{rowspan}

\usepackage[capitalise]{cleveref}
\crefname{subsubsection}{Section}{Sections}
\Crefname{subsubsection}{Section}{Sections}

\begin{document}

\title{High-rate qLDPC processors}
\date{\today}
\author{Aditya Bhardwaj\equalcontrib}
\email{a7b@caltech.edu}
\affiliation{California Institute of Technology, Pasadena, CA 91125, USA}
\author{Muzhou Ma\equalcontrib}
\email{mma2@caltech.edu}
\affiliation{California Institute of Technology, Pasadena, CA 91125, USA}
\author{Nadine Meister}
\affiliation{California Institute of Technology, Pasadena, CA 91125, USA}
\author{Robbie~King}
\affiliation{Oratomic, Pasadena, CA 91125, USA}
\author{Dolev~Bluvstein}
\affiliation{Oratomic, Pasadena, CA 91125, USA}
\author{John Preskill}
\affiliation{California Institute of Technology, Pasadena, CA 91125, USA}
\affiliation{Oratomic, Pasadena, CA 91125, USA}
\author{Madelyn~Cain}
\affiliation{Oratomic, Pasadena, CA 91125, USA}
\author{Qian Xu}
\email{qxu@oratomic.com}
\affiliation{Oratomic, Pasadena, CA 91125, USA}
\author{Hsin-Yuan Huang}
\email{hhuang@oratomic.com}
\affiliation{Oratomic, Pasadena, CA 91125, USA}
\affiliation{California Institute of Technology, Pasadena, CA 91125, USA}

\makeatletter
\DeclareRobustCommand{\equalcontrib}{%
  \frontmatter@footnote{Equal contribution.}
}
\makeatother

\begin{abstract}
Despite significant progress on quantum low-density parity-check (qLDPC) codes, building qLDPC processors that are high-rate, high-throughput, hardware-friendly, and fast-to-decode remains a challenge.
In this work, we introduce \emph{mitten codes}, a family of qLDPC processor codes with encoding rate~20\% and check weight 9, constructed from non-abelian groups. The non-abelian structure evades stringent distance bounds suffered by their abelian counterparts, allowing mitten codes to reach distance~$18$ and beyond with only a few hundred data qubits.
The logical operators of a mitten code are related by the underlying group action, and this symmetry yields a modular, low-overhead logical toolkit: full Clifford operations follow from bridging just two reusable seed surgery gadgets of tens of qubits each, or from a single fixed extractor. 
Furthermore, qLDPC processors based on mitten codes support high-rate surgery that executes many logical measurements in parallel, and parallel magic-state injection into all logical qubits at once.
Under circuit-level depolarizing noise, our fast decoder shows, without extrapolation, that the $\llbracket 300,60,14 \rrbracket$ code attains a block logical error rate of ${\sim} 10^{-11}$ per round at $0.1\%$ physical error rate (PER), while the $\llbracket 975,195,\leq 24 \rrbracket$ code reaches ${\sim}10^{-8}$ at $0.4\%$ PER.
Directly decoding $15$ billion surgery experiments on the $\llbracket 540,108,18 \rrbracket$ code at $0.1\%$ PER, we observe only two logical failures, thereby demonstrating a qLDPC processor capable of running ${\sim} 10^{10}$ logical operations.
Our decoder achieves this accuracy while being compatible with sub-millisecond average latency per logical cycle, sufficient for real-time decoding on neutral atom hardware.
Discovered by an end-to-end design pipeline built on \textsf{sQetch}, a distance estimator orders of magnitude faster than existing tools, and mapping efficiently onto near-term neutral atom and superconducting hardware, mitten codes open a practical path toward fault-tolerant quantum computation.
\end{abstract}

\maketitle

\section{Introduction}

Useful quantum algorithms call for operations with logical error rates orders of magnitude below the physical error rates that can be achieved natively in hardware~\cite{rivest1978method, koblitz1987elliptic, deutsch1992rapid, kitaev1995quantum, lloyd1996universal, simon1997power, shor1999polynomial, wang2008quantum, reiher2017elucidating, alexeev2021quantum, dalzell2023quantum, eisert2025mind, huang2026vast, babbush2026grand}, making fault tolerance essential for practical quantum computing~\cite{shor1995scheme, gottesman2013fault}. The surface code~\cite{bravyi1998quantum, dennis2002topological, kitaev2003fault, fowler2009high, fowler2012surface,Litinski_2019} has been widely studied as a potential route to fault-tolerance because it can tolerate relatively high physical error rates~\cite{Wang_2011, Stephens_2014} and requires only geometrically local processing in a two-dimensional layout~\cite{google2023suppressing, google2025quantum,vezvaee2025surfacecodescalingheavyhex}. However, despite significant theoretical progress on variants of the surface code~\cite{gidney2025yoked,low2026denserplanarsurfacecode}, its low encoding rate entails a daunting overhead cost~\cite{low2026denserplanarsurfacecode, gidney2021factor, gidney2025factor}. If instead geometrically nonlocal operations are permitted during error syndrome extraction~\cite{hong2024entangling, ransford202698, bluvstein2024logical, reichardt2024fault, HAL_paper, wang2026demonstration}, then high-rate quantum low-density parity-check (qLDPC) codes can substantially reduce this overhead cost in a quantum memory~\cite{Tillich_2014, kovalev2013quantum, breuckmann2016constructions, breuckmann2021balanced, panteleev2021degenerate, higgott2021subsystem, 10.1145/3519935.3520017, tremblay2022constant, leverrier2022quantum, wang2023abelian, higgott2024constructions, lin2024quantum, bravyi2024high, zhao2026towards, Cohen_2022, Pecorari_2025}. 

But a quantum processor is more than just a memory; its utility is determined by the number of logical operations it can execute per unit time and per physical qubit without logical error. Factors contributing to this include the footprint of the code and its logical gadgets, the parallelism of logical operations, the clock speed of syndrome extraction, and the achievable error suppression. An architecture that optimizes one factor in isolation might still be impractical if the others lag far behind. Thus, the designer of a fault-tolerant quantum processor seeks to satisfy the following four desiderata simultaneously:
\begin{enumerate}
    \item \textit{High encoding rate:} The number of physical qubits per logical qubit should be kept low.
    \item \textit{High throughput:} The processor should support flexible logical operations acting on programmable sets of logical qubits, so that many operations can be performed per unit time.
    \item \textit{Hardware compatibility:} The processor should map efficiently onto physical platforms.
    \item \textit{Fast and accurate decoding:} The processor should be decodable quickly and accurately, during both storage and logical operations.
\end{enumerate}

\begin{figure}[t]
    \centering
    \includegraphics[width=1.0\linewidth]{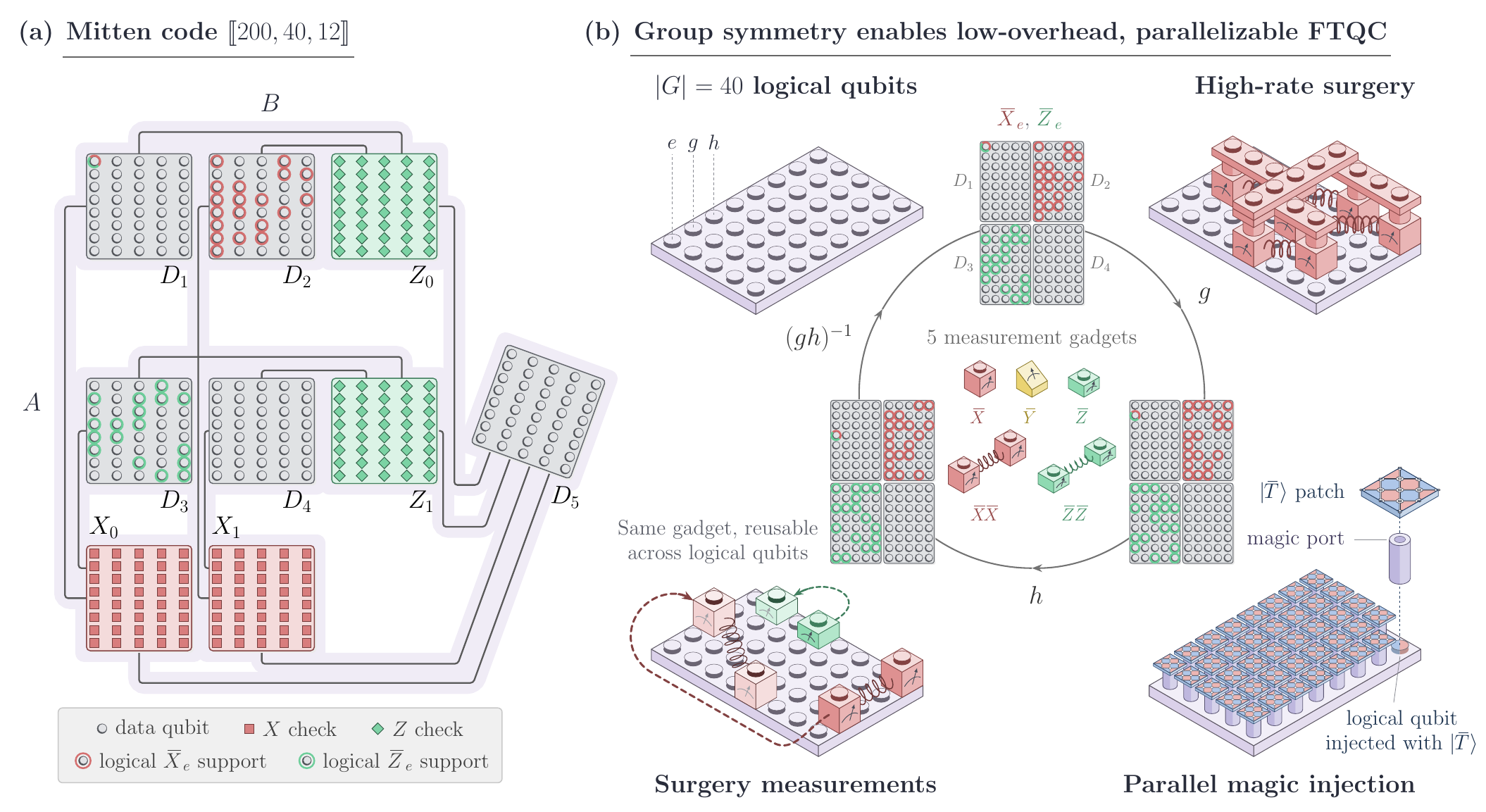}
    \caption{\textbf{Mitten codes as fault-tolerant qLDPC processors}
    \textbf{(a)} The $\llbracket 200,40,12 \rrbracket$ mitten code built from the lifted product of base matrices $A$ and $B$ defined over the group algebra $F_2[G]$ of the non-abelian group $G = C_4 \times D_{10}.$ The code consists of 
    five blocks $D_1,\ldots,D_5$ of $|G|=40$ physical data qubits and two blocks each of $X$-check and $Z$-check ancilla qubits. Grey lines show the block-wise
    connectivity, while the connectivity between individual qubits is
    determined by the group structure through the ring entries of the
    base matrices $A$ and $B$. Data qubits highlighted in red (green)
    mark the support of the canonical logical $\bar{X}_e$ ($\bar{Z}_e$) operator
    of weight $20$ ($18$), where $e$ is the identity element of $G$. The supports of the conjugate pair $(\bar{X}_e, \bar{Z}_e)$ intersect at a single data qubit in $D_1$.
    \textbf{(b)} The code encodes $|G|=40$ logical qubits, one
    per group element, and the group action carries the canonical
    representatives $\bar{X}_e$, $\bar{Z}_e$ to those of every other
    logical qubit (center). Consequently, five measurement
    gadgets---for $\bar{X}$, $\bar{Y}$, $\bar{Z}$, $\bar{X}\bar{X}$,
    and $\bar{Z}\bar{Z}$, all generated from just two seed
    gadgets for $\bar{X}$ and $\bar{Z}$---are reusable across all logical qubits and enable universal Clifford operations (bottom left). The same symmetry supports high-rate surgery, which measures many Pauli products in parallel (top right), and parallel magic state injection (bottom right): $|\bar{T}\rangle$ states are grown in
    $|G|$ distance $d_{\mathrm{rep}}$ surface-code patches which can be expressed as the lifted product $\LP(\RepCode(d_{\mathrm{rep}}),\RepCode(d_{\mathrm{rep}}))$ of repetition codes, and injected into all logical qubits at once through a \emph{magic
    port}, the intermediate code $\LP(\RepCode(d_{\mathrm{rep}}),B)$
    that interfaces the patches with the mitten code.
    } 
    \label{fig:mitten-code}
\end{figure}

In this work, we introduce \emph{mitten codes}\footnote{The name ``mitten codes'' derives from the block structure of the check matrices. The $X$ and $Z$ checks involve the left and right action of $G$ and each check has five columns: four similar ``fingers'' and a distinguished ``thumb.''}, a family of non-abelian lifted product codes\footnote{Lifted product codes constructed from classical base matrices over $\F_2[G]$ are a special case of balanced product codes~\cite{breuckmann2021balanced} that can be constructed from more generic chain complexes whose chain groups need not be free modules.}~\cite{panteleev2021degenerate,Panteleev_2022,breuckmann2021balanced} that furnish quantum processors satisfying all four desiderata (Fig.~\ref{fig:mitten-code}). Mitten codes are built from $1\times 2$ base matrices over the group algebra $\mathbb{F}_2[G]$ of a \emph{non-abelian} group~$G$. Comparing the number of physical qubits to the number of checks guarantees an encoding rate of at least $20\%$, and allowing $G$ to be non-abelian yields codes that surpass the distance-$6$ upper bound suffered by all abelian constructions with the same base-matrix shape (Section~\ref{sec:mitten_codes}). Remarkably, they do so at small block sizes, achieving distances $\gtrsim 20$ with just hundreds of data qubits. Their modest size not only facilitates experimental implementation and fast decoding, but also enables low-overhead and highly parallel logical operations that could otherwise be challenging to realize on larger codes~\cite{zheng2025highratesurgeryconstantoverheadlogical, cain2026shorsalgorithmpossible10000, zhao2026towards, kasai2026breakingorthogonalitybarrierquantum, okada2026highgirthregularquantumldpc}. A large processor can instead be assembled by tiling many such identical blocks, which can have advantages for fault-tolerant operation compared to a single monolithic quantum device. 

\begin{figure}[t]
    \centering
    \includegraphics[width=\textwidth]{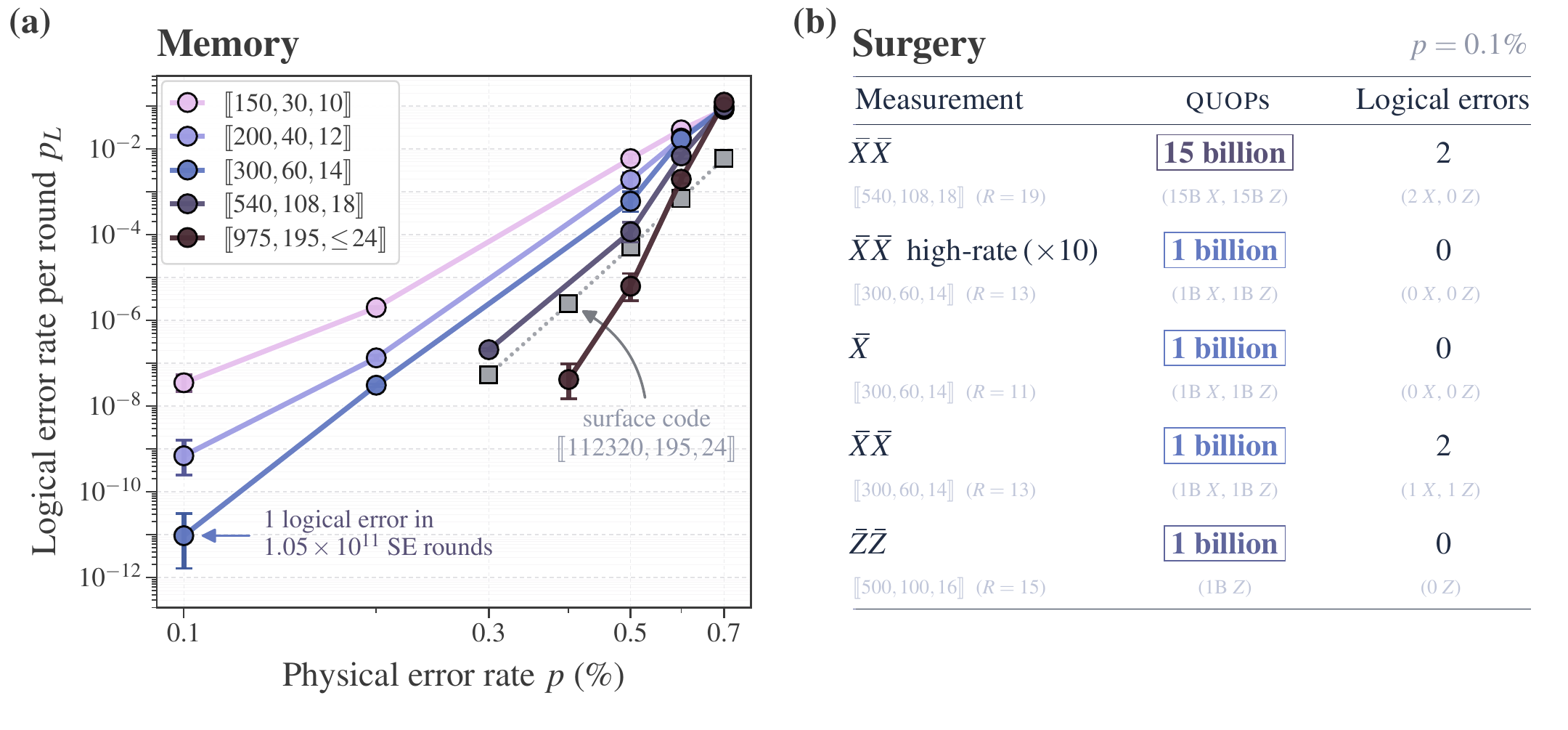}
    \caption{\textbf{Performance of mitten codes as qLDPC processors using our telescoping decoder.} All simulations are performed under circuit-level noise with uniform depolarizing noise of strength $p$ applied to state preparation, two-qubit gates, and measurements, with no idling noise. \textbf{(a)} Memory performance of mitten codes. We plot the logical error rate per syndrome extraction round averaged over an equal number of $X$ and $Z$ basis experiments where error bars represent $68.27\%$ Clopper-Pearson confidence intervals; the mitten codes form a family consistent with a threshold of $p_c\approx0.7\%$. Remarkably, at $p=0.4\%$, the $\llbracket 975, 195, \leq 24\rrbracket$ mitten code outperforms the $\llbracket 112320, 195, 24\rrbracket$ stack of rotated surface codes (grey) by nearly two orders of magnitude in both physical qubit count and logical error rate. \textbf{(b)} Surgery simulations at $p=0.1\%$, where $R$ is the number of rounds of syndrome extraction performed on the merged code to fault-tolerantly measure the target logical operator(s). For each experiment, the table reports the total number of logical operations (\quops) simulated and the number of logical errors observed; the code, the number of rounds $R$, and the per-basis breakdowns of shots and errors are given in grey beneath each entry. The high-rate $\bar{X}\bar{X}$ gadget measures ten logical $\bar{X}\bar{X}$ operators in parallel, hence the $\times 10$. The $\llbracket 540, 108, 18\rrbracket$ simulation is decoded using only a single basis of detectors throughout all stages of the telescoping decoder.}
    \label{fig:numerics}
\end{figure}

A central structural feature of the mitten codes is a highly symmetric canonical logical basis which enables low-overhead and parallelizable fault-tolerant quantum processing. Whenever the base matrices satisfy a square invertibility condition (Definition~\ref{def:square_inv_condition}), one can explicitly construct low-weight conjugate pairs of logical operators $\{(\bar{X}_g, \bar{Z}_g)\}_{g\in G}$ representing the $k = |G|$ encoded logical qubits, such that every logical $\bar X_g$ ($\bar Z_g$) is carried to any other logical $\bar X_h$ ($\bar Z_h$) through the group action of $G$. Because all logical representatives are related by group symmetry, we can build a
modular, low-overhead surgery toolkit: just five reusable gadgets, themselves built from only two seed gadgets, suffice for universal Clifford processing. Alternatively, we may
trade this modularity for expressiveness with a fixed full
extractor~\cite{he2025extractorsqldpcarchitecturesefficient} that can directly measure any logical Pauli product. The same symmetry enables high-throughput logic through parallel surgery~\cite{zheng2025highratesurgeryconstantoverheadlogical, Cowtan_2026} and parallel magic state injection, the latter of which provides the remaining non-Clifford resource needed for universal quantum computation. With our \emph{telescoping decoder}, each  window of $O(d)$ syndrome extraction rounds can be decoded with sub-millisecond average latency in both memory and surgery experiments (Section~\ref{sec:decoder}, Table~\ref{tab:rt-load}), where $d$ is the code distance, while attaining very low logical error rates (Fig.~\ref{fig:numerics}). Finally, the structure of the code aligns naturally with parallel block moves of atoms in neutral-atom arrays, and for superconducting platforms we show that mitten codes have a similar hardware complexity~\cite{HAL_paper} to the bivariate bicycle codes~\cite{bravyi2024high} of comparable block size while achieving much higher encoding rate.

Mitten codes, however, form a vast family, and building a practical device requires singling out explicit instances and equipping them with concrete gadgets, syndrome extraction schedules, and hardware layouts. Our second main contribution is a qLDPC processor discovery pipeline (Section~\ref{sec:knitter}) that builds on our theoretical results and turns a target specification of code parameters, logical error rate, gadget footprint, and hardware constraints into a complete qLDPC processor design (Fig.~\ref{fig:pipeline}). Running the pipeline end to end produced the mitten codes with $20\%$ encoding rate and check weight $9$ listed in Table~\ref{tab:1x2_processor_code_param}.
This was made tractable by \textsf{sQetch}~\cite{knitkitgithub}, our new GPU-based distance estimator, which is up to $800{,}000$ times faster than current state-of-the-art methods (Appendix~\ref{app:sqetch}).

The remainder of the text is organized as follows. We first formalize what we require of a qLDPC processor and the parameters by which we evaluate one (Section~\ref{sec:processor_metrics}). We then present mitten codes together with their canonical logical basis (Section~\ref{sec:mitten_codes}), and show how this structure enables low-overhead gadgets that turn them into qLDPC processors (Section~\ref{sec:universal_ftqc}). Next, we establish their performance under full circuit-level decoding experiments with our new telescoping decoder (Section~\ref{sec:decoder}) and demonstrate how they may be implemented on both neutral atom and superconducting qubit platforms (Section~\ref{sec:hardware_implementations}). Finally, we present our qLDPC processor discovery pipeline that produced the codes (Section~\ref{sec:knitter}) and the accoutrements needed to convert them into qLDPC processors.

\begin{figure}[t]
  \centering
  \resizebox{\linewidth}{!}{%
  \begin{tikzpicture}[font=\rmfamily]

\def\rA{0}   \def\rB{-3.6} \def\rC{-7.2}

\node[io,fill=accA!22]  (A) at (0,\rA)    {\card{Processor targets}};
\node[box,fill=accB!12]  (B) at (5,\rA)    {\card{\mbox{Code-parameter} filter}};
\node[box,fill=accC!12]  (C) at (10,\rA)   {\card{Addressability filter}};
\node[box,fill=accD!12]  (D) at (15,\rA)   {\card{Code search}};

\node[box,fill=accE!9] (E) at (5,\rB)    {\card{\mbox{Circuit-level} design}};
\node[box,fill=accF!9] (F) at (10,\rB)   {\card{Simulation \& decoding}};

\node[box,fill=accG!8] (G) at (2.5,\rC)  {\card{\mbox{Instruction-set} gadgets}};
\node[box,fill=accH!8] (H) at (7.5,\rC)  {\card{Hardware implementation}};
\node[io,fill=accI!25] (I) at (12.5,\rC) {\card{Processor design}};

\foreach \n/\lbl/\c in {A/1/accA,B/2/accB,C/3/accC,D/4/accD,E/5/accE,F/6/accF,G/7/accG,H/8/accH,I/9/accI}{
  \node[badge,fill=\c] at (\n.north west) {\lbl};
}

\foreach \n/\ico/\c in {A/\icoFlag/accA, B/\icoSliders/accB, C/\icoGrid/accC,
                        D/\icoSearch/accD, E/\icoCircuit/accE, F/\icoPlot/accF,
                        G/\icoGear/accG, H/\icoMove/accH, I/\icoChipAtom/accI}{
  \begin{scope}[shift={($(\n.west)+(0.73,0)$)}, scale=0.90]
    \ico{\c}
  \end{scope}
}

\draw[flow] (A) -- (B);
\draw[flow] (B) -- (C);
\draw[flow] (C) -- (D);
\draw[flow] (D.south) -- ++(0,-1.45) -| (E.north);
\draw[flow] (E) -- (F);
\draw[flow] (F.south) -- ++(0,-1.45) -| (G.north);
\draw[flow] (G) -- (H);
\draw[flow] (H) -- (I);

\draw[fb] (F.east) -- (17.6,\rB) -- (17.6,1.75) -- (6.4,1.75) -- ($(B.north)+(1.4,0)$);
\node[font=\itshape\fontsize{12}{13.5}\selectfont,color=sub,anchor=south] at (12,1.78) {retune group, base matrices};

\node[anchor=south west] (LI) at ($(A.north west)+(0,0.58)$)
      {{\fontsize{15}{14}\selectfont\bfseries\color{levId}Level I}~~{\fontsize{13}{14.5}\selectfont\bfseries\color{levId}design and algebraic search}};
\draw[levI,line width=1pt] ($(LI.south west)+(0,-1pt)$) -- ++(2.6,0);

\node[anchor=south west] (LII) at ($(E.north west)+(0,0.58)$)
      {{\fontsize{15}{14}\selectfont\bfseries\color{levIId}Level II}~~{\fontsize{13}{14.5}\selectfont\bfseries\color{levIId}fault-tolerance validation}};
\draw[levII,line width=1pt] ($(LII.south west)+(0,-1pt)$) -- ++(2.75,0);

\node[anchor=south west] (LIII) at ($(G.north west)+(0,0.58)$)
      {{\fontsize{15}{14}\selectfont\bfseries\color{levIIId}Level III}~~{\fontsize{13}{14.5}\selectfont\bfseries\color{levIIId}gadgets and hardware}};
\draw[levIII,line width=1pt] ($(LIII.south west)+(0,-1pt)$) -- ++(2.95,0);

\end{tikzpicture}}
  \caption{
    \textbf{High-rate qLDPC processor discovery pipeline.}
    Starting from the processor targets of Definition~\ref{def:processor_params}, the pipeline narrows the design space in three levels. The first level explores the space of groups and base matrices: theoretical distance bounds (Appendix~\ref{app:distance_bounds}) first
    constrain the search space, and the fast code-distance estimator
    \textsf{sQetch} (Appendix~\ref{app:sqetch}) then brute-force searches the remaining space to find codes with good parameters and addressable logical operations. The second level searches for fault-tolerant syndrome extraction schedules and verifies logical performance through circuit-level simulations decoded with
    our telescoping decoder (Appendix~\ref{app:decoding}). The third level constructs gadgets for instruction sets ($\bquis,\hquis,\fquis$), and optimizes for hardware implementation on atom array and superconducting qubit platforms to obtain the final processor design. Dashed arrows indicate feedback: candidates that fail downstream validation trigger
    retuning of the group or base matrices upstream. The toolkit for our pipeline is \href{https://github.com/a7b/yarn}{open-sourced}.
    }
    \label{fig:pipeline}
\end{figure}

\section{Metrics for fault-tolerant qLDPC processors}
\label{sec:processor_metrics}
Fault tolerance is achieved by encoding logical information in quantum error-correcting codes, but codes alone only protect information. To process information, a code must be equipped with an instruction set, together with the supporting gadgets and a decoder. In this section, we formalize this notion of a fault-tolerant quantum processor and present three parameters by which we characterize the performance of a quantum processor.

\begin{definition}[Quantum instruction set (informal version of Definition~\ref{def:quis_formal})]
\label{def:quis}
    A quantum instruction set $\mathcal{I}$  of a fault-tolerant quantum processor is a set of instructions, each a logical operation together with its physical gadget realization, that supports universal fault-tolerant quantum computation.
\end{definition}

\noindent Quantum instruction sets are not unique. We consider three, each designed for a different objective. The \emph{basic instruction set} $\bquis$ (Definition~\ref{def:bquis_formal}) consists of fault-tolerant gadgets for single-qubit logical Pauli measurements, weight-two Pauli product measurements (PPMs), and noisy $T$-state injection; these compile any Clifford+$T$ circuit~\cite{Barenco_1995, boykin1999universalfaulttolerantquantumcomputing, Bravyi_2005, Bravyi_2016, fowler2019lowoverheadquantumcomputation, Litinski_2019}, and the noisy $T$-states are distilled by inter-block transversal\footnote{Since all CSS codes support transversal CNOT gates, any instruction set based on a CSS code naturally includes it.}~\cite{Bravyi_2005, Reichardt_2005} or intra-block~\cite{xu2026distillingmagicstatesbicycle} distillation. Since distillation uses only the fault-tolerant Pauli measurements, the magic state distillation process is also fault-tolerant.

The \emph{high-throughput instruction set}~$\hquis$ (Definition~\ref{def:hquis_formal}) generalizes the basic instruction set to also include high-rate gadgets for parallel Pauli product measurements and parallel $T$-state injection. This suits architectures where the time overhead is as important as the space overhead~\cite{Zhou_2025, Nguyen2025Quantum, cain2026shorsalgorithmpossible10000}. The \emph{fixed-gadget instruction set} $\fquis$  (Definition~\ref{def:fquis_formal}) consists of a single fixed extractor~\cite{he2025extractorsqldpcarchitecturesefficient, blue2026extractorslogicalprocessinghypergraph} that measures arbitrary Pauli products selected in software, plus noisy $T$-state injection. Because the connectivity of the extractor is fixed, it suits platforms with rigid connectivity.

The three instruction sets serve different purposes and can be combined; as we show in Section~\ref{sec:universal_ftqc}, the mitten codes support all three at low overhead. In particular, the entire basic instruction set can be realized with only $5$ reusable gadgets. The formal definitions of each instruction set are in Appendix~\ref{app:metrics}. A processor can be characterized in terms of its instruction set and hardware capabilities. 
\begin{definition}[\quop]
\label{def:quop}
    Given a quantum instruction set $\quis$, call an instruction \emph{primitive} if it is not a parallel composition of other instructions in $\quis$. A quantum operation (\quop) is the execution of a single primitive instruction; an instruction composed of $m$ primitive instructions executed in parallel counts as $m$ \quops. 
\end{definition}
\noindent Under this definition, every instruction in the basic instruction set $\bquis$ counts as a single \quop, while instructions in the high-throughput instruction set $\hquis$ that perform $m$ weight-1 or weight-two PPMs or inject $m$ magic states in parallel count as $m$ \quops. In the fixed-gadget instruction set $\fquis$ an arbitrary PPM performed by the extractor counts as a single \quop.\footnote{
    While the ability to perform arbitrary Pauli-product measurements can lead to more efficient compilations of algorithms, the aim of our definitions is to characterize a processor in a manner agnostic of any algorithm. We therefore count every arbitrary Pauli-product measurement in $\fquis$ as one \quop regardless of its weight; the compilation savings appear instead in the number of \quops an algorithm requires under different instruction sets.
}
With the notion of a \quop in hand, we now define three parameters that characterize a universal fault-tolerant qLDPC processor.
\begin{definition}[Parameters of a universal fault-tolerant qLDPC processor]
\label{def:processor_params}
Fix a qLDPC processor with a quantum instruction set $\quis$ and a decoder. We define:
\begin{enumerate}
    \item Processing capacity: The expected number of \quops that the processor can execute without any logical error; for a logical error rate $p_{\mathrm{L}}$ per \quop, the processing capacity is $1/p_{\mathrm{L}}$.

    \item Throughput: The maximum number of \quops that the processor can execute within a single logical cycle consisting of $R$ rounds of syndrome extraction, where the most standard protocols require $R=O(d)$.

    \item Cycle time: The time to complete one syndrome extraction (SE) cycle.
\end{enumerate}
\end{definition}

\noindent The three parameters are set by distinct design choices. Processing capacity demands resilience to noise during both storage and processing. This favors a low physical error rate, a small code block length, and low gadget overhead: fewer and less faulty components mean fewer errors to correct, and a smaller decoding problem that the decoder can solve quickly and accurately. Throughput is driven by the encoding rate and the parallelism of the logical layer. The cycle time is set by the time to complete one round of syndrome extraction, which can vary significantly not only across hardware platforms but also across syndrome extraction schedules; the latter is especially relevant when the schedules must be realized by non-local qubit movements such as on neutral atom arrays. Throughput and cycle time jointly set the \emph{logical operation rate}, i.e.\ the number of \quops executed per unit time, and can sometimes partially compensate for one another. Processing capacity, by contrast, caps the total number of operations executed reliably, and no gain in operation rate can remedy a processor with low processing capacity.

\section{Mitten codes}\label{sec:mitten_codes}

We present mitten codes, a family of qLDPC codes that furnish universal fault-tolerant qLDPC processors performing well on all three parameters in Definition~\ref{def:processor_params}. 

\begin{definition}[Mitten codes]\label{def:mitten_processor_code}
    Mitten codes are a family of lifted product codes $\LP(A,B)$ with check weight $9$ and encoding rate $20\%$ whose classical base matrices $A,B\in \F_2[G]^{1\times 2}$ have the canonical form
    \begin{equation}
        A = \begin{bmatrix} a_0 & a_1 \end{bmatrix}
          = \begin{bmatrix} g_1+g_2+g_3 & e+g_4+g_5 \end{bmatrix},
        \qquad
        B = \begin{bmatrix} b_0 & b_1 \end{bmatrix}
          = \begin{bmatrix} h_1+h_2+h_3 & e+h_4+h_5 \end{bmatrix},
    \end{equation}
    where $e$ is the identity and $g_i,h_i$ are group elements of the non-abelian group $G$. The $X$ and $Z$ parity check matrices of $\LP(A,B)$ are
    \begin{equation}\label{eq:mitten_HxHz}
    \begin{split}
        H_X &=
        \bordermatrix{
            & D_1 & D_2 & D_3 & D_4 & D_5 \cr
        X_0 & \Lrep{a_0} & 0 & \Lrep{a_1} & 0 & \Rrep{b_0^*} \cr
        X_1 & 0 & \Lrep{a_0} & 0 & \Lrep{a_1} & \Rrep{b_1^*} \cr
        } \\
        H_Z &=
        \bordermatrix{
            &   &   &   &   & \cr
        Z_0 & \Rrep{b_0} & \Rrep{b_1} & 0 & 0 & \Lrep{a_0^*} \cr
        Z_1 & 0 & 0 & \Rrep{b_0} & \Rrep{b_1} & \Lrep{a_1^*} \cr
        },
    \end{split}
    \end{equation}
    where the involution operator ${}^*$ maps $\sum_g \alpha_g\, g \mapsto \sum_g \alpha_g\, g^{-1}$, $\Lrep{\cdot}$ and $\Rrep{\cdot}$ are the left and right regular representations of $\ringR = F_2[G]$ (Definition~\ref{def:group-ring-and-regular-reps}), and we have labeled the blocks of data qubits, $X$-checks and $Z$-checks as in Figure~\ref{fig:mitten-code}(a). Additionally, we require the left regular representation $\Lrep{a_1}$ and the right-regular representation $\Rrep{b_1}$ to be full-rank matrices.
\end{definition}
The $1\times 2$ shape of the classical base matrices guarantees a $20\%$ encoding rate while the non-abelian group is essential for achieving high distance. Note that if $G$ were abelian, the minimum-weight codeword $(a_1, a_0)^T$ of $A$ produces a $\bar{Z}$ logical operator $(a_1, 0, a_0, 0, 0)^T$, thus capping the distance of $\LP(A,B)$ at the check weight of its classical base matrices. Taking $G$ non-abelian removes this cap and is what allows a small block to reach high distance. Using the discovery pipeline described in Section~\ref{sec:knitter}, we obtain eight instances of mitten codes with parameters ranging from $\llbracket 150,30,10\rrbracket$ through $\llbracket 540,108,18\rrbracket$ to $\llbracket 975,195,\leq 24\rrbracket$ (Table~\ref{tab:1x2_processor_code_param}).

\begin{table}[t]
\centering
\setlength{\tabcolsep}{4pt}
\renewcommand{\arraystretch}{1.1}
\begin{tabular}{@{}l c c c c c@{}}
\toprule
\textbf{$\llbracket n,k,d \rrbracket $} & group $G$ & $\wt{Lx}/\wt{Lz}$ & \makecell{$\bar{X}\bar{X}$-gadget\\ \footnotesize ($n_{\mathrm{anc}}$, $w_{\mathrm{merge}}$)}  & \makecell{Neutral-Atom SE cycle (ms)\\ \footnotesize (4-AOD, 2-AOD)} & \makecell{Superconducting \\ \footnotesize Complexity $C_{\mathrm{hw}}$}\\
\midrule
$\llbracket 150,30,10 \rrbracket$ & $C_5\times S_3$  & 18/10 & (78, 11) & $(5.49,\ 8.83)$ & 2.02\\
$\llbracket 200,40,12 \rrbracket$  & $C_4\times D_{10}$ & 20/18 & (88, 11)  & $(7.41,\ 10.79)$ & 2.09\\
$\llbracket 300,60,14 \rrbracket $   & $C_{10}\times S_3$ & 22/22 & (100, 12)  & $(7.22,\ 10.98)$ & 2.37\\
$\llbracket 500,100,16 \rrbracket $  & $C_5\rtimes C_{20}$ & 28/24 & (156, 12)  & $(10.61,\ 16.15)$ & 2.81\\
$\llbracket 540,108,18 \rrbracket $  & $C_9\rtimes C_{12}$ & 22/28 & (123, 10)  & $(11.74,\ 18.41)$ & 2.72\\
$\llbracket 630,126,\leq 20 \rrbracket $ & $C_7\rtimes C_{18}$  & 28/44 & (134, 10)  & $(11.47,\ 19.00)$ & 3.37\\
$\llbracket 780,156,\leq 22 \rrbracket $ & $C_{13}\times A_4$ & 74/84 & (435, 12)  & $(14.90,\ 23.24)$ & 3.42\\
$\llbracket 975,195,\leq 24 \rrbracket $ & $C_{13}\rtimes C_{15}$ & 102/92 & (622, 12)  & $(14.60,\ 22.70)$ & 3.95\\
\bottomrule
\end{tabular}
\caption{\textbf{Parameters of high-rate qLDPC processors from mitten codes.} Distances of the first five codes are exact; the last three are estimated from over $50$ million \textsf{sQetch} iterations together with over $50$ thousand iterations of a BP+OSD distance estimator~\cite{PhysRevResearch.2.043423}. $G$ indicates the group algebra $\F_2[G]$ over which the underlying base matrices $A,B$ of the mitten code $\LP(A,B)$ are defined on. 
$\wt{Lx}/\wt{Lz}$ are the weights of the canonical logical $X$/$Z$ operators. Explicit code construction data are provided in Table~\ref{tab:code_construction_full}. The $\bar{X}\bar{X}$-gadget column reports the overhead of the $\bar{X}\bar{X}$ surgery gadget, with $n_{\mathrm{anc}}$ the number of ancilla qubits and $w_{\mathrm{merge}}$ the merged-code check weight; a more complete breakdown is provided in Table~\ref{tab:processor_graph_surgery}.
We report the estimated syndrome-extraction cycle time on a neutral atom array using either $2$ sets (2-AOD) or $4$ pairs (4-AOD) of crossed AODs (further details in Table~\ref{table:atomarray-metrics}, Appendix~\ref{app:experimental_complexity}). 
$C_{\mathrm{hw}}$ is the hardware complexity of the code's multilayer superconducting layout from HAL~\cite{HAL_paper}, normalized so a surface-code layout yields $C_{\mathrm{hw}}=1$ regardless of its size; the full breakdown (routing tiers, coupler length, bump-bond transitions, through-silicon vias) is in Table~\ref{tab:paper_nonab_hw} (Appendix~\ref{app:experimental_complexity}). For comparison, the $\llbracket 144,12,12 \rrbracket$ gross code achieves $C_{\mathrm{hw}}=2.12$ and the $\llbracket 288,12,18 \rrbracket$ two-gross code achieves $C_{\mathrm{hw}}=2.24$.}
\label{tab:1x2_processor_code_param}
\end{table}

Mitten codes compare favorably in encoding rate and block size with other leading qLDPC code families. With a block size under a thousand, the $\llbracket 975,195,\leq 24\rrbracket$ instance encodes $195$ logical qubits, whereas a rotated surface code of distance $24$ would require over $100{,}000$ physical qubits to encode the same number. The bivariate bicycle codes~\cite{bravyi2024high} such as the $\llbracket 144,12,12\rrbracket$ gross code and the $\llbracket 288,12,18\rrbracket$ two-gross code, achieve encoding rates of only about $8\%$ and $4\%$, respectively, compared to the guaranteed $20\%$ of mitten codes. Mitten codes also have smaller block sizes than high-rate abelian LP codes~\cite{cain2026shorsalgorithmpossible10000} and Kasai codes~\cite{kasai2026breakingorthogonalitybarrierquantum, zhao2026towards}, simplifying decoding and facilitating low-overhead and highly parallelizable logical operations. 

The low overhead and high parallelism of logical operations on mitten codes stem from the full-rank requirement in Definition~\ref{def:mitten_processor_code} which is a special case of the square invertibility condition of Definition~\ref{def:square_inv_condition}. This condition equips mitten codes with a canonical logical basis whose logical operators are related by group action.
\begin{theorem}[Canonical logical basis for mitten codes {[special
case of Theorem~\ref{thm:canonical-logicals}]}]
\label{thm:mitten_canonical_basis}
    Let $\{\indicator_g\}_{g\in G}$ denote the standard basis of
    $\F_2^{|G|}$, with coordinates indexed by the elements of $G$.
    A mitten code as defined in
    Definition~\ref{def:mitten_processor_code} has a canonical
    logical basis $\{\bar{X}_g\}_{g\in G},\{\bar{Z}_g\}_{g\in G}$,
    such that $[\bar{X}_g,\bar{Z}_h] =
    2\bar{X}_g\bar{Z}_h\,\delta_{g,h}$ for $g,h\in G$, and each of
    the two sets is a single orbit of the group action of $G$ on
    the identity-element representative. Concretely, the logical
    representatives have the form
    \begin{equation}
        \bar{X}_g =
        \left(\indicator_g,\,u_g,\,0_{|G|},\,0_{|G|},\,0_{|G|}\right)^T,
        \qquad
        \bar{Z}_g =
        \left(\indicator_g,\,0_{|G|},\,v_g,\,0_{|G|},\,0_{|G|}\right)^T,
    \end{equation}
    where $u_g, v_g\in\F_2^{|G|}$ are the unique solutions of
    \begin{equation}\label{eq:mitten_logical_solutions}
        \Lrep{a_0}\indicator_g + \Lrep{a_1}v_g = 0,\qquad
        \Rrep{b_0}\indicator_g + \Rrep{b_1}u_g = 0.
    \end{equation}
\end{theorem}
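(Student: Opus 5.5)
The plan is to verify the claimed representatives directly against the parity-check matrices of Definition~\ref{def:mitten_processor_code}, then count logical qubits using the full-rank hypothesis, and finally read off the orbit structure from the fact that left and right regular representations commute.

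\emph{Step 1: well-definedness and membership in the kernels.} Since $\Lrep{a_1}$ and $\Rrep{b_1}$ are invertible, \eqref{eq:mitten_logical_solutions} has the unique solutions
\[
u_g=\Rrep{b_1}^{-1}\Rrep{b_0}\indicator_g,\qquad v_g=\Lrep{a_1}^{-1}\Lrep{a_0}\indicator_g .
\]
Multiplying $H_Z$ by $\bar X_g$:
\begin{itemize}
\item row block $Z_0$ gives $\Rrep{b_0}\indicator_g+\Rrep{b_1}u_g=0$;
\item row block $Z_1$ only sees $D_3,D_4,D_5$, which vanish.
\end{itemize}
Hence $H_Z\bar X_g=0$. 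Symmetrically, multiplying $H_X$ by $\bar Z_g$:
\begin{itemize}
\item row block $X_0$ gives $\Lrep{a_0}\indicator_g+\Lrep{a_1}v_g=0$;
\item row block $X_1$ only sees $D_2,D_4,D_5$, which vanish.
\end{itemize}
Hence $H_X\bar Z_g=0$.

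\emph{Step 2: symplectic pairing.} The supports of $\bar X_g$ and $\bar Z_h$ can only meet in block $D_1$, because $\bar X_g$ lives on $D_1,D_2$ and $\bar Z_h$ on $D_1,D_3$. Their overlap is therefore $\indicator_g\cdot\indicator_h=\delta_{g,h}$, which gives the stated (anti)commutation relations.

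Consequently, any nonzero sum $\sum_{g\in S}\bar X_g$ anticommutes with $\bar Z_h$ for any $h\in S$. Since $\bar Z_h\in\ker H_X$ commutes with every $X$-stabilizer, such a sum cannot lie in $\rowspace H_X$. So the $\bar X_g$ are independent modulo stabilizers, and by the symmetric argument so are the $\bar Z_g$.

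\emph{Step 3: counting.} The columns $D_3,D_4$ of $H_X$ contain the block $\mathrm{diag}(\Lrep{a_1},\Lrep{a_1})$, which is invertible. Likewise the columns $D_2,D_4$ of $H_Z$ contain $\mathrm{diag}(\Rrep{b_1},\Rrep{b_1})$. Hence both matrices have rank $2|G|$, and
\[
k=5|G|-2|G|-2|G|=|G|.
\]
Our $|G|$ independent conjugate pairs therefore form a complete logical basis.

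\emph{Step 4: orbit structure.} Write $\indicator_g=\Lrep{g}\indicator_e$; equivalently $\indicator_g=\Rrep{g'}\indicator_e$ for the appropriate $g'\in\{g,g^{-1}\}$, depending on the convention of Definition~\ref{def:group-ring-and-regular-reps}. Since left and right regular representations commute, we get
\[
u_g=\Rrep{b_1}^{-1}\Rrep{b_0}\Lrep{g}\indicator_e=\Lrep{g}u_e .
\]
Hence $\bar X_g=(\Lrep{g}\oplus\Lrep{g}\oplus 0\oplus 0\oplus 0)\,\bar X_e$. Similarly $v_g=\Rrep{g'}v_e$, so $\bar Z_g$ is obtained from $\bar Z_e$ by the right action applied blockwise. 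Thus each family is a single $G$-orbit of the identity representative.

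The only delicate point is the convention bookkeeping in Step 4, namely which of $g$ or $g^{-1}$ appears for the right action. The argument itself rests entirely on the commutation $\Lrep{x}\Rrep{y}=\Rrep{y}\Lrep{x}$.
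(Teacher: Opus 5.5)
Your proposal is correct and takes essentially the paper's route: kernel membership plus the symplectic pairing of Lemma~\ref{lem:symplectic-pairing-logicals}, and the orbit structure from uniqueness of $u_g,v_g$ together with the commutation of left and right regular representations. The paper argues this in the general lifted-product setting of Theorem~\ref{thm:canonical-logicals}, whereas you work directly on the $1\times 2$ block matrices. Your Step~3 rank count giving $k=|G|$ usefully makes explicit the completeness that the paper's lemma takes for granted (the paper justifies it separately via the K\"unneth aside). One cosmetic fix: let $\Lrep{g}$ act on all five blocks rather than padding with $0$, so that the map in Step~4 is an honest group action.
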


\noindent

This canonical logical basis has three particularly nice properties. First, it is constructed directly from the codewords $(\indicator_g, u_g)^T$ and $(\indicator_g, v_g)^T$ of the classical base matrices $\Rrep{B}$ and $\Lrep{A}$ respectively. Secondly, all logical representatives can be obtained from the identity-element representatives $\bar{X}_e = (\indicator_e, u_e, 0_{|G|}, 0_{|G|}, 0_{|G|})^T$ and $\bar{Z}_e = (\indicator_e, 0_{|G|}, v_e, 0_{|G|}, 0_{|G|})^T$ by group multiplication. To see this, identify $\F_2^{|G|}$ with the group algebra $\ringR=\F_2[G]$ by mapping each standard basis vector $\indicator_g$ to the group element $g$, so that $u_g$ and $v_g$ correspond to ring elements (Definition~\ref{def:group-ring-and-regular-reps}). Under this identification, $\bar{X}_e=\left(e,\, u_e,\, 0,\, 0,\, 0\right)^T$. Left-multiplying each block by $g\in G$ yields $\left(g,\, g u_e,\, 0,\, 0,\, 0\right)^T$, which still satisfies the defining condition in~\eqref{eq:mitten_logical_solutions} because left multiplication commutes with the right-regular representations $\Rrep{b_0}$ and $\Rrep{b_1}$. By the uniqueness of $u_g$, this operator is precisely $\bar{X}_g$. Similarly, multiplying on the right by $g$ generates $\bar{Z}_g$ from $\bar{Z}_e$. Finally, the canonical basis has a favorable low-weight support on the data qubits. Each representative is supported on only two of the five data blocks, with weight at most $|G|+1$ although in practice much lower (Table~\ref{tab:1x2_processor_code_param}), and the supports of each conjugate pair $\bar{X}_g$ and $\bar{Z}_g$ intersect on exactly one data qubit---the qubit labeled $g$ in block $D_1$. These three properties of the canonical logical basis enable us to construct gadgets that support all three instruction sets $\bquis$, $\hquis$, and $\fquis$.

\section{Universal fault-tolerant logic}
\label{sec:universal_ftqc}
We equip mitten codes for universal fault-tolerant quantum processing by utilizing the properties of the canonical basis to construct gadgets that realize the three instruction sets $\bquis$, $\hquis$, and $\fquis$.

\textit{Reusable graph surgery for $\bquis$.}
A graph surgery gadget~\cite{Horsman_2012, fowler2019lowoverheadquantumcomputation, Vuillot_2019, cross2025improvedqldpcsurgerylogical, Ide_2025, Williamson_2026} measures a target logical Pauli operator by temporarily attaching an ancilla system to the code block (Fig.~\ref{fig:mitten-code}(b)). The ancilla system consists of ancilla qubits and new $X$ and $Z$ stabilizer checks that couple the ancilla checks to the support of the target logical. The new stabilizers are chosen so that a product of a subset of them equals the target logical operator. Measuring these stabilizers reads out the target logical measurement outcome while leaving the remaining encoded information intact. Because the canonical logical basis lies in a single group orbit, one seed gadget for $\bar{X}_e$ and one for $\bar{Z}_e$, rewired by the group action, can measure any $\bar{X}_g$ and $\bar{Z}_g$~\cite{webster2025explicit} for $g\in G$. Furthermore, bridging seed gadgets~\cite{cross2025improvedqldpcsurgerylogical} yields equally reusable gadgets for $\bar{X}_g\bar{X}_h$, $\bar{Z}_g\bar{Z}_h$, and $\bar{Y}_g$ for any $g,h\in G$. A mitten code therefore needs only five reusable gadgets for universal Clifford processing, and we construct all five explicitly and report their overheads in Table~\ref{tab:processor_graph_surgery}. All of the gadgets are distance-preserving and increase the maximum check weight by at most three. Further details regarding gadget construction can be found in Appendix~\ref{subsec:graph_surgery}

\textit{Parallel surgery gadgets and parallel magic state injection for $\hquis$.}
High-rate surgery gadgets can measure up to $k/2$ disjoint weight-two logical Pauli products~\cite{zheng2025highratesurgeryconstantoverheadlogical}, covering all logical qubits at once. We explicitly construct such gadgets measuring up to $k/2$ disjoint random Pauli products of the form $\bar{X}_g\bar{X}_h$ on the $\llbracket 300,60,14 \rrbracket$ mitten code. The low-weight canonical basis keeps the gadget overheads modest, as reported in Table~\ref{tab:processor_c300_high_rate_surgery}. Detailed explanations of their construction can be found in Appendix~\ref{subsec:parallel_surgery}. Next, we present a scheme for injecting $k$ magic states in parallel into a mitten code. The magic states from the surface codes of distance $d_{\mathrm{rep}}$ can be teleported to all logical qubits of the mitten code in parallel utilizing an ancillary LP code and parallel surgeries that preserve the LP structures. Notably, the extra space cost is dominated by the surface-code magic factory, and even this cost is typically subleading over the course of a computation,
since the factory's ancilla qubits can be substantially reused (Appendix~\ref{subsec:magic_injection_complexity},
Table~\ref{tab:parallel_magic_complexity}). The time overhead is $\Theta(d_{\mathrm{rep}})$, independent of $k$.
In Appendix~\ref{app:parallel_magic} we provide detailed analysis of the parallel magic injection protocol on mitten codes and prove its end-to-end distance-preserving property.

\textit{Full extractors for $\fquis$.}
A full extractor~\cite{he2025extractorsqldpcarchitecturesefficient} is a single fixed gadget that measures any logical Pauli product by activating or deactivating individual connections to the code block. Full extractors are in general costly in space, but the single-orbit canonical basis lets mitten codes reduce this cost. Using techniques similar to those in Refs.~\cite{blue2026extractorslogicalprocessinghypergraph, zheng2026canonical}, we construct single-sided $X$ and $Z$ extractors and join them with a bridge into a full extractor. Across the mitten code family, attaching the extractor keeps every merged check weight at most $14$; full overheads are reported in Table~\ref{tab:full_extractor_overhead} and further details of how we constructed the extractors can be found in Appendix~\ref{app:surgery}.

\section{Decoding performance and processing capacity of mitten codes}
\label{sec:decoder}

We now turn to the processing capacity of our mitten code qLDPC processors: the expected number of \quops they execute before the first logical error (Definition~\ref{def:processor_params}). Probing a processing capacity of $10^{10}$ \quops requires simulating and decoding tens of billions of shots, each an independent noise realization sampled from a circuit-level noise model with Stim~\cite{Gidney_2021}. Our decoding infrastructure is therefore designed to meet two requirements at once: it must have enough throughput to probe this low logical error rate regime, and its estimates should reflect a realistic decoding stack that can support real-time decoding.\footnote{By real-time decoding we mean meeting the \emph{reaction-time} requirement, not merely avoiding the backlog problem~\cite{Terhal_2015}. The backlog problem can always be avoided by adding compute to each stage of the pipeline via parallel window decoding~\cite{Skoric_2023}. However, a feedforward operation conditioned on a decoded logical measurement must wait for its window to be decoded, so what matters is the reaction time: the latency to decode a single window (Appendix~\ref{app:decoding}).}

To satisfy both, we develop a \emph{telescoping decoder} that applies many stages of belief propagation (BP) and Relay-BP~\cite{mueller2025improvedbeliefpropagationsufficient}, followed by a final integer-programming stage that solves the most-likely-error decoding problem exactly (Appendix~\ref{app:decoding}). The design builds on hierarchical decoding strategies that route harder shots to progressively more accurate decoders~\cite{delfosse2020hierarchicaldecodingreducehardware, ravi2022betterworstcasedecodingquantum, toshio2025decoderswitchingbreakingspeedaccuracy, zhao2026towards}.\footnote{We call our design \emph{telescoping} rather than the more common \emph{hierarchical} to emphasize its many nested stages of BP and Relay-BP: each stage winnows away the shots it can confidently decode, so that progressively more expensive decoders act on a progressively smaller residual of harder shots. This narrowing allows us to ``telescope'' in to the low logical error rate regime.} Custom CUDA kernels decode the vast majority of shots at high throughput on the GPU, and custom C kernels on the CPU handle the harder shots the GPU stages defer~\cite{knitkitgithub}. In most of our simulations, later stages also switch to a finer-grained but equivalent representation of the decoding problem given by the GARI transform~\cite{gari_paper} of the full
correlated detector error model. The GARI transform rewires the decoding graph
into a form more amenable to BP.

Benchmarked on memory experiments of the $\llbracket 144,12,12 \rrbracket$ gross code~\cite{bravyi2024high}, our telescoping decoder matches or improves upon the logical error rates of current state-of-the-art decoders, including most-likely-error approximators like Tesseract~\cite{beni2025tesseractsearchbaseddecoderquantum} and neural decoders like Cascade~\cite{gu2026scalableneuraldecoderspractical}, while delivering nearly twice the throughput of the neural decoder closest to it in accuracy (Figure~\ref{fig:bb144_dec_comp}, Table~\ref{tab:gross144_gari}, Appendix~\ref{app:decoding}).

Figure~\ref{fig:numerics} summarizes the performance of the mitten codes under our telescoping decoder. In all simulations, we employ a circuit-level noise model with uniform depolarizing noise of strength $p$ applied to state preparation, two-qubit gates, and measurements, and no idling noise. As a memory, the mitten codes behave as a family with numerics consistent with a threshold\footnote{At these finite block sizes, the common crossing of the logical error rate curves should be understood as an effective, finite-size threshold rather than an asymptotic one.} of $p_c \approx 0.7\%$. At $p=0.1\%$, the $\llbracket 300,60,14 \rrbracket$ code reaches a block logical error rate of $9.5^{+21.8}_{-7.9}\times 10^{-12}$ per syndrome extraction round, corresponding to only one logical error observed in over $100$ billion syndrome extraction rounds. Additionally, at $p=0.4\%$, the $\llbracket 975,195,\leq 24\rrbracket$ code attains a block logical error rate of $4.2^{+5.5}_{-2.7}\times 10^{-8}$ per round, nearly two orders of magnitude below the $2.44^{+0.53}_{-0.44}\times 10^{-6}$ of the $\llbracket 112320, 195, 24\rrbracket$ stack of rotated surface codes decoded with minimum-weight perfect matching~\cite{Higgott2025sparseblossom}. For all memory experiments in Figure~\ref{fig:numerics}(a) we use SE circuits that preserve the block-wise group structure, reflecting the experimental realization on atom-array platforms. Within this block-wise family, we used our new distance estimator \textsf{sQetch}~\cite{knitkitgithub} to search for schedules that likely preserve the circuit-level distance. The surgery experiments in Figure~\ref{fig:numerics}(b) instead use random coloration SE circuits, for which we verified with \textsf{sQetch} that the circuit-level distance is likewise preserved (Appendix~\ref{app:sqetch}, Appendix~\ref{app:experimental_complexity}).

Figure~\ref{fig:numerics}(b) shows the performance of fault-tolerant logical operations at $p=0.1\%$. For the $X$-type surgeries we simulate both the $X$ basis, which contains the measured logical operator, and the $Z$ basis, confirming that the unmeasured $Z$ logicals are not disturbed by the surgery. On the $\llbracket 300,60,14 \rrbracket$ code we simulate single-logical $\bar{X}$ measurements, joint $\bar{X}\bar{X}$ measurements, and high-rate surgery gadgets measuring ten $\bar{X}\bar{X}$ products in parallel, observing two logical errors across one billion $\bar{X}\bar{X}$ operations and none in the remaining experiments. This supports the conclusion that the $\llbracket 300, 60, 14 \rrbracket$ mitten code can function as a giga-\quop processor at $p = 0.1\%.$ The $\llbracket 540,108,18 \rrbracket$ code incurs only two logical errors across 15 billion $\bar{X}\bar{X}$ operations corresponding to a logical error rate of $1.33^{+1.76}_{-0.86}\times 10^{-10}$ per \quop{}. This is consistent with the conclusion that it can function as a ten-billion-\quop{} processor. Notably, every logical error we observe in the $X$ basis is a timelike error in the measured operator itself so these error rates can likely be suppressed further by simply increasing the number of syndrome extraction rounds. Finally, although the pipeline is tuned to maximize Monte Carlo throughput, its staged structure extrapolates to real-time operation: mapping each stage onto existing FPGA implementations, we estimate that the average latency of our decoder keeps pace with the incoming syndrome stream for memory and surgery experiments on representative mitten codes assuming a \SI{1}{\milli\second} syndrome extraction time (Appendix~\ref{app:decoding}, Table~\ref{tab:rt-load}).

\section{Hardware implementations}
\label{sec:hardware_implementations}

\begin{figure}[t]
    \centering
    \includegraphics[width=1\linewidth]{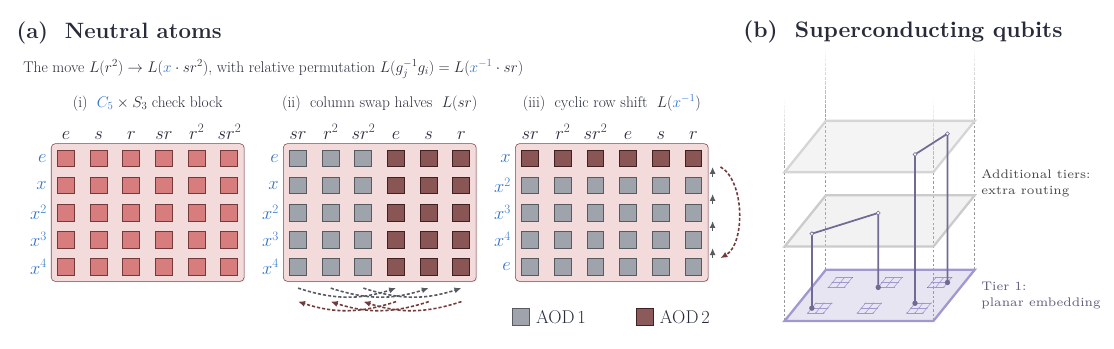}
    \caption{\textbf{Hardware implementation.} \textbf{(a)} In neutral atom hardware, each gate layer of syndrome extraction is implemented by entangling blocks of checks with corresponding blocks of data qubits, and permuting check qubits between layers. Here we show an example $X$-check qubit permutation for the $\llbracket 150, 30, 10 \rrbracket$ code ($G = C_5 \times S_3$), for which the check qubit permutations between layers $L(r^2)$ and $L(x\cdot sr^2)$ correspond to the group action $L(x^{-1}\cdot sr)$.
    \textbf{(b)} Mitten codes can also be implemented in multilayer superconducting hardware by placing a large planar subgraph on the first tier, then routing all remaining edges that cannot be placed on the first tier through higher tiers. 
    }
    \label{fig:hardware}
\end{figure}

To realize a mitten code processor, the nonlocal connectivity of the code must also be implemented on physical hardware. We focus on neutral atom and superconducting qubit platforms, which face different challenges. On neutral atom arrays, nonlocal connectivity is naturally realized by coherent atom transport~\cite{bluvstein2022quantum, bluvstein2024logical, manetsch2025tweezer, reichardt2024fault, zhang2026logical, sales2025experimental}, and the challenge is temporal: movement schedules must be optimized so that transport does not inflate the cycle time of the processor. On superconducting chips, the cycle time is already low (microsecond scale)~\cite{google2025quantum}, and the challenge is instead spatial: the nonlocal connectivity must be laid out as fixed couplers routed on a multilayer chip~\cite{wang2026demonstration,Yost_2020,Rosenberg_2017,PhysRevApplied.21.054063}. We address both challenges and demonstrate how mitten codes can be realized on both hardware platforms with competitive cycle times on neutral atoms and hardware complexities comparable to bivariate bicycle codes on superconducting hardware.

\subsection{Neutral atoms}
Our neutral atom SE protocol establishes the nonlocal connectivity by moving the ancilla atoms while keeping the data atoms stationary. For each entangling gate, the ancilla atom must be brought adjacent to the corresponding data atom. The exact movement is determined by the check matrices $H_X$ and $H_Z$ in \cref{eq:mitten_HxHz}, whose ring 
entries $(a_0, a_1, b_0, b_1)$ are sums of group elements $g \in G$. Each group element $g$ defines a gate layer between one check block ($X_0, X_1, Z_0, $ or $Z_1$) and one data block ($D_1 \dots, D_5$).

Once the ordering is chosen, moving between consecutive gate layers requires permuting the ancilla atoms within each check block according to multiplication by a group element.
On current neutral atom platforms, atom transport is typically implemented using a crossed pair of acousto-optic deflectors (AODs), which move entire rows and columns of atoms in parallel \cite{bluvstein2022quantum, bluvstein2026fault, tian2023parallel, reichardt2024fault}. The group structure of the mitten codes makes the required permutations highly structured and compatible with these AOD restrictions. When the group $G$ can be decomposed as a direct product $G = G_1 \times G_2$, the atoms can be arranged on a 2D grid with $G_1$ on one axis and $G_2$ on the other. Then, multiplication by a group element decomposes into row and column permutations of the $|G|$-atom check block. Thus, these permutations directly match the movements allowed under crossed AODs. Semidirect products admit a similar but twisted movement structure, described in more detail in Appendix~\ref{app:connectivity-atomarrays}.

To explicitly specify the movement rules, we first label the $|G|$ atoms in each check and data block by the elements of $G$. Recall that the first ring element is $a_0 = g_1 + g_2 + g_3$. In $H_X$, $\Lrep{a_0} = \Lrep{g_1} + \Lrep{g_2} + \Lrep{g_3}$ couples $X_0$ and $X_1$ to $D_1$ and $D_2$, respectively, with each term in the sum defining a separate gate layer. Consider two consecutive layers $\Lrep{g_i}$ and $\Lrep{g_j}$. During layer $\Lrep{g_i}$, check atom $k\in G$ is adjacent to data atom $h=g_i^{-1}k$, since a 1 in the binary permutation matrix $\Lrep{g_i}_{k,h} = 1$ corresponds to $k = g_ih.$ To then perform layer $\Lrep{g_j}$, the same check atom must move to $h' = g_j^{-1} k$. The required rearrangement of check blocks $X_0$ and $X_1$ is therefore the relative permutation $\Lrep{g_j^{-1}g_i}$, because left multiplication by $g_j^{-1}g_i$ sends $h = g_i^{-1} k$ to $h' = g_j^{-1} k$. Analogously, a transition from $\Rrep{g_i}$ to $\Rrep{g_j}$ requires the permutation $\Rrep{g_j^{-1}g_i}$. 
Repeating this process for all group elements in the check matrices completes one syndrome extraction cycle.

We show an example move in \cref{fig:hardware}(a) for the  $\llbracket 150, 30, 10 \rrbracket$ code, whose SE cycle video is also available at Ref.~\cite{knitkitgithub}. 
Here, the group is $G = C_5 \times S_3$, so we arrange each $|G|=30$-atom block on a $5\times6$ grid, with rows indexed by $C_5$ elements and columns by $S_3$ elements. Thus, each of the $|G| = 30$ atoms are labeled by a group element $(x^a, \sigma) \in C_5 \times S_3$ where $x$ is the generator of $C_5$, $a \in \{0,1,2,3,4\}$, and $\sigma \in S_3$. We show one transition from the hook-free SE schedule, which takes check blocks $X_0$ and $X_1$ from layer $\Lrep{g_i}=\Lrep{r^2}$ to layer $\Lrep{g_j}=\Lrep{x\cdot sr^2}$. Here, the reflection $s$ and the rotation $r$ are the generators of $S_3$.
Thus, the relative permutation to perform is $\Lrep{g_j^{-1} g_i} = \Lrep{x^{-1} \cdot sr}.$ This factorizes into the $S_3$ column permutation $\Lrep{sr}$ (\cref{fig:hardware}(a)ii) and the $C_5$ row shift $\Lrep{x^{-1}}$ (\cref{fig:hardware}(a)iii). $\Lrep{sr}$ simply swaps the 3 left and right columns, and $\Lrep{x^{-1}}$ is just a cyclic shift of the rows by 1. Each move can be implemented in a single transport step using two pairs of crossed AODs.

For the mitten codes in \cref{tab:1x2_processor_code_param}, we estimate syndrome extraction (SE) cycle times using present day AOD capabilities and experimentally demonstrated transport data from Ref.~\cite{bluvstein2022quantum}. 
This hardware model follows that of \cite{zhao2026towards}. For the codes listed in \cref{tab:1x2_processor_code_param}, we find that a single SE cycle takes about $8-24$ ms with 2 pairs of crossed AODs, and $5-15$ ms with 4 pairs of crossed AODs.

To isolate the overhead imposed by the AOD movement restrictions, we also consider a futuristic architecture in which every atom can follow an independently controlled trajectory. Such control could potentially be implemented using new and developing optical instruments other than AODs, such as a fast spatial light modulator (SLM) \cite{wei202610, bytyqi2026device}. 
As a comparison, we use the same transport parameters as in the AOD estimates, and find that every code admits an SE cycle time between $2.9-4.6$ ms (see \cref{table:atomarray-metrics}).
As this is a forward looking device, the raw SE cycle times are not to be emphasized, but rather its comparison with the AOD-constrained cycle times.

Finally, we want to emphasize movement speed is a tunable parameter depending on the desired performance.
In the idealized regime where finite trap depth, rather than AOD technical limitations, sets the maximum acceleration, a move which begins and ends at rest, takes time approximately $t \propto I^{-1/2}$, where I is the intensity of the tweezer.
(See Appendix~\ref{sec:transport_assumption_discussion} for more details.)
As such, to reduce movement times by a factor of $n$, one can, in theory, increase the power-per-trap by a factor of $n^2$. However, if laser power limits the number of qubits, this reduces the number of atoms that can be transported simultaneously by a factor of $n^2$. Conversely, parallel protocols, such as Shor-style syndrome extraction~\cite{shor1997faulttolerantquantumcomputation}, could also reduce SE times.
So, while SE times can readily be reduced with neutral atoms, both through using more laser power per-atom or through Shor-style SE, true hardware implementations will require a more holistic analysis in terms of space and time.

\subsection{Superconducting qubits}
Our smaller mitten codes are also well suited for superconducting qubit architectures, where the code's connectivity can be fabricated as couplers routed across the tiers of a multilayer chip. The simplest proxy for this fabrication cost is the planar thickness of the code's Tanner graph i.e. the minimum number of planar layers into which its edges can be decomposed. We prove that the thickness of every mitten code is exactly three (Theorem~\ref{thm:thickness}). This is one layer more than the thickness-$2$ bivariate bicycle codes~\cite{bravyi2024high}.

Thickness, however, is only a coarse proxy for superconducting hardware complexity; a finer assessment requires fixing a concrete hardware architecture. We choose a multi-chip stackup~\cite{HAL_paper} with flip-chip geometry~\cite{Yost_2020, PhysRevApplied.21.054063, Rosenberg_2017, norris2026performance} in which the qubits occupy the first tier, and each additional chip creates a higher routing tier that provides more space to route nonlocal couplers without crossings. 
The cost of a layout~\cite{HAL_paper} is then set by the number of tiers it uses, the length of each routed coupler, and the amount of vertical routing required through bump-bond transitions~\cite{Rosenberg_2017, norris2026performance} and through-silicon vias~\cite{Yost_2020, mallek2021fabricationsuperconductingthroughsiliconvias,hazard2023characterizationsuperconductingthroughsiliconvias}. 

As a first step towards evaluating the hardware feasibility of the mitten codes beyond their thickness, we lay out each code with HAL~\cite{HAL_paper}, a heuristic placement-and-routing algorithm for qLDPC codes on multilayer superconducting hardware. Although HAL can choose the first-tier qubit placement automatically, we find that supplying a custom layout is crucial. We place all the qubits on the first tier as a grid of $3\times 3$ modules from our thickness-3 decomposition (Figure~\ref{fig:thickness3_layout}), and optimize the module placement as a quadratic assignment problem. HAL then routes the remaining couplers through higher tiers while avoiding collisions (Appendix~\ref{app:superconducting_hw}). 

We summarize the resulting layouts by the hardware complexity $C_{\mathrm{hw}}$ defined in Ref.~\cite{HAL_paper}, which is a weighted average of the four  metrics defined above. These metrics are linearly rescaled so that a surface-code layout yields $C_{\mathrm{hw}}=1$ and a layout saturating the optimistic hardware targets of Ref.~\cite{HAL_paper}---five tiers, coupler lengths of ten times the nearest-neighbor distance, four bump-bond transitions and three TSVs per coupler---yields $C_{\mathrm{hw}}=2$. Table~\ref{tab:1x2_processor_code_param} lists $C_{\mathrm{hw}}$ for each mitten code, and Table~\ref{tab:paper_nonab_hw} gives the full breakdown into the individual metrics. Despite their larger thickness, the smaller mitten codes achieve hardware complexity similar to or lower than bivariate bicycle codes of comparable block size while encoding substantially more logical qubits.

\section{QLDPC processor discovery pipeline}
\label{sec:knitter}
We have shown that mitten codes enable qLDPC processors with high processing capacity and throughput. However, building a practical device requires singling out explicit instances and equipping them with concrete gadgets, syndrome extraction schedules, and hardware layouts. In this section, we describe our pipeline for qLDPC processor discovery. 
The pipeline relies on two ingredients, theoretical bounds that constrain the search space and \textsf{sQetch}~\cite{knitkitgithub}, a fast GPU-based distance estimator that runs up to $800{,}000 \times$ faster than current state-of-the-art methods~\cite{Pryadko_2022} and allows us to brute-force search through millions of candidate codes per hour on a single GPU.

As shown in Figure~\ref{fig:pipeline}, the first stage of the pipeline constrains the search space with filters that avoid constructing the quantum code altogether. This is enabled by the following new theorem.

\begin{theorem}[{[informal
version of Theorem~\ref{thm:dc_ub_dq}]}]
\label{thm:dc_ub_dq_informal}
    Let $A$ and $B$ be wide base matrices over the group algebra $\F_2[G]$ of any group $G$ such that the binary matrices $\Lrep{A}$ and $\Rrep{B}$ have full row rank. Then the minimum distance of the two classical base codes $A$ and $B$ upper bounds the distance of the lifted product code $\LP(A,B)$.
\end{theorem}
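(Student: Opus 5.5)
The plan is to exhibit explicit logical operators whose weights equal the classical codeword weights, generalizing the observation made after Definition~\ref{def:mitten_processor_code}. That observation showed that in the abelian case $(a_1,0,a_0,0,0)^T$ is a $\bar Z$ logical. Here the full-row-rank hypothesis is what replaces commutativity in certifying nontriviality.

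Fix $A\in\ringR^{m_A\times n_A}$ and $B\in\ringR^{m_B\times n_B}$ with $m_A<n_A$ and $m_B<n_B$. Write the data space of $\LP(A,B)$ as $\ringR^{n_A\times n_B}\oplus\ringR^{m_A\times m_B}$. In this notation $H_X$ acts by $L(A)\otimes I$ on the first summand and $I\otimes R(B^*)$ on the second. Likewise $H_Z$ acts by $I\otimes R(B)$ on the first summand and $L(A^*)\otimes I$ on the second, exactly as in \eqref{eq:mitten_HxHz}.

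\textbf{Step 1 (candidate $\bar Z$ logical).} Take a minimum-weight nonzero codeword $c$ of the binary code $\ker L(A)$, and fix a column index $j\in\{1,\dots,n_B\}$. Define $z=(c\otimes e_j,\,0)$, i.e.\ $c$ placed in column $j$ of the first summand and zero on the second. Then $H_Xz=(L(A)c)\otimes e_j=0$. So $z$ commutes with all $X$ checks, and $\wt{z}=\wt{c}=d(A)$. I will check this carefully against the block/index conventions of the formal lifted-product definition in the appendix, including where $R$ versus $L$ act.

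\textbf{Step 2 (nontriviality).} This is the key step. Suppose $z=H_Z^Ts$ for some $s$ indexed by the $Z$-check space $\ringR^{m_A\times n_B}$. The component of $H_Z^Ts$ on the second data summand is $(L(A^*)^T\otimes I)s$. Since $L(a^*)=L(a)^T$ entrywise, this equals $(L(A)\otimes I)^{T}$-type action applied to $s$, up to the transpose/involution bookkeeping. Because $L(A)$ has full row rank, $L(A)^T$ is injective, hence so is $L(A)^T\otimes I$. The second component of $z$ vanishes, which forces $s=0$ and therefore $z=0$, a contradiction. So $z$ is a nontrivial $\bar Z$ logical, and $d_Z\le d(A)$.

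The main obstacle is exactly this index bookkeeping: one must confirm that the block of $H_Z$ touching the $m_A\times m_B$ summand really is the transpose of the full-row-rank matrix $L(A)$ tensored with an identity, and not something involving $B$. If the conventions place it differently, I would instead pair $z$ with an explicit $X$ logical having odd overlap. Alternatively, I would argue via homology that stabilizers supported only on the first summand must vanish.

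\textbf{Step 3 (symmetric $X$ side).} Take a minimum-weight codeword $w\in\ker R(B)$ and a row index $i$, and set $x=(e_i\otimes w,\,0)$. Then $H_Zx=0$, because $R(B)$ acts on the column index and commutes with the left action. The full row rank of $R(B)$, via injectivity of $R(B)^T=R(B^*)$-type blocks in $H_X^T$, shows that $x$ is not an $X$ stabilizer. Combining both sides gives $d(\LP(A,B))=\min(d_X,d_Z)\le\min(d(A),d(B))$.
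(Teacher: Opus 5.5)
Your candidate operators and the commutation checks are correct. They are the paper's $Z^\star=(u_Ag f_j,0)$ and $X^\star=(e_igv_B,0)$ with $g=e$ and the index fixed arbitrarily. However, the nontriviality argument in Step~2 (and its mirror in Step~3) is false, and that is where all the content of the theorem lies.

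The $Z$-checks are indexed by $\R^{n_A\times m_B}$, not $\R^{m_A\times n_B}$. You correctly write the second component of $H_Z^Ts$ as $(L(A^*)^T\otimes I)s$, but $L(A^*)^T\otimes I=L(A)\otimes I$ itself, not its transpose. By Lemma~\ref{lemma:ring-form}, the $Z$-stabilizers are the pairs $(TB,\,AT)$ with $T\in\R^{n_A\times m_B}$. Because $A$ is wide, $T\mapsto AT$ is surjective with a large kernel, so a vanishing second component does not force $T=0$. For example, $(u\,yB,\,0)$ with $Au=0$ and $y\in\R^{1\times m_B}$ is a $Z$-stabilizer supported on the first summand, so your homological fallback fails too.

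Your specific candidate can itself be a stabilizer. Take $G$ trivial, $A=(1\ 1)$, $B=\begin{pmatrix}1&0&0\\0&1&1\end{pmatrix}$ (both of full row rank), $c=(1,1)^T$ and $j=1$. Then $T=c\,(1\ 0)$ gives $TB=c\,(1\ 0\ 0)$ and $AT=0$, so $z$ is trivial. Step~3 fails the same way, since $S\mapsto SB^*$ is surjective, not injective.

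Two ideas are missing. First, the column index $j$ must be \emph{chosen}, not fixed. For nonabelian $G$ you also need the freedom of a right translate $cg$, which stays in $\ker L(A)$ with the same weight. Second, the hypothesis that certifies the $Z$ candidate built from $A$ is full row rank of $R(B)$, not of $L(A)$. Symmetrically, full row rank of $L(A)$ certifies the $X$ candidate built from $B$.

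This is the paper's Lemma~\ref{lem:avoid_stabs}:
\begin{itemize}
\item Full row rank of $R(B)$ gives $S_B$ with $S_BB^*=I$. Hence $P_B=I-S_B^*B$ satisfies $BP_B=0$ and annihilates every $TB$ from the right.
\item Suppose $cg\,f_jP_B=0$ for all $j$ and $g$, where $f_j$ is the $j$-th standard row vector. Let $M=\operatorname{span}_{\F_2}\{cg:g\in G\}$. Then every $n_A\times n_B$ matrix $V$ with columns in $M$ satisfies $V=(VS_B^*)B$, and the columns of $VS_B^*$ again lie in $M$.
\item So $T\mapsto TB$ would map $M^{\oplus m_B}$ onto $M^{\oplus n_B}$. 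This is impossible by $\F_2$-dimension, since $m_B<n_B$ and $M\neq 0$.
\end{itemize}

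Your other fallback, pairing $z$ with an $X$ operator of odd overlap, only restates $z\notin\rowspace(H_Z)=(\ker H_Z)^\perp$. Without square invertibility there is no explicit $X$ basis to pair with, so it does not bypass this argument.
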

\noindent We also found a counterexample, presented in Appendix~\ref{app:distance_bounds}, that demonstrates that one cannot in general get rid of the full row rank condition in Theorem~\ref{thm:dc_ub_dq_informal}. Since every base matrix we search over satisfies the full row rank condition (Remark~\ref{rem:applicability}), Theorem~\ref{thm:dc_ub_dq_informal} allows us to just check the distance of the significantly smaller classical base matrices before combining them to make a quantum code. These classical distances are in turn upper bounded by a collection of purely algebraic bounds---the commutator subgroup bound and the element order bound among many others (Appendix~\ref{app:distance_bounds})---which are evaluated directly from the group $G$ and the base-matrix entries and allow us to rule out candidates that cannot have high distance. Finally, we use \textsf{sQetch} to brute force search through the remaining search space to pair classical codes up and construct mitten codes with high distance. We equip surviving codes with hook-error-free syndrome extraction schedules by searching via \textsf{sQetch} for schedules that likely preserve the circuit-level distance. Afterwards, the surviving codes are dressed with their surgery, extractor, and magic state injection gadgets, compiled to concrete atom-array movement schedules and multilayer superconducting layouts, and finally simulated under circuit-level depolarizing noise with the high-throughput telescoping decoder. A more detailed discussion of the design of our qLDPC processor discovery pipeline is provided in Appendix~\ref{app:knitter_pipeline}.

\section{Conclusion and outlook}
We have introduced mitten codes, a family of $1/5$-rate, check-weight-$9$ lifted product codes over non-abelian groups with instances from $\llbracket 150,30,10\rrbracket$ through $\llbracket 540,108,18\rrbracket$ to $\llbracket 975,195,\leq 24\rrbracket$, and shown that they enable universal fault-tolerant qLDPC processors that perform well by all three metrics in Definition~\ref{def:processor_params}: processing capacity, throughput, and cycle time. This high performance derives from the structured canonical logical basis admitted by these codes, which yields a basic instruction set based on five reusable graph-surgery gadgets, a high-throughput instruction set with parallel surgery gadgets and parallel magic state injection, and a fixed-gadget instruction set with a low-overhead extractor. Decoded by our telescoping decoder, mitten codes attain a logical error rate of $10^{-11}$ per syndrome extraction cycle in memory experiments, and achieve a logical processing capacity consistent with $10^{10}$ \quops, for a physical gate error rate of $p=0.1\%$.

These codes were found by our qLDPC processor discovery pipeline, powered by the GPU-based distance estimator \textsf{sQetch}. Because every design requirement enters the pipeline as an adjustable input, we anticipate its use well beyond the scope of this work, for co-designing processors suited for other hardware platforms, rate targets, and instruction sets. Indeed, this same pipeline also produced some of the quasi-cyclic $\LP$ codes that are reported in Ref.~\cite{cain2026shorsalgorithmpossible10000}, as well as other abelian $\LP$ codes described in Appendix~\ref{app:distance_bounds}.

Our telescoping decoder matches or improves upon the logical error rates of current state-of-the-art decoders. Even so, our analysis of logical error rates has been limited by the classical computing resources available for our surgery simulations; moreover, in the case of the $\llbracket 540,108,18\rrbracket$ code, the observed logical failures were exclusively due to timelike errors, indicating that increasing the number of syndrome measurement rounds could further suppress the logical error rates. Higher-distance mitten codes may well reach teraquop-scale processing capacity, a regime we have not yet been able to certify by Monte Carlo sampling. 

Mitten codes and the qLDPC processor discovery pipeline open many opportunities for future exploration. Further improvements in processor design may reduce the spacetime cost of surgery gadgets. Decoders that exploit code structure more effectively may run faster and perform even better. Better on-chip routing in superconducting processors and enhanced movement strategies in atom arrays may reduce the complexity of syndrome extraction and logical operations. Finally, the small block size, modest check weight, and rich symmetry of mitten codes may enable experimental demonstrations of high-rate qLDPC processors on near-term devices.

\begin{acknowledgments}
We direct the reader to the concurrent work Ref.~\cite{Hong2026RateOneFifthLDPC}, which also develops non-abelian lifted product codes with $20\%$ encoding rate, and Ref.~\cite{zheng2026canonical}, which constructs highly symmetric logical bases and efficient logic for abelian lifted product codes. We thank Yifan Hong for the flexibility in coordinating the paper release. We also thank Chris Cama\~{n}o, Margarita Davydova, Ryan Liu, Zachary Mann, Nathaniel Selub, Lucas Tecot, Victor Wei, and Han Zheng for many valuable and insightful discussions. A.B. is supported by the Kortschak Scholars Program. N.M. is supported by the Air Force Office of Scientific Research under award number FA9550-23-F-0014. M.M., J.P., H.H. acknowledge support from the Institute for Quantum Information and Matter, an NSF Physics Frontiers Center (PHY-2317110). H.H. acknowledges support from the Broadcom Innovation Fund.
\end{acknowledgments}

\bibliographystyle{unsrt}
\bibliography{refs}

\clearpage
{\noindent \fontsize{22}{20} 
\textbf{Appendices}}
\tableofcontents

\appendix

\newpage
\section{Preliminaries}
\label{app:prelim}
We review the lifted product (LP) code construction~\cite{Panteleev_2022, breuckmann2021balanced}. We begin with hypergraph product (HGP) codes~\cite{Tillich_2014} and then show that the lifted product is a natural generalization of the HGP construction obtained by replacing scalar binary matrices with matrices whose entries lie in a group ring. We then recast both constructions from the perspective of homological algebra, where hypergraph product and lifted product codes can be described as tensor product chain complexes and their logical operators  associated with the homology of the chain complex. Throughout, we are interested in error-correcting codes on qubits and hence always work over $\mathbb{F}_2$.

\subsection{Hypergraph product codes}
    A classical linear code over $\F_2$ can be defined by a parity-check matrix $H \in \mathbb{F}_2^{r \times c}$, where $c$ is the number of bits and $r$ is the number of parity checks.  The codewords are precisely the vectors in $\ker H$.  A CSS code can be specified by two parity-check matrices $H_X$ and $H_Z$ that satisfy the commutation condition $H_X H_Z^T=0$.  The hypergraph product code construction gives a systematic way to construct such pairs of parity-check matrices from two classical codes.
    \begin{definition}[Hypergraph product code]
        Let $A \in \mathbb{F}_2^{r_1 \times c_1}$ and
        $B \in \mathbb{F}_2^{r_2 \times c_2}$.  The \emph{hypergraph product} (HGP)
        code~\cite{Tillich_2014} associated with $(A,B)$ is the CSS code on $n = c_1 c_2 + r_1 r_2$ physical qubits with parity-check matrices
        \begin{align}
            H_X & = \bigl[A \otimes I_{c_2} \;\big|\; I_{r_1} \otimes B^T\bigr], \label{eq:HX} \\
            H_Z & = \bigl[I_{c_1} \otimes B \;\big|\; A^T \otimes I_{r_2}\bigr]. \label{eq:HZ}
        \end{align}
    \end{definition}
    This definition ensures that the CSS commutation condition holds:
    \begin{align}
        H_X H_Z^T
        & = (A \otimes I_{c_2})(I_{c_1} \otimes B)^T
        + (I_{r_1} \otimes B^T)(A^T \otimes I_{r_2})^T \\
        & = A \otimes B^T + A \otimes B^T             \\
        & = 0.
    \end{align}
    where the last equality follows since we are working over $\mathbb{F}_2$.

    \begin{definition}[Kernel dimensions]
        For a binary matrix $M$, define
        \begin{equation}
            h(M) = \dim\ker M,
            \qquad
            h^\perp(M) = \dim\ker M^T .
        \end{equation}
    \end{definition}
    The hypergraph product code has
    \begin{equation}
        k = h(A)h(B) + h^\perp(A)h^\perp(B)
    \end{equation}
    logical qubits.  If $A$ and $B$ are full-row-rank parity-check matrices for classical codes of dimensions $h(A)$ and $h(B)$, then $h^\perp(A)=h^\perp(B)=0$ and the formula reduces to $k=h(A)h(B)$. We can also obtain a simple lower bound on $k$ in terms of the dimensions of $A$ and $B$ assuming they have more columns than rows:
    \begin{equation}
        k \ge (c_a - r_a)(c_b - r_b).
    \end{equation}

    \begin{example}
        The unrotated surface code is an example of a hypergraph product between two repetition codes.

        Let $\RepCode(d)$ denote the length-$d$ repetition code with parity-check matrix
        \[H_{\mathrm{rep}}(d)=
\begin{pmatrix}
1 & 1 & 0 & \cdots & 0\\
0 & 1 & 1 & \ddots & \vdots\\
\vdots & \ddots & \ddots & \ddots & 0\\
0 & \cdots & 0 & 1 & 1
\end{pmatrix}
\in \F_2^{(d-1)\times d}.
\]
Taking $A=B=H_{\mathrm{rep}}(d)$ in the hypergraph product construction gives the distance-$d$ unrotated surface code with parameters $ \llbracket d^2+(d-1)^2,\,1,\,d\rrbracket. $

\end{example}

LP codes generalize HGP codes by replacing binary matrix entries with elements of a group algebra. Thus, before introducing lifted product codes, we first fix some group conventions we use to describe our code families, and then review the group theory background needed for the construction.

\subsection{Group conventions and group algebras}

    Many of the groups used in our code constructions can be written as direct products or semidirect products. As the definition for a direct product is unambiguous, we just specify the convention used for semidirect product. 
    
\begin{definition}[Semidirect product]
\label{def:semidirect} 
Let $N$ and $H$ be two groups, let $\varphi: H \to \mathrm{Aut}(N)$ be a group homomorphism, where $\mathrm{Aut}(N)$ denotes the group of automorphisms of $N$ under composition, and let $\varphi_h = \varphi(h)$. Then, the semidirect product $N \rtimes_{\varphi} H$ is defined as the set of pairs $(n, h) \in N \times H$ with multiplication
\begin{equation}
    (n, h) \cdot (n', h') = \left(n \varphi_{h}(n'), h h' \right) \qquad \forall n, n' \in N, \quad h, h' \in H. 
    \label{eq:semidirect_mult}
\end{equation}
We abbreviate $(n,e_H)$ by $n$ and $(e_N,h)$ by $h$. Then 
$nh = (n,h)$, so every element is written uniquely with its $N$-factor
first. Written with this convention, a direct calculation with the multiplication law \cref{eq:semidirect_mult} gives
\[ hn = (e_N, h) (n, e_H) = (\varphi_h(n), h)= \varphi_h(n) h,\]
or equivalently,
\[ \varphi_h(n)=hnh^{-1}.\] 

\end{definition}

As the lifted product construction applies to any finite group $G$, we next review the necessary background on group algebras and their representations.

    \begin{definition}[Group ring and regular representations]\label{def:group-ring-and-regular-reps}
        Let $G$ be a finite group. The group algebra
        \begin{equation}
            \R
            =
            \mathbb{F}_2[G]
            =
            \left\{
            \sum_{g\in G} \alpha_g g \;:\; \alpha_g \in \mathbb{F}_2
            \right\}
        \end{equation}
        is the $\mathbb{F}_2$-vector space with basis $G$ and multiplication induced by the group law.  We write $\b{g}\in \mathbb{F}_2^{|G|}$ for the standard basis vector corresponding to $g\in G$.  By linearity, a ring element
        $a=\sum_g \alpha_g g\in \R$ corresponds to the binary vector
        \begin{equation}
            \b{a}=\sum_{g\in G}\alpha_g \b{g} .
        \end{equation}
        The \emph{left regular representation} and \emph{right regular representation} are first defined on group elements by
        \begin{equation}
            \Lrep{g}\b{h}=\b{gh},
            \qquad
            \Rrep{g}\b{h}=\b{h g^{-1}},
            \qquad g,h\in G,
        \end{equation}
        and then extended linearly to all $a\in \R$.  When the context is clear, we suppress the $\b{\cdot}$ notation and write, for example, $\Rrep{g}h=h g^{-1}$, meaning $\Rrep{g}\b{h}=\b{h g^{-1}}$.  In coordinates, for $a=\sum_g\alpha_g g$, these matrices are
        \begin{equation}
            \Lrep{a}_{y,x} = \alpha_{y x^{-1}},
            \qquad
            \Rrep{a}_{y,x} = \alpha_{y^{-1} x},
            \qquad x,y\in G.
        \end{equation}
    \end{definition}

    The left and right regular representations commute:
    \begin{equation}
        \Lrep{a}\Rrep{b} = \Rrep{b}\Lrep{a}
        \qquad \forall\, a,b\in \R.
    \end{equation}
    When we have a matrix $M$ defined over $\R$, we can expand it to a binary matrix by replacing each entry with its corresponding left or right regular representation. We will overload notation and denote the binary matrix obtained by replacing each entry of $M$ with its left (resp.~right) regular representation as $\Lrep{M}$ (resp. $\Rrep{M}$). Similarly, if we have a vector $v$ where each entry is a ring element, we write $\b{v}$ for the binary vector obtained by replacing each entry of $v$ with its corresponding binary vector. 
    \begin{definition}[Involution and conjugate transpose]\label{def:involution}
        For $a=\sum_g \alpha_g g \in \R$, its \emph{involution} is defined as
        \begin{equation}
            a^*=\sum_{g\in G}\alpha_g g^{-1}.
        \end{equation}
        For a matrix $M$ over $\R$, its \emph{conjugate transpose} $M^*$ is defined entrywise as
        \begin{equation}
            (M^*)_{ij} = (M_{ji})^* .
        \end{equation}
    \end{definition}
    This definition of conjugate transpose over the ring is compatible with the ordinary binary transpose after expansion:
    \begin{equation}
        \Lrep{M^*} = \Lrep{M}^T
        \quad \text{and} \quad
        \Rrep{M^*} = \Rrep{M}^T.
    \end{equation}
    To see this, first take a single group element $g\in G$.  The matrices $\Lrep{g}$ and
    $\Rrep{g}$ are permutation matrices, and transposing a permutation matrix gives the
    inverse permutation.  Hence $\Lrep{g}^T=\Lrep{g^{-1}}$ and
    $\Rrep{g}^T=\Rrep{g^{-1}}$.  Extending linearly gives, for $a=\sum_g\alpha_g g$,
    \begin{equation}
        \Lrep{a}^T
        =
        \sum_g \alpha_g \Lrep{g}^T
        =
        \sum_g \alpha_g \Lrep{g^{-1}}
        =
        \Lrep{a^*},
    \end{equation}
    and the same argument applies to $\Rrep{\cdot}$.  For matrices over $\R$, the ordinary binary transpose also swaps block positions, which is exactly the swap in $(M^*)_{ij}=(M_{ji})^*$.

    We now have all the necessary tools to define the lifted product code construction.

\subsection{Lifted product codes}
    Lifted product codes generalize hypergraph product codes through a richer internal structure afforded by the regular representations of a group algebra. Instead of taking the product of two ordinary binary classical codes, we allow the parity-check matrices to have entries in a group algebra, and each group algebra entry expands to a structured binary block via its left and right regular representations. The left and right regular representations are the two commuting actions that make this expansion compatible with the CSS commutation condition. 

    \begin{definition}[Lifted product code] \label{def:LP-code}
        Let $\R=\mathbb{F}_2[G]$ for a finite group $G$, and let
        $A\in \R^{r_1\times c_1}$ and $B\in \R^{r_2\times c_2}$.  The \emph{lifted product} code $\textrm{LP}(A,B)$ is the CSS code on 
        $n = |G|\,(c_1c_2+r_1r_2)$ physical qubits with parity-check matrices given by
        \begin{align}
            H_X & =
            \bigl[
                \Lrep{A\otimes I_{c_2}}
                \;\big|\;
                \Rrep{I_{r_1}\otimes B^*}
            \bigr], \label{eq:LP-HX} \\
            H_Z & =
            \bigl[
                \Rrep{I_{c_1}\otimes B}
                \;\big|\;
                \Lrep{A^*\otimes I_{r_2}}
                \bigr]. \label{eq:LP-HZ}
        \end{align}
    \end{definition}
    The lifted product construction can be viewed as a generalization of the hypergraph product construction \eqref{eq:HX}--\eqref{eq:HZ} where the tensor products and transposes are interpreted over the group algebra $\R=\mathbb{F}_2[G]$ and then expanded to binary matrices using the regular representations of $\R$. We can verify that the CSS commutation condition holds for LP codes.
    \begin{remark}[Relation to balanced product codes]
    \label{rem:balanced_product_code}
    Lifted product codes are a special case of the balanced product construction of~\cite{breuckmann2021balanced}. The tensor product over $\R=\mathbb{F}_2[G]$ in \eqref{eq:chain-complex-tensor-prod} is the balanced product of two classical chain complexes, in which the diagonal action of $G$ is quotiented out via the balancing relation. When the chain groups are free modules, as in this case, the balanced and lifted products coincide. 
    \end{remark}

    \begin{proposition}[CSS condition for LP codes]
        The parity-check matrices \(H_X\) and \(H_Z\) defined in
        \eqref{eq:LP-HX}--\eqref{eq:LP-HZ} satisfy
        \[
            H_XH_Z^T=0.
        \]
    \end{proposition}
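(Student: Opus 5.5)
The plan is to compute $H_XH_Z^T$ blockwise, mirroring the hypergraph product calculation. We then reduce everything to two facts established above. The first is that $\Lrep{\cdot}$ and $\Rrep{\cdot}$ are multiplicative and compatible with conjugate transpose: $\Lrep{M^*}=\Lrep{M}^T$, $\Rrep{M^*}=\Rrep{M}^T$, and $\Lrep{MN}=\Lrep{M}\Lrep{N}$. The second is that left and right regular representations commute, $\Lrep{a}\Rrep{b}=\Rrep{b}\Lrep{a}$.

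Writing $H_X=[P\,|\,Q]$ and $H_Z=[S\,|\,T]$, we have $H_XH_Z^T = PS^T + QT^T$. By the transpose compatibility,
\[
S^T=\Rrep{I_{c_1}\otimes B}^T=\Rrep{I_{c_1}\otimes B^*}, \qquad T^T=\Lrep{A^*\otimes I_{r_2}}^T=\Lrep{A\otimes I_{r_2}}.
\]
Hence
\[
H_XH_Z^T=\Lrep{A\otimes I_{c_2}}\,\Rrep{I_{c_1}\otimes B^*}+\Rrep{I_{r_1}\otimes B^*}\,\Lrep{A\otimes I_{r_2}}.
\]
Here one must check that $(I_{c_1}\otimes B)^*=I_{c_1}\otimes B^*$ and $(A^*\otimes I_{r_2})^*=A\otimes I_{r_2}$. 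This follows entrywise, since $I$ has entries $e$ or $0$ and $^*$ is an anti-automorphism.

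Next, we index rows of the product by $(i,j)\in[r_1]\times[r_2]$ and columns by $(k,l)\in[c_1]\times[r_2]$, using the same ordering conventions as in the definition. In the first term, the $(i,j),(k,l)$ block is a sum over the intermediate index $(m,n)\in[c_1]\times[c_2]$:
\[
\sum_{m,n}\Lrep{A_{im}\delta_{jn}}\,\Rrep{\delta_{mk}B^*_{nl}}=\Lrep{A_{ik}}\Rrep{B^*_{jl}}.
\]
In the second term, the intermediate index is $(m,n)\in[r_1]\times[r_2]$, and the block is
\[
\sum_{m,n}\Rrep{\delta_{im}B^*_{jn}}\,\Lrep{A_{mk}\delta_{nl}}=\Rrep{B^*_{jl}}\Lrep{A_{ik}}.
\]
By commutativity of the left and right representations, the two blocks are equal. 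Over $\F_2$ they sum to zero, so $H_XH_Z^T=0$.

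The only delicate step is the bookkeeping of the block indexing: the Kronecker ordering of the ring-valued tensor products must match across the two terms, and one must not use $\Lrep{MN}=\Lrep{M}\Lrep{N}$ on mixed L/R products. I would avoid the latter entirely by expanding into blocks of single ring elements before multiplying, as above. That way, only the scalar-level identity $\Lrep{a}\Rrep{b}=\Rrep{b}\Lrep{a}$ is needed.
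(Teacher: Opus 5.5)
Your proof is correct and follows the paper's route. You rewrite the transposes using $\Lrep{M}^T=\Lrep{M^*}$, $\Rrep{M}^T=\Rrep{M^*}$ and the Kronecker-conjugate identities, and then show the two products coincide because left and right regular representations commute; your blockwise expansion just makes explicit the bookkeeping the paper compresses into the mixed-product identity. One small slip: the rows of $H_XH_Z^T$ are indexed by $(i,j)\in[r_1]\times[c_2]$, not $[r_1]\times[r_2]$, as your own entry $B^*_{jl}$ with $B^*\in\R^{c_2\times r_2}$ requires.
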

    \begin{proof}
        From Definition \ref{def:LP-code}, we have
        \begin{equation}\label{eq:expand}
            H_X H_Z^T
            =
            \Lrep{A \otimes I_{c_2}}\,\Rrep{I_{c_1} \otimes B}^T
            \;+\;
            \Rrep{I_{r_1} \otimes B^*}\,\Lrep{A^* \otimes I_{r_2}}^T.
        \end{equation}
        Using the identities
        \[
            \Lrep{M}^T=\Lrep{M^*},
            \qquad
            \Rrep{M}^T=\Rrep{M^*}
        \]
        followed by the identities
        \[(M \otimes I)^* = M^* \otimes I, \qquad (I \otimes N)^* = I \otimes N^*,\]
        \eqref{eq:expand} simplifies to
        \begin{align}\label{eq:simplified}
            H_X H_Z^T
            & =
            \underbrace{\Lrep{A \otimes I_{c_2}}\,\Rrep{I_{c_1} \otimes B^*}}_{\displaystyle T_1}
            \;+\;
            \underbrace{\Rrep{I_{r_1} \otimes B^*}\,\Lrep{A \otimes I_{r_2}}}_{\displaystyle T_2} \\
            & = 0
        \end{align}
        where we have obtained the last equality by using the identity $(A \otimes B)(C \otimes D ) = AC \otimes BD$ and the fact that the left and right regular representations commute to show that $T_1 = T_2$ and hence $T_1 + T_2 = 0$ over $\mathbb{F}_2$.
        
    \end{proof}

    The number of physical qubits in the lifted product code is
    \begin{equation}
        n = |G|\,(c_1c_2+r_1r_2),
    \end{equation}
    since the two qubit blocks have sizes \(c_1c_2\) and \(r_1r_2\) over the
    group algebra, and each group-ring coordinate expands to \(|G|\) binary
    coordinates.  The number of logical qubits can always be computed from the
    expanded binary check matrices as
    \begin{equation}
        k
        =
        n-\operatorname{rank}_{\F_2}(H_X)-\operatorname{rank}_{\F_2}(H_Z)
    \end{equation}
    which leads to the elementary lower bound
    \begin{equation}
    \label{eq:k_of_LP}
        k
        \ge
        |G|\bigl(c_1c_2+r_1r_2-r_1c_2-c_1r_2\bigr)
        =
        |G|\,(c_1-r_1)(c_2-r_2).
    \end{equation}
In later sections when we prove distance bounds and construct a canonical logical basis for lifted product codes, it will be convenient to have a dictionary that lets us translate between working over the ring $\R$ and working over $\F_2$. We first fix our conventions. Over $\R$, a matrix acts on column vectors from the left, as in $Mu$, and on row vectors from the right, as in $vM$; matrix products over $\R$ multiply entries in the order written and are associative. In binary coordinates, we will always treat $\b{v}$ as a column vector, regardless of whether $v$ is a row or a column over $\R.$

Furthermore, because the blocks of the parity-check matrices in Definition~\ref{def:LP-code} are Kronecker products over $\R$, the vectors they act on are naturally indexed by pairs, and it will be convenient to view them as flattened matrices over $\R$. For $V\in\R^{m\times m'}$, let $\operatorname{vec}(V)\in\R^{mm'}$ denote the \emph{row-major} flattening of $V$, which lists the entries of the first row, then those of the second row, and so on:
\begin{equation}
\label{eq:vec-def}
    {\operatorname{vec}(V)}_{(i-1)m'+j}=V_{ij},
    \qquad
    i\in\{1,\dots,m\},\quad j\in\{1,\dots,m'\}.
\end{equation}
The following proposition collects all the compatibility properties between the ring and binary descriptions that we will need.
    \begin{proposition}\label{prop:ring_bin_compat}
        Let $M \in \R^{r \times c}$ and $N \in \R^{r' \times c'}$ be matrices over $\R$. The map $\bin : \R^c \to \F_2^{c|G|}$ is a bijection that satisfies the following properties:
        \begin{enumerate}
            \item[(i)] \textup{(Matrix-vector products)} For a column vector $u \in \R^{c}$ and a row vector $v \in \R^{1 \times c}$,
            \begin{equation}
            \label{eq:LR-actions}
                \Lrep{M}\,\b{u}=\b{Mu}
                \quad \text{and} \quad
                \Rrep{M}\,\b{v}=\b{vM^{*}}.
            \end{equation}
            \item[(ii)] \textup{(Kronecker products with an identity factor)} For $V\in\R^{c\times m}$ and $W\in\R^{m\times c'}$,
            \begin{equation}
            \label{eq:master-L}
                \Lrep{M\otimes I_m}\,\bvec{V}=\bvec{MV}
                \quad \text{and} \quad
                \Rrep{I_m\otimes N}\,\bvec{W}=\bvec{WN^{*}}.
            \end{equation}
        \end{enumerate}
    \end{proposition}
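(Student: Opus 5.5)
The plan is to reduce everything to the single-group-element case by linearity, using only the definitions of $\Lrep{\cdot}$, $\Rrep{\cdot}$, $\b{\cdot}$ and of the row-major flattening in \eqref{eq:vec-def}. First, $\bin$ is a bijection: it sends the $\F_2$-basis $\{g\}$ of each coordinate copy of $\R$ to the standard basis of $\F_2^{|G|}$, blockwise over the $c$ coordinates.

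For (i), I start with the $1\times 1$ case. For $g,h\in G$, the definition gives $\Lrep{g}\b{h}=\b{gh}$ and $\Rrep{g}\b{h}=\b{hg^{-1}}=\b{hg^{*}}$. Both sides are bilinear in $(g,h)$, so this extends to $\Lrep{a}\b{x}=\b{ax}$ and $\Rrep{a}\b{x}=\b{xa^{*}}$ for all $a,x\in\R$. For general $M$, expand blockwise. The $i$-th block of $\Lrep{M}\b{u}$ is $\sum_j \Lrep{M_{ij}}\b{u_j}=\b{\sum_j M_{ij}u_j}=\b{(Mu)_i}$. The $i$-th block of $\Rrep{M}\b{v}$ is $\sum_j\Rrep{M_{ij}}\b{v_j}=\b{\sum_j v_jM_{ij}^{*}}$, and $\sum_j v_j (M^{*})_{ji}=(vM^{*})_i$. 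The one point needing care is the order of multiplication: the right action places $v_j$ on the left of $M_{ij}^*$, which is exactly why the conjugate transpose appears. Here $v$ is a row vector of length $c$ and $vM^*$ has length $r$, consistent with the block sizes.

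For (ii), index $\bvec{V}$ by pairs $(k,\ell)$ in row-major order. The matrix $M\otimes I_m$ has entry $(M\otimes I_m)_{(i,\ell),(k,\ell')}=M_{ik}\delta_{\ell\ell'}$. So block $(i,\ell)$ of $\Lrep{M\otimes I_m}\bvec{V}$ equals $\sum_k\Lrep{M_{ik}}\b{V_{k\ell}}=\b{(MV)_{i\ell}}$ by the $1\times1$ case. Similarly, $(I_m\otimes N)_{(i,j),(i',k)}=\delta_{ii'}N_{jk}$. Then block $(i,j)$ of $\Rrep{I_m\otimes N}\bvec{W}$ equals $\sum_k\Rrep{N_{jk}}\b{W_{ik}}=\b{\sum_k W_{ik}N_{jk}^{*}}=\b{(WN^{*})_{ij}}$. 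Equivalently, (ii) is (i) applied to $M\otimes I_m$ acting on $\operatorname{vec}(V)$ and to the rows of $W$.

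The only step I expect to need real attention is the index bookkeeping in (ii). I must confirm that the Kronecker convention used in Definition~\ref{def:LP-code} matches the row-major flattening, so that the identity factor sits on the inner (column) index. That is what makes $M$ act on rows of $V$ from the left and $N$ act on columns of $W$ from the right. Everything else is routine linearity.
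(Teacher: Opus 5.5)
Your proof is correct and takes the same route as the paper, which only asserts that bijectivity and properties (i)--(ii) are straightforward to verify from Definitions~\ref{def:group-ring-and-regular-reps} and~\ref{def:involution}; your reduction to single group elements by bilinearity, followed by the blockwise index computation, supplies exactly that verification. Your formulas $(M\otimes I_m)_{(i,\ell),(k,\ell')}=M_{ik}\delta_{\ell\ell'}$ and $(I_m\otimes N)_{(i,j),(i',k)}=\delta_{ii'}N_{jk}$ already settle the bookkeeping point you flagged: the standard Kronecker ordering agrees with the row-major flattening of~\eqref{eq:vec-def}.
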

    \begin{proof}
        Using \cref{def:group-ring-and-regular-reps,def:involution}, it is straightforward to verify that $\bin$ is a bijection that satisfies the stated properties.
    \end{proof}
\subsection{Mitten codes and structured mitten codes}
\label{sec:app_structured-mittencodes}
    In Definition~\ref{def:mitten_processor_code}, we defined mitten codes, which are a class of $\LP$ codes with $1\times 2$ base matrices and non-abelian $G$ with further condition that $\Lrep{a_1}$ and $\Rrep{b_1}$ being full-rank. Here we single out a structured subfamily whose extra algebraic symmetry lets the syndrome-extraction (SE) cycle run ideally twice as fast with 2 AODs (Appendix~\ref{app:hookfree}). Recall the involution $a^*=\sum_g\alpha_g g^{-1}$ of \cref{def:group-ring-and-regular-reps}. A ring element is \emph{self-adjoint} if $a^*=a$. The structured mitten codes are mitten codes in which every base matrix entry is self-adjoint,
        \begin{equation}
            a_0^*=a_0,\quad a_1^*=a_1,\quad b_0^*=b_0,\quad b_1^*=b_1.
        \end{equation}
    For specific parameters of both mitten and structured mitten codes studied in this paper, see \cref{app:construction_data} and  \cref{tab:code_construction_full}.

\subsection{Homological algebra}

    We will reformulate the lifted product construction in the language of homological algebra. This way of constructing lifted product codes is akin to the machinery of the more general balanced product construction of Ref.~\cite{breuckmann2021balanced}. This perspective provides a systematic route to construct quantum codes from classical codes via tensor products of chain complexes and will later allow us to better understand the structure of the lifted product codes. 

    The starting observation is to note that a classical linear code can be described in terms of the homology of a two-term chain complex:
    \begin{equation}\label{eq:two-term}
        C_\bullet:
        \quad
        C_1 \xrightarrow{\partial = H} C_0,
    \end{equation}
    where $C_1=\mathbb{F}_2^c$ is the vector space of bits, $C_0=\mathbb{F}_2^r$ is the vector space of parity checks, and the boundary map $\partial:C_1\to C_0$ is given by matrix multiplication by the parity-check matrix $H \in \mathbb{F}_2^{r \times c}$.  The classical linear code with parity-check matrix $H$ is precisely the first homology group $H_1(\mathcal{C}(\partial))= \ker\partial = \ker H$, i.e.\ the set of vectors that satisfy all parity checks.  The zeroth homology group $H_0(\mathcal{C}(\partial))= C_0 / \im \partial = \mathbb{F}_2^r/\im H$ is nontrivial whenever the parity check matrix $H$ has linearly dependent rows.

    One can view a CSS code as two separate classical codes with parity-check matrices $H_X$ and $H_Z$ that must be compatible in the sense that they satisfy $H_X H_Z^T=0$.  The natural algebraic object that captures this compatibility is a three-term chain complex
    \begin{equation}
        C_2 \xrightarrow{\partial_2 = H_Z^T} C_1 \xrightarrow{\partial_1 = H_X} C_0 \label{eq:three-term}
    \end{equation}
    where $C_2$, $C_1$, and $C_0$ are $\F_2$-vector spaces whose basis elements label the $Z$ checks, qubits, and $X$ checks respectively.  Qubits are placed on $C_1$, and the parity-check matrices are $H_X = \partial_1$ and $H_Z = \partial_2^T$.
    The CSS commutation condition $H_X H_Z^T=0$ is then simply the statement that
    $\partial_1\partial_2=0$, i.e.\ that \eqref{eq:three-term} is indeed a chain complex.
    X-stabilizers are rows of $\partial_1$, while
    Z-stabilizers are columns of $\partial_2$ (equivalently, rows of $\partial_2^T$).
    A Z-type logical operator commutes with all the $X$-stabilizers but is not itself a $Z$-stabilizer. Such an operator can be represented as a vector $z\in C_1$ that is in the kernel of $\partial_1 = H_X$ but is not in the image of $\partial_2 = H_Z^T$. Hence Z-type logical operators are represented by nontrivial classes
    in the first homology group
    \begin{equation}
        H_1 = \ker\partial_1/\im\partial_2.
    \end{equation}
    Dually, X-type logical operators are represented by the first cohomology group
    $H^1=\ker\partial_2^T/\im\partial_1^T$.

    The question, then, is how to produce a three-term chain complex that corresponds to a CSS code from classical input data in the form of two-term chain complexes. There is a standard algebraic construction that does exactly this: the tensor product of chain complexes.
    
    \begin{definition}[Tensor product of chain complexes]
        Given two two-term chain complexes over $\F_2$-valued vector spaces,
    \begin{equation}
        C_\bullet^A:\; C_1^A \xrightarrow{\;\partial_A\;} C_0^A,
        \qquad
        C_\bullet^B:\; C_1^B \xrightarrow{\;\partial_B\;} C_0^B,
    \end{equation}
    their \emph{tensor-product complex} $C_\bullet^A\otimes C_\bullet^B$ is the three-term
    complex
    \begin{equation}\label{eq:F2-prod-complex}
        C_1^A\otimes C_1^B
        \;\xrightarrow{\;\partial_2\;}\;
        \bigl(C_1^A\otimes C_0^B\bigr)\oplus\bigl(C_0^A\otimes C_1^B\bigr)
        \;\xrightarrow{\;\partial_1\;}\;
        C_0^A\otimes C_0^B,
    \end{equation}
    with boundary maps
    \begin{equation}\label{eq:tensor-bdry}
        \partial_2
        =
        \begin{bmatrix}
            I\otimes\partial_B \\
            \partial_A\otimes I
        \end{bmatrix},
        \qquad
        \partial_1
        =
        \bigl[\,\partial_A\otimes I
            \;\big|\;
            I\otimes\partial_B\,\bigr].
    \end{equation}
    \end{definition}
    
    One can verify directly that $\partial_1\partial_2=0$:
    \begin{equation}
        \partial_1\partial_2
        =
        (\partial_A\otimes I)(I\otimes\partial_B)
        + (I\otimes\partial_B)(\partial_A\otimes I)
        = \partial_A\otimes\partial_B + \partial_A\otimes\partial_B
        = 0,
    \end{equation}
    so the tensor-product complex is indeed a chain complex, and the resulting CSS code automatically satisfies the commutation condition. To construct the hypergraph product code from classical codes with parity-check matrices $A$ and $B$, we take the tensor product of the chain complex $C_\bullet^A$ with the \emph{dual} of $C_\bullet^B$.

    \begin{definition}[Dual of a chain complex]\label{def:dual-F2}
        Let $C_\bullet : \F_2^{c} \xrightarrow{\;H\;} \F_2^{r}$ be a two-term chain complex, where the chain groups are equipped with their standard bases.  We define its \emph{dual complex} as
        \begin{equation}\label{eq:dual-F2}
            \hat{C}_\bullet : \F_2^{r} \xrightarrow{\;H^T\;} \F_2^{c}.
        \end{equation}
    \end{definition}

    The tensor product $C_\bullet^A \otimes \hat{C}_\bullet^B$ then has middle chain group
    $(\mathbb{F}_2^{c_1}\otimes\mathbb{F}_2^{c_2})\oplus(\mathbb{F}_2^{r_1}\otimes\mathbb{F}_2^{r_2})$,
    and the maps $\partial_1$ and $\partial_2^T$ reproduce \eqref{eq:HX}--\eqref{eq:HZ}.
    The dimension of the code follows from the K\"{u}nneth theorem.

    \begin{theorem}[K\"unneth formula \cite{weibel1994introduction}]
        Let $C_\bullet$ and $D_\bullet$ be chain complexes of finite-dimensional $\F_2$-vector
        spaces. Then the homology of their tensor product satisfies
        \begin{equation}
            H_n(C_\bullet \otimes D_\bullet)
            \cong
            \bigoplus_{i+j=n} H_i(C_\bullet)\otimes H_j(D_\bullet).
        \end{equation}
    \end{theorem}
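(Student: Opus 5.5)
The plan is to reduce to two elementary cases by splitting each complex over the field $\F_2$. Since every $C_i$ is a finite-dimensional vector space, we can choose complements at each degree. Let $Z_i=\ker\partial_i$ and $B_i=\im\partial_{i+1}\subseteq Z_i$. Pick a complement $\mathcal{H}_i$ of $B_i$ in $Z_i$, so that $\mathcal{H}_i\cong H_i(C_\bullet)$. Pick also a complement $U_i$ of $Z_i$ in $C_i$, so that $\partial_i|_{U_i}:U_i\to B_{i-1}$ is an isomorphism. Then $C_i=\mathcal{H}_i\oplus B_i\oplus U_i$. As a complex, this gives a direct-sum decomposition
\[
C_\bullet \;\cong\; \mathcal{H}_\bullet \;\oplus\; \bigoplus_i E^{(i)}_\bullet ,
\]
where $\mathcal{H}_\bullet$ has zero differential and each $E^{(i)}_\bullet$ is the two-term complex $U_i\xrightarrow{\ \cong\ }B_{i-1}$. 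Each $E^{(i)}_\bullet$ is a direct sum of copies of the elementary complex $\F_2\xrightarrow{\mathrm{id}}\F_2$. I would check that the differential respects this splitting; this is immediate, since $\partial$ kills $\mathcal{H}_i\oplus B_i$ and maps $U_i$ onto $B_{i-1}$. The same decomposition applies to $D_\bullet$.

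Next I would use that the tensor product of complexes, with the boundary maps of \eqref{eq:tensor-bdry} generalized degreewise, distributes over direct sums of complexes, and that homology commutes with finite direct sums. Expanding $C_\bullet\otimes D_\bullet$ then gives three kinds of summands.

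\begin{enumerate}
\item $\mathcal{H}^C_\bullet\otimes\mathcal{H}^D_\bullet$. Here both differentials vanish, so the tensor differential vanishes. Its degree-$n$ homology is the chain group itself, namely $\bigoplus_{i+j=n}H_i(C_\bullet)\otimes H_j(D_\bullet)$.
\item Summands of the form $E_\bullet\otimes X_\bullet$ with $E_\bullet$ elementary, where $X_\bullet$ is any complex. I would show these are acyclic.
\item Summands of the form $X_\bullet\otimes E_\bullet$, which are handled by the same argument with the roles swapped.
\end{enumerate}

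Summing the three kinds gives the stated isomorphism.

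The step needing the most care is the acyclicity in the second case. My first approach is an explicit contracting homotopy. The elementary complex $E_\bullet$ has the homotopy $s:E_{k-1}\to E_k$ equal to the inverse of $\mathrm{id}$, which satisfies $\partial s+s\partial=\mathrm{id}_E$. Then $s\otimes\mathrm{id}_X$ is a homotopy for $E_\bullet\otimes X_\bullet$, because
\[
\partial(s\otimes\mathrm{id})+(s\otimes\mathrm{id})\partial=(\partial s+s\partial)\otimes\mathrm{id}=\mathrm{id}.
\]
The cross terms $(s\otimes\partial_X)$ appear twice and cancel; over $\F_2$ no sign bookkeeping is needed. A complex with a contracting homotopy has zero homology, since any cycle $z$ equals $\partial(hz)$.

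As a sanity check, one can instead identify $E_\bullet\otimes X_\bullet$ with the mapping cone of $\mathrm{id}_X$, which is acyclic. This identification is the only place where the specific form of the tensor differential matters, so I would verify it carefully against \eqref{eq:tensor-bdry}.
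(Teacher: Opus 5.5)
Your proof is correct, but the paper gives no proof to match: it just cites Weibel. In Weibel the field case falls out of the general K\"unneth theorem. One tensors the short exact sequence $0\to Z_\bullet\to C_\bullet\to B_{\bullet-1}\to 0$ (cycles and shifted boundaries, both with zero differential) with $D_\bullet$ and runs the long exact sequence in homology. This gives
\[
0\to\bigoplus_{i+j=n}H_i(C_\bullet)\otimes H_j(D_\bullet)\to H_n(C_\bullet\otimes D_\bullet)\to\bigoplus_{i+j=n-1}\Tor_1\bigl(H_i(C_\bullet),H_j(D_\bullet)\bigr)\to 0,
\]
and the $\Tor_1$ term vanishes over a field.

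You instead use that over a field every complex is its homology (with zero differential) plus contractible two-term pieces, and you kill those pieces with $s\otimes\mathrm{id}$. This is shorter, self-contained, and shows a bit more. The inclusion $\mathcal{H}^C_\bullet\otimes\mathcal{H}^D_\bullet\hookrightarrow C_\bullet\otimes D_\bullet$ is a chain homotopy equivalence. The isomorphism it induces is the natural cross product $[z]\otimes[w]\mapsto[z\otimes w]$, so it does not depend on the complements you chose.

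The long-exact-sequence route buys generality. It handles flat modules over rings where cycles and boundaries need not be direct summands. That is the situation behind the K\"unneth spectral sequence the paper invokes over $\F_2[G]$, which is not semisimple when $|G|$ is even.

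Your argument nonetheless runs verbatim over $\F_2[G]$ whenever the complexes split as module complexes. The square invertibility condition supplies exactly such splittings of $C^A_\bullet$ and $\hat C^B_\bullet$, via the free/pivot decomposition of Lemma~\ref{lem:kernel-cokernel-normal-forms}. So your method would recover the paper's conclusion $H_1(C^A_\bullet\otimes_{\R}\hat C^B_\bullet)\cong H_1(C^A_\bullet)\otimes_{\R}H_0(\hat C^B_\bullet)$ without any spectral sequence.

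One small fix: if the complexes are unbounded, your decomposition has infinitely many summands $E^{(i)}_\bullet$. There you should appeal to tensor products and homology commuting with arbitrary, not just finite, direct sums; this is equally standard.
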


    For the hypergraph product, we apply this theorem to $C_\bullet^A \otimes \hat{C}_\bullet^B$. Since the logical qubits are represented by the middle homology group, we obtain
    \begin{equation}
        H_1(C_\bullet^A \otimes \hat{C}_\bullet^B)
        \cong
        \bigl(H_1(C_\bullet^A)\otimes H_0(\hat{C}_\bullet^B)\bigr)
        \oplus
        \bigl(H_0(C_\bullet^A)\otimes H_1(\hat{C}_\bullet^B)\bigr).
    \end{equation}
    Taking dimensions gives the usual formula
    \begin{equation}
        k = h(A)h(B) + h^\perp(A)h^\perp(B),
    \end{equation}
    where
    \begin{equation}
        h(A) = \dim \ker A,
        \qquad
        h^\perp(A) = \dim \ker A^T,
    \end{equation}
    and similarly for $B$.

    To pass from the hypergraph product to the lifted product, we replace $\mathbb{F}_2$-vector spaces by modules over the ring $\R=\mathbb{F}_2[G]$. A module can be thought of as generalizing the notion of a vector space by allowing scalars to come from a ring instead of a field.

    \begin{definition}[$\R$-modules]
        Let $\R$ be a ring. A \emph{left $\R$-module} is an abelian group $M$ equipped with a scalar multiplication
        \[
            \R \times M \to M,
            \qquad
            (a,m)\mapsto am,
        \]
        satisfying the following axioms for all $a,b\in \R$ and all $m,n\in M$:
        \begin{align*}
            a(m+n) &= am+an, & &\text{(left distributivity over module addition)}\\
            (a+b)m &= am+bm, & &\text{(left distributivity over ring addition)}\\
            (ab)m &= a(bm),  & &\text{(associativity)}\\
            1m &= m.         & &\text{(identity)}
        \end{align*}
        A \emph{right $\R$-module} is defined similarly, except that scalar multiplication acts on the right:
        \[
            M \times \R \to M,
            \qquad
            (m,a)\mapsto ma.
        \]
        The corresponding axioms are
        \begin{align*}
            (m+n)a &= ma+na, & &\text{(right distributivity over module addition)}\\
            m(a+b) &= ma+mb, & &\text{(right distributivity over ring addition)}\\
            (ma)b &= m(ab),  & &\text{(associativity)}\\
            m1 &= m.         & &\text{(identity)}
        \end{align*}
        When $\R$ is commutative, left and right $\R$-modules are equivalent.
    \end{definition}

    \begin{remark}
        The distinction between left and right modules is not really about which side of $m$ we write the scalar. We could just as easily write a right action on the left, as $(a,m)\mapsto am$, and define the associativity axiom by $(ab)m = b(am)$ instead of $(ab)m = a(bm)$. This difference in the associativity axiom is the only true distinguishing feature between left and right modules. 
    \end{remark}

    \begin{example} The basic example is the free module $\R^n$. Its elements are vectors \[ v= \begin{pmatrix} r_1\\ \vdots\\ r_n \end{pmatrix}, \qquad r_i\in \R. \] When $\R^n$ is regarded as a left $\R$-module, we use the notation ${}_{\R}\R^n$  and scalars $a \in \R$ act entrywise on the left: \[ a v = \begin{pmatrix} a r_1\\ \vdots\\ a r_n \end{pmatrix}. \] When $\R^n$ is regarded as a right $\R$-module, we use the notation $\R^n_{\R}$ and scalars $a \in \R$ act entrywise on the right: \[ v a = \begin{pmatrix} r_1 a\\ \vdots\\ r_n a \end{pmatrix}.\] Sometimes it will also be convenient for us to view the elements of ${}_{\R}\R^n$ as row vectors. 
    \end{example}

We constructed the hypergraph product code by taking the tensor product of two
chain complexes defined over $\mathbb F_2$-vector spaces. This was defined
degree-by-degree from tensor products of the underlying vector spaces, as
in~\eqref{eq:F2-prod-complex}. For lifted product codes, we use the same
product-complex construction, but with vector spaces replaced by modules over
the group algebra $\R=\mathbb F_2[G]$.

To do this, we need the tensor product of modules over $\R$. Let $K$ be a right
$\R$-module and let $L$ be a left $\R$-module. Tensoring over $\R$ means that
scalars from $\R$ should act as scalars in the combined object, rather than as
data belonging separately to one factor or the other. Therefore, for
$k\in K$, $\ell\in L$, and $a\in\R$, multiplying $k$ by $a$ on the right should
be identified with multiplying $\ell$ by $a$ on the left. This is enforced by
the \emph{balancing relation}
\[
    (ka)\otimes \ell
    =
    k\otimes (a\ell).
\]
This relation is the module analogue of the scalar-linearity relation in the
tensor product of vector spaces that lets us move scalars across the tensor product. It only makes sense when the first module has a right $\R$-action and the second module has a left $\R$-action.

\begin{definition}[Tensor product of $\R$-modules]
    Let $K$ be a right $\R$-module and let $L$ be a left $\R$-module.
    The tensor product $K\otimes_{\R}L$ is the abelian group generated by
    formal symbols
    \[
        k\otimes \ell,
        \qquad k\in K,\ \ell\in L,
    \]
    subject to the relations
    \begin{align*}
        (k+k')\otimes \ell &= k\otimes \ell + k'\otimes \ell,\\
        k\otimes(\ell+\ell') &= k\otimes \ell + k\otimes \ell',\\
        (ka)\otimes \ell &= k\otimes(a\ell),
    \end{align*}
    for all $k,k'\in K$, $\ell,\ell'\in L$, and $a\in \R$.
    The symbols $k\otimes \ell$ are called \emph{pure tensors} and are said to generate $K\otimes_{\R}L$ since any element of $K\otimes_{\R} L$ can be written as a finite sum of pure tensors.
\end{definition}

Just as the tensor product is only well-defined between a right module and a left module, the tensor product of chain complexes is only well-defined between a complex of right modules and a complex of left modules.

\begin{definition}[Tensor product of a right-module complex and a left-module complex]
Let
\[
    K_\bullet
    =
    \left(
    \cdots
    \longrightarrow K_{p+1}
    \xrightarrow{\partial^{K}_{p+1}}
    K_p
    \xrightarrow{\partial^{K}_{p}}
    K_{p-1}
    \longrightarrow \cdots
    \right)
\]
be a chain complex of right $\R$-modules, and let
\[
    L_\bullet
    =
    \left(
    \cdots
    \longrightarrow L_{q+1}
    \xrightarrow{\partial^{L}_{q+1}}
    L_q
    \xrightarrow{\partial^{L}_{q}}
    L_{q-1}
    \longrightarrow \cdots
    \right)
\]
be a chain complex of left $\R$-modules. Their tensor product over
$\R$ is the chain complex of abelian groups
\[
    K_\bullet \otimes_{\R}L_\bullet
\]
whose degree-\(n\) term is
\[
    \left(K_\bullet \otimes_{\R}L_\bullet\right)_n
    =
    \bigoplus_{p+q=n} K_{p}\otimes_{\R}L_{q} .
\]
The boundary map
\[
    \partial_n:
    \left(K_\bullet \otimes_{\R}L_\bullet\right)_n
    \longrightarrow
    \left(K_\bullet \otimes_{\R}L_\bullet\right)_{n-1}
\]
is defined on pure tensors by
\begin{equation} \label{eq:module-tensor-bdry}
    \partial_n(k\otimes \ell)
    =
    \partial^{K}_{p}(k)\otimes \ell
    +
    (-1)^p k\otimes \partial^{L}_{q}(\ell),
    \qquad
    k\in K_p,\ \ell\in L_q,\ p+q=n.
\end{equation}
Since we will always be working over $\F_2$, we may neglect the $(-1)^p$ factor.
\end{definition}

\subsection{Lifted product codes: homological algebra viewpoint}

We now explain how the lifted product code construction arises from tensor products of $\R$-module chain complexes. Let $A\in \R^{r_1\times c_1}$ and
$B\in \R^{r_2\times c_2}$ be the classical base matrices associated with
our lifted product code. We associate to $A$ the two-term complex of right
$\R$-modules
\begin{equation}
    C_\bullet^A:
    \qquad
    C^A_1=\R^{c_1}
    \xrightarrow{\;\partial_A=A\;}
    C^A_0=\R^{r_1}.
\end{equation}
We view the elements of $C^A_1$ and $C^A_0$ as column vectors and the boundary map is defined by left multiplication by $A$. This is compatible with the right $\R$-module structure and defines a valid homomorphism between right $\R$-modules: for every
$x\in \R^{c_1}$ and $r\in \R$,
\begin{equation}
    \partial_A(xr)
    =
    A(xr)
    =
    (Ax)r
    =
    \partial_A(x)r.
\end{equation}

We may similarly associate a two-term complex to $B$. However, since the tensor product of chain complexes is only well-defined between a right-module complex and a left-module complex, we will instead consider the left-module complex given by the dual of $C_\bullet^B$: 
\begin{equation} \label{eq:Bcomplex}
    \hat{C}_\bullet^B:
    \qquad
    \hat{C}_1^B={}_{\mathcal R}\mathcal R^{r_2}
    \xrightarrow{\;\hat\partial_B = B^*\;}
    \hat{C}_0^B={}_{\mathcal R}\mathcal R^{c_2} 
\end{equation}
Here the matrix $B^*\in \R^{c_2\times r_2}$ defines a homomorphism of free left $\R$-modules via the coordinate formula 
\begin{equation} \label{eq;B*-action}
    (B^*y)_i = \sum_{j=1}^{r_2} y_j (B^*)_{ij}, \qquad y\in {}_{\R}\R^{r_2}.
\end{equation}
This produces a valid homomorphism since for every $y\in {}_{\R}\R^{r_2}$ and $r\in \R$, we have 
\begin{equation}
    (B^*(ry))_i
    =
    \sum_{j=1}^{r_2} (ry_j)(B^*)_{ij}
    =
    r\sum_{j=1}^{r_2} y_j(B^*)_{ij}
    =
    r(B^*y)_i.
\end{equation}

We can now form the tensor product complex
\begin{equation}
    Q_\bullet
    \coloneqq 
    C_\bullet^A\otimes_{\R}\hat C_\bullet^B
    \label{eq:chain-complex-tensor-prod}
\end{equation}
associated to the lifted product code $\textrm{LP}(A,B)$. Since both
$C_\bullet^A$ and $\hat C_\bullet^B$ are two-term complexes, $Q_\bullet$
is the three-term complex given as follows:
\begin{equation}
    Q_2
    \xrightarrow{\;\partial_2\;}
    Q_1
    \xrightarrow{\;\partial_1\;}
    Q_0.
\end{equation}
Its three nonzero chain groups are
\begin{align}
    Q_2
    &=
    C_1^A\otimes_{\R}\hat C_1^B
    =
    \R^{c_1}\otimes_{\R}\R^{r_2}
    \cong
    \R^{c_1r_2},
    \label{eq:LP-Q2}\\[1.5em]
    Q_1
    &=
    \left(C_1^A\otimes_{\R}\hat C_0^B\right)
    \oplus
    \left(C_0^A\otimes_{\R}\hat C_1^B\right)
    \nonumber\\
    &=
    \left(\R^{c_1}\otimes_{\R}\R^{c_2}\right)
    \oplus
    \left(\R^{r_1}\otimes_{\R}\R^{r_2}\right)
    \cong
    \R^{c_1c_2}\oplus \R^{r_1r_2},
    \label{eq:LP-Q1}\\[1.5em]
    Q_0
    &=
    C_0^A\otimes_{\R}\hat C_0^B
    =
    \R^{r_1}\otimes_{\R}\R^{c_2}
    \cong
    \R^{r_1c_2}.
    \label{eq:LP-Q0}
\end{align}
After expanding each copy of $\R$ as a binary vector space of dimension $|G|$, the middle term $Q_1$ has
\[
    |G|\,(c_1c_2+r_1r_2)
\]
binary coordinates. These correspond to the physical qubits of the lifted product code. 

We now compute the two boundary maps of $Q_\bullet$ which correspond to the $X$ and $Z$ parity check matrices. First consider
\[
    \partial_1:Q_1\longrightarrow Q_0.
\]
Using the definition of the tensor product boundary map
in~\eqref{eq:module-tensor-bdry}, for
$x\in C_1^A$, $z\in \hat C_0^B$, $u\in C_0^A$, and
$y\in \hat C_1^B$, we have
\begin{equation}
    \partial_1\bigl((x\otimes z)\oplus(u\otimes y)\bigr)
    =
    Ax\otimes z
    +
    u\otimes B^*y.
\end{equation}
The first term applies $A$ to the $C^A_1$ factor and leaves the $\hat C_0^B$ factor unchanged. This corresponds to the block matrix \[A\otimes I_{c_2}.\] Since $A$ acts by left multiplication on the group-ring entries, its binary
expansion is \[\Lrep{A\otimes I_{c_2}}.\]
The second term leaves the $C_0^A$ factor unchanged and applies the left-module map corresponding to $B^*$ to the $\hat C_1^B$ factor.
By the coordinate convention for left modules in~\eqref{eq;B*-action}, for $y \in C_1^B \cong \R^{r_2}$
\[
    (B^*y)_i
    =
    \sum_{j=1}^{r_2} y_j(B^*)_{ij}.
\]
This corresponds to usual matrix multiplication except with the entries of $B^*$ acting on the right. After binary expansion, this map is represented by $\Rrep{B^*}$. 
Including the unchanged $C_0^A$ factor, the second block of $\partial_1$ is
therefore
\[
    \Rrep{I_{r_1}\otimes B^*}.
\]
Hence the binary matrix representing $\partial_1$ is
\[
    H_X
    =
    \bigl[
        \Lrep{A\otimes I_{c_2}}
        \;\big|\;
        \Rrep{I_{r_1}\otimes B^*}
    \bigr],
\]
which agrees with the $X$-check matrix in Definition~\ref{def:LP-code}.

Next consider
\[
    \partial_2:Q_2\longrightarrow Q_1.
\]
With the ordering of $Q_1$ used in~\eqref{eq:LP-Q1}, namely
\[
    Q_1
    =
    \left(C_1^A\otimes_{\R}\hat C_0^B\right)
    \oplus
    \left(C_0^A\otimes_{\R}\hat C_1^B\right),
\]
the definition for the tensor product boundary map in~\eqref{eq:module-tensor-bdry} gives 
\[
    \partial_2(x\otimes y)
    =
    \left(
        x\otimes B^*y,\;
        Ax\otimes y
    \right)
\]
for $x \in C_1^A$ and $y\in \hat C_1^B$.
The first component applies the left-module map $y\mapsto B^*y$ to the
second factor, so its binary expansion is
\[
    \Rrep{I_{c_1}\otimes B^*}.
\]
The second component applies $A$ to the first factor, so its binary expansion
is
\[
    \Lrep{A\otimes I_{r_2}}.
\]
Hence, as a binary matrix, $\partial_2$ has the block form
\[
    \partial_2
    =
    \begin{bmatrix}
        \Rrep{I_{c_1}\otimes B^*}\\
        \Lrep{A\otimes I_{r_2}}
    \end{bmatrix}
\]
and so 
\[
    H_Z
    =
    \partial_2^T
    =
    \bigl[
        \Rrep{I_{c_1}\otimes B^*}^{\,T}
        \;\big|\;
        \Lrep{A\otimes I_{r_2}}^{\,T}
    \bigr]
    =  \bigl[
        \Rrep{I_{c_1}\otimes B}
        \;\big|\;
        \Lrep{A^*\otimes I_{r_2}}
    \bigr]
\]
which agrees with the $Z$-check matrix in Definition~\ref{def:LP-code}.
Therefore, the lifted product code associated with $(A,B)$ is exactly the
CSS code corresponding to the tensor product complex
\[
    Q_\bullet = C_\bullet^A\otimes_{\R}\hat C_\bullet^B:
    \qquad
    Q_2
    \xrightarrow{\;\partial_2\;}
    Q_1
    \xrightarrow{\;\partial_1\;}
    Q_0.
\]
The physical qubits correspond to $Q_1$ and the $X$ and $Z$ stabilizers correspond to $Q_0$ and $Q_2$ respectively.

\section{Canonical basis for lifted product codes}\label{app:canonical-basis}
In this appendix, we explain the general procedure to construct explicit low-weight representatives for the single-qubit $X$ and $Z$ logical operators of our codes. By this we mean that the representatives satisfy the canonical commutation relation $[X_i,Z_j] = 2X_iZ_j\delta_{ij}$ or equivalently $X_i \cdot Z_j = \delta_{ij}$ where the dot product denotes the binary symplectic inner product. We refer to this collection of representatives as a \emph{canonical basis} for the logical operators.

\subsection{Setup}
We begin by characterizing sufficient conditions for a collection of $X$-type and $Z$-type operators to form a canonical basis and then define the square-invertibility condition which is needed for our construction of a canonical basis for lifted product codes.
\subsubsection{Canonical logical bases}

We begin by establishing a simple lemma that provides a set of sufficient conditions for identifying a collection of $X$-type and $Z$-type operators as a canonical basis for a CSS code. 
\begin{lemma}\label{lem:symplectic-pairing-logicals}
Let $H_X,H_Z$ be the parity-check matrices of a CSS code with $k$ logical qubits. Identify $X$-type and $Z$-type Pauli operators with their binary support vectors in $\mathbb F_2^n$. Suppose we have a collection $\{X_i\}_{i=1}^k$ and $\{Z_i\}_{i=1}^k$ of such operators satisfying
\begin{equation} 
    X_i \in \ker H_Z,
    \qquad
    Z_i \in \ker H_X  \quad \text{for every } i=1,\ldots,k,
\end{equation}
and 
\begin{equation}\label{eq:canonical-commutation}
    X_i \cdot Z_j = \delta_{ij}
\end{equation}
where the dot denotes the $\F_2$ inner product. Then $X_i$ and $Z_i$ are single-qubit $X$ and $Z$ logical operators for the $i$-th logical qubit.
\end{lemma}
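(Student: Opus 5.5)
The plan is to check that the given vectors satisfy the defining properties of a symplectic basis of the logical space $\ker H_X/\operatorname{rowspace} H_Z$ and $\ker H_Z/\operatorname{rowspace} H_X$. Each $Z_i\in\ker H_X$ commutes with all $X$-stabilizers, so it represents a class in $H_1=\ker H_X/\im H_Z^T$; likewise each $X_i$ represents a class in $H^1=\ker H_Z/\im H_X^T$. The bilinear pairing $H^1\times H_1\to\F_2$ given by the dot product is well defined, because stabilizers are orthogonal to the opposite-type kernel: for example, $(H_X^T s)\cdot z = s\cdot H_X z=0$ whenever $z\in\ker H_X$. It is also nondegenerate, and both spaces have dimension $k$. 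These are standard facts for CSS codes, and I will simply invoke them.

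Next I will show that $\{Z_i\}$ is a basis of $H_1$. Suppose $\sum_j c_j Z_j\in\im H_Z^T$. Pairing with $X_i$ then gives $0$, and by \eqref{eq:canonical-commutation} the pairing equals $c_i$. So every $c_i=0$, which means the classes $[Z_i]$ are linearly independent; since there are $k$ of them in a $k$-dimensional space, they form a basis. The same argument with the roles swapped shows that $\{[X_i]\}$ is a basis of $H^1$. In particular, no $X_i$ or $Z_i$ is a stabilizer, so each is a genuine nontrivial logical operator.

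Finally, the logical Pauli group is determined up to a choice of symplectic basis, and \eqref{eq:canonical-commutation} says that $\{[X_i],[Z_i]\}$ is such a basis. Each $X_i$ and $Z_i$ commutes with every stabilizer. $X_i$ anticommutes with $Z_i$ and commutes with $Z_j$ for $j\neq i$, and $X$-type operators all commute with one another, as do $Z$-type operators. We may therefore define the $i$-th logical qubit through this pair. Its logical $X$ and $Z$ act as single-qubit operators, because they commute with the other pairs, and together the pairs generate the full logical algebra, because they form a basis.

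The only step that needs care is the well-definedness and nondegeneracy of the pairing together with $\dim H_1=\dim H^1=k$. I will cite these as standard consequences of $\ker H_X^\perp=\operatorname{rowspace}H_X$ rather than reprove them.
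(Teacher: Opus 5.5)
Your proposal is correct and uses the same mechanism as the paper: a stabilizer is orthogonal to every opposite-type kernel vector, so the pairing $X_i\cdot Z_j=\delta_{ij}$ rules out the $X_i$ and $Z_i$ being stabilizers. You go slightly further by showing the classes are independent modulo stabilizers and using $\dim H_1=\dim H^1=k$ to conclude they form a basis, a step the paper leaves implicit; note that your argument uses only the well-definedness of the pairing and this dimension count, not its nondegeneracy.
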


\begin{proof}
For a CSS code, an $X$-type operator commutes with all $Z$-stabilizers
precisely when its binary support lies in $\ker H_Z$.
To show that $X_i$ is a nontrivial logical operator, it remains to show
that it is not an $X$-stabilizer, i.e. that
$X_i\notin \operatorname{row}(H_X)$.

Suppose, for contradiction, that $X_i\in \operatorname{row}(H_X)$. Let
$h_1,\ldots,h_m$ be the rows of $H_X$. Then there exist coefficients
$c_1,\ldots,c_m\in \F_2$ such that
\[
    X_i = \sum_{a=1}^m c_a h_a .
\]
Since $Z_j\in \ker H_X$, we have 
\[
    h_a \cdot Z_j = 0
    \qquad
    \text{for every } a=1,\ldots,m.
\]
Hence we have 
\[
    X_i \cdot Z_j
    =
    \sum_{a=1}^m c_a (h_a\cdot Z_j)
    =
    0
\]
for all $Z_j$. This contradicts the assumption that $X_i\cdot Z_j=\delta_{ij}$, so we conclude that $X_i\notin \operatorname{row}(H_X)$, and hence $X_i$ is a nontrivial logical operator. A similar argument shows that all the $Z_i$ are nontrivial logical operators as well. Since they satisfy the canonical commutation relation \eqref{eq:canonical-commutation}, they are the single-qubit $X$ and $Z$ logical operators for the $k$ logical qubits.
\end{proof}

We take the most general definition of a \emph{canonical logical basis} to be a collection of $X$-type and $Z$-type representatives satisfying the conditions of Lemma~\ref{lem:symplectic-pairing-logicals}. The canonical basis we will construct for non-abelian lifted product codes enjoys three additional properties beyond this pairing: its representatives are built directly from classical codewords of the base matrices, the logical operators within a basis are related by group action, and the supports of conjugate pairs intersect on exactly one data qubit.

\subsubsection{Square-invertibility condition}
Let $\R=\F_2[G]$, and let
\[
    A\in \R^{r_1\times c_1},
    \qquad
    B\in \R^{r_2\times c_2},
\]
with \(r_1<c_1\) and \(r_2<c_2\), be the classical base matrices defining the
lifted product code \(\textrm{LP}(A,B)\). The following definition captures the assumptions we will make about the structure of $A$ and $B$.

\begin{definition}[Square invertibility condition]
\label{def:square_inv_condition}
Let \(M\in \R^{r\times c}\) with \(r<c\). We say that \(M\) satisfies the
\emph{square invertibility condition} with respect to a chosen binary
expansion of \(\R\) if, after possibly reordering its columns, \(M\) can be
written as
\[
    M=
    \bigl[
        M_{\mathrm{free}}
        \mid
        M_{\mathrm{pivot}}
    \bigr],
\]
where
\[
    M_{\mathrm{free}}\in \R^{r\times (c-r)},
    \qquad
    M_{\mathrm{pivot}}\in \R^{r\times r},
\]
and the binary expansion of $M_{\mathrm{pivot}}$ is an invertible matrix
over $\F_2$. We call the first $c-r$ columns the \emph{free columns} and
the last $r$ columns the \emph{pivot columns}.
\end{definition}

After possibly reordering columns, we can fix
decompositions
\begin{equation}\label{eq:ABdecomp}
    A=
    \bigl[
        A_{\mathrm{free}}
        \mid
        A_{\mathrm{pivot}}
    \bigr],
    \qquad
    B=
    \bigl[
        B_{\mathrm{free}}
        \mid
        B_{\mathrm{pivot}}
    \bigr],
\end{equation}
where
\[
    A_{\mathrm{pivot}}\in \R^{r_1\times r_1},
    \qquad
    B_{\mathrm{pivot}}\in \R^{r_2\times r_2},
\]
and $\Lrep{A_{\mathrm{pivot}}}$ and $\Rrep{B_{\mathrm{pivot}}}$ are invertible binary matrices.
Letting
\[
    f_A\coloneqq c_1-r_1,
    \qquad
    f_B\coloneqq c_2-r_2,
\]
the decompositions in~\eqref{eq:ABdecomp} induce coordinate splittings
\[
    \R^{c_1}
    =
    \R^{f_A}\oplus \R^{r_1},
    \qquad
    \R^{c_2}
    =
    \R^{f_B}\oplus \R^{r_2}.
\]
We refer to the first summand as the free coordinates and to the second
summand as the pivot coordinates. Throughout this section, we will assume that $A$ satisfies the square invertibility condition (Definition~\ref{def:square_inv_condition}) with respect to the left regular representation, and that $B$ satisfies the square invertibility condition with respect to the right regular representation.

\subsection{Main result}

We now state the main result of this section: an explicit canonical logical basis for the lifted product code $\LP(A,B)$. The construction of this basis exploits the product structure of the code; each logical representative is derived from a codeword of one of the classical base codes. Logical $X$ operators arise from codewords of $B$ and logical $Z$ operators arise from codewords of $A$. The key step is choosing suitable bases for the two classical codes.

The square invertibility condition (Definition~\ref{def:square_inv_condition}) ensures that $A$ and $B$ admit bases in which each basis codeword is supported on exactly one of the free coordinates, a distinct one for each codeword. After padding with zeros, these classical basis vectors become logical $X$ and $Z$ representatives that pair up canonically -- within each pair, the supports of $X$ and $Z$ overlap on exactly one physical qubit (hence the operators anticommute), while operators from different pairs have disjoint support (and therefore commute).

The basis is built from distinguished codewords of the two classical factor codes, which we now define. For $1\le \alpha\le f_A$ and $g\in G$, let $\epsilon_A^{\alpha}(g)\in \R^{f_A}$ denote the vector with $g$ in the $\alpha$-th coordinate and zero elsewhere. Similarly, for $1\le \beta\le f_B$, let $\epsilon_B^{\beta}(g)\in \R^{f_B}$ denote the vector with $g$ in the $\beta$-th coordinate and zero elsewhere. We first define the kernel vectors. Let
\[
    v_A^{\alpha,g}\in \R^{c_1} \quad \left(\text{resp. } v_B^{\beta,g}\in \R^{c_2}\right)
\]
be the unique element of $\ker \Lrep{A}$ (resp. $\ker \Rrep{B}$) whose free part is $\epsilon_A^{\alpha}(g)$ (resp. $\epsilon_B^{\beta}(g)$); its pivot part is uniquely determined because $\Lrep{A_{\mathrm{pivot}}}$ (resp. $\Rrep{B_{\mathrm{pivot}}}$) is invertible (Lemma~\ref{lem:kernel-cokernel-normal-forms}). Likewise, let
\[
    w_A^{\alpha,g}\in \R^{c_1} \quad \left(\text{resp. } w_B^{\beta,g}\in \R^{c_2}\right)
\]
be the vector whose free part is $\epsilon_A^{\alpha}(g)$ (resp. $\epsilon_B^{\beta}(g)$) and whose pivot part is zero; it is the unique representative of its class in $\coker \Lrep{A^*}$ (resp. $\coker \Rrep{B^*}$) supported only on the free coordinates (Lemma~\ref{lem:kernel-cokernel-normal-forms}). The canonical logical basis for $\LP(A,B)$ is characterized by the following theorem.

\begin{theorem}[Canonical basis of logical operators]\label{thm:canonical-logicals}
Let $A\in \R^{r_1\times c_1}$ and $B\in \R^{r_2\times c_2}$ satisfy the square invertibility condition (Definition~\ref{def:square_inv_condition}) with respect to the left and right regular representations, respectively. For $1\le \alpha\le f_A$, $1\le \beta\le f_B$, and $g\in G$, define
\begin{equation}\label{eq:LP-logical-Z}
    Z_{\alpha,\beta}^{g}
    \coloneqq
    \left(
        \b{v_A^{\alpha,g}\otimes_{\R}w_B^{\beta,e_G}},\,0
    \right),
\end{equation}
and
\begin{equation}\label{eq:LP-logical-X}
    X_{\alpha,\beta}^{g}
    \coloneqq
    \left(
        \b{w_A^{\alpha,g}\otimes_{\R}v_B^{\beta,e_G}},\,0
    \right).
\end{equation}
These operators form a canonical logical basis for the $k=f_A f_B|G|$ logical qubits of $\LP(A,B)$.
\end{theorem}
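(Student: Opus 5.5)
The plan is to verify the three hypotheses of Lemma~\ref{lem:symplectic-pairing-logicals}: $Z^g_{\alpha,\beta}\in\ker H_X$, $X^g_{\alpha,\beta}\in\ker H_Z$, and $X^g_{\alpha,\beta}\cdot Z^h_{\alpha',\beta'}=\delta_{\alpha\alpha'}\delta_{\beta\beta'}\delta_{gh}$. We also need $k=f_Af_B|G|$ exactly, so that the pairs exhaust the logical qubits. Throughout I will identify $\R^{c_1}\otimes_\R\R^{c_2}$ with $c_1\times c_2$ matrices over $\R$. A pure tensor $x\otimes y$ (with $x$ a column, $y$ a row over the left module) corresponds to the matrix $xy$. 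This lets me use Proposition~\ref{prop:ring_bin_compat}(ii) to read off the binary action of the blocks $\Lrep{A\otimes I}$ and $\Rrep{I\otimes B}$.

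\emph{Kernel conditions.} For $Z^g_{\alpha,\beta}$, only the first block of $H_X$ acts, namely $\Lrep{A\otimes I_{c_2}}$. By Proposition~\ref{prop:ring_bin_compat}(ii), applying it to $\bvec{v_A^{\alpha,g}\,w_B^{\beta,e}}$ gives $\bvec{(Av_A^{\alpha,g})\,w_B^{\beta,e}}$. This vanishes because $Av_A^{\alpha,g}=0$, which is the defining property of $v_A^{\alpha,g}$ via Proposition~\ref{prop:ring_bin_compat}(i).

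For $X^g_{\alpha,\beta}$, only the first block of $H_Z$ acts, namely $\Rrep{I_{c_1}\otimes B}$. Applied to the matrix $w_A^{\alpha,g}\,(v_B^{\beta,e})^T$, it gives $\bvec{w_A^{\alpha,g}(v_B^{\beta,e})^TB^*}$. Here I use the row-vector convention, under which $\Rrep{B}\b{v}=\b{vB^*}$ and $v_B$ lies in $\ker\Rrep{B}$, so that $v_BB^*=0$. The one point needing care is the side conventions: $v_B^{\beta,e}$ must be treated as a row vector so that $\Rrep{B}$ acts consistently. I will fix this once using Proposition~\ref{prop:ring_bin_compat}(i).

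\emph{Pairing.} The two operators are supported only on the $c_1c_2$ block. Under the matrix picture, the free part of $w_A$ lives on free rows with pivot rows zero, and the $Z$ factor $w_B$ is supported on free columns only. The overlap is therefore confined to the free$\times$free sub-block. On that sub-block, $X^g_{\alpha,\beta}$ has the single entry $g\cdot e=g$ at position $(\alpha,\beta)$, and $Z^h_{\alpha',\beta'}$ has the single entry $h\cdot e=h$ at position $(\alpha',\beta')$. Their binary inner product is thus $\delta_{\alpha\alpha'}\delta_{\beta\beta'}[g=h]$. The only subtlety is to check that the binary coordinates of entry $(i,j)$ in $\bvec{\cdot}$ are indexed consistently by group elements, so that $g$ and $h$ coincide exactly when $g=h$.

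\emph{Counting $k$.} This is the main obstacle. Lemma~\ref{lem:symplectic-pairing-logicals} assumes the code has exactly the stated number of logical qubits. I will show $k=f_Af_B|G|$ via the Künneth-type count. Full row rank of $\Lrep{A}$, implied by the invertible pivot, gives $H_0(C^A)=0$; similarly $H_0$ of the dual complex vanishes by the invertibility of $\Rrep{B_{\mathrm{pivot}}}$. Then $\ker\Lrep{A}$ has dimension $f_A|G|$ and $\coker$ of the $B$ side has dimension $f_B|G|$. Rather than invoke Künneth over the noncommutative ring, I would argue directly. The lower bound $k\ge f_Af_B|G|$ comes from the $f_Af_B|G|$ independent anticommuting pairs just constructed, since the pairing matrix is the identity. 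The upper bound comes from $k=n-\operatorname{rank}H_X-\operatorname{rank}H_Z$ together with the rank computations $\operatorname{rank}H_X=r_1c_2|G|$ and $\operatorname{rank}H_Z=c_1r_2|G|$. I would prove these ranks from the invertible pivot blocks: $H_X$ contains the block $\Lrep{A_{\mathrm{pivot}}\otimes I_{c_2}}$, which is invertible, and similarly for $H_Z$. The identity $n-(r_1c_2+c_1r_2)|G|=f_Af_B|G|$ then closes the argument.
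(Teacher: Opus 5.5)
Your proposal is correct. Your kernel and pairing steps are the paper's argument essentially verbatim: the identification $x\otimes_{\R}y\mapsto xy$ with $c_1\times c_2$ matrices is just Proposition~\ref{prop:ring_bin_compat}(ii) in the paper's tensor notation, and your bookkeeping $v_BB^{*}=0$ is the form consistent with Proposition~\ref{prop:ring_bin_compat}(i).

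The real difference is how you justify the count $k=f_Af_B|G|$. The paper's proof applies Lemma~\ref{lem:symplectic-pairing-logicals} with this value taken for granted. Its justification comes only in the subsequent K\"unneth aside. There, the square invertibility condition makes $H_0(C_\bullet^A)=0$ and $H_1(\hat{C}_\bullet^B)=0$, with $H_1(C_\bullet^A)\cong\R^{f_A}$ and $H_0(\hat{C}_\bullet^B)\cong\R^{f_B}$ free. Hence all higher $\Tor$ vanish and the spectral sequence collapses to $H_1\cong\R^{f_Af_B}$.

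You instead compute $\operatorname{rank}H_X=r_1c_2|G|$ and $\operatorname{rank}H_Z=c_1r_2|G|$ directly. You use the square column-submatrices $\Lrep{A_{\mathrm{pivot}}\otimes I_{c_2}}$ and $\Rrep{I_{c_1}\otimes B_{\mathrm{pivot}}}$, each permutation-equivalent to a block-diagonal stack of copies of an invertible pivot expansion. You then conclude with $n-\operatorname{rank}H_X-\operatorname{rank}H_Z=|G|(c_1-r_1)(c_2-r_2)$.

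Your route is elementary and self-contained, and avoids homological algebra over a noncommutative ring; it also makes the lower bound from the pairing redundant. The paper's route yields more: it identifies the logical space with $H_1(C_\bullet^A)\otimes_{\R}H_0(\hat{C}_\bullet^B)$, which explains why tensor-product representatives are the natural choice.

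One slip to fix: for the dual complex it is $H_1(\hat{C}_\bullet^B)=\ker B^{*}$ that vanishes, not $H_0$. In fact $H_0(\hat{C}_\bullet^B)\cong\R^{f_B}$, which is what your next sentence uses. The slip does not affect your argument, since the rank computation does not depend on it.
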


\begin{remark}
\label{rem:group_logic}
    For our mitten codes we have $f_A = f_B = 1$, so we may drop the $\alpha$ and $\beta$ subscripts and simply label the $X$ and $Z$ logical operators by the group elements of $G$. The logical operators of one basis are all related to each other via group action. For example, for $X$, we have 
    \begin{equation}
    \begin{aligned}
        X^g
        &= \left(
            \b{w_A^{g} \otimes_{\R} v_B^{e}},
            0
        \right) \\
        &= \left(
            \b{g\, w_A^{e} \otimes_{\R} v_B^{e}},
            0
        \right) \\
        &= g \cdot X^e .
    \end{aligned}
    \end{equation}
    where the action $g \cdot$ is defined via the last equality. A similar relationship holds for the $Z$ logicals, with the group element acting by right multiplication instead: $Z^g = Z^e \cdot g$. 
\end{remark}

The rest of this section is devoted to proving Theorem~\ref{thm:canonical-logicals}. We first introduce the four binary vector spaces from which the logical
operators will be constructed. For the remainder of this section, we will identify
$\R^m$ with its binary expansion $\F_2^{|G|m}$.
Define the kernels 
\begin{equation}\label{eq:ker_space_V}
    V_A
    \coloneqq 
    \ker \Lrep{A}
    \subseteq \F_2^{|G|c_1},
    \qquad
    V_B
    \coloneqq 
    \ker \Rrep{B}
    \subseteq \F_2^{|G|c_2}
\end{equation}
and the cokernels
\begin{equation}
\label{eq:coker_space_W}
    W_A
    \coloneqq 
    \coker \Lrep{A^*}
    =
    \F_2^{|G|c_1}/\im \Lrep{A^*},
    \qquad
    W_B
    \coloneqq 
    \coker \Rrep{B^*}
    =
    \F_2^{|G|c_2}/\im \Rrep{B^*}.
\end{equation}
Under the square invertibility condition, the elements of these four spaces can be expressed in a particularly simple form. 
\begin{lemma}
\label{lem:kernel-cokernel-normal-forms}
Assume \(A\) and \(B\) satisfy the square invertibility condition. Then:
\begin{enumerate}
    \item every element of $V_A$ and $V_B$ is uniquely determined by its free coordinates;

    \item every equivalence class in
    $W_A$ and $W_B$ has a unique representative
    supported only on the free coordinates.
\end{enumerate}
\end{lemma}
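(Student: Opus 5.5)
Both parts reduce to block linear algebra over $\F_2$ once we split coordinates into free and pivot parts according to \eqref{eq:ABdecomp}. We treat $A$ in detail; the argument for $B$ is identical after replacing $\Lrep{\cdot}$ by $\Rrep{\cdot}$ throughout. We also use that binary expansion respects the block decomposition: $\Lrep{A}=\bigl[\Lrep{A_{\mathrm{free}}}\mid\Lrep{A_{\mathrm{pivot}}}\bigr]$, and by the transpose compatibility $\Lrep{A^*}=\Lrep{A}^T$ we have
\[
\Lrep{A^*}=\begin{bmatrix}\Lrep{A_{\mathrm{free}}}^T\\ \Lrep{A_{\mathrm{pivot}}}^T\end{bmatrix}.
\]

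For part 1, write $x=(x_f,x_p)\in\F_2^{|G|f_A}\oplus\F_2^{|G|r_1}$. The condition $x\in V_A=\ker\Lrep{A}$ reads
\[
\Lrep{A_{\mathrm{free}}}x_f+\Lrep{A_{\mathrm{pivot}}}x_p=0 .
\]
Since $\Lrep{A_{\mathrm{pivot}}}$ is invertible by the square invertibility condition, this is equivalent to
\[
x_p=\Lrep{A_{\mathrm{pivot}}}^{-1}\Lrep{A_{\mathrm{free}}}x_f .
\]
Hence the projection $V_A\to\F_2^{|G|f_A}$ onto free coordinates is a bijection. This gives uniqueness, and it also shows that every free part is realized, so that the vectors $v_A^{\alpha,g}$ used later exist. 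For $B$ the same computation goes through with $\Rrep{B_{\mathrm{pivot}}}$ invertible.

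For part 2, the image of $\Lrep{A^*}$ is the set of vectors
\[
\bigl(\Lrep{A_{\mathrm{free}}}^Ty,\ \Lrep{A_{\mathrm{pivot}}}^Ty\bigr),\qquad y\in\F_2^{|G|r_1}.
\]
Because $\Lrep{A_{\mathrm{pivot}}}^T$ is invertible, the pivot component ranges bijectively over all of $\F_2^{|G|r_1}$.
\begin{itemize}
\item \emph{Existence.} Given a class $[x]$ with $x=(x_f,x_p)$, choose $y=(\Lrep{A_{\mathrm{pivot}}}^T)^{-1}x_p$. Subtracting $\Lrep{A^*}y$ from $x$ kills the pivot part and yields a representative supported on free coordinates.
\item \emph{Uniqueness.} Suppose two free-supported representatives differ by $\Lrep{A^*}y$. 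That difference has zero pivot part, so $\Lrep{A_{\mathrm{pivot}}}^Ty=0$, hence $y=0$, and the two representatives coincide.
\end{itemize}
Equivalently, $W_A\cong\F_2^{|G|f_A}$ via restriction to free coordinates.

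No step is a real obstacle; the only point needing care is justifying the block forms of $\Lrep{A}$ and $\Lrep{A^*}$ used above.
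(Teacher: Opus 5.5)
Your proof is correct and follows essentially the same route as the paper. Both split coordinates into free and pivot parts and use invertibility of $\Lrep{A_{\mathrm{pivot}}}$ to solve for the pivot part of a kernel vector. Both then use invertibility of its transpose, which the paper writes as $\Lrep{A_{\mathrm{pivot}}^*}$, to clear the pivot part of a cokernel representative and to force $y=0$ for uniqueness. Your extra observation, that projecting $V_A$ onto the free coordinates is surjective (so the vectors $v_A^{\alpha,g}$ exist), is something the paper uses only implicitly.
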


\begin{proof}
We prove the two statements for $A$. The proof for $B$ is identical, with left representations replaced by right representations.

First, consider the kernel. Using the decomposition
\[
    \F_2^{|G|c_1}
    =
    \F_2^{|G|f_A}
    \oplus
    \F_2^{|G|r_1}
\]
into free and pivot coordinates, write
\[
    v=
    \begin{pmatrix}
        x\\ y
    \end{pmatrix},
    \qquad
    x\in \F_2^{|G|f_A},
    \qquad
    y\in \F_2^{|G|r_1}.
\]
Then $v\in V_A=\ker\Lrep{A}$ means
\[
    \Lrep{A_{\mathrm{free}}}x
    +
    \Lrep{A_{\mathrm{pivot}}}y
    =
    0.
\]
Since $\Lrep{A_{\mathrm{pivot}}}$ is invertible over $\F_2$, the pivot
part $y$ is uniquely determined by the free part $x$. Hence, every element
of $V_A$ is uniquely determined by its free coordinates.

Next, consider the cokernel
\[
    W_A
    =
    \F_2^{|G|c_1}/\im\Lrep{A^*}.
\]
Let
\[
    w=
    \begin{pmatrix}
        x\\ y
    \end{pmatrix}
    \in \F_2^{|G|c_1},
    \qquad
    x\in \F_2^{|G|f_A},
    \qquad
    y\in \F_2^{|G|r_1}.
\]
We want to replace $w$ by an equivalent representative whose pivot part is
zero. Since representatives in the same equivalence class may differ by an element of
$\im\Lrep{A^*}$, we may add a vector of the form
\[
    \Lrep{A^*}u,
    \qquad
    u\in \F_2^{|G|r_1}.
\]
Hence,
\[
    w+\Lrep{A^*}u
    =
    \begin{pmatrix}
        x+\Lrep{A_{\mathrm{free}}^*}u\\
        y+\Lrep{A_{\mathrm{pivot}}^*}u
    \end{pmatrix}.
\]
Because $\Lrep{A_{\mathrm{pivot}}}$ is invertible, so is
$\Lrep{A_{\mathrm{pivot}}^*}$, and so there is a unique
$u\in \F_2^{|G|r_1}$ such that
\[
    y+\Lrep{A_{\mathrm{pivot}}^*}u=0.
\]
For this choice of $u$, the representative
$w+\Lrep{A^*}u$ is supported only on the free coordinates. Hence, every equivalence class
in $W_A$ has a free-supported representative.

It remains to show this representative is unique. Suppose two free-supported vectors belong to the same equivalence class in $W_A$. Their difference is again free-supported and lies
in $\im\Lrep{A^*}$, so is equal to $\Lrep{A^*}u$ for some
$u\in \F_2^{|G|r_1}$. Looking at the pivot coordinates gives
\[
    \Lrep{A_{\mathrm{pivot}}^*}u=0
\]
which implies $u=0$ since $\Lrep{A_{\mathrm{pivot}}^*}$ is invertible. This proves uniqueness.
\end{proof}

By Lemma~\ref{lem:kernel-cokernel-normal-forms}, the vectors $v_A^{\alpha,g}$, $w_A^{\alpha,g}$, $v_B^{\beta,g}$, and $w_B^{\beta,g}$ defined before Theorem~\ref{thm:canonical-logicals} are well defined. Moreover, the binary expansions $\b{v_A^{\alpha,g}}$ and $\b{v_B^{\beta,g}}$ form bases of $V_A$ and $V_B$, and the equivalence classes of $\b{w_A^{\alpha,g}}$ and $\b{w_B^{\beta,g}}$ form bases of $W_A$ and $W_B$, respectively. The logical operators of Theorem~\ref{thm:canonical-logicals} are tensor products of these vectors, lifted to their binary representation.
Since we are taking tensor products over $\R$, we recall that they satisfy the balancing relation
\[
    ug\otimes_{\R}v
    =   u\otimes_{\R}gv \quad \text{for all } u\in \R^{c_1}, v\in \R^{c_2}, g\in G
\]
by construction.
Hence, a tensor product of the form $v_A^{\alpha,g}\otimes_{\R}w_B^{\beta,h}$ is uniquely specified by the group element $gh$ and so we will work with the convention that the group label is always placed on the $A$-side and the identity element $e_G$ is placed on the $B$-side. We also adopt the convention that $A\otimes I_{c_2}$ acts on the first tensor factor by the left action, $(A\otimes I_{c_2})(u\otimes v)=(Au)\otimes v$, while $I_{c_1}\otimes B$ acts on the second factor by the right action, $(I_{c_1}\otimes B)(u\otimes v)=u\otimes (vB)$.

We can now prove Theorem~\ref{thm:canonical-logicals}.

\begin{proof}[Proof of Theorem~\ref{thm:canonical-logicals}]
Our strategy will be to show that the $X$-type and $Z$-type representatives defined above satisfy the hypotheses of Lemma~\ref{lem:symplectic-pairing-logicals}.
By Definition~\ref{def:LP-code}, the physical qubits of
\(\mathrm{LP}(A,B)\) are indexed by the binary expansion of
\[
    \left(\R^{c_1}\otimesR \R^{c_2}\right)
    \oplus
    \left(\R^{r_1}\otimesR \R^{r_2}\right).
\]
The representatives defined in Eqs.~\eqref{eq:LP-logical-Z} and~\eqref{eq:LP-logical-X} have support only on the first block. We first show that the $Z$-type representatives commute with all $X$-stabilizers and the $X$-type representatives commute with all $Z$-stabilizers.
Using the block form of \(H_X\) from~\eqref{eq:LP-HX}, we get
\[
    H_X
    \left(
        \b{v_A^{\alpha,g}\otimesR w_B^{\beta,e_G}},\,0_{r_1r_2|G|}
    \right)
    =
    \Lrep{A\otimes I_{c_2}}\,
    \b{v_A^{\alpha,g}\otimesR w_B^{\beta,e_G}}
    =
    \b{Av_A^{\alpha,g}\otimesR w_B^{\beta,e_G}}
    =
    0,
\]
because \(\b{v_A^{\alpha,g}}\in V_A=\ker\Lrep{A}\). Hence,
$ Z_{\alpha,\beta}^{g}\in \ker H_X.$ Similarly, using the block form of $H_Z$ from~\eqref{eq:LP-HZ}, we get
\[
    H_Z
    \left(
        \b{w_A^{\alpha,g}\otimesR v_B^{\beta,e_G}},\,0_{r_1r_2|G|}
    \right)
    =
    \Rrep{I_{c_1}\otimes B}\,
    \b{w_A^{\alpha,g}\otimesR v_B^{\beta,e_G}}
    =
    \b{w_A^{\alpha,g}\otimesR (v_B^{\beta,e_G}B)}
    =
    0,
\]
because \(\b{v_B^{\beta,e_G}}\in V_B=\ker\Rrep{B}\). Hence, $X_{\alpha,\beta}^{g}\in \ker H_Z.$

It remains to compute the symplectic pairing. The free part of $v_A^{\alpha,g}$ is
$\epsilon_A^{\alpha}(g)$, while $w_A^{\alpha',g'}$ is supported only on
the free coordinates with free part $\epsilon_A^{\alpha'}(g')$. Therefore,
\[
    \b{v_A^{\alpha,g}} \cdot \b{w_A^{\alpha',g'}} 
    =
    \delta_{\alpha,\alpha'}\delta_{g,g'}.
\]
Similarly, the free part of $v_B^{\beta',e_G}$ is
$\epsilon_B^{\beta'}(e_G)$, while $w_B^{\beta,e_G}$ is supported only on
the free coordinates with free part $\epsilon_B^{\beta}(e_G)$. Hence,
\[
    \b{w_B^{\beta,e_G}} \cdot \b{v_B^{\beta',e_G}} 
    =
    \delta_{\beta,\beta'}.
\]
Consequently,
\[
     X_{\alpha',\beta'}^{g'}
    \cdot
    Z_{\alpha,\beta}^{g}
    =
    \delta_{\alpha,\alpha'}
    \delta_{\beta,\beta'}
    \delta_{g,g'}.
\]
Hence, the hypotheses of Lemma~\ref{lem:symplectic-pairing-logicals} are satisfied and so the operators
\[
    \left\{
        X_{\alpha,\beta}^{g},
        Z_{\alpha,\beta}^{g}
    \right\}_{1\le \alpha\le f_A,\;1\le \beta\le f_B,\;g\in G}
\]
form a canonical basis for the $k = f_A f_B |G|$ single-qubit $X$ and $Z$  logical operators
of $\mathrm{LP}(A,B)$.
\end{proof}

\subsection{Aside on connection to  K\"unneth formula}

One might notice that our canonical logical basis and its product structure is suspiciously reminiscent of the K\"{u}nneth formula. This is not a coincidence, but rather a direct consequence of the square invertibility condition. We now explain this connection.

For a general group algebra $\R = \F_2[G]$, the homology of the tensor product
complex $C_\bullet^A \otimes_\R \hat{C}_\bullet^B$ usually does not reduce to a direct sum of the tensor products of the homology of the two factors like in the standard K\"{u}nneth formula. It is instead computed by the K\"unneth spectral 
sequence~\cite{weibel1994introduction}:
\begin{align}
    E^2_{p,q}
    =
    \bigoplus_{i+j=q}
    \Tor_p^{\R}\!\left( H_i(C_\bullet^A),\, H_j(\hat{C}_\bullet^B) \right)
    \;\Longrightarrow\;
    H_{p+q}\!\left( C_\bullet^A \otimes_{\R} \hat{C}_\bullet^B \right).
\end{align}
The symbol $\Longrightarrow$ indicates that $E^2$ is only the starting page; successive pages are obtained by taking homology with respect to differentials:
\[
d^r : E^r_{p,q} \longrightarrow E^r_{p-r,q+r-1},
\qquad
E^{r+1}_{p,q}
=
\frac{\ker\!\left(d^r : E^r_{p,q} \to E^r_{p-r,q+r-1}\right)}
{\operatorname{im}\!\left(d^r : E^r_{p+r,q-r+1} \to E^r_{p,q}\right)}.
\]
The limiting page $E^\infty$ then assembles along the diagonals $p+q=n$ into $H_n(C_\bullet^A \otimes_{\R} \hat{C}_\bullet^B)$. Since
$\Tor_0^\R(M,N) = M \otimes_\R N$, the $p=0$ column contains the ordinary
tensor-product terms of the form $H_i(C_\bullet^A) \otimes_\R H_j(\hat{C}_\bullet^B)$.
We will show that the square invertibility condition makes the spectral sequence collapse immediately at the $E^2$ page so that these ordinary tensor-product terms are the only ones that contribute to the homology of the product complex. 

First, we identify the homologies of the two factors. Recall that for the complex 
\[
    C_\bullet^A:\quad \R^{c_1} \xrightarrow{\;A\;} \R^{r_1}
\]
the square invertibility condition tells us that $A$ can be decomposed as 
$A = [\,A_{\mathrm{free}} \mid A_{\mathrm{pivot}}\,]$ where $\Lrep{A_{\mathrm{pivot}}}$ is invertible over
$\F_2$. Hence, the equation \[A_{\mathrm{free}} x + A_{\mathrm{pivot}} y = 0\] can be solved uniquely for $y$ given any choice of $x$, yielding
\[
    H_1(C_\bullet^A) = \ker A \cong \R^{f_A},
    \qquad
    H_0(C_\bullet^A) = \coker A = 0.
\]
Similarly, since the square invertibility condition also holds for $B$, for the dual complex $\hat{C}_\bullet^B$ defined in~\eqref{eq:Bcomplex}
we have 
\[
    H_1(\hat{C}_\bullet^B) = 0,
    \qquad
    H_0(\hat{C}_\bullet^B) \cong \R^{f_B}.
\]
Note that the two nonzero homology modules $H_1(C_\bullet^A) $ and $H_0(\hat{C}_\bullet^B)$ are free.

Plugging these into the spectral sequence, only one entry survives. Every
summand involving $H_0(C_\bullet^A)$ or $H_1(\hat{C}_\bullet^B)$ vanishes,
and freeness of $H_1(C_\bullet^A)$ and $H_0(\hat{C}_\bullet^B)$ forces all higher Tor to vanish as
well:
\[
    \Tor_p^{\R}\!\left(\R^{f_A},\, \R^{f_B}\right) = 0
    \qquad\text{for all } p \geq 1.
\]
Hence, the unique nonzero entry on the $E^2$ page is
\[
    E^2_{0,1}
    \;=\;
    H_1(C_\bullet^A) \otimes_\R H_0(\hat{C}_\bullet^B)
    \;\cong\;
    \R^{f_A f_B},
\]
so schematically the $E^2$ page looks like 
\[
\begin{array}{r|cccc}
q=2 & 0 & 0 & 0 & \cdots \\
q=1 & H_1(C_\bullet^A) \otimes_\R H_0(\hat{C}_\bullet^B) & 0 & 0 & \cdots \\
q=0 & 0 & 0 & 0 & \cdots \\ \hline
    & p=0 & p=1 & p=2 & \cdots 
\end{array} \hspace{2em} .
\]
The differentials touching $E^r_{0,1}$ have the form
\[
    E^r_{r,\,2-r}
    \;\xrightarrow{\;d^r\;}\;
    E^r_{0,1}
    \;\xrightarrow{\;d^r\;}\;
    E^r_{-r,\,r}.
\]
The entry on the right sits in column $p = -r < 0$ and is zero by convention.
The entry on the left is also zero on the $E^2$-page: at $r=2$ it occupies
the lattice point $(2,0)$, empty in the diagram above; for $r \geq 3$ it has $q = 2-r < 0$. With both neighbors zero, the outgoing $d^r$ has kernel all of $E^r_{0,1}$ and the incoming $d^r$ has image zero, so the recursion
\[
    E^{r+1}_{0,1}
    \;=\;
    \frac{\ker\!\left(d^r\colon E^r_{0,1} \to 0\right)}
         {\im\!\left(d^r\colon 0 \to E^r_{0,1}\right)}
    \;=\;
    \frac{E^r_{0,1}}{0}
    \;=\;
    E^r_{0,1}
\]
holds for every $r \geq 2$. The entry persists unchanged to $E^\infty$, and
we conclude
\[
    H_1\!\left( C_\bullet^A \otimes_\R \hat{C}_\bullet^B \right)
    = H_1(C_\bullet^A) \otimes_\R H_0(\hat{C}_\bullet^B).
\]
Hence, we may construct the $Z$ logical operators of our code by constructing a basis for $H_1(C_\bullet^A)$ and $H_0(\hat{C}_\bullet^B)$. This is exactly what we did in the construction of our canonical basis. The construction of the $X$ logical operators follows similarly by looking at the first cohomology group of the tensor product complex.

\section{Universal fault-tolerant quantum instruction sets}\label{app:metrics}

In this appendix, we provide a formal definition of the notion of a universal fault-tolerant quantum instruction set. We then more formally define the three instruction sets we considered in the main text and explain how they enable universal quantum computation. 

Consider a qLDPC processor based on a CSS code with parameters $\llbracket n,k,d \rrbracket$ together with a fixed decoder. An \emph{instruction} is a logical operation on the processor's logical qubits together with a physical gadget that realizes the operation.

\begin{definition}[Universal fault-tolerant quantum instruction set (formal version of Definition~\ref{def:quis})]
\label{def:quis_formal}
Consider a qLDPC processor with $k$ logical qubits. A set of fault-tolerant instructions $\quis$ is a \emph{universal fault-tolerant quantum instruction set} for the processor if every unitary $U$ on the $k$ logical qubits can be realized to any precision $\epsilon>0$ by a finite sequence of instructions from $\quis$~\cite{A_Yu_Kitaev_1997}, where
\begin{enumerate}
    \item each instruction may be chosen adaptively, conditioned on the classical record of earlier measurement outcomes, and
    \item the combined action of the sequence is required to equal $U$ only up to a Pauli frame determined by that record.
\end{enumerate}
\end{definition}

An example of a universal instruction set is the collection of two-qubit CNOTs together with the single-qubit Hadamard and $T$
gates~\cite{boykin1999universalfaulttolerantquantumcomputing}. Universality, however, does not require unitary gadgets; measurements and state injection suffice, as the constructions below show.

\begin{definition}[Basic instruction set $\bquis$]
\label{def:bquis_formal}
The basic instruction set consists of (i) the single-qubit logical Pauli measurements $\{\bar{X}_i,\bar{Y}_i,\bar{Z}_i\}_{i=1}^{k}$, (ii) the weight-two logical Pauli-product measurements $\{\bar{X}_i\bar{X}_j,\bar{Z}_i\bar{Z}_j\}_{1\leq i<j\leq k}$, and (iii) noisy $T$-state injection i.e. the preparation of a designated logical qubit in the state $T\ket{+}$ up to a known Pauli correction, with an injection error rate set by the physical noise.
\end{definition}

\noindent
The Clifford group together with the $T$ gate forms a universal gate set~\cite{boykin1999universalfaulttolerantquantumcomputing,Bravyi_2005}. Moreover, any Clifford+$T$ circuit can be compiled into a Pauli-based computation consisting of stabilizer-state preparations, adaptive logical Pauli-product measurements, and the consumption of logical $\ket{T}$ states~\cite{Bravyi_2016}.
Noisy logical $\ket{T}$ states are injected using instruction (iii) and are then distilled. Provided the input error lies below the distillation threshold and the logical stabilizer operations are sufficiently reliable, magic-state distillation can reduce the output error below any target $\epsilon>0$ with overhead polylogarithmic in $1/\epsilon$~\cite{Bravyi_2005,Reichardt_2005,Bravyi_2012}. The distillation circuit may be implemented either using transversal CNOT gates between separate encoded blocks or, when a single block encodes many logical qubits, as a sequence of logical Pauli-product measurements within that block~\cite{xu2026distillingmagicstatesbicycle}.
A distilled $\ket{T}$ state is then consumed via gate teleportation to implement the logical $T$ gate~\cite{Bravyi_2005,Litinski_2019}. These ingredients are standard; what is specific to the mitten codes is that every instruction in $\bquis$ is implemented using only five reusable gadgets (Section~\ref{sec:universal_ftqc}, Table~\ref{tab:processor_graph_surgery}).

\begin{definition}[High-throughput instruction set $\hquis$]
\label{def:hquis_formal}
The high-throughput instruction set consists of (i) parallel logical Pauli-product measurement i.e. the joint fault-tolerant measurement of a set of commuting logical Pauli products in a single gadget, and (ii) parallel $T$-state injection i.e. the simultaneous injection of noisy $T$ states into all $k$ logical qubits of a block.
\end{definition}

\begin{definition}[Fixed-gadget instruction set $\fquis$]
\label{def:fquis_formal}
The fixed-gadget instruction set consists of (i) the extractor measurement, i.e., the fault-tolerant measurement of an arbitrary logical Pauli product, and (ii) noisy $T$-state injection. The extractor is a single fixed gadget: its qubit layout and connectivity are fixed in hardware, and the Pauli product to be measured is selected in software by switching the appropriate connections on or off.
\end{definition}

\noindent Every instruction of the basic instruction set $\bquis$ can be viewed as a special case of an instruction of the other two sets. Therefore, both $\hquis$ and $\fquis$ are universal fault-tolerant quantum instruction sets.

\section{Surgery}
\label{app:surgery}
\begin{figure}[t]
    \centering
    \includegraphics[width=1\linewidth]{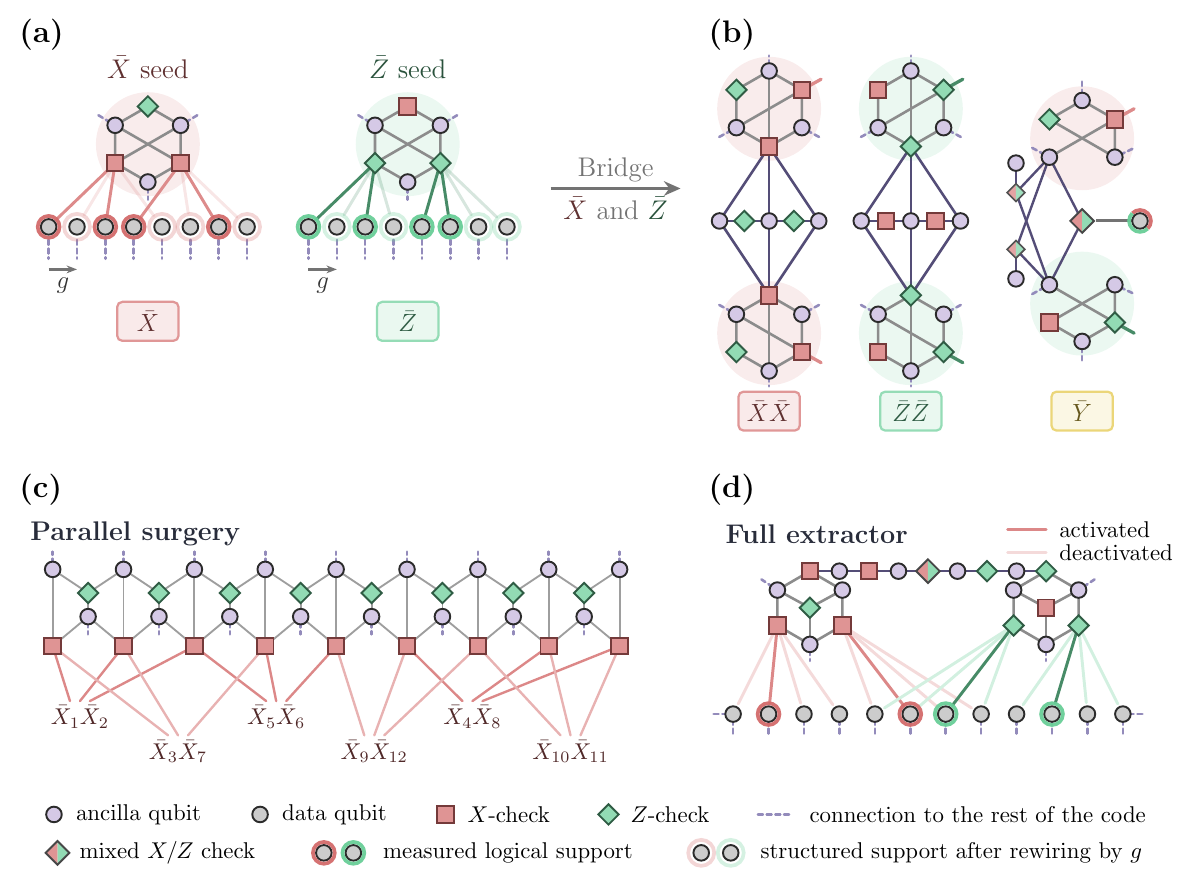}
    \caption{Surgery gadgets. \textbf{(a)} The seed graph surgery gadgets for measuring logical $\bar{X}$/$\bar{Z}$ operators constructed based on the group-structured canonical logical basis (Appendix~\ref{app:canonical-basis}) where different logical operators are related by group action $g$ and the seed gadgets can be used to measure all $|G|$ of them within an orbit by rewiring. \textbf{(b)} $\bar{X}\bar{X}$, $\bar{Z}\bar{Z}$ and $\bar{Y}$ graph surgery gadgets constructed by bridging the seed gadgets from (a). Together, the graph surgery gadgets in (a) and (b) can generate the entire Clifford group. \textbf{(c)} Parallel surgery gadget for measuring Pauli products in parallel. \textbf{(d)} Full extractor for measuring arbitrary weight Pauli product operators using a fixed gadget.}
    \label{fig:surgery_gadgets}
\end{figure}
Universal quantum computation can be achieved with Clifford gates and non-Clifford magic state inputs~\cite{Bravyi_2005,Bravyi_2016}.  On qLDPC codes, Clifford computation can be implemented through logical Pauli-product measurements (PPMs) generated via \textit{code surgery}~\cite{Horsman_2012, Cohen_2022, cross2025improvedqldpcsurgerylogical, Williamson_2026, Ide_2025}. Code surgery realizes PPMs on general qLDPC codes by introducing a surgery system: ancillary qubits and checks that couple to the targeted logical operator are attached, temporarily enlarging the original code to realize the desired measurement.

Surgery can be viewed at the level of the merged code: the CSS code whose physical qubits are the original data qubits together with the ancilla qubits of the surgery gadget. The checks of the merged code are the original checks, some of which are deformed to attain support on the ancilla qubits, together with new checks introduced by the gadget. A logical measurement then proceeds in three steps: First, the ancilla qubits are initialized in $\ket{0}$ for an $X$-type measurement (in $\ket{+}$ for a $Z$-type measurement), so that every deformed check initially carries the value of the original check it extends, and every new check supported on ancilla qubits alone is initially deterministic. Second, the checks of the merged code are measured for $R=\mathcal{O}(d)$ syndrome-extraction rounds (Appendix~\ref{app:metrics}). By construction, a product of a fixed subset of the new checks yields the targeted logical operator, so the product of the measurement outcomes for these checks returns the logical measurement result, and repeating for $R$ rounds protects this result against measurement errors. Third, the ancilla qubits are measured in their initialization basis, restoring the original code; the outcomes fix the values of the restored checks and determine a Pauli-frame update on the data block. Fault tolerance of the whole procedure thereby reduces to two properties of the merged code: it must remain LDPC, so the check weight of the merged code (which is lower bounded by the check weight of the original code plus one) should not grow much compared to the original code. And its distance should not drop too much below that of the original code -- ideally the distance of the original code is preserved in the merged code. These two requirements guide all constructions in this section.

This section is organized as follows (Figure~\ref{fig:surgery_gadgets}). In Appendix~\ref{subsec:graph_surgery} we construct the graph surgery gadgets $\mathbf{S}_X$, $\mathbf{S}_Y$, $\mathbf{S}_Z$, $\mathbf{S}_{XX}$, and $\mathbf{S}_{ZZ}$ for mitten codes (Figure~\ref{fig:surgery_gadgets}(a) and (b)). An unstructured canonical logical basis would require one dedicated gadget per measured operator, which means $2k$ seed gadgets for the basis single qubit $\bar{X}$ and $\bar{Z}$ operators and, from them, $3k+k(k-1)$ distinct gadgets to support $\bquis$. The group structure of the canonical basis of mitten codes (Appendix~\ref{app:canonical-basis}) reduces this to $5$ reusable gadgets generated from only $2$ seed gadgets, $\mathbf{S}_X$ and $\mathbf{S}_Z$, and rewiring a seed gadget by the group action $g$ as illustrated in Figure~\ref{fig:surgery_gadgets}(a) measures every operator in the $G$-orbit of its target, and bridging seed gadgets produces the remaining three gadgets. In Appendix~\ref{subsec:parallel_surgery} we construct parallel surgery gadgets (Figure~\ref{fig:surgery_gadgets}(c)), which measure many logical Pauli products on a single merged code to support $\hquis$. Because the canonical basis arises from the two classical factor codes, all canonical $X$ ($Z$) representatives, and hence the $X$-side ($Z$-side) parallel gadgets, are supported on only $2$ of the $5$ blocks of data qubits, $D_1\cup D_2$ ($D_1\cup D_3$) of mitten codes, reducing the overhead of the parallel surgery gadget. In Appendix~\ref{subsec:extractor} we construct and bridge an $X$-side and a $Z$-side extractor into a full extractor (Figure~\ref{fig:surgery_gadgets}(d)), which is a single fixed gadget measuring arbitrary logical Pauli products~\cite{he2025extractorsqldpcarchitecturesefficient, blue2026extractorslogicalprocessinghypergraph} to support $\fquis$. Appendix~\ref{subsec:distance_preserving_surgery} collects the expansion conditions under which all these gadgets preserve the code distance, and the methods we use to certify them. Throughout, every construction is presented concretely for the mitten codes, whose canonical basis $\{\bar{X}_g,\bar{Z}_g\}_{g\in G}$ consists of a single orbit pair (Theorem~\ref{thm:mitten_canonical_basis}). We also explain at the end of each subsection how the constructions generalize to LP codes of other base-matrix shapes.

\subsection{Graph surgery}
\label{subsec:graph_surgery}
Without exploiting any structure of the code, supporting $\bquis$ typically requires one dedicated gadget per measured operator: $3k$ distinct gadgets for the single-qubit measurements $\{\bar{X}_i,\bar{Y}_i,\bar{Z}_i\}_{i\in[k]}$, plus $k(k-1)$ distinct gadgets for the weight-two products $\{\bar{X}_i\bar{X}_j,\bar{Z}_i\bar{Z}_j\}_{i< j}$. Moreover, the size of each gadget grows with the weight of the measured logical representative~\cite{Cohen_2022, Williamson_2026}. For a generic qLDPC code one only controls the weight of some spanning set of logical operators, and having a canonical basis usually makes the representatives have much larger weights. Both the number and the size of the gadgets therefore become less favorable.

In contrast, the low-weight canonical logical basis we have explicitly constructed for the mitten codes as in Appendix~\ref{app:canonical-basis}:
\[
\{X_{\alpha,\beta}^{g},Z_{\alpha,\beta}^{g}\}_{1\le \alpha\le f_A,\;1\le \beta\le f_B,\;g\in G}
\]
allows us to construct surgery gadgets in an economical way. To be more specific, for lifted product codes satisfying the square invertibility condition in Definition~\ref{def:square_inv_condition}, we can construct a canonical logical basis with group structure, i.e. the operators within each set of $\{X_{\alpha,\beta}^{g}\}$, $\{Y_{\alpha,\beta}^{g}\}$ and $\{Z_{\alpha,\beta}^{g}\}$ with the same $\alpha,\beta$ are related by group actions, so the single basis Tanner graph masked to each specific logical representative is the same.

Specifically, for mitten codes, which come with the explicit canonical logical basis of Theorem~\ref{thm:mitten_canonical_basis},
\[
\{\bar{X}_g,\bar{Z}_g\}_{g\in G},
\qquad
\bar{X}_g=\left(g,u_g,0,0,0\right)^T,
\quad
\bar{Z}_g=\left(g,0,v_g,0,0\right)^T,
\]
makes the surgery gadgets for them economical in two ways. Firstly, the representatives are low weight and localized as they are assembled from codewords of the two classical factor codes, so $\bar{X}_g$ is supported on the two data blocks $D_1\cup D_2$ and $\bar{Z}_g$ on $D_1\cup D_3$, with weights controlled by the classical codeword weights (Table~\ref{tab:1x2_processor_code_param}) rather than by generic representatives spread over the block. Secondly, as illustrated in Figure~\ref{fig:surgery_gadgets}(a), each of the two sets is a single orbit of the group action, where $\bar{X}_g$ is obtained from $\bar{X}_e$ by the left group action $h\mapsto gh$ of the qubit labels within each block, and $\bar{Z}_g$ from $\bar{Z}_e$ by the right group action $h\mapsto hg$ (Theorem~\ref{thm:mitten_canonical_basis}). As shown below, these group symmetries can be viewed as the symmetries of the relevant part of the Tanner graph, so all representatives in one orbit see the same local check structure and can share a single surgery gadget.

In the following, $\mathbf{S}(P)$, abbreviated as $\mathbf{S}_P$, denotes the surgery gadget that measures the logical Pauli operator $P$. For any representative $P$, let $\supp(P)$ denote its support, and let $\mathbf{X}(P)$ and $\mathbf{Z}(P)$ denote the sets of $X$- and $Z$-checks, respectively, whose supports intersect $\supp(P)$. For an $X$-type representative $P$, the graph surgery gadget is determined by two pieces of data: (i) the sets $\supp(P)$ and $\mathbf{Z}(P)$, and (ii) the bipartite adjacency between them inherited from $H_Z$. We describe only this $X$-type case, as the construction for a $Z$-type representative is obtained symmetrically by exchanging $X$ and $Z$. For the mitten codes, the relevant opposite-type checks are localized by Eq.~\eqref{eq:mitten_HxHz}: $\mathbf{Z}(\bar{X}_g)$ is contained in the check block $Z_0$, while $\mathbf{X}(\bar{Z}_g)$ is contained in $X_0$.

The gadget $\mathbf{S}(\bar{X}_g)$ is built in two stages. The first stage is its \emph{skeleton}, which is the interface between the gadget and the code that specifies which logical operator is measured. The skeleton contains one new $X$-check for every qubit in $\supp(\bar{X}_g)$ that is attached transversally (each new check acts on its own data qubit), and one ancilla qubit for every $Z$-check in $\mathbf{Z}(\bar{X}_g)$ that is also attached transversally (each such $Z$-check is deformed onto its own ancilla qubit). Within the skeleton, a new $X$-check acts on an ancilla qubit exactly when the corresponding data qubit lies in the support of the corresponding $Z$-check. This connectivity helps to make the merged code a valid CSS code as a new $X$-check and a deformed $Z$-check overlap on a data qubit precisely when they also overlap on the paired ancilla qubit, hence always on an even number of qubits. Moreover, since $\bar{X}_g\in\ker H_Z$, every $Z$-check in $\mathbf{Z}(\bar{X}_g)$ overlaps $\supp(\bar{X}_g)$ evenly, so every ancilla qubit is acted on by an even number of new $X$-checks and the product of all new $X$-checks is therefore supported on the data qubits alone, where it equals $\bar{X}_g$. Measuring the new checks thus reads out the target logical operator through the protocol described at the beginning of this section. It is natural to view the new $X$-checks as vertices and each ancilla qubit as a hyperedge on the set of vertices fixed by its $Z$-check. Since the skeleton is in general a hypergraph, we then replace each hyperedge by a set of ordinary edges and deform the corresponding $Z$-checks. This decomposition preserves all the properties above and makes the interface an ordinary graph. The second stage makes the gadget fault tolerant, where we add further edges (ancilla qubits) on top of the skeleton to raise the expansion of the gadget graph up to the distance-preserving condition of Appendix~\ref{subsec:distance_preserving_surgery}, and at the same time introduce one new $Z$-check for every independent cycle of the final graph so that the gadget contributes no logical qubits of its own. We also optimize the construction to keep vertex degrees and cycle lengths bounded so that the merged code remains LDPC. The merged code then encodes $k-1$ logical qubits, with $\bar{X}_g$ promoted to a stabilizer. For $\mathbf{S}_X$, the columns of Table~\ref{tab:processor_graph_surgery} count exactly these objects: $n_{\mathrm{anc}}$ edges and $m_X$ vertices of a connected gadget graph together with its $m_Z=n_{\mathrm{anc}}-m_X+1$ independent-cycle checks (symmetrically for $\mathbf{S}_Z$ with the two check types exchanged).

To be more explicit, we order the qubits of the merged code as the $n$ data qubits followed by the ancilla qubits, and encode a gadget with vertex checks $\mathcal{V}$, ancilla qubits (edges) $\mathcal{E}$, and a basis $\mathcal{C}$ of independent cycles by four binary matrices: the incidence matrix $\Gamma\in\F_2^{\mathcal{E}\times\mathcal{V}}$ of the gadget (hyper)graph, with $\Gamma_{e,v}=1$ iff $e$ touches $v$; the attachment matrix $\Pi\in\F_2^{\mathcal{V}\times n}$, with $\Pi_{v,q}=1$ iff the vertex check $v$ is transversally attached to the data qubit $q$ (at most one $1$ per row, the skeleton vertices are attached bijectively to $\supp(\bar{X}_g)$, while vertices added during the expansion stage carry no attachment); the deformation matrix $J$, with rows indexed by the $Z$-checks of the original code and $J_{s,e}=1$ iff check $s$ is deformed onto the ancilla qubit $e$ (nonzero rows exactly on $\mathbf{Z}(\bar{X}_g)$); and the cycle matrix $\Lambda\in\F_2^{\mathcal{C}\times\mathcal{E}}$, whose rows are the indicator vectors of the cycles in $\mathcal{C}$. The merged code measuring the $X$-type operator $\bar{X}_g$ then has check matrices
\begin{equation}\label{eq:merged_HxHz}
\begin{split}
    H_X^{\mathrm{merged}} &=
    \bordermatrix{
        & [n] & \mathcal{E} \cr
    \text{original} & H_X & 0 \cr
    \mathcal{V} & \Pi & \Gamma^T \cr
    } \\
    H^{\mathrm{merged}}_Z &=
    \bordermatrix{
        &   &   \cr
    \text{original} & H_Z & J \cr
    \mathcal{C} & 0 & \Lambda \cr
    },
\end{split}
\end{equation}
where the first row blocks are the original checks (the $Z$-checks now deformed through $J$) and the second row blocks are the new checks of the gadget. Given $H_XH_Z^T=0$ for the original code, the CSS condition $\tilde{H}_X\tilde{H}_Z^T=0$ of the merged code reduces to exactly two constraints: $\Lambda\Gamma=0$, i.e.\ the rows of $\Lambda$ are indeed cycles, and $\Pi H_Z^T=\Gamma^TJ^T$, i.e.\ each $Z$-check is deformed onto an edge set whose mod-$2$ boundary matches its overlap with the attached data qubits which is precisely the rule of the skeleton construction and its hyperedge decomposition. The skeleton itself is the special case where $\Gamma$ has one hyperedge per check in $\mathbf{Z}(\bar{X}_g)$ and $J$ deforms each of these checks onto its own ancilla qubit. In this block form, the product of the vertex checks over a subset $T\subseteq\mathcal{V}$ acts as $X$ on the data qubits indicated by $\Pi^T\mathbf{1}_T$ and on the ancilla qubits indicated by $\Gamma\mathbf{1}_T$, so the operators read out by the gadget are the $X$ operators supported on $\Pi^Tx$ for $x\in\ker\Gamma$. For a connected gadget graph, $\ker\Gamma=\{0,\mathbf{1}_{\mathcal{V}}\}$ and the unique measured operator is $\bar{X}_g$, while for the hypergraph gadgets of Appendix~\ref{subsec:parallel_surgery} the kernel is larger by design. The same block form describes the bridged gadgets $\mathbf{S}_{XX}$ and $\mathbf{S}_{ZZ}$, whose gadget graph is the disjoint union of two seed graphs joined by the bridge edges, as well as $Z$-type measurements after exchanging the roles of $H_X$ and $H_Z$. The only exception is $\mathbf{S}_Y$, whose mixed-type bridge checks act as $X$ towards the $\mathbf{S}_X$ side and as $Z$ towards the $\mathbf{S}_Z$ side, so that its merged code is a stabilizer code but not a CSS code. Finally, note that $\Gamma$ is exactly the incidence matrix whose expansion enters the distance-preserving condition of Appendix~\ref{subsec:distance_preserving_surgery} where the single object $\Gamma$ determines both what the gadget measures, through $\ker\Gamma$, and whether the merged code preserves the code distance, through its Cheeger constant.

The group symmetry of the mitten-code Tanner graph makes these gadgets reusable. Recall that the blocks of $H_X$ and $H_Z$ are regular representations of ring elements (Equation~\eqref{eq:mitten_HxHz}), and that an entry of a right-representation block, $\Rrep{b}_{s,q}$, depends on the group labels $s$ of the check and $q$ of the qubit only through $s^{-1}q$. It is therefore invariant under the simultaneous left group action $q\mapsto gq$, $s\mapsto gs$ of all qubit and check labels (and likewise left-representation blocks are invariant under right translations). The representatives $\bar{X}_g$ are supported on $D_1\cup D_2$, on which the $Z$-checks of $Z_0$ act only through the right-representation blocks $\Rrep{b_0}$ and $\Rrep{b_1}$, and they satisfy $\supp(\bar{X}_g)=g\cdot\supp(\bar{X}_e)$ and $\mathbf{Z}(\bar{X}_g)=g\cdot\mathbf{Z}(\bar{X}_e)$ under left group action. Consequently, as illustrated in Figure~\ref{fig:surgery_gadgets}(a), the bipartite adjacency between $\supp(\bar{X}_g)$ and $\mathbf{Z}(\bar{X}_g)$ is the same for all $g\in G$, so the skeleton, and with it the entire gadget, can be kept fixed: the single seed gadget $\mathbf{S}_X=\mathbf{S}(\bar{X}_e)$ measures every $\bar{X}_g$ after rewiring its interface, i.e. attaching the vertex check paired with data qubit $q$ to $gq$ instead, and the ancilla edges deforming check $s$ to $gs$ instead. Right group actions give the analogous statement for the seed gadget $\mathbf{S}_Z=\mathbf{S}(\bar{Z}_e)$, since the $X$-checks of $X_0$ act on $D_1\cup D_3$ only through the left-representation blocks $\Lrep{a_0}$ and $\Lrep{a_1}$. The remaining gadgets are obtained by bridging seed gadgets~\cite{Cohen_2022, cross2025improvedqldpcsurgerylogical}. To measure a product $\bar{X}_g\bar{X}_h$, we attach two rewirings of $\mathbf{S}_X$ and join their graphs by a bridge of additional ancilla qubits (edges between the two vertex sets), together with one new $Z$-check for each new independent cycle. The bridged graph is connected, so the only product of vertex checks acting trivially on the ancilla qubits is the product of all of them, and the merged code measures exactly the product $\bar{X}_g\bar{X}_h$ rather than the individual factors. The bridge must contain at least $d$ edges to satisfy the distance-preserving property and the product of the vertex checks of one seed gadget equals its factor times $X$ on the bridge, so each factor is equivalent, in the merged code, to an operator supported on the bridge alone. To measure $\bar{Y}_g$, which up to a phase is the product $\bar{X}_g\bar{Z}_g$, we bridge $\mathbf{S}_X$ with $\mathbf{S}_Z$; the bridge checks are then of mixed type, acting as $X$ towards the $\mathbf{S}_X$ side and as $Z$ towards the $\mathbf{S}_Z$ side (the column $m_{XZ}$ of Table~\ref{tab:processor_graph_surgery})~\cite{cross2025improvedqldpcsurgerylogical}. Since every bridged gadget is built from the seed gadgets and rewiring acts on each side independently, the two seeds $\mathbf{S}_X$ and $\mathbf{S}_Z$ generate all five gadget types.

In summary, for mitten codes five reusable gadgets $\{\mathbf{S}_{X},\mathbf{S}_{Y},\mathbf{S}_{Z},\mathbf{S}_{XX},\mathbf{S}_{ZZ}\}$, generated from the two seeds $\mathbf{S}_X$ and $\mathbf{S}_Z$, support all of $\bquis$ on all $k=|G|$ logical qubits of a block as the single-logical gadgets measure $\bar{X}_g$, $\bar{Y}_g$, $\bar{Z}_g$ for every $g\in G$ by rewiring, and the bridged gadgets measure $\bar{X}_g\bar{X}_h$ and $\bar{Z}_g\bar{Z}_h$ for every $g,h\in G$ by rewiring each side independently. Table~\ref{tab:processor_graph_surgery} reports these five gadgets for the eight mitten codes of this work, from distance $10$ to distance $24$: every gadget is distance preserving (certified as described in Appendix~\ref{subsec:distance_preserving_surgery}) while keeping the largest merged check weight at most $12$, compared to the check weight $9$ of the bare codes. As a consistency check, every single-product gadget in the table satisfies $m_X+m_Z\,(+\,m_{XZ})=n_{\mathrm{anc}}+1$, reflecting a connected gadget graph carrying a full complement of cycle checks: each gadget adds no logical qubits and measures exactly one logical Pauli product.
\begin{remark}[Generalization of graph surgery gadgets to LP codes of general shape]
\label{rem:graph_surgery_general_shape}
Nothing above is specific to the $1\times 2$ base matrices. For any LP code satisfying the square invertibility condition (Definition~\ref{def:square_inv_condition}), the canonical basis $\{X^g_{\alpha,\beta},Z^g_{\alpha,\beta}\}$ of Theorem~\ref{thm:canonical-logicals} organizes into $f_Af_B$ conjugate pairs of orbits, and for each fixed $(\alpha,\beta)$ the $X$ (resp.\ $Z$) representatives form a single orbit under the same left (resp.\ right) translations, with supports confined to the row (resp.\ column) blocks of the first qubit block. The skeleton construction, the merged-code block form of Equation~\eqref{eq:merged_HxHz}, the rewiring argument, and the bridging construction carry over verbatim, with $X^g_{\alpha,\beta}$ and $\mathbf{Z}(X^g_{\alpha,\beta})$ in place of $\bar{X}_g$ and $\mathbf{Z}(\bar{X}_g)$. Consequently, $3f_Af_B+2f_A^2f_B^2$ reusable gadgets: the seeds $\mathbf{S}(X_{\alpha,\beta})$ and $\mathbf{S}(Z_{\alpha,\beta})$ for each orbit, together with their bridgings, suffice to support $\bquis$, and further reductions are possible when additional symmetries relate different rows or columns of the base matrices.
\end{remark}

\begin{table}[t]\centering
\footnotesize
\setlength{\tabcolsep}{2.6pt}
\begin{tabular}{l cccc cccc ccccc cccc cccc}
\toprule
Processor code
& \multicolumn{4}{c}{$\mathbf{S}_{X}$} & \multicolumn{4}{c}{$\mathbf{S}_{Z}$}
& \multicolumn{5}{c}{$\mathbf{S}_{Y}$}
& \multicolumn{4}{c}{$\mathbf{S}_{XX}$} & \multicolumn{4}{c}{$\mathbf{S}_{ZZ}$} \\
\cmidrule(lr){2-5}\cmidrule(lr){6-9}\cmidrule(lr){10-14}\cmidrule(lr){15-18}\cmidrule(lr){19-22}
$\llbracket n,k,d \rrbracket $
 & $n_{\mathrm{anc}}$ & $m_X$ & $m_Z$ & $w$
 & $n_{\mathrm{anc}}$ & $m_X$ & $m_Z$ & $w$
 & $n_{\mathrm{anc}}$ & $m_X$ & $m_Z$ & $m_{XZ}$ & $w$
 & $n_{\mathrm{anc}}$ & $m_X$ & $m_Z$ & $w$
 & $n_{\mathrm{anc}}$ & $m_X$ & $m_Z$ & $w$ \\
\midrule
$\llbracket 150,30,10 \rrbracket $      & 34  & 18  & 17  & 10 & 17  & 8   & 10  & 10 & 61  & 26  & 17  & 19 & 10 & 78  & 36  & 43  & 11 & 45  & 26  & 20  & 11 \\
$\llbracket 200,40,12 \rrbracket $      & 38  & 20  & 19  & 10 & 33  & 16  & 18  & 10 & 83  & 35  & 24  & 25 & 10 & 88  & 40  & 49  & 11 & 78  & 43  & 36  & 12 \\
$\llbracket 300,60,14 \rrbracket $      & 43  & 22  & 22  & 10 & 43  & 22  & 22  & 10 & 100 & 43  & 29  & 29 & 10 & 100 & 44  & 57  & 12 & 100 & 57  & 44  & 11 \\
$\llbracket 500,100,16 \rrbracket $     & 57  & 28  & 30  & 10 & 46  & 23  & 24  & 10 & 119 & 50  & 37  & 33 & 10 & 156 & 56  & 101 & 12 & 134 & 87  & 48  & 12 \\
$\llbracket 540,108,18 \rrbracket $     & 42  & 22  & 21  & 10 & 57  & 30  & 28  & 10 & 117 & 51  & 30  & 37 & 10 & 123 & 44  & 80  & 10 & 156 & 101 & 56  & 12 \\
$\llbracket 630,126,\leq20 \rrbracket $ & 57  & 28  & 30  & 10 & 97  & 54  & 44  & 10 & 174 & 81  & 53  & 41 & 10 & 134 & 56  & 79  & 10 & 255 & 168 & 88  & 12 \\
$\llbracket 780,156,\leq22 \rrbracket $ & 204 & 74  & 131 & 11 & 239 & 156 & 84  & 10 & 466 & 230 & 192 & 45 & 11 & 435 & 148 & 288 & 12 & 504 & 337 & 168 & 12 \\
$\llbracket 975,195,\leq24 \rrbracket $ & 296 & 102 & 195 & 10 & 261 & 170 & 92  & 11 & 581 & 271 & 262 & 49 & 12 & 622 & 204 & 419 & 12 & 549 & 366 & 184 & 12 \\
\bottomrule
\end{tabular}
\caption{
Distance-preserving canonical logical surgery gadgets for mitten  codes. $\mathbf{S}_{X}$, $\mathbf{S}_{Y}$, and $\mathbf{S}_{Z}$ are the single-qubit logical Pauli measurement gadgets, and $\mathbf{S}_{XX},\mathbf{S}_{ZZ}$ the weight-two logical Pauli products. For each gadget, $n_{\mathrm{anc}}$ is the number of ancilla qubits, $m_{X}$ and $m_{Z}$ are the numbers of new $X$- and $Z$-type checks added to the code (the vertex and cycle checks of the surgery graph), and $w$ is the final (maximal) check weight of the merged code. For $\mathbf{S}_{Y}$, $m_{XZ}$ counts the additional mixed-type checks, supported on both $X$ and $Z$, which fuse the $\mathbf{S}_{X}$ and $\mathbf{S}_{Z}$ graphs.
}
\label{tab:processor_graph_surgery}
\end{table}

\subsection{Parallel surgery}
\label{subsec:parallel_surgery}

As illustrated in Figure~\ref{fig:surgery_gadgets}(c), we also construct parallel surgery\footnote{Also known as high-rate surgery.} gadgets, following the framework of~\cite{zheng2025highratesurgeryconstantoverheadlogical}, to support $\hquis$. Surgery can be viewed as a homological measurement~\cite{Ide_2025}: the operators read out by a gadget are the products of its vertex checks that act trivially on the ancilla qubits, i.e. products over vertex subsets having even overlap with every ancilla (hyper)edge. For a connected gadget graph this space is one-dimensional since the only nontrivial such subset is the full vertex set, which is why each graph surgery gadget of Appendix~\ref{subsec:graph_surgery} measures exactly one logical Pauli product. Allowing the gadget to be a hypergraph removes this restriction: the space of vertex subsets with even overlap with every hyperedge can have any dimension $n_p\ge 1$, so a single merged code can measure $n_p$ independent logical Pauli products at once, achieving a constant information-extraction rate in the sense of~\cite[Definition~2]{zheng2025highratesurgeryconstantoverheadlogical}. On the mitten codes we use such gadgets to measure up to $k/2$ disjoint weight-two products of the form $\bar{X}_g\bar{X}_h$, covering all $k=|G|$ logical qubits of a block in a single parallel measurement.

Concretely, given $n_p$ independent logical Pauli products -- where for the mitten codes we consider products of the canonical representatives $\bar{X}_g$, whose supports all lie within $D_1\cup D_2$ so that every deformed $Z$-check lies in $Z_0$ -- we form the union of their supports and build the skeleton exactly as in graph surgery. In the skeleton, a new $X$-check is added for each qubit in the union and attached to the data qubits transversally. An ancilla qubit is added for each $Z$-check acting on the union, deforming that check to include the ancilla. The connectivity among the added checks and ancilla qubits is inherited from the Tanner graph of the code. The only difference from graph surgery is that we do not decompose the resulting hyperedges in the skeleton into edges. Instead, the skeleton remains a hypergraph. For each target product, the new $X$ checks over its support multiply to yield that product -- ancilla contributions cancel because each target lies in $\ker H_Z$.  Therefore, all $n_p$ operators are measured on the same merged code. Starting from this skeleton, we perform adaptive thickening as in~\cite{zheng2025highratesurgeryconstantoverheadlogical}, which serves two purposes at once. First, it raises the expansion of the gadget towards the distance-preserving condition of Appendix~\ref{subsec:distance_preserving_surgery}. Second, it gauges out the spurious logical degrees of freedom that a hypergraph ancilla system can otherwise introduce. The final gadgets, whose parameters are reported in Table~\ref{tab:processor_c300_high_rate_surgery}, add exactly $n_p$ more new checks than ancilla qubits, so each merged code encodes exactly $k-n_p$ logical qubits and the $n_p$ measured products are promoted to stabilizers, with no spurious logical operators remaining.

The check matrices of these merged codes are again given by the block form of Equation~\eqref{eq:merged_HxHz}. For the skeleton, $\Gamma$ has one hyperedge per $Z$-check acting on the union of the measured supports, with $\Gamma_{s,v}=1$ iff the data qubit attached to $v$ lies in the support of the $Z$-check $s$, and $J$ deforms each of these checks onto its own ancilla qubit, so the constraint $\Pi H_Z^T=\Gamma^TJ^T$ holds by construction. The indicator of the vertex subset over the support of each target product lies in $\ker\Gamma$, which is exactly what makes the $n_p$ products simultaneously measurable. Adaptive thickening then replaces the gadget-internal blocks $\Gamma^T$ and $\Lambda$ by the internal $X$- and $Z$-check matrices of the thickened ancilla system while keeping the interface blocks $\Pi$ and $J$ on its first layer, and proceeds until no spurious kernel element survives and the merged codes have distance at most two less compared to the original code distance.

As a concrete example, we construct parallel surgery gadgets on the $\llbracket 300,60,14 \rrbracket$ code that jointly measure $n_p\in\{10,20,30\}$ randomly chosen, independent weight-two products of canonical Pauli-$X$ logical operators. Table~\ref{tab:processor_c300_high_rate_surgery} presents the gadget overheads together with the certified distances of the merged codes.
\begin{remark}[Generalization of parallel surgery gadget to LP codes of general shape]
\label{rem:parallel_surgery_general_shape}
For any LP code satisfying the square invertibility condition, the same construction measures an arbitrary collection of independent same-type products of the canonical operators $X^g_{\alpha,\beta}$ (or, symmetrically, $Z^g_{\alpha,\beta}$). The canonical $X$ basis occupies only the row blocks $\alpha\le f_A$ of the first block of qubits, i.e.\ at most $f_Ac_2$ of the $c_1c_2+r_1r_2$ ring blocks, and the $Z$-checks acting on them lie in the corresponding $f_Ar_2$ ring blocks of checks; the union of the measured supports, and hence the parallel gadget, attaches only there. The skeleton, the adaptive thickening, and the \textsf{sQetch}-based certification carry over unchanged.
\end{remark}

\begin{table}[t]
\centering
\begin{tabular}{@{}c l r r r c c c@{}}
\toprule
$n_p$ & merged $\llbracket n,k,d \rrbracket $ & $+\text{anc}$ & $+X_{\text{chk}}$ & $+Z_{\text{chk}}$ & $d_x$ & $d_z$ & $w$ \\
\midrule
$10$ & $\llbracket \,901,\,50,\,\leq13 \rrbracket $ & $601$  & $318$ & $293$ & $13$ & $14$ & $12$ \\
$20$ & $\llbracket 1181,\,40,\,\leq14 \rrbracket $  & $881$  & $490$ & $411$ & $14$ & $14$ & $12$ \\
$30$ & $\llbracket 1574,\,30,\,\leq13 \rrbracket $  & $1274$ & $718$ & $586$ & $13$ & $17$ & $12$ \\
\bottomrule
\end{tabular}
\caption{High-rate surgery gadgets on the $\llbracket 300,60,14 \rrbracket $ code, each jointly measuring $n_p$ weight-two $X$-logical products. Each gadget adds the listed number of ancilla qubits, $X$-checks, and $Z$-checks. Distances are certified by two independent estimators on the saved merged code: \textsf{sQetch} ($5\times10^{6}$ trials, with $k_{\mathrm{sub}} = 512$ for $n_p\in\{10,20\}$ and $k_{\mathrm{sub}} = 384$ for $n_p=30$, detailed in Appendix~\ref{app:sqetch}) and the BP+OSD decoder based distance estimator~\cite{PhysRevResearch.2.043423} ($5\times10^{4}$ trials). $w$ is the maximum merged check weight.}
\label{tab:processor_c300_high_rate_surgery}
\end{table}

\subsection{Extractor}
\label{subsec:extractor}

An extractor~\cite{he2025extractorsqldpcarchitecturesefficient} as illustrated in Figure~\ref{fig:surgery_gadgets}(d) is a fixed ancilla system, wired to the code block once, in which individual connections can be activated or deactivated in software. Activating the connections that address a chosen logical Pauli product turns the extractor into a surgery gadget measuring exactly that product. A single fixed gadget thereby measures arbitrary logical Pauli products, realizing instruction (i) of $\fquis$ without per-operator hardware reconfiguration. Generally, extractors are costly in space, but it was shown for hypergraph product codes that the structure of the logical basis can suppress the extractor overhead~\cite{blue2026extractorslogicalprocessinghypergraph}. We follow the same principle where our extractors are organized around the canonical logical basis of Appendix~\ref{app:canonical-basis}, which is derived from the codewords of the two classical factor codes $A$ and $B$ of $\LP(A,B)$ and is therefore supported on few, fixed blocks of data qubits.

We first construct a single-side extractor for each Pauli type ($X$ and $Z$) and then bridge the two into a full extractor. For a mitten code, every canonical $X$-basis representative is supported on the two data blocks $D_1\cup D_2$, and the only $Z$-checks acting on these blocks are the $|G|$ checks of the block $Z_0$ (Equation~\eqref{eq:mitten_HxHz}). The single-side extractors therefore attach to $2$ of the $5$ data blocks rather than to the whole code, which is what keeps their overhead low. The skeleton of the $X$-extractor is the union of the graph-surgery skeletons of all canonical $X$ operators: one vertex $X$-check for each of the $2|G|$ qubits of $D_1\cup D_2$, attached transversally, and one ancilla qubit for each of the $|G|$ checks of $Z_0$, deformed transversally, with the hyperedge connectivity between them inherited from the $\bigl(\Rrep{b_0}\;\Rrep{b_1}\bigr)$ sub-block of $H_Z$. Activating only the vertex checks on $\supp(\bar{X}_g)$, or on the support of any product of canonical $X$ representatives, together with the ancilla qubits of the $Z$-checks acting on it reproduces exactly the skeleton of the corresponding surgery gadget, so this single fixed system addresses every $X$-type logical Pauli product. As in graph surgery, the hypergraph interface is then reduced to a graph, with each high-degree (hyper)connection expanded into a loop of ordinary edges whose mod-$2$ boundary reproduces the original connectivity. We then iteratively add edges and cycle checks to establish the expansion required by Appendix~\ref{subsec:distance_preserving_surgery} while gauging out all logical degrees of freedom internal to the gadget. The $Z$-extractor is built symmetrically on $D_1\cup D_3$ and the check block $X_0$. Finally, the two single-side extractors are joined by a bridge patterned on the transpose of a repetition code whose distance matches that of the code, so that any operator crossing between the two sides has weight at least $d$ to enable the construction of a distance-preserving extractor. Later, further ancilla qubits and checks are added until the full extractor is itself expanding. The bridged full extractor makes mixed-type products (including $\bar{Y}_g$) measurable while preserving the distance and introducing no extra logical operators. Since one extractor must remain distance preserving for every one of its exponentially many activation patterns, its certification relies on the Cheeger-constant bounds of Appendix~\ref{subsec:distance_preserving_surgery} rather than on direct distance estimation. Table~\ref{tab:full_extractor_overhead} summarizes the resulting overheads for the mitten codes.

\begin{remark}[Generalization to extractors of LP codes of general shape]
\label{rem:extractor_general_shape}
For any LP code satisfying the square invertibility condition, the canonical $X$ basis occupies the row blocks $\alpha\le f_A$ of the first block of qubits, so the $X$-side extractor skeleton consists of $f_Ac_2|G|$ vertex checks attached transversally to those qubits and $f_Ar_2|G|$ ancilla qubits deforming the $Z$-checks acting on them; symmetrically, the $Z$-side extractor attaches to the $c_1f_B|G|$ qubits of the column blocks $\beta\le f_B$ hosting the canonical $Z$ basis, with $r_1f_B|G|$ ancilla qubits deforming the corresponding $X$-checks. The reduction to a graph, the expansion augmentation, the bridging of the two sides, and the Cheeger-based certification carry over unchanged.
\end{remark}

\begin{table}[t]
\centering
\begin{tabular}{l c c c c}
\toprule
Code $\llbracket n,k,d \rrbracket $ &  $+$anc. & $+X$-chk. & $+Z$-chk. & $w_{\mathrm{max}}$\\
\midrule
$\llbracket 150,30,10 \rrbracket $              & 390  & 186  & 186  & 13\\
$\llbracket 200,40,12 \rrbracket $              & 522  & 256  & 244  & 13 \\
$\llbracket 300,60,14 \rrbracket $ & 780  & 382  & 372  & 13 \\
$\llbracket 500,100,16 \rrbracket $         & 1370 & 691  & 649  & 13 \\
$\llbracket 540,108,18 \rrbracket $             & 1477 & 719  & 724  & 13 \\
$\llbracket 630,126,\leq20 \rrbracket $              & 1730 & 849  & 843  & 13  \\
$\llbracket 780,156,\leq22 \rrbracket $             & 2086 & 1046 & 998  &  14\\
$\llbracket 975,195,\leq24 \rrbracket $             & 2606 & 1278 & 1282 & 14 \\
\bottomrule
\end{tabular}
\caption{Overhead of the full extractors for mitten processor codes.
$+$anc.\ is the number of total added ancilla qubits in the full extractors, $+X/+Z$-chk.\ are the number of added
$X$-/$Z$-checks, and $w_{\mathrm{max}}$ is the maximum check weight after attaching the extractors to the original code blocks and measuring Pauli products of arbitrary type.}
\label{tab:full_extractor_overhead}
\end{table}

\subsection{Distance-preserving condition for the surgery and extractor gadgets}
\label{subsec:distance_preserving_surgery}

Throughout the construction of the surgery and extractor gadgets, we quantify fault tolerance by the distance of the merged code. Ideally the merged code preserves the distance of the original code, $d(\MergeCode)\ge d$. Achieving this while simultaneously minimizing the gadget overhead is the central difficulty as usually reducing the number of ancilla qubits conflicts with achieving the distance-preserving property. This subsection collects the sufficient condition we design towards, and the methods we use to certify it or to certify $d(\MergeCode)$ directly. Unlike the previous subsections, nothing here refers to the structure of the mitten codes; the conditions depend only on the attached gadget and apply to any CSS data code, hence to every base-matrix shape alike.

A sufficient condition for distance preservation is that the gadget, viewed through its Tanner graph, is a sufficiently good expander~\cite{Cohen_2022, Williamson_2026, Ide_2025, zheng2025highratesurgeryconstantoverheadlogical}. The mechanism is that the gadgets introduce no logical qubits of their own, so the distance can only drop if multiplying a logical operator of the original code by new stabilizers which are products of vertex checks over a vertex subset $T$ of the gadget lowers its weight. Such a product removes at most $|T|$ qubits from the data support while adding $|\partial_2 T|$ ancilla qubits, where $\partial_2 T$ is the mod-$2$ boundary of $T$ in the gadget (hyper)graph defined below. Boundaryless subsets only exchange the logical operator for an equivalent one of the original code. Hence, if $|\partial_2 T|\ge |T|$ for every $T$ of minimal weight modulo boundaryless subsets, no such multiplication can decrease the weight, and the merged code is distance preserving. The Cheeger constant defined below is exactly the largest constant $\rho$ such that $|\partial_2 T|\ge\rho\,\dist(T,\ker)$ for the boundary map of the gadget, so the sufficient condition reads $h_2\ge 1$. Since our gadgets are hypergraphs in general, we state the definition directly for hypergraphs.

\begin{definition}[Cheeger constant for hypergraphs]
\label{def:cheeger_hypergraph}
With the notation above and $H\neq 0$, the Cheeger constant of $X$ is
\[
  h_2(X)\;:=\;\min_{x\in\mathbb{F}_2^{V}\setminus\ker H}
  \frac{|Hx|}{\dist(x,\ker H)}\,,
  \qquad
  \dist(x,\ker H)\;:=\;\min_{y\in\ker H}\wt{x+y}\,.
\]
\end{definition}

\begin{remark}[Cheeger constant for Graphs]
\label{rem:graph_cheeger}
Let $G=(V,E)$ be a connected graph on $n$ vertices, not necessarily regular;
then every row of $H$ has weight $2$, so $|e\cap S|$ is odd iff $e$ has exactly
one endpoint in $S$ and $\partial_2 S=\partial S$ is the usual edge boundary. So the Cheeger constant is:
\[
  h_2(G)=\min_{0<|S|\le n/2}\frac{|\partial S|}{|S|}=h(G),
\]
\end{remark}

Indeed, under Definition~\ref{def:cheeger_hypergraph} the identity in Remark~\ref{rem:graph_cheeger} holds because, for a connected graph, $\ker H=\{0,\mathbf{1}_V\}$: for the indicator $x$ of $S\subseteq V$ we have $\dist(x,\ker H)=\min(|S|,\,n-|S|)$ and $\partial_2 S=\partial_2(V\setminus S)$, so every ratio is attained by a set of size at most $n/2$, recovering the standard graph Cheeger constant $h(G)$.

\begin{theorem}[Cheeger inequality, lower bound]
\label{thm:cheeger_bound}
Let $G=(V,E)$ be a connected graph on $n$ vertices with Laplacian, and let $0=\lambda_1<\lambda_2\le\cdots\le\lambda_n$ be its
eigenvalues. Then
\[
  \frac{\lambda_2}{2}\;\le\;h(G).
\]
\end{theorem}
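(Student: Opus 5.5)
We prove the easy direction of Cheeger's inequality with the standard variational argument. The two ingredients are the Rayleigh-quotient characterization of the Fiedler value and a test vector built from a minimizing cut.

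Let $L=D-\mathcal{A}$ be the (unnormalized) Laplacian, with $D$ the degree matrix and $\mathcal{A}$ the adjacency matrix. Its quadratic form is
\[
x^T L x=\sum_{\{u,v\}\in E}(x_u-x_v)^2 .
\]
Since $\mathbf{1}_V$ spans the kernel of $L$ for a connected graph, Courant--Fischer gives
\[
\lambda_2=\min_{x\perp \mathbf{1},\,x\neq 0}\frac{x^TLx}{x^Tx}.
\]
It therefore suffices to exhibit, for any set $S$ with $0<|S|\le n/2$, a vector $x\perp\mathbf{1}$ satisfying
\[
\frac{x^TLx}{x^Tx}\le 2\,\frac{|\partial S|}{|S|}.
\]
Minimizing over $S$ and using $h_2(G)=h(G)$ from Remark~\ref{rem:graph_cheeger} then gives $\lambda_2\le 2h(G)$.

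For the test vector, fix $S$ attaining $h(G)$ and write $s=|S|$, so $n-s\ge s$. Set
\[
x_v=\frac{1}{s}\ \text{for } v\in S,\qquad x_v=-\frac{1}{n-s}\ \text{for } v\notin S .
\]
This vector is orthogonal to $\mathbf{1}$. Only boundary edges contribute to the quadratic form, each contributing $\bigl(\tfrac1s+\tfrac1{n-s}\bigr)^2$, so
\[
x^TLx=|\partial S|\Bigl(\tfrac{1}{s}+\tfrac{1}{n-s}\Bigr)^2,\qquad x^Tx=\tfrac1s+\tfrac1{n-s}.
\]
Hence
\[
\frac{x^TLx}{x^Tx}=|\partial S|\,\frac{n}{s(n-s)}.
\]
Because $n-s\ge n/2$, we have $\frac{n}{n-s}\le 2$, and the ratio is at most $2|\partial S|/s=2h(G)$.

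The main point needing care is only the convention. The Laplacian must be the combinatorial one (not normalized), since the paper's $h(G)$ uses $|S|$ rather than volume. If a normalized Laplacian were intended, the same test-vector argument would go through with degree weights, but I would state the unnormalized version, matching Remark~\ref{rem:graph_cheeger}. Connectivity guarantees $\lambda_1=0$ is simple, so that the minimization over $\mathbf{1}^\perp$ indeed yields $\lambda_2$.
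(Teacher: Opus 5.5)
Your proof is correct: the vector $x=\tfrac1s\mathbf 1_S-\tfrac1{n-s}\mathbf 1_{V\setminus S}$ is orthogonal to $\mathbf 1$, and its Rayleigh quotient for $L=D-\mathcal A$ is $|\partial S|\,n/(s(n-s))\le 2|\partial S|/s$. Courant--Fischer then gives $\lambda_2\le 2h(G)$.

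The paper gives no argument of its own; it simply cites Chung's Theorem~2.2. The easy half of that theorem is proved by the same test-vector idea, but in a different normalization: the normalized Laplacian $D^{-1/2}LD^{-1/2}$ together with the volume-based constant $h_{\mathrm{vol}}(G)=\min_S|\partial S|/\min(\mathrm{vol}\,S,\mathrm{vol}\,\bar S)$.

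Your self-contained version has two advantages over the citation.
\begin{itemize}
\item It matches the paper's own $h(G)$ (vertex counts, Remark~\ref{rem:graph_cheeger}) directly. To get the statement from the citation one needs an extra step, $\min(\mathrm{vol}\,S,\mathrm{vol}\,\bar S)\ge\min(|S|,|\bar S|)$ (all degrees are at least one), which gives $h_{\mathrm{vol}}(G)\le h(G)$.
\item Your convention is the one that makes the bound usable for the paper's purpose. Certifying $h(G)\ge 1$ requires $\lambda_2\ge 2$. The normalized spectrum can never reach this for $n\ge 3$, since its eigenvalues sum to $n$ and hence $\lambda_2\le n/(n-1)$.
\end{itemize}

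Your appeal to $h_2(G)=h(G)$ is harmless but unnecessary, since $h(G)$ is already defined as the minimum over $0<|S|\le n/2$.
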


\begin{proof}
See \cite[Thm.~2.2]{chung1997spectral}.
\end{proof}

We certify the distance-preserving property with different methods for the different gadget families. For the graph surgery gadgets and the full extractors (Figure~\ref{fig:surgery_gadgets}(a), (b) and (d)) we can certify expansion directly: for the smaller seed gadgets (e.g.\ $\mathbf{S}_X$ and $\mathbf{S}_Z$ of the six smallest mitten codes in Table~\ref{tab:processor_graph_surgery}) we compute the Cheeger constant exactly, and for the larger gadgets we use the spectral lower bound of Theorem~\ref{thm:cheeger_bound} to certify the expansion properties of those surgery gadgets. For the parallel surgery gadgets (Figure~\ref{fig:surgery_gadgets}(c)) we instead certify the distance of the merged code directly with the fast distance estimator \textsf{sQetch} (Appendix~\ref{app:sqetch}) at every step of the construction, which is faster and therefore permits many more optimization rounds. For the extractors, direct estimation is not an option: a single extractor supports exponentially many patterns of activated and deactivated connections, each defining a different merged code, so estimating, let alone exactly computing, the distance of every merged code is intractable. Certifying the expansion of the fixed gadget once instead guarantees the distance-preserving property simultaneously for all activation patterns, which is why the extractor constructions are driven by exact or spectral bounds on the Cheeger constant.

\section{Parallel magic state injection}
\label{app:parallel_magic}

Using the group structure of the canonical logical basis (Appendix~\ref{app:canonical-basis}) and the product structure of the lifted product codes, we can inject $k=|G|$ logical $\ket{\bar{T}}$ states in parallel into a block of the $\llbracket n,k,d\rrbracket$ processor code $\LP(A,B)$, through a magic factory and an ancillary lifted product code that we call the \emph{magic port}. The resulting $\ket{\bar{T}}^{\otimes|G|}$ blocks can then be fed into a distillation factory based on transversal CNOTs~\cite{xu2025fast, cain2026shorsalgorithmpossible10000}. For each processor block, the magic factory consists of $|G|$ copies of a distance-$d_{\mathrm{rep}}$ unrotated surface code, each prepared in a noisy $\ket{\bar{T}}$ state, and the magic port is the ancillary code $\LP(\RepCode(d_{\mathrm{rep}}),B)$, where $\RepCode(d_{\mathrm{rep}})$ denotes the distance-$d_{\mathrm{rep}}$ repetition code viewed over $\F_2[G]$ by replacing each $1$ in its check matrix with the identity $e\in G$.

\begin{figure}[t]
    \centering
    \includegraphics[width=0.75\linewidth]{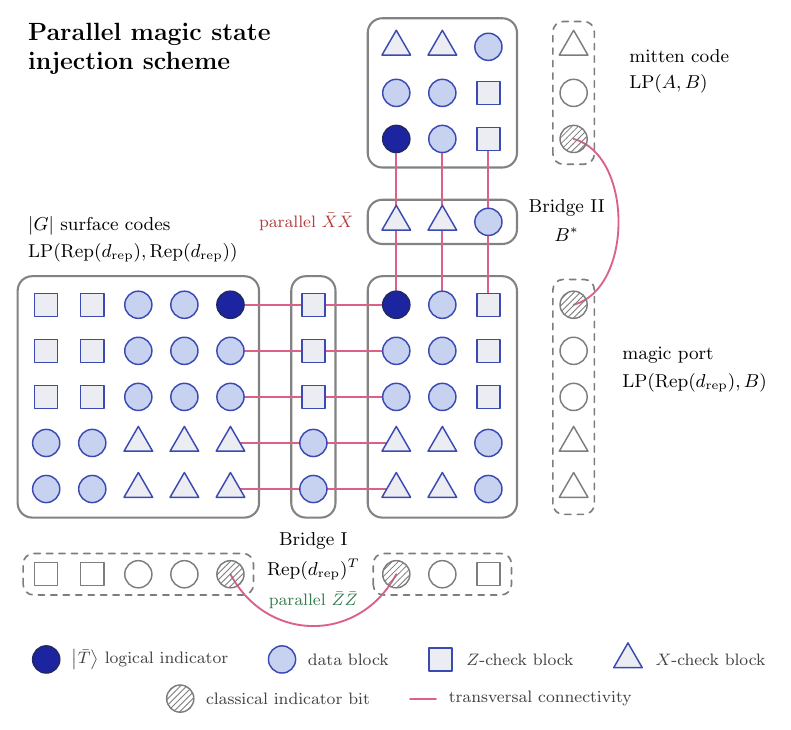}
    \caption{Parallel magic state injection scheme for mitten processor code $\LP(A,B)$, drawn for $d_{\mathrm{rep}}=3$. Every node represents one block of $|G|$ data qubits (circles), $X$-checks (triangles), or $Z$-checks (squares), and the dark blue nodes track the injected logicals: the noisy $\ket{\bar{T}}$ states are grown in the surface codes, moved through the magic port $\LP(\RepCode(d_{\mathrm{rep}}),B)$, and finally land in the mitten code. The $|G|$ distance-$d_{\mathrm{rep}}$ surface codes (left) together form the lifted product $\LP(\RepCode(d_{\mathrm{rep}}), \RepCode(d_{\mathrm{rep}}))$. Bridge~I, given by $|G|$ copies of the transposed repetition code $\RepCode(d_{\mathrm{rep}})^T$ with $d_{\mathrm{rep}}$ $Z$-checks and $d_{\mathrm{rep}}-1$ ancilla qubits per copy, attaches via a simple seam surgery gadget (pink lines, row by row) to the surface-code stack and to the $\RepCode(d_{\mathrm{rep}})$ component of the magic port $\LP(\RepCode(d_{\mathrm{rep}}),B)$, realizing the parallel $\bar{Z}\bar{Z}$ measurement of the first switching step. Bridge~II, the code $B^*$ over $\F_2[G]$ using the right-regular representation with two $X$-check blocks and one ancilla block, also attaches via a simple seam surgery gadget (pink lines, column by column) to the $B$ side of the magic port and to the $B$ side of the mitten code $\LP(A,B)$, realizing the parallel $\bar{X}\bar{X}$ measurement of the second step. These seam surgery gadgets are similar in spirit to how lattice surgery is performed between surface codes. The dashed strips depict the classical base codes underlying the three lifted product codes. In each strip, the hatched node marks the bit associated with the relevant logical operator, while the pink arcs show how the bridges merge these bits at the level of the classical base codes.
}
    \label{fig:parallel_magic_injection}
\end{figure}

As shown in Figure~\ref{fig:parallel_magic_injection}, the injection consists of two steps. First, a parallel $\bar{Z}\bar{Z}$ measurement, realized by a transposed repetition-code bridge (Bridge~I), switches between the surface-code stack and the magic port, moving all $|G|$ $\ket{\bar{T}}$ states into $\LP(\RepCode(d_{\mathrm{rep}}),B)$. Second, a parallel $\bar{X}\bar{X}$ measurement (Bridge~II) switches between the magic port and the processor mitten code, moving the $|G|$ states into the mitten processor code $\LP(A,B)$. Both switches can act on all $|G|$ logical qubits at once precisely because of the group structure of the canonical logical basis (Appendix~\ref{app:canonical-basis}). Bridge~I consists of $|G|$ copies of the transpose of the repetition code $\RepCode(d_{\mathrm{rep}})^T$ with $Z$-type checks, and it attaches to both sides in the same ``transversal'' way. On the surface-code side, each surface code is itself a product of two repetition codes. Pairing each of the $|G|$ surface codes with one copy of $\RepCode(d_{\mathrm{rep}})^T$, the bridge attaches to the repetition-code factor along the orientation hosting the $\bar{Z}$ logical of that surface code, so the total connectivity is $|G|$ copies of the identity map (Figure~\ref{fig:parallel_magic_injection}). On the magic-port side, the same connectivity is most naturally described at the ring level: viewing the $|G|$ copies of $\RepCode(d_{\mathrm{rep}})^T$ collectively as a single copy of $\RepCode(d_{\mathrm{rep}})^T$ over $\F_2[G]$, the bridge attaches to the $\RepCode(d_{\mathrm{rep}})$ factor of $\LP(\RepCode(d_{\mathrm{rep}}),B)$ by the identity map. This realizes the parallel $\bar{Z}\bar{Z}$ measurement. The second step follows the same idea where Bridge~II uses the code $B^*$ over $\F_2[G]$ with $X$-checks, it is attached along the $B$ factors of the magic port and of $\LP(A,B)$. This realizes the parallel $\bar{X}\bar{X}$ measurement. One can also view the two code-switching steps as instances of parallel surgery (Appendix~\ref{subsec:parallel_surgery}) in which, thanks to the matching product structures of the three codes, the surgery gadgets are simple and structured rather than obtained by randomized optimization as in Appendix~\ref{app:surgery}.

In the remainder of this section, we first prove in Theorem~\ref{thm:magic_port_dist} that the magic port code $\LP(\RepCode(d_{\mathrm{rep}}),B)$ has distance $\min(d_{\mathrm{rep}},d_B)$. We then prove that the two merged codes $\MergeCode_I$ and $\MergeCode_{II}$, obtained by attaching Bridge~I and Bridge~II respectively, preserve this distance, which establishes that the parallel magic state injection scheme of Figure~\ref{fig:parallel_magic_injection} is distance preserving end to end (Theorem~\ref{thm:parallel_magic_injection_dist}). Finally, we report the overhead of the scheme on the eight mitten codes for several values of $d_{\mathrm{rep}}$ in Table~\ref{tab:parallel_magic_complexity}.

\subsection{Distance of the magic port code}

We choose the magic factory to be $|G|$ copies of the distance-$d_{\mathrm{rep}}$ unrotated surface code, which together can be viewed a single lifted product code $\LP(\RepCode(d_{\mathrm{rep}}),\RepCode(d_{\mathrm{rep}}))$ over $\F_2[G]$ whose check matrices are those of one surface code with each $1$ replaced by the identity $e\in G$. Using the structure of the lifted product and of the repetition code, we now present the distance of the magic port code exactly in the following theorem:

\begin{theorem}
\label{thm:magic_port_dist}
    The distance of $\LP(\RepCode(d_{\mathrm{rep}}),B)$ is the minimum of $d_B=d(\Rrep{B})$ and $d_{\mathrm{rep}}$:
    \begin{equation}
        d(\LP(\RepCode(d_{\mathrm{rep}}),B)) = \min(d_{\mathrm{rep}},d_B).
    \end{equation}
\end{theorem}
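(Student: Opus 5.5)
The plan is to reduce the magic port to an ordinary hypergraph product and then apply the Tillich--Z\'emor distance formula for hypergraph product codes~\cite{Tillich_2014}.

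\emph{Step 1: reduction to a hypergraph product.} Every nonzero entry of $\RepCode(d_{\mathrm{rep}})$ over $\F_2[G]$ is the identity $e$, so $\Lrep{H_{\mathrm{rep}}\otimes I_{c_2}}$ is a Kronecker product of the binary matrix $H_{\mathrm{rep}}(d_{\mathrm{rep}})$ with an identity. In the same way, $\Rrep{I_{d_{\mathrm{rep}}}\otimes B^*}$ becomes $I_{d_{\mathrm{rep}}}\otimes \Rrep{B}^T$, using $\Rrep{B^*}=\Rrep{B}^T$. After a fixed reordering of qubits and checks, I expect the check matrices of Definition~\ref{def:LP-code} to become exactly
\[
H_X=\bigl[H_{\mathrm{rep}}\otimes I \,\big|\, I\otimes \Rrep{B}^T\bigr],\qquad H_Z=\bigl[I\otimes \Rrep{B}\,\big|\, H_{\mathrm{rep}}^T\otimes I\bigr].
\]
This is the hypergraph product $\mathrm{HGP}(H_{\mathrm{rep}},\Rrep{B})$ of two binary matrices. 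The bookkeeping is routine: track the Kronecker-index ordering through Proposition~\ref{prop:ring_bin_compat}(ii).

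\emph{Step 2: apply the HGP distance formula.} For a hypergraph product,
\[
d=\min\bigl(d(H_1),d(H_2),d(H_1^T),d(H_2^T)\bigr),
\]
with the convention that the distance of a trivial kernel is $\infty$. Here $d(H_{\mathrm{rep}})=d_{\mathrm{rep}}$ and $d(\Rrep{B})=d_B$. $H_{\mathrm{rep}}$ has full row rank, so $\ker H_{\mathrm{rep}}^T=0$. For $B=[b_0\ b_1]$ with $\Rrep{b_1}$ invertible, as required in Definition~\ref{def:mitten_processor_code} (and more generally under the square invertibility condition), $\Rrep{B}$ has full row rank, so $\ker\Rrep{B}^T=0$. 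Both transpose terms are therefore $\infty$, and $d=\min(d_{\mathrm{rep}},d_B)$.

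\emph{Step 3: make the argument self-contained.} In case a direct argument is preferred over citing the formula, I would prove the two inequalities explicitly.
\begin{itemize}
\item \emph{Upper bound.} Take $x\otimes y$ with $x$ a minimum-weight codeword of $H_{\mathrm{rep}}$ (the all-ones vector) and $y$ a standard-basis vector on a free coordinate of $\Rrep{B}$; this is a nontrivial $Z$-logical of weight $d_{\mathrm{rep}}$. Symmetrically, take a free-coordinate unit vector tensored with a minimum-weight codeword of $\Rrep{B}$; this is a nontrivial logical of weight $d_B$. Nontriviality in each case follows from the symplectic pairing of Lemma~\ref{lem:symplectic-pairing-logicals}.
\item \emph{Lower bound.} Use the standard projection argument. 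A nontrivial logical restricted to the left block, viewed as a $d_{\mathrm{rep}}\times(|G|c_2)$ array, must contain either a row lying in a nontrivial class of $\ker\Rrep{B}$ or a column lying in a nontrivial class of $\ker H_{\mathrm{rep}}$. Otherwise it can be cleaned by stabilizers to a trivial operator, using that both cokernels of the transposes vanish.
\end{itemize}

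I expect the lower-bound cleaning in Step 3 to be the main obstacle, together with carefully confirming in Step 1 that the ring-level reordering genuinely yields the HGP form. For this reason I would rely on the cited HGP theorem and check its hypotheses carefully, rather than reproving it.
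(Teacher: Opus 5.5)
Your proposal is correct, and it takes a different route from the paper.

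\emph{The paper's route.} It works directly with the $|G|$-blocks of $\LP(\RepCode(d_{\mathrm{rep}}),B)$. The upper bound comes from Theorem~\ref{thm:dc_ub_dq}. For the $X$-distance, it sums the $Z$-check equations of an $X$-logical; the right-block terms cancel in pairs, leaving the row-sum of the left block as a nonzero codeword of $\Rrep{B}$. For the $Z$-distance, it uses a chain of $Z$-stabilizer cleanings, driven by the invertibility of $\Rrep{b_1}$, to show that each of the $d_{\mathrm{rep}}$ rows of a nontrivial $Z$-logical is nonzero.

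\emph{Your route.} Every entry of $\RepCode(d_{\mathrm{rep}})$ lies in $\{0,e\}$, and $\Lrep{e}=I_{|G|}$, $\Rrep{B^*}=\Rrep{B}^T$. So the check matrices of Definition~\ref{def:LP-code} are literally $[H_{\mathrm{rep}}\otimes I_{c_2|G|}\,|\,I_{d_{\mathrm{rep}}-1}\otimes\Rrep{B}^T]$ and $[I_{d_{\mathrm{rep}}}\otimes\Rrep{B}\,|\,H_{\mathrm{rep}}^T\otimes I_{r_2|G|}]$. With the paper's Kronecker ordering (group index innermost), no reordering is even needed. The magic port is therefore just the binary hypergraph product of $H_{\mathrm{rep}}$ and $\Rrep{B}$, and the Tillich--Z\'emor formula applies.

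\emph{What each buys.} Your argument is shorter and more general. It needs only that $\Rrep{B}$ has full row rank, not the specific invertibility of $\Rrep{b_1}$. It makes transparent why the answer is $\min(d_{\mathrm{rep}},d_B)$: both transposed codes are trivial. It also replaces the paper's explicit cleaning chain with a standard theorem. The paper's version is self-contained and rehearses the row/column cleaning technique it reuses in the bridge lemmas.

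Two minor cautions. First, the equality form of the HGP formula with the $\infty$ convention assumes the relevant logical sectors are nonempty. That holds here, since $\dim\ker H_{\mathrm{rep}}=1$ and $\dim\ker\Rrep{B}=|G|(c_2-r_2)>0$. In any case your explicit logicals in Step~3 give the upper bound, so you only need the unconditional lower-bound half.

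Second, your Step~3 lower-bound sketch is misstated. Individual rows or columns of the left block need not be codewords. Also, what vanishes is $\ker H_{\mathrm{rep}}^T$ and $\ker\Rrep{B}^T$, not the cokernels of the transposes; indeed $\coker H_{\mathrm{rep}}^T\cong\F_2$ is exactly what carries the $X$-logicals. The correct invariants are the paper's:
\begin{itemize}
\item the row-sum of the left block of a nontrivial $X$-logical is a nonzero codeword of $\Rrep{B}$;
\item every row of the left block of a nontrivial $Z$-logical is nonzero, because for a suitable fixed $v\in\ker\Rrep{B}$ each row has odd overlap with $v$.
\end{itemize}
Since you rely on the cited theorem, this does not affect your main argument.
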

\begin{figure}[t]
        \centering
        \includegraphics[width=1\linewidth]{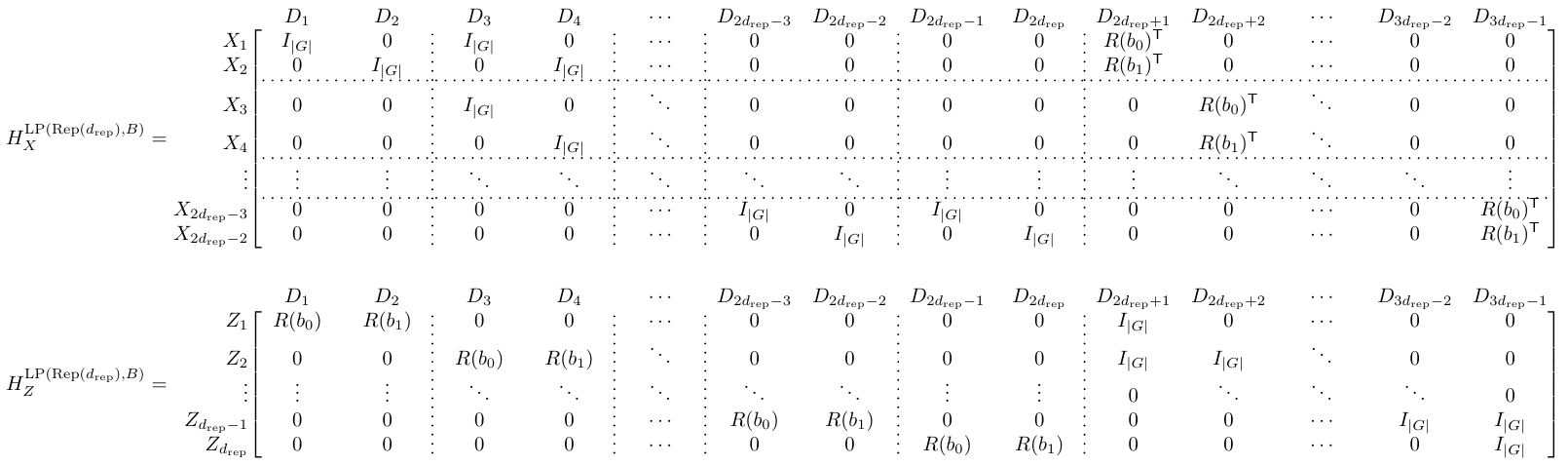}
        \caption{Block-labeled check matrices of $\LP(\RepCode(d_{\mathrm{rep}}),B)$}
        \label{fig:magic_port_check_matrices}
    \end{figure}
\begin{proof}
    By Theorem~\ref{thm:dc_ub_dq}, $d(\LP(\RepCode(d_{\mathrm{rep}}),B)) \leq \min(d_{\mathrm{rep}},d_B)$. We therefore focus on the other direction, i.e. proving $d(\LP(\RepCode(d_{\mathrm{rep}}),B)) \geq \min(d_{\mathrm{rep}},d_B)$.
    
    As presented in Figure~\ref{fig:magic_port_check_matrices}, consider the check matrices of $\LP(\RepCode(d_{\mathrm{rep}}),B)$:
    \begin{equation}
    H_X^{\LP(\RepCode(d_{\mathrm{rep}}),B)}\in\F_2^{2(d_{\mathrm{rep}}-1)|G|\times (3d_{\mathrm{rep}}-1)|G|},\qquad H_Z^{\LP(\RepCode(d_{\mathrm{rep}}),B)}\in\F_2^{d_{\mathrm{rep}}|G|\times (3d_{\mathrm{rep}}-1)|G|}.
    \end{equation}
    
    \noindent We label the size-$|G|$ blocks of data qubits, $X$-checks, and $Z$-checks as in Figure~\ref{fig:magic_port_check_matrices}. Notice that if any logical operator is supported only on $\{D_1,\ldots,D_{2d_{\mathrm{rep}}}\}$ or $\{D_{2d_{\mathrm{rep}}+1},\ldots,D_{3d_{\mathrm{rep}}-1}\}$, its weight is naturally lower-bounded by $\min(d_{\mathrm{rep}},d_B)$. Hence we only need to consider the logical operators having non-trivial support on both $\{D_1,\ldots,D_{2d_{\mathrm{rep}}}\}$ and $\{D_{2d_{\mathrm{rep}}+1},\ldots,D_{3d_{\mathrm{rep}}-1}\}$.

    We first consider a logical $X$ operator of the form $(x_1,\ldots,x_{3d_{\mathrm{rep}}-1})^T$, where $x_i\in\F_2^{|G|}$, which has non-trivial support on both $\{D_1,\ldots,D_{2d_{\mathrm{rep}}}\}$ and $\{D_{2d_{\mathrm{rep}}+1},\ldots,D_{3d_{\mathrm{rep}}-1}\}$. We have:
    \begin{equation}
        \left\{ \begin{aligned} 
            \Rrep{b_0}x_1^T + \Rrep{b_1}x_2^T + x_{2d_{\mathrm{rep}}+1}^T = 0\\ 
            \Rrep{b_0}x_3^T + \Rrep{b_1}x_4^T + x_{2d_{\mathrm{rep}}+1}^T + x_{2d_{\mathrm{rep}}+2}^T= 0\\
            \ldots
            \\
            \Rrep{b_0}x_{2d_{\mathrm{rep}}-1}^T + \Rrep{b_1}x_{2d_{\mathrm{rep}}}^T + x_{3d_{\mathrm{rep}}-1}^T = 0
        \end{aligned} \right.
    \end{equation}
    Therefore,
    \begin{equation}
        \Rrep{b_0}\left(\sum_{j = 0}^{d_{\mathrm{rep}}-1} x_{2j+1}^T\right) + \Rrep{b_1}\left(\sum_{j = 1}^{d_{\mathrm{rep}}} x_{2j}^T\right) = 0.
    \end{equation}
    Notice that in order to avoid stabilizer of the following form
    \begin{equation}
        \left((x_1,x_2,\ldots,x_{2d_{\mathrm{rep}}-2})H_X^{\LP(\RepCode(d_{\mathrm{rep}}),B)}\right)^T,
    \end{equation}
    one need to avoid both $\left(\sum_{j = 0}^{d_{\mathrm{rep}}-1} x_{2j+1}^T\right)$ and $\left(\sum_{j = 1}^{d_{\mathrm{rep}}} x_{2j}^T\right)$ being zero, which also means the vector:
    \begin{equation}
        \left(\sum_{j = 0}^{d_{\mathrm{rep}}-1} x_{2j+1},\sum_{j = 1}^{d_{\mathrm{rep}}} x_{2j}\right)^T
    \end{equation}
    is a non-trivial codeword of $\Rrep{B}$. Thus,
    \begin{equation}
        \begin{split}
            \wt{(x_1,\ldots,x_{3d_{\mathrm{rep}}-1})^T}&\geq \operatorname{wt}\left(\left(\sum_{j = 0}^{d_{\mathrm{rep}}-1} x_{2j+1},\sum_{j = 1}^{d_{\mathrm{rep}}} x_{2j}\right)^T\right)\\
            &\geq d(\Rrep{B}) = d_B.
        \end{split}
    \end{equation}
    Therefore,
    \begin{equation}
    \label{eq:dX_ge_dB}
        d_X(\LP(\RepCode(d_{\mathrm{rep}}),B))\geq d_B.
    \end{equation}

    On the other hand, we consider the logical-$Z$ operator $(z_1,\ldots,z_{3d_{\mathrm{rep}}-1})^T$, where $z_i\in\F_2^{|G|}$ and having non-trivial support on both $\{D_1,\ldots,D_{2d_{\mathrm{rep}}}\}$ and $\{D_{2d_{\mathrm{rep}}+1},\ldots,D_{3d_{\mathrm{rep}}-1}\}$. In the following, we prove that all pairs $(z_j,z_{j+1})$ are nonzero, for $j\in[d_{\mathrm{rep}}]$. 
    
    Suppose in contradiction that, without loss of generality, $(z_1,z_2)$ is the zero vector. Since for a mitten code $B$ satisfies the square invertibility condition (Definition~\ref{def:square_inv_condition}), $\Rrep{b_1}$ is full rank, so $z_{2d_{\mathrm{rep}}+1}$ and $z_3$ are determined by
    \begin{equation}
        z_3 = \Rrep{b_0}^T\left(\Rrep{b_1}^T\right)^{-1}z_4^T,\qquad z_{2d_{\mathrm{rep}}+1} = \left(\Rrep{b_1}^T\right)^{-1}z_4^T.
    \end{equation}
    
    Consider the $Z$-stabilizer
    \begin{equation} \left((0,z_4\left(\Rrep{b_1}\right)^{-1},0,\ldots,0)H_Z^{\LP(\RepCode(d_{\mathrm{rep}}),B)}\right)^T
    \end{equation}
    and adding it to $(z_1,\ldots,z_{3d_{\mathrm{rep}}-1})^T$ makes both $(z_1,z_2)$ and $(z_3,z_4)$ zero. Repeating this triggers a chain of stabilizer additions that renders $(z_1,\ldots,z_{3d_{\mathrm{rep}}-1})^T$ homologically equivalent to a logical operator with trivial support on $\{D_1,\ldots,D_{2d_{\mathrm{rep}}}\}$. However, since $\Rrep{b_1}$ is full-rank, which means $\left(\Rrep{b_1}\right)^T$ has trivial kernel, then the logical operator must be the trivial all-zero vector. Thus all pairs $(z_j,z_{j+1})$ are nonzero for $j\in[d_{\mathrm{rep}}]$, and
    \begin{equation}
    \wt{(z_1,\ldots,z_{3d_{\mathrm{rep}}-1})^T}\geq d_{\mathrm{rep}}.
    \end{equation}
    Therefore,
    \begin{equation}
        \label{eq:dZ_ge_drep}
        d_Z(\LP(\RepCode(d_{\mathrm{rep}}),B))\geq d_{\mathrm{rep}}
    \end{equation}
    
    Combining Eq.~\eqref{eq:dX_ge_dB} and Eq.~\eqref{eq:dZ_ge_drep}, we have
    \begin{equation}
    \begin{split}
        &d(\LP(\RepCode(d_{\mathrm{rep}}),B))\\
        &= \min\bigl(d_X(\LP(\RepCode(d_{\mathrm{rep}}),B)),\, d_Z(\LP(\RepCode(d_{\mathrm{rep}}),B))\bigr)\\
        &\geq \min(d_{\mathrm{rep}},d_B),
        \end{split}
    \end{equation}
    which proves the claim.
\end{proof}

\begin{remark}
    Theorem~\ref{thm:magic_port_dist} and its proof also hold for other shapes of $B$, as long as the square invertibility condition (Definition~\ref{def:square_inv_condition}) is satisfied.
\end{remark}

\subsection{Distance-preserving property of the two bridges}
As illustrated in Figure~\ref{fig:parallel_magic_injection}, both merged codes $\MergeCode_I$ and $\MergeCode_{II}$ are lifted product codes. In the first step, the pre-merge system is the disjoint union
$\LP(\RepCode(d_{\mathrm{rep}}),\RepCode(d_{\mathrm{rep}}))\oplus\LP(\RepCode(d_{\mathrm{rep}}),B)
=\LP\bigl(\RepCode(d_{\mathrm{rep}}),\,\RepCode(d_{\mathrm{rep}})\oplus B\bigr)$,
and Bridge~I amounts to appending one row to the second factor: the first merged code is
\[
\MergeCode_I \;=\; \LP\left(\RepCode(d_{\mathrm{rep}}),\;
\begin{bmatrix}\RepCode(d_{\mathrm{rep}})\oplus B\\ \eta_I\end{bmatrix}\right),
\]
where $\eta_I\in\R^{1\times(d_{\mathrm{rep}}+2)}$ is the weight-two row with identity entries on the two classical information columns (the free column of $\RepCode(d_{\mathrm{rep}})$ and the free column of $B$), marked by the pink arc in Figure~\ref{fig:parallel_magic_injection}. This exactly corresponds to Bridge~I with $d_{\mathrm{rep}}$ new blocks of $Z$-checks and $d_{\mathrm{rep}}-1$ new blocks of ancilla qubits, i.e.\ $|G|$ copies of $\RepCode(d_{\mathrm{rep}})^T$ with $Z$-type checks. Likewise, in the second step the pre-merge system is $\LP\bigl(\RepCode(d_{\mathrm{rep}})\oplus A,\,B\bigr)=\LP(\RepCode(d_{\mathrm{rep}}),B)\oplus\LP(A,B)$, and Bridge~II appends one row $\eta_{II}$ to the first factor, coupling the free column of $\RepCode(d_{\mathrm{rep}})$ to the free column of $A$:
\[
\MergeCode_{II} \;=\; \LP\left(
\begin{bmatrix}\RepCode(d_{\mathrm{rep}})\oplus A\\ \eta_{II}\end{bmatrix},\;B\right),
\]
contributing $2$ new blocks of $X$-checks and $1$ new block of ancilla qubits, i.e.\ one copy of $B^*$ over $\F_2[G]$ with $X$-type checks. Bridge~I introduces only $Z$-type checks, and the $Z$-checks of each component still act on that component alone. Therefore the restriction of a logical $X$ operator of $\MergeCode_I$ to either component is a stabilizer as before they are merged or a component logical of the original blocks. If either restriction is a nontrivial component logical, the weight is at least $\min\bigl(d_X(\LP(\RepCode(d_{\mathrm{rep}}),\RepCode(d_{\mathrm{rep}}))),\,d_X(\LP(\RepCode(d_{\mathrm{rep}}),B))\bigr)=\min(d_{\mathrm{rep}},d_B)$, where $d_X(\LP(\RepCode(d_{\mathrm{rep}}),B))=d_B$ follows from Theorem~\ref{thm:magic_port_dist} together with the upper bound of Theorem~\ref{thm:dc_ub_dq}. Otherwise, multiplying by the merged-code extensions of the two component stabilizers leaves an operator supported on the bridge ancillas alone, which vanishes because the new $Z$-checks enforce the constraints of $\RepCode(d_{\mathrm{rep}})^T$, whose kernel is trivial. Hence $d_X(\MergeCode_I)\geq\min(d_{\mathrm{rep}},d_B)$. By the same argument with the two check types exchanged, using $d_Z(\LP(\RepCode(d_{\mathrm{rep}}),B))=d_{\mathrm{rep}}$ and the fact that $B^*$ has trivial kernel, $d_Z(\MergeCode_{II})\geq\min\bigl(d_{\mathrm{rep}},\,d_Z(\LP(A,B))\bigr)\geq\min\bigl(d_{\mathrm{rep}},\,d(\LP(A,B))\bigr)$, the last step holding by the definition $d=\min(d_X,d_Z)$.

Therefore, what is left to show the distance-preserving property is to analyze the $Z$ distance for $\MergeCode_I$ and $X$ distance for $\MergeCode_{II}$. We start by introducing the following lemmas:

\begin{lemma}[Cleaning logical support]
\label{lem:cleaning_merge_logical_support}
    Consider bridging the two $\LP$ codes $\LP(\RepCode(d_{\mathrm{rep}}),B)$, with $B$ satisfying the square invertibility condition (Definition~\ref{def:square_inv_condition}), and $\LP(A,B)$ by the bridge code $B^*$ with $X$-type checks, attached along the $B$-factor side of both codes as described above. Then every logical $X$ class of the merged code $\MergeCode_{II}$ contains a representative $\bar{X}^{\star}$ supported only on $\LP(A,B)$.
\end{lemma}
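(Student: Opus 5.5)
The approach I would take is to work at the level of the lifted-product description $\MergeCode_{II}=\LP\bigl(\bigl[\begin{smallmatrix}\RepCode(d_{\mathrm{rep}})\oplus A\\ \eta_{II}\end{smallmatrix}\bigr],B\bigr)$ rather than with explicit binary matrices. Write a logical $X$ operator of $\MergeCode_{II}$ as $x=(x_{\mathrm{port}},x_{\mathrm{anc}},x_{\mathrm{LP}})$, with its three pieces supported on the qubits of $\LP(\RepCode(d_{\mathrm{rep}}),B)$, on the single ancilla block of Bridge~II, and on $\LP(A,B)$. The condition $x\in\ker H_Z^{\MergeCode_{II}}$ splits into three parts: the $Z$-checks of the port, the $Z$-checks of $\LP(A,B)$, and the deformation of those checks onto the ancilla block, which in the chain-complex picture comes from the extra row $\eta_{II}$ acting through $\Rrep{\cdot}$ of $B$. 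I will then clean $x$ in three steps. At each step I multiply $x$ by $X$-stabilizers of $\MergeCode_{II}$, meaning original port checks, original $\LP(A,B)$ checks, or the two new blocks of Bridge~II $X$-checks. These multiplications do not change the logical class.

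\emph{Step 1: identify the port class.} Restricting the $Z$-constraints to the port qubits shows that $x_{\mathrm{port}}$ satisfies the port's $Z$-checks up to a term coming from $x_{\mathrm{anc}}$. That term only enters through the free column of $\RepCode(d_{\mathrm{rep}})$, the column that $\eta_{II}$ touches. I will use the Künneth collapse from the aside in Appendix~\ref{app:canonical-basis}, together with Lemma~\ref{lem:kernel-cokernel-normal-forms}. The collapse says the port homology is $H_1(C^{\RepCode})\otimes_\R H_0(\hat C^B)\cong\R$. The lemma says every cohomology class has a unique free-supported representative. Together they show that the port part of $x$ is determined, modulo port $X$-stabilizers, by a single ring element $r\in\R$, namely its coefficient on the canonical port operator $w_{\RepCode}\otimes v_B$ of Theorem~\ref{thm:canonical-logicals}.

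\emph{Step 2: transfer the class across the bridge.} The new $X$-checks of Bridge~II form one copy of $B^*$ over $\F_2[G]$. They couple the free column of $\RepCode(d_{\mathrm{rep}})$ to the free column of $A$ (equivalently, to $\LP(A,B)$). A product of these checks weighted by $r$, acting on the appropriate side as in Remark~\ref{rem:group_logic}, equals the port canonical operator labelled by $r$, times the matching canonical $\LP(A,B)$ operator, times an operator on the ancilla block. Multiplying $x$ by this product, and then by suitable port $X$-stabilizers, makes $x_{\mathrm{port}}=0$.

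\emph{Step 3 (main obstacle): clear the ancillas.} After Steps~1 and~2 the port support vanishes, and it remains to show $x_{\mathrm{anc}}=0$. The deformed port $Z$-checks act on the ancilla block only through $\Rrep{B}$-type blocks, and there are no port data-qubit contributions left. So the port $Z$-constraints collapse to an equation of the form $\Rrep{b_1}^T y+(\text{terms in }\Rrep{b_0}^T)=0$ on the ancilla vector $y$. Since $\Rrep{b_1}$ is full rank by the square invertibility condition, this should force $y=0$, in parallel with the chain of stabilizer additions in the $Z$-part of the proof of Theorem~\ref{thm:magic_port_dist}. The delicate part is bookkeeping: I must check that the deformation enters the port constraints through an invertible block and not only through $\Rrep{b_0}$. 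If it does not, I would instead use the pivot/free splitting of $B$ to first clean the pivot coordinates with the second new $X$-check block, and then apply invertibility of $\Rrep{B_{\mathrm{pivot}}}$. Once $y=0$, the representative $\bar X^\star$ is supported only on $\LP(A,B)$.
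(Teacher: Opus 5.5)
There is a genuine gap in Step~1, and it is the central difficulty of the lemma. The K\"unneth collapse and Lemma~\ref{lem:kernel-cokernel-normal-forms} classify port \emph{cocycles}, i.e.\ vectors in $\ker H_Z^{\mathrm{port}}$ modulo port $X$-stabilizers. In $\MergeCode_{II}$, the port $Z$-checks in the free-column block of $\RepCode(d_{\mathrm{rep}})$ are deformed onto the ancilla block by an identity block (the entries $e$ of $\eta_{II}^*$). Commuting with them therefore means that $H_Z^{\mathrm{port}}x_{\mathrm{port}}$ equals $x_{\mathrm{anc}}$ placed in that single check block. So $x_{\mathrm{port}}$ is a port cocycle if and only if $x_{\mathrm{anc}}=0$. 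Otherwise it has no port cohomology class, and it cannot be written as port $X$-stabilizers times a canonical operator labelled by some $r\in\R$. For example, take a single bridge $X$-check from the block acting on the ancillas through $\Rrep{b_1^*}$. It lies in $\ker H_Z$ of the merged code, yet its port part is one qubit in the $b_1$ column of the free row, and that qubit violates a port $Z$-check. So your Steps~1--2 only handle operators that already have no ancilla support.

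The missing idea is the first move of the paper's step~(iii). Before identifying any port class, multiply by checks from the bridge block that acts on the ancillas through the invertible matrix $\Rrep{b_1^*}$; this sets $x_{\mathrm{anc}}=0$. After that, $x_{\mathrm{port}}\in\ker H_Z^{\mathrm{port}}$ and your Steps~1--2 go through as written. Nothing afterwards reintroduces ancilla support. The original port $X$-checks are not deformed. The bridge-check product weighted by a $B$-codeword $gv_B$ acts on the ancilla block as $gv_BB^*=0$, so the ``operator on the ancilla block'' in your Step~2 is trivial.

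Your Step~3 is therefore not the real obstacle, and its description is off. The port $Z$-checks couple to the ancillas transversally, not through $\Rrep{B}$-type blocks; those blocks appear in the bridge $X$-checks. Once port support is gone, the ancilla support vanishes at once, which is the paper's step~(ii). Your fallback of cleaning with the second new $X$-check block is the right tool, but it must be applied at the very start. Applied after the port support has been removed, it would put port support back.
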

\begin{proof}

    The proof consists of three steps: (i) the magic port $\LP(\RepCode(d_{\mathrm{rep}}),B)$ has a canonical logical $X$ basis supported entirely on the data qubits to which the bridge attaches; (ii) a logical $X$ operator of $\MergeCode_{II}$ with no support on the port cannot have support on the bridge ancillas either; (iii) combining (i) and (ii), the port part of any logical $X$ operator can be moved through the bridge onto $\LP(A,B)$ by multiplying with $X$ stabilizers of $\MergeCode_{II}$.

    For step (i): since $B$ satisfies the square invertibility condition and $\RepCode(d_{\mathrm{rep}})$ also naturally satisfies it by definition, Theorem~\ref{thm:canonical-logicals} provides a canonical logical $X$ basis of $\LP(\RepCode(d_{\mathrm{rep}}),B)$ supported only on the first row of its data-qubit grid, exactly the qubits to which the $X$-checks of the bridge attach (Figure~\ref{fig:parallel_magic_injection}).
    For step (ii): suppose a logical $X$ operator $\bar{X}'$ of $\MergeCode_{II}$ has no support on the port but nontrivial support on the bridge ancilla block. The deformed $Z$-checks of the port act transversally on the bridge ancillas and otherwise only on port qubits, so commuting with them forces the ancilla support of $\bar{X}'$ to vanish. Thus, any logical $X$ operator without port support is supported on $\LP(A,B)$ alone.
    For step (iii): let $\bar{X}$ be any logical $X$ operator of $\MergeCode_{II}$. The bridge $X$-checks act on the ancilla block through $\Rrep{b_1^*}$, which is invertible (since $B$ satisfies the square-invertibility condition and we can canonically chose the $b_1$ part to be invertible as in the mitten code definition (Definition~\ref{def:mitten_processor_code})), so by first multiplying $\bar{X}$ with a suitable set of bridge $X$-checks we may assume it has no ancilla support. Its port part then commutes with all $Z$-checks of the port and is, up to port $X$-stabilizers, a product of canonical-basis logicals of step (i) (possibly empty). For each canonical factor, the product of the bridge $X$-checks over its support acts as that canonical logical on the port, as the corresponding canonical logical $\bar{X}_g$ of the mitten code on $\LP(A,B)$, and trivially on the ancilla block, which is precisely the stabilizer of $\MergeCode_{II}$ that implements the parallel $\bar{X}\bar{X}$ measurement. Multiplying $\bar{X}$ by these products for every canonical factor of its port part yields an equivalent representative $\bar{X}^{\star}$ with no port support, and by step (ii) $\bar{X}^{\star}$ has no ancilla support either, i.e.\ $\bar{X}^{\star}$ is supported only on $\LP(A,B)$.
\end{proof}

\begin{remark}
\label{rem:cleaning_merge_logical_support}
    Similarly, Lemma~\ref{lem:cleaning_merge_logical_support} applies, with the roles of the two check types exchanged, to bridging the two $\LP$ codes $\LP(\RepCode(d_{\mathrm{rep}}),\RepCode(d_{\mathrm{rep}}))$ and $\LP(\RepCode(d_{\mathrm{rep}}),B)$ by the bridge code $\RepCode(d_{\mathrm{rep}})^T$ with $Z$-type checks, attached along the $\RepCode(d_{\mathrm{rep}})$ factor side of both codes: every logical $Z$ class of the merged code $\MergeCode_{I}$ contains a representative supported only on $\LP(\RepCode(d_{\mathrm{rep}}),B)$.
\end{remark}

\begin{lemma}[Weight preservation of single-side supported logical operators]
\label{lem:weight_preservation_from_single_side}
    In the setting of Lemma~\ref{lem:cleaning_merge_logical_support}, let $\bar{X}^\star$ be a logical $X$ operator of the merged code $\MergeCode_{II}$ supported only on $\LP(A,B)$. Then multiplying $\bar{X}^\star$ by $X$ stabilizers of $\MergeCode_{II}$ cannot decrease its weight below $d(\LP(A,B))$.
\end{lemma}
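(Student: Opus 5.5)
Let $\bar{X}^\star$ be supported only on $\LP(A,B)$, and let $S$ be any $X$ stabilizer of $\MergeCode_{II}$. I would show $\wt{\bar{X}^\star S}\ge d(\LP(A,B))$ by projecting $\bar{X}^\star S$ onto the $\LP(A,B)$ component and arguing that this projection is still a nontrivial logical $X$ operator of $\LP(A,B)$. Write $S$ as a product $S=S_{\mathrm{orig}}\,S_{\mathrm{br}}$. Here $S_{\mathrm{orig}}$ is a product of original $X$-checks of the two components; Bridge~II adds only $X$-checks and deforms only $Z$-checks, so these are undeformed. The factor $S_{\mathrm{br}}$ is a product of new bridge $X$-checks indexed by a vector $y$. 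Since weight only drops under restriction, $\wt{\bar{X}^\star S}\ge\wt{(\bar{X}^\star S)|_{\LP(A,B)}}$. The right-hand side equals $\bar{X}^\star\cdot S^{A}_{\mathrm{orig}}\cdot S_{\mathrm{br}}|_{\LP(A,B)}$, where $S^{A}_{\mathrm{orig}}$ is an $X$ stabilizer of $\LP(A,B)$.

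The key step is to control $S_{\mathrm{br}}|_{\LP(A,B)}$. On $\LP(A,B)$ the bridge $X$-checks attach transversally to the qubits $D_1\cup D_2$ hosting the canonical $X$ basis. So this restriction is the vector $\b{\cdot}$ of $y$, placed on those qubits. On the port it is the analogous vector on the first row of the port grid. On the ancilla block it is $\Rrep{b_1^*}$ (together with $\Rrep{b_0^*}$) applied to $y$. I would then use commutation with the deformed $Z$-checks. The operator $\bar{X}^\star S$ commutes with all $Z$-checks of $\MergeCode_{II}$, and the $Z$-checks of $\LP(A,B)$ in block $Z_0$ are deformed onto the bridge ancillas. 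Together with $\bar{X}^\star$ and $S^{A}_{\mathrm{orig}}$ commuting with the undeformed versions, this forces $S_{\mathrm{br}}|_{\LP(A,B)}$ to lie in $\ker H_Z$ of $\LP(A,B)$. Concretely, its $D_1\cup D_2$ part lies in $\ker\Rrep{B}$, so it is a sum of canonical logicals $\bar{X}_g$ of Theorem~\ref{thm:mitten_canonical_basis} (Lemma~\ref{lem:kernel-cokernel-normal-forms}).

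The main obstacle is ruling out that $S_{\mathrm{br}}|_{\LP(A,B)}$ is a nontrivial logical $\sum_{g\in T}\bar{X}_g$ that cancels $\bar{X}^\star$. I would handle this with the symmetric fact on the port side. The same $y$ restricted to the port equals $\sum_{g\in T}$ of the corresponding canonical port logicals, by the matching product structure of $\eta_{II}$. The ancilla part is governed by the invertible $\Rrep{b_1^*}$. Thus $S$ carries a port part that is a nontrivial port logical whenever $T\ne\emptyset$; this is the $\bar{X}\bar{X}$ stabilizer of Lemma~\ref{lem:cleaning_merge_logical_support}, step (iii). So the $\LP(A,B)$-restriction of $\bar{X}^\star S$ equals $\bar{X}^\star\prod_{g\in T}\bar{X}_g$ up to stabilizers of $\LP(A,B)$, and two cases remain.

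In the first case, $\bar{X}^\star\prod_{g\in T}\bar{X}_g$ is a nontrivial logical of $\LP(A,B)$. Then the restriction alone has weight at least $d_X(\LP(A,B))\ge d(\LP(A,B))$. In the second case, $\bar{X}^\star\prod_{g\in T}\bar{X}_g$ is trivial, so $T\ne\emptyset$ because $\bar{X}^\star$ is nontrivial in $\MergeCode_{II}$. I then bound the port restriction instead: it is a nontrivial port logical, hence has weight at least $d_X(\LP(\RepCode(d_{\mathrm{rep}}),B))=d_B$ by Theorem~\ref{thm:magic_port_dist}. Finally, $d_B\ge d(\LP(A,B))$ is the wrong direction in general, since Theorem~\ref{thm:dc_ub_dq} gives $d(\LP(A,B))\le\min(d_A,d_B)$. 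That is exactly the direction needed, so $\wt{\bar{X}^\star S}\ge d(\LP(A,B))$ in either case, and at this point I would check the second case carefully.

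The subtle point I expect to require care is the second case. There, $\bar{X}^\star$ itself is equivalent in $\LP(A,B)$ to $\prod_{g\in T}\bar{X}_g$, which becomes the port operator in $\MergeCode_{II}$. I would confirm that such $\bar{X}^\star$ is still a legitimate nontrivial logical of $\MergeCode_{II}$, so that it is not a stabilizer. Then the port-side lower bound $d_B$ applies and the argument closes.
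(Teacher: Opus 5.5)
Your argument breaks at the ``key step''. Commutation of $\bar{X}^\star S$ with the deformed $Z_0$ checks does \emph{not} force $S_{\mathrm{br}}|_{\LP(A,B)}\in\ker H_Z$. The $Z_0$ checks of $\LP(A,B)$ are deformed transversally onto the bridge ancilla block, and, as you note yourself, the ancilla part of $S_{\mathrm{br}}$ is $yB^{*}=y_0b_0^{*}+y_1b_1^{*}$. Commutation of $S_{\mathrm{br}}$ with the deformed $Z_0$ check labelled $h$ says exactly one thing: the $Z_0$-syndrome of $S_{\mathrm{br}}|_{\LP(A,B)}=(y_0,y_1)$ on $D_1\cup D_2$ equals the coefficient of $h$ in $yB^{*}$. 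In other words, the ancilla part absorbs the syndrome. So $S_{\mathrm{br}}|_{\LP(A,B)}\in\ker H_Z$ holds iff $yB^{*}=0$, i.e.\ iff $y$ is a codeword of $B$. Nothing in the setting restricts $y$: every $y\in\R^{1\times 2}$ gives a legitimate $X$ stabilizer, and commutation of $\bar{X}^\star S$ with all $Z$-checks is automatic. Your two-case analysis (a sum of canonical $\bar{X}_g$ on $\LP(A,B)$, a nontrivial port logical otherwise) is therefore valid only for $y$ in the kernel of $B$. The real difficulty is not the point you flagged at the end.

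For general $y$, the restriction to $\LP(A,B)$ is not a logical of $\LP(A,B)$ at all, and its weight can be far below $d$. For example, take $\bar{X}^\star=\bar{X}_e$, which is a nontrivial logical of $\MergeCode_{II}$ supported on $D_1\cup D_2$. Let $y$ be the indicator of $\supp(\bar{X}_e)$ with one qubit removed. Then $(\bar{X}^\star S_{\mathrm{br}})|_{\LP(A,B)}$ has weight $1$. So no bound that looks only at the $\LP(A,B)$ component can work. The weight removed from $\LP(A,B)$ reappears on the port and must be counted; this is the missing idea, and it is how the paper argues. Each bridge $X$-check touches one qubit of $\LP(A,B)$ and one distinct qubit in the first row of the port grid. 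Hence multiplying $\bar{X}^\star S^{A}_{\mathrm{orig}}$ by the bridge checks indexed by $y$ removes at most $|y|$ qubits from $\LP(A,B)$ (with $|y|$ the binary weight) but creates exactly $|y|$ first-row port qubits. Each port $X$-check acts on two adjacent rows of a single repetition-code column, so any product of port $X$-checks sums to zero along every column. It therefore cannot lower the per-column port weight of a first-row vector. Together these give $\wt{\bar{X}^\star S}\geq\wt{\bar{X}^\star S^{A}_{\mathrm{orig}}}\geq d_X(\LP(A,B))\geq d(\LP(A,B))$, with no case split. Separately, your sentence comparing $d_B$ and $d(\LP(A,B))$ contradicts itself, although the inequality $d_B\geq d(\LP(A,B))$ you end up using is correct.
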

\begin{proof}

    Every $X$ stabilizer of $\MergeCode_{II}$ is a product $S_X = S_X^{\LP(A,B)}\,S_X^{\mathrm{Bridge}}\,S_X^{\LP(\RepCode(d_{\mathrm{rep}}),B)}$ of stabilizers generated within the three subsystems, so it suffices to bound the weight after multiplying the three factors in sequence: $\bar{X}_1=\bar{X}^{\star}S_X^{\LP(A,B)}$, $\bar{X}_2=\bar{X}_1S_X^{\mathrm{Bridge}}$, and $\bar{X}=\bar{X}_2S_X^{\LP(\RepCode(d_{\mathrm{rep}}),B)}$. First, $\bar{X}_1$ differs from $\bar{X}^{\star}$ by a stabilizer of $\LP(A,B)$, so its restriction to $\LP(A,B)$ is a nontrivial logical operator of $\LP(A,B)$ and $\wt{\bar{X}_1}\geq d(\LP(A,B))$. Second, each bridge $X$-check acts on exactly one data qubit of $\LP(A,B)$ and one data qubit of $\LP(\RepCode(d_{\mathrm{rep}}),B)$---distinct qubits for distinct checks---besides its ancilla support; multiplying $\bar{X}_1$ by any set of bridge checks therefore removes at most one qubit of $\LP(A,B)$ support per check while adding exactly one qubit of (previously empty) port support per check, so $\wt{\bar{X}_2}\geq\wt{\bar{X}_1}$. Third, the port support of $\bar{X}_2$ lies entirely in the first row of the port's data-qubit grid, while any product $S$ of port $X$-checks acts on each column of the grid with row entries summing telescopically to zero: the $X$-check block $(i,j)$ acts on the row-$i$ and row-$(i{+}1)$ qubits of column $j$, so the action of $S$ on each column, summed over all rows, vanishes. Summing the port support of $\bar{X}_2S$ over the rows of each column and using the triangle inequality, the port weight of $\bar{X}_2S$ in each column is therefore at least the weight of $\bar{X}_2$ in that column; hence $\wt{\bar{X}}\geq\wt{\bar{X}_2}$. Chaining the three bounds gives $\wt{\bar{X}}\geq d(\LP(A,B))$.
\end{proof}

\begin{remark}
\label{rem:weight_preservation_from_single_side}
    Similarly, Lemma~\ref{lem:weight_preservation_from_single_side} applies with the roles of the two check types exchanged, to bridging the two $\LP$ codes $\LP(\RepCode(d_{\mathrm{rep}}),\RepCode(d_{\mathrm{rep}}))$ and $\LP(\RepCode(d_{\mathrm{rep}}),B)$ by the bridge code $\RepCode(d_{\mathrm{rep}})^T$ with $Z$-type checks, attached along the $\RepCode(d_{\mathrm{rep}})$ factor side of both codes: for any logical $Z$ operator $\bar{Z}^\star$ of the merged code $\MergeCode_{I}$ supported only on $\LP(\RepCode(d_{\mathrm{rep}}),B)$, multiplying $\bar{Z}^\star$ by $Z$ stabilizers of $\MergeCode_{I}$ cannot decrease its weight below $d_Z(\LP(\RepCode(d_{\mathrm{rep}}),B))=d_{\mathrm{rep}}$.
\end{remark}

With these two lemmas, we can now prove the main theorem of this section:
\begin{theorem}
    \label{thm:parallel_magic_injection_dist}
    The parallel magic state injection scheme as illustrated in Figure~\ref{fig:parallel_magic_injection} is distance preserving with
    \begin{equation}
        \min(d(\MergeCode_I), d(\MergeCode_{II})) \geq \min(d_{\mathrm{rep}},d(\LP(A,B))).
    \end{equation}
\end{theorem}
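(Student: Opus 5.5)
The plan is to bound the four one-sided distances separately, since $d(\MergeCode)=\min(d_X,d_Z)$ for each merged code. Two of these bounds are already in hand from the discussion preceding the lemmas. There it was shown that $d_X(\MergeCode_I)\geq\min(d_{\mathrm{rep}},d_B)$. That argument restricts an $X$ logical to the two components, which works because Bridge~I adds only $Z$-checks. It also uses the trivial kernel of $\RepCode(d_{\mathrm{rep}})^T$ to exclude operators supported on the bridge ancillas alone. By the same reasoning with check types exchanged, $d_Z(\MergeCode_{II})\geq\min(d_{\mathrm{rep}},d(\LP(A,B)))$. It therefore remains to bound $d_Z(\MergeCode_I)$ and $d_X(\MergeCode_{II})$.

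For $d_X(\MergeCode_{II})$, take an arbitrary nontrivial logical $X$ operator $\bar{X}$ of $\MergeCode_{II}$. By Lemma~\ref{lem:cleaning_merge_logical_support} it is stabilizer-equivalent to a representative $\bar{X}^\star$ supported only on $\LP(A,B)$. Its restriction to $\LP(A,B)$ commutes with the $Z$-checks of $\LP(A,B)$; these are undeformed by Bridge~II except where they touch bridge ancillas, but $\bar{X}^\star$ has no ancilla support. So this restriction is a logical or stabilizer of $\LP(A,B)$.

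If it were a stabilizer of $\LP(A,B)$, then $\bar{X}^\star$ would be a product of $\LP(A,B)$ $X$-checks. These remain $X$-checks of $\MergeCode_{II}$, since Bridge~II adds $X$-checks and deforms only $Z$-checks. Then $\bar{X}$ would be trivial, a contradiction. Hence the restriction is a nontrivial logical of $\LP(A,B)$. Now $\bar{X}=\bar{X}^\star S$ for some $X$-stabilizer $S$ of $\MergeCode_{II}$, and Lemma~\ref{lem:weight_preservation_from_single_side} gives $\wt{\bar{X}}\geq d(\LP(A,B))$. Thus $d_X(\MergeCode_{II})\geq d(\LP(A,B))\geq\min(d_{\mathrm{rep}},d(\LP(A,B)))$.

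For $d_Z(\MergeCode_I)$, run the identical argument with types swapped. Remark~\ref{rem:cleaning_merge_logical_support} moves any logical $Z$ class onto $\LP(\RepCode(d_{\mathrm{rep}}),B)$, and Remark~\ref{rem:weight_preservation_from_single_side} shows its weight stays at least $d_Z(\LP(\RepCode(d_{\mathrm{rep}}),B))=d_{\mathrm{rep}}$. The equality $d_Z=d_{\mathrm{rep}}$ comes from the proof of Theorem~\ref{thm:magic_port_dist} (lower bound \eqref{eq:dZ_ge_drep}) together with Theorem~\ref{thm:dc_ub_dq}.

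Finally, combine the four bounds. We get $d(\MergeCode_I)\geq\min(d_{\mathrm{rep}},d_B)$. Since $d(\LP(A,B))\leq d_B$ by Theorem~\ref{thm:dc_ub_dq}, this is at least $\min(d_{\mathrm{rep}},d(\LP(A,B)))$. Likewise $d(\MergeCode_{II})\geq\min(d_{\mathrm{rep}},d(\LP(A,B)))$, which gives the claimed inequality. The main obstacle is the nontriviality step: confirming that the cleaned representative's restriction is a genuinely nontrivial component logical rather than a component stabilizer that became trivial. This relies on the component's $X$-checks (respectively $Z$-checks) surviving unchanged as stabilizers of the merged code, which I would verify directly from the block form of the bridges. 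Note that deformation only affects checks of the opposite type.
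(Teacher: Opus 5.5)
Your proposal is correct and follows essentially the same route as the paper. It uses the same four one-sided distance bounds: $d_X(\MergeCode_I)$ and $d_Z(\MergeCode_{II})$ come from the single-type-bridge discussion, the other two come from Lemma~\ref{lem:cleaning_merge_logical_support}, Lemma~\ref{lem:weight_preservation_from_single_side} and their remarks, and the pieces are combined via $d(\LP(A,B))\leq d_B$. The differences are minor. You apply the cleaning lemma to every logical instead of splitting by whether its support meets one or both components. You also explicitly check that the cleaned representative restricts to a nontrivial component logical, which the proof of Lemma~\ref{lem:weight_preservation_from_single_side} assumes without saying so.
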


\begin{proof}
    We bound the four distances $d_X(\MergeCode_I)$, $d_Z(\MergeCode_I)$, $d_X(\MergeCode_{II})$, and $d_Z(\MergeCode_{II})$ in turn. As shown at the beginning of this subsection, the single-type bridge checks already give $d_X(\MergeCode_I)\geq\min(d_{\mathrm{rep}},d_B)$ and $d_Z(\MergeCode_{II})\geq\min\bigl(d_{\mathrm{rep}},d(\LP(A,B))\bigr)$.
    For $d_Z(\MergeCode_I)$, since $\RepCode(d_{\mathrm{rep}})^T$ has trivial kernel, no logical $Z$ operator of $\MergeCode_I$ is supported on Bridge~I alone. A logical $Z$ operator supported on a single component has weight at least $d_{\mathrm{rep}}$, by the surface-code distance of $\LP(\RepCode(d_{\mathrm{rep}}),\RepCode(d_{\mathrm{rep}}))$ and by $d_Z(\LP(\RepCode(d_{\mathrm{rep}}),B))=d_{\mathrm{rep}}$ (Theorem~\ref{thm:magic_port_dist} together with the upper bound of Theorem~\ref{thm:dc_ub_dq}). A logical $Z$ operator supported on both components is cleaned onto $\LP(\RepCode(d_{\mathrm{rep}}),B)$ by Lemma~\ref{lem:cleaning_merge_logical_support} in the form of Remark~\ref{rem:cleaning_merge_logical_support}, and Lemma~\ref{lem:weight_preservation_from_single_side} in the form of Remark~\ref{rem:weight_preservation_from_single_side} shows that no representative of its class drops below weight $d_{\mathrm{rep}}$. Hence $d_Z(\MergeCode_I)\geq d_{\mathrm{rep}}$.
    For $d_X(\MergeCode_{II})$: since $B$ satisfies the square invertibility condition, $B^*$ has trivial kernel, so no logical $X$ operator of $\MergeCode_{II}$ is supported on Bridge~II alone. A logical $X$ operator supported on a single component has weight at least $\min\bigl(d_X(\LP(\RepCode(d_{\mathrm{rep}}),B)),\,d_X(\LP(A,B))\bigr)=\min\bigl(d_B,\,d_X(\LP(A,B))\bigr)\geq d(\LP(A,B))$, where we used $d(\LP(A,B))\leq d_B$ from Theorem~\ref{thm:dc_ub_dq}. A logical $X$ operator supported on both components is cleaned onto $\LP(A,B)$ by Lemma~\ref{lem:cleaning_merge_logical_support}, and Lemma~\ref{lem:weight_preservation_from_single_side} keeps every representative of its class at weight at least $d(\LP(A,B))$. Hence $d_X(\MergeCode_{II})\geq d(\LP(A,B))$.
    Combining the four bounds, $d(\MergeCode_I)\geq\min(d_{\mathrm{rep}},d_B)\geq\min\bigl(d_{\mathrm{rep}},d(\LP(A,B))\bigr)$ and $d(\MergeCode_{II})\geq\min\bigl(d_{\mathrm{rep}},d(\LP(A,B))\bigr)$, which proves the claim.
\end{proof}

With Theorem~\ref{thm:parallel_magic_injection_dist}, we have proved that the parallel magic state injection scheme as illustrated in Figure~\ref{fig:parallel_magic_injection} is distance-preserving end to end.

\subsection{Complexity of parallel magic state injection}
\label{subsec:magic_injection_complexity}

The total numbers of physical qubits, $X$-checks, and $Z$-checks involved in the parallel magic state injection are
\begin{equation}
\begin{split}
    n_{\mathrm{ParallelMagic}} &= \Bigl(\bigl(d_{\mathrm{rep}}^2+(d_{\mathrm{rep}}-1)^2\bigr)+(d_{\mathrm{rep}}-1)+\bigl(2d_{\mathrm{rep}}+(d_{\mathrm{rep}}-1)\bigr)+1+5\Bigr)\,|G|\\
    &= \bigl(2d_{\mathrm{rep}}^2+2d_{\mathrm{rep}}+5\bigr)\,|G|,\\
    \#X\text{-checks} &= \bigl(d_{\mathrm{rep}}(d_{\mathrm{rep}}-1)+2(d_{\mathrm{rep}}-1)+2+2\bigr)\,|G| = \bigl(d_{\mathrm{rep}}^2+d_{\mathrm{rep}}+2\bigr)\,|G|,\\
    \#Z\text{-checks} &= \bigl(d_{\mathrm{rep}}(d_{\mathrm{rep}}-1)+d_{\mathrm{rep}}+d_{\mathrm{rep}}+2\bigr)\,|G| = \bigl(d_{\mathrm{rep}}^2+d_{\mathrm{rep}}+2\bigr)\,|G|,
\end{split}
\end{equation}
where the successive terms of the qubit count are the surface-code stack, Bridge~I, the magic port, Bridge~II, and the mitten code. Per $\F_2[G]$-block count, the surface-code stack contributes $d_{\mathrm{rep}}(d_{\mathrm{rep}}-1)$ blocks of each check type, the magic port $2(d_{\mathrm{rep}}-1)$ $X$-check and $d_{\mathrm{rep}}$ $Z$-check blocks, Bridge~I contributes $d_{\mathrm{rep}}-1$ ancilla and $d_{\mathrm{rep}}$ $Z$-check blocks, Bridge~II contributes $1$ ancilla and $2$ $X$-check blocks, and the mitten code $2$ blocks of each check type (Figure~\ref{fig:parallel_magic_injection}). As a consistency check, $n_{\mathrm{ParallelMagic}}-\#X\text{-checks}-\#Z\text{-checks}=|G|$, matching the fact that, with both bridges attached, the full system carries exactly the $|G|$ injected logical qubits. Note that the space cost is dominated by the magic factory itself: the surface codes account for the leading $\bigl(2d_{\mathrm{rep}}^2-2d_{\mathrm{rep}}+1\bigr)|G|$ qubits. The remainder consists of the magic port with $(3d_{\mathrm{rep}}-1)|G|$ qubits and the two bridges which cost only $d_{\mathrm{rep}}|G|$ ancilla qubits, which is below $18\%$ of the total for $d_{\mathrm{rep}}=11$ or higher. The counts for the $8$ mitten codes as processors with different $d_{\mathrm{rep}}$ are presented in Table~\ref{tab:parallel_magic_complexity}.

\begin{table}[t]
\centering

\setlength{\tabcolsep}{3.5pt}
\renewcommand{\arraystretch}{1.1}
\begin{tabular}{c *{5}{ccc}}
\toprule
 & \multicolumn{3}{c}{$d_{\mathrm{rep}}=5$} & \multicolumn{3}{c}{$d_{\mathrm{rep}}=7$}
 & \multicolumn{3}{c}{$d_{\mathrm{rep}}=9$} & \multicolumn{3}{c}{$d_{\mathrm{rep}}=11$}
 \\
\cmidrule(lr){2-4}\cmidrule(lr){5-7}\cmidrule(lr){8-10}\cmidrule(lr){11-13}
Processor & $n$ & $X$ & $Z$ & $n$ & $X$ & $Z$ & $n$ & $X$ & $Z$ & $n$ & $X$ & $Z$  \\
\midrule
$\llbracket 150,30,10 \rrbracket $  & 1950  & 960   & 960   & 3510  & 1740  & 1740  & 5550  & 2760  & 2760  & 8070  & 4020  & 4020  \\
$\llbracket 200,40,12 \rrbracket $  & 2600  & 1280  & 1280  & 4680  & 2320  & 2320  & 7400  & 3680  & 3680  & 10760 & 5360  & 5360  \\
$\llbracket 300,60,14 \rrbracket $  & 3900  & 1920  & 1920  & 7020  & 3480  & 3480  & 11100 & 5520  & 5520  & 16140 & 8040  & 8040  \\
$\llbracket 500,100,16 \rrbracket $ & 6500  & 3200  & 3200  & 11700 & 5800  & 5800  & 18500 & 9200  & 9200  & 26900 & 13400 & 13400 \\
$\llbracket 540,108,18 \rrbracket $ & 7020  & 3456  & 3456  & 12636 & 6264  & 6264  & 19980 & 9936  & 9936  & 29052 & 14472 & 14472 \\
$\llbracket 630,126,\leq20 \rrbracket $ & 8190  & 4032  & 4032  & 14742 & 7308  & 7308  & 23310 & 11592 & 11592 & 33894 & 16884 & 16884 \\
$\llbracket 780,156,\leq22 \rrbracket $ & 10140 & 4992  & 4992  & 18252 & 9048  & 9048  & 28860 & 14352 & 14352 & 41964 & 20904 & 20904 \\
$\llbracket 975,195,\leq24 \rrbracket $ & 12675 & 6240  & 6240  & 22815 & 11310 & 11310 & 36075 & 17940 & 17940 & 52455 & 26130 & 26130 \\
\bottomrule
\end{tabular}
\caption{The number of qubit, $X$- and $Z$-check in the parallel magic state injection process. $n$ is the total number of physical qubits used, $X$ is the total number of $X$ checks, and $Z$ is the total number of $Z$ checks.}
\label{tab:parallel_magic_complexity}
\end{table}

\section{A qLDPC processor discovery pipeline}
\label{app:knitter_pipeline}

As illustrated in Fig.~\ref{fig:pipeline}, we present a discovery pipeline for qLDPC processors supporting all three quantum instruction sets ($\bquis$, $\hquis$, and $\fquis$) based on general lifted product codes. Its inputs are the specific properties required for a fault-tolerant qLDPC processor, effectively the target values for the three processor parameters of Definition~\ref{def:processor_params}, namely the target processing capacity, throughput, and cycle time, together with a physical hardware budget. The outputs of the qLDPC processor pipeline are concrete processor codes with low-weight addressable logical bases, and their gadgets realizing the instruction sets ($\bquis$, $\hquis$, $\fquis$) with atom movements and qubit layout matched to the target atom-array or superconducting devices. Because each requirement enters as an adjustable input rather than being fixed in advance, the same pipeline can be re-run to design processors for different platforms or computational tasks.

The pipeline starts from the code family that the processor is built upon. The shape of the two base matrices $A$ and $B$ sets a floor on the encoding rate by itself, so fixing the shape guarantees a rate floor before any code is built: as in Eq.~\eqref{eq:k_of_LP}, for $A\in\ringR^{r_1\times c_1}$ and $B\in\ringR^{r_2\times c_2}$, the rate of $\LP(A,B)$ satisfies
\begin{equation}
    \frac{k}{n} \geq \frac{(c_1-r_1)(c_2-r_2)}{r_1r_2+c_1c_2},
\end{equation}
and once the shape is fixed, the block size $n = (r_1r_2+c_1c_2)|G|$ depends only on the group order $|G|$. The $1\times 2$ shape we use most often for mitten codes guarantees a rate of at least $20\%$.

Simulating a full memory experiment for every candidate is too slow; therefore, the pipeline filters codes by their check weight, code distance, and circuit-level distance, and keeps only the survivors. The cheapest test is the check weight, which we read straight off the base matrices. The code distance is one of the most important parameters we filter on. As proved in Theorem~\ref{thm:dc_ub_dq}, under the square-invertibility condition of Definition~\ref{def:square_inv_condition}, the distance of a quantum code is upper bounded by the distances of its classical base codes, so we can discard weak candidates before ever constructing the quantum code (Appendix~\ref{app:distance_bounds}). To be more specific, we first check the square-invertibility condition, and only preserve those satisfying it; then we consider the classical distance bounds based on the pattern of the classical codes $A$ and $B$. For example, when $G$ is abelian, we can use distance bounds for classical QC-LDPC codes (Appendix~\ref{app:distance_bounds}), such as Theorems~\ref{thm:perm-poly} and~\ref{thm:perm-wmat}, as filters. When $G$ is non-abelian and $r_1=r_2=1$, we may use other distance bounds, such as the commutator-subgroup bound of Proposition~\ref{prop:comm-dist-bound}, as filters. We certify the survivors with \textsf{sQetch}, a fast GPU-based distance estimator developed for this purpose that runs roughly $10^{5}\times$ faster than previous quantum CSS code distance estimators (Appendix~\ref{app:sqetch}). Finally, we account for errors that spread through the syndrome-extraction circuit itself.
Only codes that clear all three tests reach a full memory-experiment simulation with our telescoping decoder.

A high-rate code is more useful as a processor if each of its logical qubits can be addressed individually with a low-weight operator, which is a prerequisite for all three instruction sets ($\bquis, \hquis, \fquis$) to be efficient. This is in general difficult for high-rate processors as the supports of the logical representatives overlap with each other. We enforce this in the third step of the pipeline by keeping only codes whose base matrices satisfy the square-invertibility condition (Definition~\ref{def:square_inv_condition}), which guarantees a group-structured canonical logical basis (Appendix~\ref{app:canonical-basis}). Every processor the pipeline returns therefore has individually addressable logical operators by construction. From this basis we assemble the logical machinery, namely the reusable surgery gadgets of $\bquis$, the high-rate surgery and parallel magic-state injection of $\hquis$, and the full extractor of $\fquis$ (Appendix~\ref{app:surgery} and Appendix~\ref{app:parallel_magic}), turning the memory into a universal fault-tolerant processor. We establish the fault tolerance of these gadgets with theoretical proofs and benchmark them with the telescoping decoder.

The final stage of the pipeline matches each surviving code to its target hardware. On atom arrays, the pipeline optimizes the placement, labeling, and movement of the atoms based on the group and product structure of the mitten codes. We optimize and benchmark the cost for both atom-array and superconducting layouts; this stage fixes the cycle time of the resulting quantum processors (Appendix~\ref{app:experimental_complexity}). Validation at any stage can feed back to the earlier ones, as shown by the dashed arrows in Fig.~\ref{fig:pipeline}, so a design that misses a distance, decoding, or hardware target\textemdash such as a shortfall in processing capacity, throughput, or cycle time\textemdash is retuned by changing the group, the base matrices, or the schedule.

Running the pipeline end to end yields the family of processors in Table~\ref{tab:1x2_processor_code_param}, with further families in Table~\ref{tab:code_param} below; the explicit construction data for the processor codes are given in Table~\ref{tab:code_construction_full} in Appendix~\ref{app:construction_data}.

\begin{table}[t]
\centering
\setlength{\tabcolsep}{4pt}
\renewcommand{\arraystretch}{1.1}
\begin{tabular}{@{}l c c@{}}
\toprule
\textbf{$\llbracket n,k,d \rrbracket $} & \textbf{code family} & $\wt{Lx}/\wt{Lz}$\\
\midrule
$\llbracket 150,30,10 \rrbracket $   & mitten     & 18/10\\
$\llbracket 200,40,12 \rrbracket $   & mitten     & 20/18\\
$\llbracket 300,60,14 \rrbracket $   & mitten     & 22/22\\
$\llbracket 500,100,16 \rrbracket $  & mitten    & 28/24\\
$\llbracket 540,108,18 \rrbracket $  & mitten     & 22/28\\
$\llbracket 630,126,\leq 20 \rrbracket $  & mitten    & 28/44\\
$\llbracket 780,156,\leq 22 \rrbracket $  & mitten    & 74/84\\
$\llbracket 975,195,\leq 24 \rrbracket $  & mitten   & 102/92\\
$\llbracket 560,112,\leq 14 \rrbracket $  & $2\times 4$ polynomial abelian LP   & $\{22,30\}/\{22,30\}$\\
$\llbracket 280,56,\leq 10 \rrbracket $   & structured mitten  & 30/34\\
$\llbracket 330,66,\leq 12 \rrbracket $   & structured mitten  & 20/30\\
$\llbracket 600,120,\leq 14 \rrbracket $  & structured mitten  & 28/40\\
$\llbracket 600,120,\leq 16 \rrbracket $  & structured mitten  & 40/40\\
$\llbracket 840,168,\leq 18 \rrbracket $  & structured mitten & 80/84\\
\bottomrule
\end{tabular}
\caption{Parameters of the processor code instances. $\wt{L_X}/\wt{L_Z}$ are the weights of the canonical $X$/$Z$ logical basis operators. For the $1\times 2$ LP families, each basis is weight-uniform (all $k$ $X$-logicals share one weight, and likewise the $Z$-logicals), since the canonical basis is a single $G$-orbit. For the $2 \times 4$ abelian LP codes, the canonical basis spans two block-columns and is not single-weight, so we list the weight set $\{\cdot,\cdot\}$. For distance estimation, the distances of the first $5$ codes are exactly computed; distances for the rest are estimated using at least $50$M rounds of \textsf{sQetch} with $50$k rounds of a BP+OSD-based distance estimator.}
\label{tab:code_param}
\end{table}

\begin{table}[t]
\centering
\setlength{\tabcolsep}{4pt}
\renewcommand{\arraystretch}{1.1}
\begin{tabular}{@{}l l l l l@{}}
\toprule
\textbf{$\llbracket n,k,d \rrbracket $} & \textbf{group structure} & \textbf{GAP ID} & \textbf{order relation} & \textbf{conjugation relation} \\
\midrule
$\llbracket 150,30,10 \rrbracket $   & $C_5 \times S_3$                  & $(30,\,1)$   & --- & --- \\
$\llbracket 200,40,12 \rrbracket $   & $C_4 \times D_{10}$               & $(40,\,5)$   & --- & --- \\
$\llbracket 300,60,14 \rrbracket $   & $C_{10} \times S_3$               & $(60,\,11)$  & --- & --- \\
$\llbracket 500,100,16 \rrbracket $  & $C_5 \rtimes C_{20}$              & $(100,\,9)$  & $\langle c, y \mid c^{5} = y^{20} = e \rangle$ & $\varphi_{y}(c) = y\,c\,y^{-1} = c^{3}$ \\
$\llbracket 540,108,18 \rrbracket $  & $C_9 \rtimes C_{12}$              & $(108,\,9)$  & $\langle c, y \mid c^{9} = y^{12} = e \rangle$ & $\varphi_{y}(c) = y\,c\,y^{-1} = c^{5}$ \\
$\llbracket 630,126, \leq 20 \rrbracket $  & $C_7 \rtimes C_{18}$              & $(126,\,1)$  & $\langle a, b \mid a^{7} = b^{18} = e \rangle$ & $\varphi_{b}(a) = b\,a\,b^{-1} = a^{5}$ \\
$\llbracket 780,156, \leq 22 \rrbracket $  & $C_{13} \times A_4$               & $(156,\,13)$ & --- & --- \\
$\llbracket 975,195,\leq 24 \rrbracket $  & $C_{13} \rtimes C_{15}$           & $(195,\,1)$  & $\langle c, w \mid c^{13} = w^{15} = e \rangle$ & $\varphi_{w}(c) = w\,c\,w^{-1} = c^{9}$ \\
\midrule
$\llbracket 330,66,\leq 12 \rrbracket $   & $C_{11} \times S_3$               & $(66,\,1)$   & --- & --- \\
$\llbracket 600,120,\leq14 \rrbracket $  & $C_5 \times S_4$                  & $(120,\,37)$ & --- & --- \\
$\llbracket 600,120,\leq 16 \rrbracket $  & $C_5 \times S_4$                  & $(120,\,37)$ & --- & --- \\
$\llbracket 840,168,\leq 18 \rrbracket $  & $C_7 \times S_4$                  & $(168,\,45)$ & --- & --- \\
\bottomrule
\end{tabular}
\caption{Explicit descriptions of the code groups. The GAP ID $(|G|, i)$ identifies the group $G$ as \texttt{SmallGroup($|G|$,\,$i$)} in GAP's small-group library \cite{GAP4, SmallGrp}. For the direct products of a cyclic group with a standard group ($S_3$, $D_{10}$, $A_4$) the factor names already determine the group, and no relations are needed. Generators and conjugation relations are given when the group involves a genuine semidirect product, because there the symbol $\rtimes$ alone does not determine the group. Different twists $\varphi$ can yield non-isomorphic groups with the same factors. The order relations column lists the generators, and the conjugation relations column gives the twist automorphisms $\varphi_b(a) = b \, a \, b^{-1}$ of Definition~\ref{def:semidirect}. Throughout the paper, when we write $G_1 \rtimes G_2$, we mean $G_1 \rtimes_{\varphi} G_2$, where $\varphi$ is defined in this table.}
\label{tab:semidirect}
\end{table}

\section{Distance bounds for lifted product codes}
\label{app:distance_bounds}
In this appendix we prove upper bounds on the distance of a lifted product code $\LP(A,B)$ in terms of algebraic properties of its base matrices $A\in\R^{r_1\times c_1}$ and $B\in\R^{r_2\times c_2}$ and of the group $G$. The starting point is Theorem~\ref{thm:dc_ub_dq}: whenever the binary matrices $\Lrep{A}$ and $\Rrep{B}$ have full row rank --- which holds in particular under the square invertibility condition of Definition~\ref{def:square_inv_condition}, and hence for every base matrix used in this paper (Remark~\ref{rem:applicability}) --- the distances of the two classical base codes upper-bound the quantum distance. Any construction of a low-weight classical codeword therefore caps the quantum distance, and the rest of the appendix gives three such constructions. The commutator subgroup bound builds a codeword from the sum over the commutator subgroup $[G,G]$ and is strongest when $G$ is close to abelian. The element-order bound applies whenever some entry of a base matrix has the form $1+g$, in which case the distance is at most the order of $g$. Finally, for abelian $G$ the base codes are classical quasi-cyclic LDPC codes, and we adapt the permanent-based distance bounds known for that family. These abelian bounds are tight: searching with our processor discovery pipeline and the distance estimation algorithm $\textsf{sQetch}$ (Appendix~\ref{app:sqetch}), we find a $\llbracket 560,112,\leq14 \rrbracket$ code that attains the distance-$14$ ceiling for $2\times4$ base matrices of check weight at most $9$.

\subsection{Classical and quantum code distance}
In this section we prove that the minimum distance of the two classical base codes $A$ and $B$ upper-bounds the distance of the lifted product code $\LP(A,B)$ provided $\Lrep{A}$ and $\Rrep{B}$ have full row rank. We also show that the full row rank condition cannot be
removed in general by exhibiting a counterexample where the classical base codes are not full row rank and have distance less than the lifted product code.
The precise statement is the following.

\begin{theorem}[Classical distance upper bound of quantum distance]
\label{thm:dc_ub_dq}
Let $\R=\F_2[G]$ for a finite group $G$, and let
$A\in\R^{r_1\times c_1}$ and $B\in\R^{r_2\times c_2}$ with $r_1<c_1$ and
$r_2<c_2$. Assume that the binary matrices
\[
    \Lrep{A}\in\F_2^{r_1|G|\times c_1|G|}
    \qquad\text{and}\qquad
    \Rrep{B}\in\F_2^{r_2|G|\times c_2|G|}
\]
have full row rank. Define the classical distances of the two
base codes,
\begin{equation}
    d_A=\min\bigl\{\wt{u}: 0\neq u\in\ker\Lrep{A}\bigr\},
    \qquad
    d_B=\min\bigl\{\wt{v}: 0\neq v\in\ker\Rrep{B}\bigr\},
\end{equation}
where $\wt{\cdot}$ is the Hamming weight of a binary vector. Let $H_X$
and $H_Z$ be the check matrices of $\LP(A,B)$
(Definition~\ref{def:LP-code}), and define
\begin{equation}
\begin{split}
    d_x &=\min\bigl\{\wt{x}:
        x\in\ker H_Z\setminus\rowspace(H_X)\bigr\},\\
    d_z &=\min\bigl\{\wt{z}:
        z\in\ker H_X\setminus\rowspace(H_Z)\bigr\},
\end{split}
\end{equation}
where $\rowspace(H)$ is the linear space spanned by the row of $H$, and $d_Q=\min(d_x,d_z)$. Then
\begin{equation}
    d_x\leq d_B,
    \qquad
    d_z\leq d_A,
\end{equation}
and hence
\begin{equation}
    d_Q\leq\min(d_A,d_B).
\end{equation}
\end{theorem}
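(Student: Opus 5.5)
The plan is to prove $d_z\le d_A$ directly; the bound $d_x\le d_B$ then follows by the symmetric argument with the roles of $A$ and $B$, and of left and right actions, exchanged. Fix a minimum-weight codeword $0\neq u\in\ker\Lrep{A}$, viewed as a column vector $u\in\R^{c_1}$ with $Au=0$. For every $w\in\R^{c_2}$, I will consider the candidate $Z$-operator
\[
    z_w\coloneqq\bigl(\b{u\otimes_{\R}w},\,0\bigr).
\]

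The first step is to check that each $z_w$ lies in $\ker H_X$. This follows as in the proof of Theorem~\ref{thm:canonical-logicals}. By Proposition~\ref{prop:ring_bin_compat}, $H_X z_w=\Lrep{A\otimes I_{c_2}}\,\b{u\otimes w}=\b{Au\otimes w}=0$. The second step is the weight count. Take $w=\epsilon^{\beta}(h)$, the vector with a single group element $h$ in coordinate $\beta$ and zero elsewhere. Then $u\otimes_{\R}w$ is $uh$ placed in column $\beta$. Right multiplication by $h$ permutes group labels, so $\wt{z_w}=\wt{u}=d_A$. It therefore suffices to show that for some $\beta,h$ the operator $z_{\epsilon^\beta(h)}$ is not in $\rowspace(H_Z)$. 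By linearity it is enough that not every $z_w$ lies in $\rowspace(H_Z)$.

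The third step, which I expect to be the main obstacle, is this nontriviality. I would first try to certify it by pairing with $X$-logicals of the same product form. Take $x=(\b{s\otimes_{\R}v},0)$ with $s\in\R^{c_1}$ arbitrary and $0\neq v\in\ker\Rrep{B}$. Exactly as in the proof of Theorem~\ref{thm:canonical-logicals}, $x\in\ker H_Z$. Hence if $x\cdot z_w=1$, then $z_w\notin\rowspace(H_Z)$ by the argument of Lemma~\ref{lem:symplectic-pairing-logicals}. Computing in coordinates, I expect the pairing to reduce to a trace form, namely the coefficient of $e_G$ in $\sum_{i,j}s_i^{*}u_i\,w_j\,v_j^{*}$. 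I would then need some choice of $s$ and $w$ making this coefficient equal to $1$.

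A zero-divisor issue in $\F_2[G]$ could in principle make $u\R v^{*}=0$, and then the pairing would vanish for every choice. This is where the full row rank hypotheses must enter. I would therefore switch to the homological description. Full row rank of $\Lrep{A}$ gives $H_0(C^A_\bullet)=0$. Full row rank of $\Rrep{B}$ gives $H_1(\hat C^B_\bullet)=\ker\Rrep{B^*}=0$. The K\"unneth spectral sequence argument of the earlier appendix then collapses to identify
\[
    H_1(Q_\bullet)\cong H_1(C^A_\bullet)\otimes_{\R}H_0(\hat C^B_\bullet)=\ker A\otimes_{\R}\coker B^{*}.
\]
Under this identification the class of $z_w$ is $u\otimes[w]$. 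One point needs care: the collapse requires the surviving Tor terms to vanish. These are $\Tor^{\R}_p(\ker A,\coker B^{*})$, and they vanish if the relevant modules are free or projective. That is automatic under the square invertibility condition, which is the case of Remark~\ref{rem:applicability}. In full generality I would need another justification, possibly via the dual cohomology pairing.

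Given the identification, it remains to find $w$ with $u\otimes[w]\neq0$. I would argue by contradiction. If $u\otimes[w]=0$ for all $w$, then the image of $u\R\otimes\coker B^{*}$ in $H_1(Q_\bullet)$ is zero. The plan is to contradict this by a dimension or duality count between $H_1$ and $H^1$: the cohomology classes built from $\ker\Rrep{B}$ pair nondegenerately with $H_1$, and nonvanishing of $\coker B^{*}$ follows from $r_2<c_2$. I expect this last step to be where the full row rank hypotheses are really used, consistent with the paper's counterexample showing they cannot be dropped.
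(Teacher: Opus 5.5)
Your first two steps are fine. The operators $z_{\epsilon^\beta(h)}=(u\,h f_\beta,0)$ are exactly the paper's candidates $Z^\star$, so everything hinges on the third step, and there the proposal stops at a plan. Concretely, you never show that $u\otimes[w]\neq 0$ for some $w$. Nondegeneracy of the $H_1$--$H^1$ pairing only restates that question; it is the same zero-divisor problem you ran into with product-form $X$-logicals. The claim ``$\coker B^*\neq 0$ since $r_2<c_2$'' is too weak. Over $\F_2[G]$ a nonzero right module tensored with a nonzero left module can vanish: for $G=C_3$ one has $\F_2[G]\cong\F_2\times\F_4$, and $\epsilon_1\R\otimes_{\R}\R\epsilon_2=0$ for the two primitive central idempotents. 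So nonvanishing of the cokernel says nothing about $u\R\otimes_{\R}\coker B^*$, or about its image in $H_1(Q_\bullet)$. Your Tor worry, by contrast, is not a real restriction. Full row rank alone makes $\ker A$ a direct summand of $\R^{c_1}$ (a surjection onto a free module splits), and dually makes $\coker B^*$ a direct summand of $\R^{c_2}$, so both are projective without square invertibility.

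The missing idea, which is the core of the paper's proof, is an $\F_2$-dimension count on the orbit module $M=u\R$ that uses $c_2>r_2$ in the presentation itself. In your language, tensoring $\R^{r_2}\to\R^{c_2}\to\coker B^*\to0$ with $M$ gives $M\otimes_{\R}\coker B^*\cong\coker\bigl(M^{r_2}\to M^{c_2}\bigr)$, which has dimension at least $(c_2-r_2)\dim_{\F_2}M>0$. Flatness of $\coker B^*$ makes $M\otimes_{\R}\coker B^*\to\ker A\otimes_{\R}\coker B^*$ injective, and since $ur\otimes[w]=u\otimes[rw]$, some $u\otimes[w]$ is nonzero.

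The paper skips homology entirely. By Lemma~\ref{lemma:ring-form} it suffices that $u\,gf_j\neq TB$ for every $T\in\R^{c_1\times r_2}$; the constraint $AT=0$ is never needed. Full row rank of $\Rrep{B}$ gives $S$ with $BS=I_{r_2}$ (apply the involution to a right inverse of $v\mapsto vB^*$). Then $P_B=I_{c_2}-SB$ annihilates every $TB$ from the right. If it also annihilated every $u\,gf_j$, then every $V\in\R^{c_1\times c_2}$ with columns in $u\R$ would satisfy $V=(VS)B$. Thus $Y\mapsto YB$ would map $(u\R)^{\oplus r_2}$ onto $(u\R)^{\oplus c_2}$, which is impossible since $r_2<c_2$. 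In particular, $d_z\le d_A$ needs only the full row rank of $\Rrep{B}$.
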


\begin{remark}
\label{rem:applicability}
Note that the square invertibility condition of Definition~\ref{def:square_inv_condition} is a special case of the full row rank condition, and so Theorem~\ref{thm:dc_ub_dq} applies to all codes we search for in this paper.
\end{remark}

We will prove $d_x\leq d_B$ by exhibiting an explicit logical $X$ operator $X^\star$ of weight $d_B$; the bound $d_z\leq d_A$ follows from an almost identical argument. The physical qubits of a lifted product code come in two blocks, of sizes $c_1c_2|G|$ and $r_1r_2|G|$, and so the support of an $X$ or $Z$ type Pauli operator over these physical qubits is naturally described as a pair of matrices over the group algebra $\R=\F_2[G]$, of shapes $c_1\times c_2$ and $r_1\times r_2$ respectively (Lemma~\ref{lemma:ring-form} below makes this precise). In this
language the candidate operator $X^\star$ is easy to describe: take a minimum-weight codeword of the classical base code $B$ and place it in a single row of the $c_1\times c_2$ matrix, setting all other entries, and the entire $r_1\times r_2$ matrix to zero. The weight of $X^\star$ is then $d_B$ by construction, and it is straightforward to check that it commutes with all the $Z$ stabilizers.

The real content of the proof is showing that the candidate $X^\star$ is not simply a product of $X$ stabilizers. The remainder of this section develops the proof in two steps. We first recast $X$ and $Z$ type Pauli operators on the physical qubits as pairs of matrices over $\R$ (Lemma~\ref{lemma:ring-form}), so that we may work over the ring for the remainder of the proof. We then show that the full row rank condition allows us to always construct such a candidate that is not a product of $X$ stabilizers (Lemma~\ref{lem:avoid_stabs}).

Over the ring, the minimum-distance codewords of the two classical base codes come from the kernels of the base matrices,
\[
    K_A\coloneqq\{u\in\R^{c_1}:Au=0\},
    \qquad
    K_B\coloneqq\{v\in\R^{1\times c_2}:vB^{*}=0\}.
\]
In binary coordinates, Proposition~\ref{prop:ring_bin_compat}(i) gives
$\ker\Lrep{A}=\{\b{u}:u\in K_A\}$ and
$\ker\Rrep{B}=\{\b{v}:v\in K_B\}$, so $d_A$ and $d_B$ are the minimum
weights of $\b{u}$ and $\b{v}$ over nonzero $u\in K_A$ and
$v\in K_B$, respectively.

To describe how we will construct logical operators from these minimum-weight codewords, it will be helpful to express $X$ and $Z$ type Pauli operators on $\LP(A,B)$ as pairs of matrices over $\R$.
The physical qubits of $\LP(A,B)$ come in two blocks, of sizes
$c_1c_2|G|$ and $r_1r_2|G|$, so an $X$ or $Z$ type Pauli operator is described by a binary vector split across these two blocks. Since the checks are built from Kronecker products, they act most simply when each block of this vector is arranged as a matrix. Accordingly, we arrange the first block as $V\in\R^{c_1\times c_2}$ and the second as $U\in\R^{r_1\times r_2}$, so that the full binary vector is $\bigl(\bvec{V},\,\bvec{U}\bigr)$, where $\operatorname{vec}$ denotes the row-major flattening of~\eqref{eq:vec-def}. In this form, the checks act by left and right matrix multiplication. 

\begin{lemma}[Kernels and row spaces over the ring]
\label{lemma:ring-form}
Let $H_X$ and $H_Z$ be the check matrices of the lifted product code $\LP(A,B)$ as in Definition~\ref{def:LP-code}. Then the map $(V,U)\mapsto\bigl(\bvec{V},\,\bvec{U}\bigr)$ is an
$\F_2$-linear bijection from
$\R^{c_1\times c_2}\oplus\R^{r_1\times r_2}$ onto
$\F_2^{(c_1c_2+r_1r_2)|G|}$, under which:
\begin{enumerate}
    \item[(i)] $\begin{alignedat}[t]{2}
        &\ker H_X&&=\bigl\{\bigl(\bvec{V},\,\bvec{U}\bigr):
            AV+UB=0\bigr\},\\
        &\ker H_Z&&=\bigl\{\bigl(\bvec{V},\,\bvec{U}\bigr):
            VB^{*}+A^{*}U=0\bigr\};
    \end{alignedat}$
    \item[(ii)] $\begin{alignedat}[t]{2}
        &\rowspace(H_X)&&=\bigl\{\bigl(\bvec{A^{*}S},\,\bvec{SB^{*}}\bigr):
            S\in\R^{r_1\times c_2}\bigr\},\\
        &\rowspace(H_Z)&&=\bigl\{\bigl(\bvec{TB},\,\bvec{AT}\bigr):
            T\in\R^{c_1\times r_2}\bigr\}.
    \end{alignedat}$
\end{enumerate}
\end{lemma}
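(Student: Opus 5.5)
The plan is to compute each block of $H_X$ and $H_Z$ on a vector $(\bvec{V},\bvec{U})$ using Proposition~\ref{prop:ring_bin_compat}(ii), and then read off kernels and row spaces. Bijectivity is immediate. The map $V\mapsto\operatorname{vec}(V)$ is a bijection $\R^{c_1\times c_2}\to\R^{c_1c_2}$, and $\bin$ is an $\F_2$-linear bijection by Proposition~\ref{prop:ring_bin_compat}. Concatenating the two blocks therefore gives the claimed bijection onto $\F_2^{(c_1c_2+r_1r_2)|G|}$.

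For (i), I start from Definition~\ref{def:LP-code}:
\[
H_X(\bvec{V},\bvec{U})=\Lrep{A\otimes I_{c_2}}\bvec{V}+\Rrep{I_{r_1}\otimes B^*}\bvec{U}.
\]
Proposition~\ref{prop:ring_bin_compat}(ii), with $M=A$, $m=c_2$ and $N=B^*$, $m=r_1$, turns this into $\bvec{AV}+\bvec{U(B^*)^*}=\bvec{AV+UB}$. Here I use $(B^*)^*=B$, which follows from the involution being an anti-automorphism that squares to the identity.

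Similarly,
\[
H_Z(\bvec{V},\bvec{U})=\Rrep{I_{c_1}\otimes B}\bvec{V}+\Lrep{A^*\otimes I_{r_2}}\bvec{U}=\bvec{VB^*+A^*U}.
\]
Since $\bin\circ\operatorname{vec}$ is injective, the kernel conditions are exactly $AV+UB=0$ and $VB^*+A^*U=0$.

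For (ii), I use that the row space of a binary matrix $H$ is the image of $H^T$. By the transpose identities $\Lrep{M}^T=\Lrep{M^*}$ and $\Rrep{M}^T=\Rrep{M^*}$, together with $(M\otimes I)^*=M^*\otimes I$, the transpose is
\[
H_X^T=\begin{bmatrix}\Lrep{A^*\otimes I_{c_2}}\\ \Rrep{I_{r_1}\otimes B}\end{bmatrix}.
\]
Write an arbitrary input as $\bvec{S}$ with $S\in\R^{r_1\times c_2}$; this parametrises all of $\F_2^{r_1c_2|G|}$ by the bijection above. Applying Proposition~\ref{prop:ring_bin_compat}(ii) again gives
\[
H_X^T\bvec{S}=(\bvec{A^*S},\bvec{SB^*}).
\]
The same computation for $H_Z^T$, with input $\bvec{T}$ for $T\in\R^{c_1\times r_2}$, gives $(\bvec{TB},\bvec{AT})$.

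The main obstacle is bookkeeping rather than mathematics: the row-major flattening has to be consistent with the index order $(i,j)$ used in the Kronecker blocks. Proposition~\ref{prop:ring_bin_compat}(ii) is stated for exactly this convention, so the check reduces to matching the dimensions $m$ in each application. I would verify those dimensions carefully and note that the involution reverses products, $(xy)^*=y^*x^*$, which is what makes $\Rrep{N}$ correspond to right multiplication by $N^*$.
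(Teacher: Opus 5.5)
Your proposal is correct and follows the paper's proof essentially line by line: you compute the block actions of $H_X$, $H_Z$ and their transposes via Proposition~\ref{prop:ring_bin_compat}(ii), read off the kernels from injectivity of $\bin\circ\operatorname{vec}$, and obtain the row spaces as images of the transposes using $\Lrep{M}^T=\Lrep{M^*}$ and $\Rrep{M}^T=\Rrep{M^*}$. The one identity you left implicit is $(I\otimes N)^*=I\otimes N^*$, the companion of $(M\otimes I)^*=M^*\otimes I$, which is needed for the second block of $H_X^T$ and the first block of $H_Z^T$.
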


\begin{proof}
The map is a bijection since it is the composition of the $\operatorname{vec}$ operator with the entrywise binary expansion $\bin$, both of which are $\F_2$-linear bijections. We prove the two $H_X$ statements; the $H_Z$
statements follow from the same argument applied to the blocks of $H_Z$ in~\eqref{eq:LP-HZ}.

The two blocks of $H_X$ in Definition~\ref{def:LP-code} are tensor products with an identity factor, so
Proposition~\ref{prop:ring_bin_compat} yields
\[
    \Lrep{A\otimes I_{c_2}}\,\bvec{V}=\bvec{AV},
    \qquad
    \Rrep{I_{r_1}\otimes B^{*}}\,\bvec{U}=\bvec{UB},
\]
and adding the two blocks gives
\begin{equation}
\label{eq:check-actions}
    H_X\,\bigl(\bvec{V},\,\bvec{U}\bigr)=\bvec{AV+UB}.
\end{equation}
By \eqref{eq:check-actions} and the injectivity of
$\b{\operatorname{vec}(\cdot)}$ on the syndrome space
$\R^{r_1\times c_2}$, we have $H_X\,\bigl(\bvec{V},\,\bvec{U}\bigr)=0$ if and only if $AV+UB=0$; since every vector over the physical qubits is $\bigl(\bvec{V},\,\bvec{U}\bigr)$ for some pair $(V,U)$, this proves (i).

For (ii), the row space of $H_X$ is the image of $H_X^{T}$, and every binary vector in the $X$-syndrome space $\F_2^{r_1c_2|G|}$ equals $\bvec{S}$ for a unique $S\in\R^{r_1\times c_2}$; hence
\[
    \rowspace(H_X)
    =\bigl\{H_X^{T}\,\bvec{S}: S\in\R^{r_1\times c_2}\bigr\}.
\]
Using $\Lrep{M}^{T}=\Lrep{M^{*}}$ and $\Rrep{M}^{T}=\Rrep{M^{*}}$
followed by $(M\otimes I)^{*}=M^{*}\otimes I$ and
$(I\otimes N)^{*}=I\otimes N^{*}$,
\[
    H_X^{T}
    =\begin{bmatrix}
        \Lrep{A^{*}\otimes I_{c_2}}\\[2pt]
        \Rrep{I_{r_1}\otimes B}
    \end{bmatrix},
\]
so Proposition~\ref{prop:ring_bin_compat} gives
$H_X^{T}\,\bvec{S}=\bigl(\bvec{A^{*}S},\,\bvec{SB^{*}}\bigr)$, and (ii) follows.
\end{proof}

In light of Lemma~\ref{lemma:ring-form}, we identify the space of physical qubits with $\R^{c_1\times c_2}\oplus\R^{r_1\times r_2}$ and regard $\ker H_X$, $\ker H_Z$, $\rowspace(H_X)$, $\rowspace(H_Z)$ as ring-level sets. Working in this picture will allow us to construct our candidate logical operator and prove that it is indeed a logical operator and not a product of stabilizers.

Write $e_i$ for the $i$-th standard basis vector of $\R^{c_1}$ that has the identity element $e$ in the $i$-th position and zeros elsewhere. For $g\in G$ and a row vector $v_B\in\R^{1\times c_2}$, the product $e_ig\,v_B\in\R^{c_1\times c_2}$ is the matrix whose $i$-th row is $gv_B$ and whose remaining rows are zero; matrices of this form will correspond to the support of our candidate $X$ logical operators, which live entirely on the first block of physical qubits. By Lemma~\ref{lemma:ring-form}(ii), the products of $X$ stabilizers are exactly the pairs $(A^{*}S,\,SB^{*})$ with $S\in\R^{r_1\times c_2}$, so if the pair $(e_ig\,v_B,\,0)$ were a product of $X$ stabilizers, then $e_ig\,v_B=A^{*}S$ for some $S$. The next lemma shows that the row index $i$ and the group element $g$ can always be chosen so that no such $S$ exists. Part (ii) of the lemma is the corresponding statement for the candidate $Z$ logical operators. Writing $f_j$ for the analogous $j$-th standard basis row vector of $\R^{1\times c_2}$, the candidate $(u_A\,gf_j,\,0)$ carries the column $u_Ag$ in its $j$-th column, and the lemma chooses $j$ and $g$ so that it is not among the $Z$ stabilizer products $(TB,\,AT)$, $T\in\R^{c_1\times r_2}$.

\begin{lemma}[Avoiding the stabilizers]
\label{lem:avoid_stabs}
Suppose that $\Lrep{A}$ and $\Rrep{B}$ have full row rank, with
$r_1<c_1$ and $r_2<c_2$. Then:
\begin{enumerate}
    \item[(i)] for every nonzero row vector $v_B\in\R^{1\times c_2}$ there exists $i\in\{1,\dots,c_1\}$ and $g\in G$ such that $e_ig\,v_B\neq A^{*}S$ for any $S\in\R^{r_1\times c_2}$; \item[(ii)] for every nonzero column vector $u_A\in\R^{c_1}$ there exists $j\in\{1,\dots,c_2\}$ and $g\in G$ such that $u_A\,gf_j\neq TB$ for any $T\in\R^{c_1\times r_2}$.
\end{enumerate}
\end{lemma}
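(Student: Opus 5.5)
The plan is to argue by contradiction, using linear duality over $\F_2$. I will prove part (i); part (ii) follows by the mirror argument with left and right interchanged, $A$ replaced by $B$, and $\ker\Lrep{A}$ replaced by $\ker\Rrep{B}$.

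First I reduce the claim to a single column. Suppose that for every $i$ and every $g$ the matrix $e_ig\,v_B$ lies in $\mathcal{S}\coloneqq\{A^{*}S:S\in\R^{r_1\times c_2}\}$. The set $\mathcal{S}$ is an $\F_2$-subspace. Summing over $g$ and $i$ therefore gives $x\,v_B\in\mathcal{S}$ for every column $x\in\R^{c_1}$. Fix a coordinate $j$ with $c\coloneqq(v_B)_j\neq 0$ and read off the $j$-th column. The $j$-th column of $A^{*}S$ is $A^{*}S_{\cdot j}$, so we get $xc\in A^{*}\R^{r_1}$ for all $x\in\R^{c_1}$. In binary coordinates, Proposition~\ref{prop:ring_bin_compat} gives $\b{A^{*}y}=\Lrep{A^{*}}\b{y}=\Lrep{A}^T\b{y}$. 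Hence $\b{xc}\in\im\Lrep{A}^T=(\ker\Lrep{A})^{\perp}$, and $\ker\Lrep{A}=\{\b{u}:u\in K_A\}$.

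Next I translate the orthogonality into the ring. Write $\varepsilon:\R\to\F_2$ for the coefficient of $e$. Then $\b{a}\cdot\b{b}=\varepsilon(a^{*}b)$, and $\varepsilon(ab)=\varepsilon(ba)$. The condition becomes $\sum_i\varepsilon(c^{*}x_i^{*}u_i)=0$ for all $x$ and all $u\in K_A$. Choosing $x$ supported on a single coordinate and using cyclicity turns this into $\varepsilon(x_i^{*}\,u_ic^{*})=0$ for every $x_i\in\R$. Nondegeneracy of the pairing then gives $u_ic^{*}=0$ for all $i$ and all $u\in K_A$. Running over all $j$ yields $u\,v_B^{*}=0$ entrywise, in the sense $u_i(v_B)_j^{*}=0$, for every $u\in K_A$. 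Equivalently, every entry of every codeword of $A$ right-annihilates every $(v_B)_j^{*}$.

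The remaining and hardest step is to contradict this using the full row rank of $\Lrep{A}$. By rank–nullity, $K_A$ is a right submodule of $\R^{c_1}$ of $\F_2$-dimension $(c_1-r_1)|G|\ge|G|$. The condition forces $K_A\subseteq\{u:u_ic^{*}=0\ \forall i\}$, a subspace of dimension $c_1(|G|-\dim\R c^{*})$. A naive dimension count does not close the gap when $r_1$ is large, so a finer argument is needed. I would try to show that full row rank forces $K_A$ to contain an element whose entries generate a \emph{free} right ideal, namely some $u$ with $\b{u}$ having a coordinate $u_i$ satisfying $u_i\R\cong\R$. The candidate route is to take a maximal invertible $r_1\times r_1$ minor structure: use that $\Lrep{A}$ has full row rank to split off, over the semisimple-free setting, a complement, and show that $\ker A$ has a copy of $\R$ as a direct summand. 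This mirrors the free-coordinate parametrization of Lemma~\ref{lem:kernel-cokernel-normal-forms}. Such a $u$ with $u_i$ a unit-like generator would give $u_ic^{*}\neq0$ for $c\neq0$, which is the desired contradiction. If this fails in general, I would fall back on the square invertibility condition, which covers all codes considered here by Remark~\ref{rem:applicability}. Under that condition $K_A$ contains $v_A^{\alpha,e}$, whose free coordinate is exactly $e$, so $e\cdot c^{*}=c^{*}\neq0$ contradicts the vanishing immediately.
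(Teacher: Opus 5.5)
Your reduction is sound up to the conclusion $u_i c^{*}=0$ for all $u\in K_A$ and all $i$, and already at the earlier statement $xc\in A^{*}\R^{r_1}$ for every $x\in\R^{c_1}$. The gap is the last step. That is the only place where full row rank and $r_1<c_1$ must actually be used, it is not proved, and the route you sketch fails under the lemma's hypothesis.

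Full row rank of $\Lrep{A}$ does not give an $r_1\times r_1$ block with invertible binary expansion. Nor does it give a kernel element with a coordinate satisfying $u_i\R\cong\R$. For instance, take $G=C_3=\langle x\rangle$ and $A=[\,1+x+x^2,\;x+x^2\,]$.
\begin{itemize}
\item The two entries are complementary orthogonal idempotents, so $u\mapsto Au$ is onto $\R$ and $\Lrep{A}$ has full row rank.
\item Yet $\Lrep{a_0}$ and $\Lrep{a_1}$ have ranks $1$ and $2$, so neither is invertible.
\item Moreover $K_A=(x+x^2)\R\times(1+x+x^2)\R$. So although $K_A\cong\R$ is free, no coordinate of any kernel element generates a free right ideal.
\end{itemize}
Even with a free summand $u\R\subseteq K_A$ in hand (which needs a cancellation argument you have not given), reaching $u_ic^{*}\neq 0$ requires a functional with $\sum_i a_iu_i=1$, not a unit-like coordinate. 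Your fallback to square invertibility therefore proves only a special case. As this example shows, square invertibility is strictly stronger than the lemma's full-row-rank hypothesis.

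The missing idea is to use surjectivity through a right inverse, as the paper does. Pick $S_A\in\R^{c_1\times r_1}$ with $AS_A=I_{r_1}$, so that $S_A^{*}A^{*}=I_{r_1}$. Your own intermediate statement then closes at once. If $xc=A^{*}y$, then $y=S_A^{*}(xc)$ has all entries in the left ideal $\R c$. Hence $(\R c)^{\oplus c_1}\subseteq A^{*}\bigl((\R c)^{\oplus r_1}\bigr)$. Comparing $\F_2$-dimensions gives $c_1\dim\R c\le r_1\dim\R c$, which is impossible since $c\neq 0$ and $r_1<c_1$.

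The paper runs exactly this count on $M_B=\R v_B$, phrased through the projector $P_A=I_{c_1}-A^{*}S_A^{*}$, which annihilates every $A^{*}S$. The dimension count that works is on the left submodule generated by $v_B$ (or by $c$), not $\dim K_A$ against an annihilator subspace. No duality or structure theory of $K_A$ is needed.
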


\begin{proof}
We prove (i), the statement for $\Lrep{A}$; the proof of (ii) is similar but with the roles of left and right multiplication exchanged.

Since $\Lrep{A}$ is full row rank, Proposition~\ref{prop:ring_bin_compat}(i) implies that the map $u\mapsto Au$ corresponding to left multiplication by $A$ is a surjective map from $\R^{c_1}$ onto $\R^{r_1}$. Let $e_k$ be the $k$-th standard basis vector of $\R^{r_1}$ that has the identity element $e$ in the $k$-th position and zeros elsewhere. By surjectivity we may choose preimages $s_k\in\R^{c_1}$ with $As_k=e_k$ for $k=1,\dots,r_1$; collecting them as the columns of $S_A\in\R^{c_1\times r_1}$ gives $AS_A=I_{r_1}$.

Since the involution reverses products, this implies
$S_A^{*}A^{*}=I_{r_1}$, so $P_A\coloneqq I_{c_1}-A^{*}S_A^{*}$ satisfies
\begin{equation}
\label{eq:splitting-A}
    P_AA^{*}=A^{*}-A^{*}(S_A^{*}A^{*})=0,
    \qquad
    I_{c_1}=A^{*}S_A^{*}+P_A.
\end{equation}
Since $P_AA^{*}=0$, the projector $P_A$ annihilates every matrix of the form $A^{*}S$. Hence, it suffices to find $i$ and $g$ such that $(P_A\,e_ig)\,v_B\neq0$, since then $e_ig\,v_B$ cannot equal $A^{*}S$ for any $S$.

Suppose then, for contradiction, that $(P_A\,e_ig)\,v_B=0$ for every $i$ and $g$. Form the orbit space
\[
    M_B\coloneqq \R v_B
    =\operatorname{span}_{\F_2}\{gv_B:g\in G\}
    \subseteq\R^{1\times c_2}.
\]
By associativity,
$0=(P_A\,e_ig)\,v_B=(P_Ae_i)(gv_B)$ for every $i$ and $g$, and since the vectors $gv_B$ span $M_B$,
\begin{equation}
\label{eq:selector-vanishing}
    (P_Ae_i)\,m=0
    \qquad\text{for every $i$ and every $m\in M_B$.}
\end{equation}

Now view $M_B^{\oplus c_1}$ as the space of $c_1\times c_2$ matrices whose rows lie in $M_B$. Any $V \in M_B^{\oplus c_1}$ can be written as $V=\sum_{i=1}^{c_1}e_i m_i$ with $m_i\in M_B$. Since $P_AV=0$ by~\eqref{eq:selector-vanishing}, substituting $I_{c_1}=A^{*}S_A^{*}+P_A$ from~\eqref{eq:splitting-A} gives
\[
    V=I_{c_1}V=(A^{*}S_A^{*}+P_A)V=A^{*}(S_A^{*}V).
\]
Every row of $S_A^{*}V$ is a left $\R$-linear combination of rows of $V$, so it still lies in $M_B$. Consequently, left multiplication by $A^{*}$ is a surjective map from $M_B^{\oplus r_1}$ onto $M_B^{\oplus c_1}$. But this is impossible since $r_1<c_1$. Hence, some pair $(i,g)$ satisfies $(P_A\,e_ig)\,v_B\neq0$, and the lemma follows.
\end{proof}

We can now assemble the proof of Theorem~\ref{thm:dc_ub_dq}.

\begin{proof}[Proof of Theorem~\ref{thm:dc_ub_dq}]
We work in the ring picture of Lemma~\ref{lemma:ring-form} and first prove $d_x\leq d_B$. Pick $v_B\in K_B$ such that $\wt{\b{v_B}}=d_B$. By Lemma~\ref{lem:avoid_stabs}(i) there exists $i\in\{1,\dots,c_1\}$ and $g\in G$ such that $e_ig\,v_B\neq A^{*}S$ for any $S\in\R^{r_1\times c_2}$. Set
\begin{equation}\label{eq:Xstar}
    X^{\star}\coloneqq(e_ig\,v_B,\;0).
\end{equation}
We claim that $X^{\star}$ is a nontrivial logical $X$ operator of weight $d_B$. First, $X^{\star}$ commutes with all the $Z$ stabilizers. Since $v_B\in K_B$,
\[
    (e_ig\,v_B)B^{*}+A^{*}\cdot 0
    =e_ig\,(v_BB^{*})
    =0,
\]
so $X^{\star}\in\ker H_Z$ by Lemma~\ref{lemma:ring-form}(i). Second, $X^{\star}$ is not a product of $X$ stabilizers i.e. $X^{\star}\notin\rowspace(H_X)$. If
it were, then by Lemma~\ref{lemma:ring-form}(ii)
its first component $e_i g v_B$ would equal $A^{*}S$ for some $S$, contradicting the choice of $i$ and $g$. Hence $X^{\star}$ is a nontrivial logical $X$ operator. Finally, the matrix $e_ig\,v_B$ has a single nonzero row, equal to $gv_B$, so in binary coordinates $\wt{X^{\star}}=\wt{\b{gv_B}}=\wt{\b{v_B}}=d_B$. Hence $d_x\leq d_B$.

The bound $d_z\leq d_A$ follows from a similar argument, using Lemma~\ref{lem:avoid_stabs}(ii) to build
\begin{equation}\label{eq:Zstar}
    Z^{\star}\coloneqq(u_A\,gf_j,\;0)
\end{equation}
from a minimum-weight
$u_A\in K_A$. Combining the two bounds gives
$d_Q=\min(d_x,d_z)\leq\min(d_A,d_B)$.
\end{proof}

\begin{remark}
One might hope to prove an unconditional statement about the classical distance upper bounding the quantum distance of the lifted product code. However, we present a counterexample that demonstrates this is not possible and that one cannot in general get rid of the full row rank condition in Theorem~\ref{thm:dc_ub_dq}. 

Let $G=C_4\times C_2=\langle x,y\mid x^4=y^2=e,\ xy=yx\rangle$, so that $|G|=8$, let $\R=\F_2[G]$, and consider the base matrices $A,B\in\R^{3\times4}$ given by
\begingroup
\setlength{\arraycolsep}{6.5pt}
\begin{align}
    A&=
    \begin{pmatrix}
        1+x+xy+x^3y & 1+y+xy+x^2+x^3 & 1+xy+x^2y & x+xy+x^2y+x^3\\[1mm]
        y+x^2+x^2y+x^3 & 1+x+x^2y & 1+x^2y+x^3 & 1+y+x+xy+x^3+x^3y\\[1mm]
        1+y+x^2y+x^3y & 1+y+x+x^3+x^3y & 1+x+x^2y+x^3y & x+x^3+x^3y
    \end{pmatrix},
    \nonumber\\[3mm]
    B&=
    \begin{pmatrix}
        xy+x^2y+x^3+x^3y & x+xy+x^2+x^2y+x^3 & xy+x^2y & xy+x^2y+x^3\\[1mm]
        y+x+x^2+x^3 & 1+x+x^3 & x+x^2+x^3+x^3y & x^2+x^3+x^3y\\[1mm]
        1+x+x^3y & y+x^3 & y+x+xy+x^3y & 1+xy+x^2+x^3
    \end{pmatrix}.
\end{align}
\endgroup
Both matrices have the strictly wide shape required by Theorem~\ref{thm:dc_ub_dq}, with $r_1=r_2=3<4=c_1=c_2$ but do not satisfy the full row rank condition. Indeed, $\operatorname{rank}_{\F_2}\Lrep{A}=22$ and $\operatorname{rank}_{\F_2}\Rrep{B}=23$, both strictly smaller than $r_1|G|=r_2|G|=24$. Constructing bases for $\ker\Lrep{A}$ and $\ker\Rrep{B}$ and enumerating all $\F_2$-linear combinations of the basis vectors, we find that the classical distances of the base codes are $d_A=d_B=8$. Using integer programming, we then verified that the distance of $\LP(A,B)$ is exactly $10$, so $\LP(A,B)$ is a $\llbracket200,13,10\rrbracket$ code with
\begin{equation}
    d_Q=10>8=d_A=d_B,
\end{equation}
and the conclusion of Theorem~\ref{thm:dc_ub_dq} fails without the full row rank condition.
\end{remark}

We obtained the above counterexample by a guided randomized search over rank-deficient $3\times 4$ matrices over $\mathbb{F}_2[G]$ where $G = C_4 \times C_2$. Let $\Omega=\sum_{g\in G}g$. Since $g\Omega=\Omega$ for every $g\in G$, multiplying $\Omega$ by any group-algebra element containing an even number of group elements gives zero; hence, if every entry in one column of a base matrix has even weight, the vector supported on that column with entry $\Omega$ is a codeword of binary weight $|G|=8$. In the matrices in the above counterexample this occurs in the first column of $A$ and the third column of $B$, giving the codewords $u_A=e_1\Omega=(\Omega,\,0,\,0,\,0)^T\in K_A$ and $v_B=\Omega f_3=(0,\,0,\,\Omega,\,0)\in K_B$. We sampled the remaining entries randomly and rejected a matrix unless exhaustive enumeration of its kernel showed that the planted codeword was its unique minimum-weight codeword. After pairing candidate matrices $A$ and $B$, we rejected the pair whenever any of the candidate operators~\eqref{eq:Xstar}--\eqref{eq:Zstar} was not a stabilizer.
Since $g\Omega=\Omega$, these candidate logical operators reduce to $X_i^\star=(e_i\Omega f_3,0)$ and $Z_j^\star=(e_1\Omega f_j,0)$, and we checked they were stabilizers by solving $A^*S=e_i\Omega f_3$ and $SB^*=0$, and $TB=e_1\Omega f_j$ and $AT=0$, for every $i$ and $j$. We constructed $H_X$ and $H_Z$ and used an exhaustive search to verify that every operator in $\ker H_X$ and $\ker H_Z$ of weight at most $D=\max\{d_A,d_B\}$ was a stabilizer, thereby certifying $d_Q>D$. Finally, we verified  the distance of $\LP(A,B)$ with integer programming.

    \subsection{The commutator subgroup bound}

    We begin by recalling the definition of a commutator subgroup and establishing a few key properties.

    \begin{definition}[Commutator and commutator subgroup]\label{def:commutator}
        Let $G$ be a finite group.  For elements $g,h\in G$, the \emph{commutator} of $g$ and $h$ is
        \begin{equation}
            [g,h] = g\,h\,g^{-1}\,h^{-1}.
        \end{equation}
        The \emph{commutator subgroup} (or \emph{derived subgroup}) of $G$ is the subgroup generated by all commutators:
        \begin{equation}
            [G,G] = \langle\, [g,h] : g,h\in G \,\rangle.
        \end{equation}
        The commutator subgroup $[G,G]$ is a normal subgroup of $G$, and the quotient $G/[G,G]$ is the abelianization of $G$.  The group $G$ is abelian if and only if $[G,G]=\{e\}$.
    \end{definition}

    \begin{definition}[Commutator subgroup sum]\label{def:comm-sum}
        Let $K=[G,G]$ denote the commutator subgroup of $G$.  Define the element
        \begin{equation}
            \sigma = \sum_{k\in K} k \;\in\; \F_2[G].
        \end{equation}
    \end{definition}

    The following proposition collects the key algebraic properties of $\sigma$ that will be used in the distance analysis.

    \begin{proposition}[Properties of the commutator subgroup sum]\label{prop:comm-sum-props}
        Let $G$ be a finite group with commutator subgroup $K=[G,G]$, and let $\sigma=\sum_{k\in K}k\in\F_2[G]$.  Then:
        \begin{enumerate}
            \item $\sigma$ is central in $\F_2[G]$, i.e.\ $g\sigma=\sigma g$ for all $g\in G$.
            \item For any $c\in K$, $(1+c)\,\sigma=0$.
            \item For all $a,b\in\F_2[G]$, $(ab+ba)\,\sigma=0$.
        \end{enumerate}
    \end{proposition}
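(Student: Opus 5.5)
The three claims follow directly from two standard facts: $K=[G,G]$ is a normal subgroup of $G$, and $K$ is closed under multiplication. We will prove them in the order stated, each using the previous ones where convenient.

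For (1), it suffices to check group elements, since centrality then extends to all of $\F_2[G]$ by linearity. Fix $g\in G$ and write
\[
g\sigma g^{-1}=\sum_{k\in K} gkg^{-1}.
\]
Normality of $K$ means conjugation by $g$ is a bijection $K\to K$. The sum is therefore just a reindexing of $\sigma$, so $g\sigma g^{-1}=\sigma$, which is the same as $g\sigma=\sigma g$.

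For (2), fix $c\in K$. Left multiplication $k\mapsto ck$ is a bijection of $K$, because $K$ is a subgroup. Hence $c\sigma=\sigma$, and over $\F_2$ this gives $(1+c)\sigma=\sigma+\sigma=0$.

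For (3), the expression $(ab+ba)\sigma$ is $\F_2$-bilinear in $(a,b)$, so it suffices to treat basis elements $a=g$ and $b=h$ with $g,h\in G$. The key step is to rewrite the sum $gh+hg$ as a product in which a commutator appears:
\[
gh+hg=(1+ghg^{-1}h^{-1})\,hg=(1+[g,h])\,hg,
\]
since $[g,h]\,hg=gh$. Then
\[
(gh+hg)\sigma=(1+[g,h])\,hg\,\sigma=(1+[g,h])\,\sigma\, hg=0.
\]
The middle equality moves $hg$ past $\sigma$ using (1), and the last equality uses (2) with $c=[g,h]\in K$.

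The only step that requires an idea is the factorization in (3), which pulls the commutator to the left so that (2) applies. The remaining steps are routine reindexing.
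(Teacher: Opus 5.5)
Your proof is correct and follows essentially the same route as the paper: normality of $K$ for centrality (you phrase it as conjugation permuting $K$, the paper as $gK=Kg$), left-translation invariance for (2), and the factorization $gh+hg=(1+[g,h])\,hg$ combined with (1), (2) and bilinearity for (3).
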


    \begin{proof}
        \textit{(1)} Since $K$ is a normal subgroup of $G$, for any $g\in G$ we have $gKg^{-1}=K$, so $gK=Kg$ as sets.  Therefore
        \[
            g\sigma = \sum_{k\in K}gk = \sum_{k'\in gK}k' = \sum_{k'\in Kg}k' = \sum_{k\in K}kg = \sigma g.
        \]

        \textit{(2)} For $c\in K$, left multiplication by $c$ is a bijection on $K$, so $c\sigma=\sum_{k\in K}ck = \sum_{k'\in K}k' = \sigma$.  Hence $(1+c)\sigma = \sigma + c\sigma = \sigma + \sigma = 0$ over~$\F_2$.

        \textit{(3)} It suffices to verify the identity on group elements $g,h\in G$, since the general case follows by bilinearity.  The commutator identity $gh = [g,h]\,hg$ gives
        \[
            (gh + hg)\,\sigma = \bigl(1+[g,h]\bigr)\,hg\,\sigma.
        \]
        By part~(1), $hg\sigma = \sigma hg$, so
        \[
            \bigl(1+[g,h]\bigr)\,hg\,\sigma = \bigl(1+[g,h]\bigr)\,\sigma\,hg = 0,
        \]
        where the last equality uses part~(2) with $c=[g,h]\in K$. For general $a=\sum_i\alpha_i g_i$ and $b=\sum_j\beta_j h_j$ in $\F_2[G]$, bilinearity gives
        \[
            (ab+ba)\,\sigma = \sum_{i,j}\alpha_i\beta_j\,(g_i h_j + h_j g_i)\,\sigma = 0.
        \]
    \end{proof}

    We now apply this algebraic machinery to bound the minimum distance of classical codes defined over the group algebra.  Let $A\in\R^{r_1\times c_1}$ be a matrix over $\R=\F_2[G]$.  Define the \emph{lifted classical code} associated with $A$ as the binary linear code with parity-check matrix
    \begin{equation}
        A_L = \Lrep{A},
    \end{equation}
    and let $d_{A_L}$ denote its minimum distance.  That is, $d_{A_L}$ is the minimum Hamming weight of a nonzero vector $u\in\F_2^{c_1|G|}$ satisfying $\Lrep{A}\,u=0$.

    \begin{proposition}[Commutator distance bound]\label{prop:comm-dist-bound}
        Let $G$ be a finite group with commutator subgroup $K=[G,G]$, and let $\sigma=\sum_{k\in K}k$.  Let $A=[a_1,\ldots,a_m]\in\R^{1\times m}$ with $a_i=\sum_{g\in I_i}g$ for index sets $I_i\subseteq G$.  Then for any pair $1\leq i<j\leq m$ such that $a_i\,\sigma\neq 0$ or $a_j\,\sigma\neq 0$,
        \begin{equation}
            d_{A_L}\leq (|I_i|+|I_j|)\cdot|K|.
        \end{equation}
    \end{proposition}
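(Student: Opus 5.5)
The plan is to exhibit an explicit nonzero codeword of $\Lrep{A}$ supported on the two coordinates $i$ and $j$. Then we bound its binary weight by $(|I_i|+|I_j|)\cdot|K|$. The natural candidate, mimicking the abelian codeword $(a_1,a_0)^T$ mentioned after Definition~\ref{def:mitten_processor_code}, is the vector $u\in\R^{m}$ with
\[
u_i = a_j\,\sigma,\qquad u_j = a_i\,\sigma,\qquad u_\ell=0 \ \ (\ell\neq i,j).
\]
By Proposition~\ref{prop:ring_bin_compat}(i), $\Lrep{A}\,\b{u}=\b{Au}$, so it suffices to work over the ring. There, $Au = a_i a_j\sigma + a_j a_i\sigma = (a_ia_j+a_ja_i)\sigma$. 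This vanishes by Proposition~\ref{prop:comm-sum-props}(3). Hence $\b{u}\in\ker\Lrep{A}$.

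Next I would check that $u$ is nonzero. This holds because, by hypothesis, at least one of $a_i\sigma$ or $a_j\sigma$ is nonzero, and these are exactly the entries $u_j$ and $u_i$. (If the hypothesis is stated with $\sigma$ on the right, as here, the construction matches directly; centrality from Proposition~\ref{prop:comm-sum-props}(1) lets us move $\sigma$ to either side if needed.)

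Finally I would bound the weight. Writing $a_j\sigma=\sum_{g\in I_j}\sum_{k\in K} gk$, the support of $\b{a_j\sigma}$ is contained in $\{gk : g\in I_j,\ k\in K\}$, which has at most $|I_j|\,|K|$ elements; cancellations over $\F_2$ only decrease the weight. The same argument gives at most $|I_i|\,|K|$ for $\b{a_i\sigma}$. Summing the two entries yields $\wt{\b{u}}\le(|I_i|+|I_j|)|K|$. Since $u$ is a nonzero codeword, $d_{A_L}\le(|I_i|+|I_j|)|K|$.

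There is no real obstacle here. The only point requiring care is the order of multiplication in the noncommutative ring: $Au$ gives $a_i(a_j\sigma)+a_j(a_i\sigma)$, and this is precisely what Proposition~\ref{prop:comm-sum-props}(3) kills. That identity is where $\sigma$ is essential, since without it $a_ia_j\neq a_ja_i$ in general. The nonvanishing hypothesis is used only to guarantee $u\neq 0$.
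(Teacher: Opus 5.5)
Your proposal is correct and is essentially the paper's proof: the same two-coordinate vector with entries $a_j\sigma$ and $a_i\sigma$, annihilated by $A$ via Proposition~\ref{prop:comm-sum-props}(3), with the weight of each entry bounded by covering its support with at most $|I_i|$ (resp.\ $|I_j|$) cosets of $K$. You also state explicitly that the nonvanishing hypothesis is exactly what makes this codeword nonzero, a point the paper leaves implicit.
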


    \begin{proof}
        Define $w\in\R^m$ by $w_i=a_j\,\sigma$, $w_j=a_i\,\sigma$, and $w_\ell=0$ for $\ell\notin\{i,j\}$.  Let $\tilde{w}\in\F_2^{m|G|}$ denote the binary vector obtained by expanding each entry $w_\ell\in\R$ into its coefficient vector in $\F_2^{|G|}$.  We claim that $\tilde{w}$ is a nonzero codeword of $A_L$. Since $A$ has a single row, $A_L\tilde{w}$ is the binary expansion of $Aw=\sum_\ell a_\ell w_\ell$.  Computing:
        \[
            \Lrep{A}\tilde{w} = \sum_{\ell=1}^m a_\ell w_\ell = a_i w_i + a_j w_j = a_i a_j\,\sigma + a_j a_i\,\sigma = (a_i a_j + a_j a_i)\,\sigma = 0,
        \]
        where the last equality follows from Proposition~\ref{prop:comm-sum-props}(3).

        All that remains is to bound the weight of $\tilde{w}$. Write $w_i=a_j\,\sigma=\sum_{g\in I_j}g\sigma=\sum_{g\in I_j}\sum_{k\in K}gk$.  The elements $gk$ for $k\in K$ form the coset $gK$, so $a_j\,\sigma=\sum_{g\in I_j}\mathbf{1}_{gK}$, where $\mathbf{1}_{gK}$ denotes the formal sum of elements in $gK$.  Over $\F_2$, group elements appearing an even number of times in the sum cancel, so $\operatorname{wt}_G(w_i)\leq|I_j|\cdot|K|$.  Similarly $\operatorname{wt}_G(w_j)\leq|I_i|\cdot|K|$.  Hence
        \[
            d_{A_L}\leq\operatorname{wt}(\tilde{w})=\operatorname{wt}_G(w_i)+\operatorname{wt}_G(w_j)\leq(|I_j|+|I_i|)\cdot|K|. \qedhere
        \]
    \end{proof}

    \begin{remark}
        For groups of the form $G=C_N\times G'$ where $G'$ is nonabelian, the commutator subgroup is $K=[G,G]=\{e\}\times[G',G']$, since $C_N$ is abelian.  In this case $|K|=|[G',G']|$, and the bound from Proposition~\ref{prop:comm-dist-bound} becomes $d_{A_L}\leq(|I_i|+|I_j|)\cdot|[G',G']|$.
    \end{remark}

    \subsection{The element order bound}
    A single weight-two entry in a base matrix already caps the distance by the order of a group element.
    \begin{definition}[Order of a group element]
    \label{def:element-order}
    Let $G$ be a group and $g \in G$. The \emph{order} of $g$, denoted
    $\operatorname{ord}(g)$, is the smallest positive integer $n$ such that
    \[
    g^n = e,
    \]
    where $e$ is the identity of $G$. If no such $n$ exists, we set
    $\operatorname{ord}(g) = \infty$.
    Equivalently, $\operatorname{ord}(g) = |\langle g \rangle|$, the order of
    the cyclic subgroup generated by $g$.
    \end{definition}

    \begin{theorem}[Element-order bound]\label{thm:element-order}
        Let $A\in\R^{1\times n_a}$ have an entry equal to $1+g$ for some $g\in G$. Then $\ker(\Lrep{A})$ contains a nonzero codeword of Hamming weight $\order{g}$, so the classical distance is at most $\order{g}$; by Theorem~\ref{thm:dc_ub_dq}, the quantum code $\LP(A,B)$ then satisfies $d_Q\le\order{g}$.
    \end{theorem}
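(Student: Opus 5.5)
The plan is to construct the codeword explicitly in the one-column position where the entry $1+g$ sits. Say $A = [a_1,\ldots,a_{n_a}]$ with $a_j = 1+g$, and let $n=\order{g}$, which is finite because $G$ is finite. Define
\[
s = \sum_{i=0}^{n-1} g^i \in \R .
\]
Take $w\in\R^{n_a}$ with $w_j = s$ and $w_\ell = 0$ for $\ell\neq j$.

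Since $A$ has a single row, Proposition~\ref{prop:ring_bin_compat}(i) gives $\Lrep{A}\,\b{w} = \b{Aw} = \b{a_j s}$. The key identity is the telescoping computation
\[
(1+g)\,s = \sum_{i=0}^{n-1} g^i + \sum_{i=1}^{n} g^i = 1 + g^n = 1+e = 0
\]
over $\F_2$. Hence $\b{w}\in\ker\Lrep{A}$.

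Next comes the weight count. The elements $e,g,\ldots,g^{n-1}$ are pairwise distinct by minimality of the order: if $g^i=g^k$ with $0\le i<k<n$, then $g^{k-i}=e$ with $0<k-i<n$, a contradiction. So $s$ is the formal sum of $n$ distinct group elements, i.e. the indicator of $\langle g\rangle$. Therefore $\b{w}$ is nonzero and has Hamming weight exactly $n$, and the classical distance $d_A$ of $\Lrep{A}$ is at most $\order{g}$.

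For the quantum statement, Theorem~\ref{thm:dc_ub_dq} gives $d_Q\le d_z\le d_A\le \order{g}$. This step needs that theorem's full row rank hypothesis on $\Lrep{A}$ and $\Rrep{B}$, which holds under the square invertibility condition by Remark~\ref{rem:applicability}. I would state this assumption explicitly, since the counterexample in the previous remark shows it cannot be dropped.

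There is no real obstacle. The only care points are the distinctness of the powers of $g$, which fixes the weight, and making sure the hypotheses of Theorem~\ref{thm:dc_ub_dq} are in force when passing from the classical to the quantum bound. By symmetry, the same argument with right multiplication would give the analogous bound when $B$ has an entry $1+g$.
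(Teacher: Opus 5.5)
Your proof is correct and is essentially the paper's argument: the paper characterizes every $r$ with $(1+g)r=0$ as a sum of indicators of right cosets of $\langle g\rangle$ and takes a single coset, while you directly exhibit the coset $\langle g\rangle$ itself through the telescoping identity $(1+g)\sum_{i=0}^{n-1}g^i = 1+g^n = 0$, which is all the upper bound needs. Your explicit remark that the quantum step relies on the full row rank hypothesis of Theorem~\ref{thm:dc_ub_dq} (which holds under square invertibility) is a worthwhile precision that the paper leaves implicit.
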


    \begin{proof}
        Say the entry is $a_j=1+g$. Any $r\in\R$ with $(1+g)\,r=0$, placed in coordinate $j$ and padded with zeros, is a codeword of $\ker(\Lrep{A})$, since the single row evaluates to $a_j r=(1+g)r=0$. Writing $r=\sum_{h\in G}a_h\,h$,
        \begin{equation}
            (1+g)\,r=\sum_{h\in G}\bigl(a_h+a_{g^{-1}h}\bigr)\,h=0
            \quad\Longleftrightarrow\quad
            a_{gh}=a_h\ \ \text{for all }h\in G,
        \end{equation}
        so the coefficients of $r$ are constant on the orbits of left multiplication by $g$. These orbits are the right cosets of $\langle g\rangle$,
        \begin{equation}
            \orbit_h=\{\,h,\;gh,\;g^2h,\;\ldots,\;g^{\order{g}-1}h\,\},
        \end{equation}
        each of size $\order{g}$. Hence every solution is a sum of orbit indicators $\hat{\orbit}_i\coloneqq\sum_{k=0}^{\order{g}-1}g^k h_i$, and its weight lies in $\{0,\order{g},2\order{g},\ldots,|G|\}$. The lightest nonzero choice, a single orbit, has weight $\order{g}$, which bounds the distance.
    \end{proof}

    \subsection{Distance bounds on classical quasi-cyclic LDPC codes}

    By the fundamental theorem of finite abelian groups, every finite abelian group is a direct product of cyclic groups. In the univariate case the classical base codes are the well-studied quasi-cyclic LDPC codes, where $G=\Z/\ell\Z$ and $\F_2[G]\cong\F_2[x]/(x^\ell-1)$. Therefore, each entry of the $m_a\times n_a$ base matrix $A$ is a polynomial $A_{ij}(x)$ of degree less than $\ell$, and the classical code is the binary kernel $\mathcal{C}_A=\ker(A_{\mathrm{bin}})$ with $A_{\mathrm{bin}}=L[A]$; we write $d(A_{\mathrm{bin}})$ for its minimum distance. Since $G$ is abelian, the left and right regular representations coincide and we need not distinguish them. Under the identification $\F_2[G]^{n_a}\cong\F_2^{n_a\ell}$, ring multiplication is realized by $L[\cdot]$ and the ring kernel embeds isometrically into the binary kernel, so the Hamming weight of a lift $\tilde c$ equals the sum of the ring weights of its components.

    The rest of this subsection develops the classical distance bounds that let our pipeline discard unpromising base matrices before building any quantum code, exploiting the fact that the quantum distance never exceeds the classical one (Theorem~\ref{thm:dc_ub_dq}). The development proceeds in two parts. First, we attach to each base matrix its integer \emph{weight matrix} (Definition~\ref{def:weight-matrix}) and recall the permanent-based upper bounds on the binary distance $d(A_{\mathrm{bin}})$ from \cite{Smarandache_2012,mitchell2014quasicyclicldpccodesbased} (Theorems~\ref{thm:perm-poly} and~\ref{thm:perm-wmat}). These bounds expose a sharp dichotomy: monomial (weight-one) base matrices cap the distance at $(m_a+1)!$ (Corollary~\ref{cor:fact-bound}), whereas polynomial entries are exactly what break this ceiling (Remark~\ref{rem:poly-vs-mono}), which is why our search uses polynomial entries. We then formalize the \emph{check weight} of the lifted product code (Definition~\ref{def:check-weight}) and combine it with the permanent bound to prove our main result: every $2\times4$ base matrix of check weight at most $9$ yields an LP code of distance at most $14$ (Theorem~\ref{thm:2x4-w9-ceiling}).
        
\subsubsection{Weight-matrix bound}
    \begin{definition}[Weight matrix of a QC-LDPC code]\label{def:weight-matrix}
      The \emph{weight matrix} of $A$ is the integer matrix
    \begin{equation}
    W(A)\in\Z_{\ge 0}^{m_a\times n_a},\qquad W(A)_{ij}\;\triangleq\;\#\bigl\{\,g\in G\;:\;\text{coefficient of }g\text{ in }A_{ij}(x)\text{ equals }1\,\bigr\},
    \end{equation}
    i.e.\ each entry records the number of monomials in the corresponding polynomial. Following \cite{Smarandache_2012}, we call $A$ \emph{monomial} (or of \emph{type-$1$}) if $W(A)_{ij}\le 1$ for all $i,j$, and of \emph{type-$M$} more generally with $M=\max_{ij}W(A)_{ij}$.
    \end{definition}

    \begin{lemma}[Permanent codeword construction \cite{Smarandache_2012}]\label{lem:perm-codeword}
    For any column subset $S\subseteq[n_a]$ of size $|S|=m_a+1$, the ring vector $c\in(\F_2[G])^{n_a}$ defined by
    \begin{equation}\label{eq:perm-codeword}
    c_j\;=\;\begin{cases}\operatorname{perm}\!\bigl(A_{[m_a],\,S\setminus\{j\}}\bigr),&j\in S,\\[2pt]0,&j\notin S,\end{cases}
    \end{equation}
    where $A_{[m_a],T}$ denotes the $m_a\times|T|$ submatrix on columns $T$ and the permanent is taken over the commutative ring $\F_2[G]$, satisfies $Ac=0$. Consequently, its lift $\tilde c\in\F_2^{n_a\ell}$ is a binary codeword of $\mathcal{C}_A$ with Hamming weight
    \begin{equation}\label{eq:perm-weight}
    \mathrm{wt}_H(\tilde c)\;=\;\sum_{j\in S}\mathrm{wt}\!\bigl(\operatorname{perm}\!\bigl(A_{[m_a],S\setminus\{j\}}\bigr)\bigr).
    \end{equation}
    \end{lemma}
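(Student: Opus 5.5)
The claim to prove is that $Ac=0$ in the commutative ring $\F_2[G]$, together with the weight formula. The plan is to read $Ac$ row by row as a generalized cofactor (Laplace) expansion of a square permanent. Over a field of characteristic two, permanent and determinant coincide. More importantly, the permanent is defined by the same sum over permutations without signs, so all expansion identities hold over any commutative ring. Commutativity is essential here, which is why the lemma is stated for abelian $G$.

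First fix a row index $i\in[m_a]$. Using the definition of $c$, write
\[
(Ac)_i=\sum_{j\in S}A_{ij}\,\operatorname{perm}\bigl(A_{[m_a],S\setminus\{j\}}\bigr).
\]
Now form the $(m_a+1)\times(m_a+1)$ matrix $\tilde A$ over $\F_2[G]$. Its rows are the $m_a$ rows of $A_{[m_a],S}$, followed by a duplicate copy of row $i$. Expanding $\operatorname{perm}(\tilde A)$ along the last row is the standard Laplace expansion, valid for permanents over commutative rings. This expansion reproduces exactly the sum above, because deleting the last row and column $j$ leaves $A_{[m_a],S\setminus\{j\}}$. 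So $(Ac)_i=\operatorname{perm}(\tilde A)$.

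The main step is showing that a permanent with two equal rows vanishes over $\F_2[G]$. Let $\tau$ be the transposition swapping the two equal rows. The map $\pi\mapsto\pi\circ\tau$ is a fixed-point-free involution on the symmetric group. Paired permutations contribute identical monomials, since the rows are equal and the ring is commutative. Each pair therefore sums to $2\cdot(\text{term})=0$ in characteristic two, and the permanent is zero. This is the point I expect to need care: one must check that the paired products are literally equal as ring elements, which requires commutativity. Once that holds, $Ac=0$ follows for every $i$.

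Finally, the weight formula follows from the isometric identification of $\F_2[G]^{n_a}$ with $\F_2^{n_a\ell}$ described just above. The Hamming weight of the lift $\tilde c$ is the sum of the ring weights of the coordinates $c_j$, and these are zero for $j\notin S$. This gives \eqref{eq:perm-weight}. By Proposition~\ref{prop:ring_bin_compat}(i), $\Lrep{A}\tilde c$ is the binary expansion of $Ac=0$, so $\tilde c\in\mathcal C_A$.
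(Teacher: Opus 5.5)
Your proposal is correct and follows essentially the same route as the paper: you expand $(Ac)_i$ as the permanent of the $(m_a+1)\times(m_a+1)$ matrix obtained by duplicating row $i$ of $A_{[m_a],S}$, show that this permanent vanishes, and read off the weight from the isometry $\F_2[G]^{n_a}\cong\F_2^{n_a\ell}$. The only difference is cosmetic: you prove that a permanent with a repeated row vanishes directly, via the fixed-point-free involution $\pi\mapsto\pi\circ\tau$ (using commutativity and characteristic two), whereas the paper instead cites that the permanent equals the determinant in characteristic two and that a determinant with a repeated row vanishes.
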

    
    \begin{proof}
    For each row $i\in[m_a]$, expanding by cofactors along the row $A_{i,S}$ gives
    \begin{equation}
    (Ac)_i\;=\;\sum_{j\in S}A_{ij}\operatorname{perm}\!\bigl(A_{[m_a],S\setminus\{j\}}\bigr)\;=\;\operatorname{perm}\!\begin{bmatrix}A_{i,S}\\ A_{[m_a],S}\end{bmatrix}\;=\;0,
    \end{equation}
    because in characteristic $2$ the permanent equals the determinant, and a determinant with a repeated row vanishes (row $i$ of $A_{[m_a],S}$ already appears below). The weight identity \eqref{eq:perm-weight} then follows from the isometry $\mathrm{wt}_H(\tilde c)=\sum_j\mathrm{wt}_{\F_2[G]}(c_j)$ recorded above.
    \end{proof}
    
    \begin{remark}
        This codeword construction relies only on the commutativity of the group. Although it is usually stated for QC-LDPC codes over a cyclic group, it therefore holds for any abelian group, including the direct product of two cyclic groups considered here, so every bound below that rests on it generalizes immediately to linear codes over abelian groups.
    \end{remark}
    
    Minimising the right-hand side of \eqref{eq:perm-weight} over admissible $S$ with $\min^{*}$ skipping the trivial all-zero codeword by returning the smallest \emph{nonzero} value of the sum yields the two upper bounds that will be used throughout.
    
    \begin{theorem}[Polynomial permanent bound {\cite[Thm.~7]{Smarandache_2012}}]\label{thm:perm-poly}
    The minimum distance of the QC-LDPC code $\mathcal{C}_A$ satisfies
    \begin{equation}\label{eq:thm-poly}
    d(A_{\mathrm{bin}})\;\le\;\min^{*}_{\substack{S\subseteq[n_a]\\ |S|=m_a+1}}\;\sum_{j\in S}\mathrm{wt}\!\bigl(\operatorname{perm}\!\bigl(A_{[m_a],S\setminus\{j\}}\bigr)\bigr),
    \end{equation}
    where the weight of each ring permanent reflects the $\F_2$-cancellations among the monomials in its support.
    \end{theorem}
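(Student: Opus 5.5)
The bound follows directly from Lemma~\ref{lem:perm-codeword}, so the plan is to turn that lemma's family of codewords into a minimum-distance estimate. First I will fix an arbitrary column subset $S\subseteq[n_a]$ with $|S|=m_a+1$ and form the ring vector $c^{(S)}$ of \eqref{eq:perm-codeword}. Lemma~\ref{lem:perm-codeword} gives $Ac^{(S)}=0$. Under the identification $\F_2[G]^{n_a}\cong\F_2^{n_a\ell}$, in which ring multiplication is realized by $L[\cdot]$, this means the lift $\tilde c^{(S)}$ lies in $\ker A_{\mathrm{bin}}=\mathcal{C}_A$. Its Hamming weight is the sum in \eqref{eq:perm-weight}, because the lift is an isometry from ring weight to binary weight.

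Next I will handle degeneracy. A lift $\tilde c^{(S)}$ is a nonzero codeword exactly when the sum in \eqref{eq:perm-weight} is positive, that is, when at least one of the permanents $\operatorname{perm}(A_{[m_a],S\setminus\{j\}})$ is nonzero in $\F_2[G]$. For every such admissible $S$ the minimum distance satisfies
\[
d(A_{\mathrm{bin}})\le \sum_{j\in S}\mathrm{wt}\bigl(\operatorname{perm}(A_{[m_a],S\setminus\{j\}})\bigr).
\]
Subsets $S$ whose permanents all vanish give the zero codeword and must be excluded. This is exactly what $\min^{*}$ does: it returns the smallest nonzero value of the sum. Taking $\min^{*}$ over all $S$ then yields \eqref{eq:thm-poly}.

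The only subtle step is the weight bookkeeping. The weight of each ring permanent has to be taken after $\F_2$-cancellation among its monomials, not bounded by the product of the entries' weights. The statement already uses this cancelled weight, so the isometry gives equality with the binary weight and no extra argument is needed. I also have to note what happens if every $S$ gives the zero vector: then $\min^{*}$ is vacuous and the bound is read as $+\infty$, so the statement says nothing in that case. This is consistent with \cite[Thm.~7]{Smarandache_2012}. Commutativity of $G$ is used only inside Lemma~\ref{lem:perm-codeword}, where the permanent must be well defined and must equal the determinant in characteristic two.
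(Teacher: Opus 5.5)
Your proposal is correct and follows essentially the paper's own route. For each $(m_a+1)$-subset $S$ you apply Lemma~\ref{lem:perm-codeword}, use the ring-to-binary isometry to identify the lifted codeword's weight with the sum of the cancelled permanent weights, and take $\min^{*}$ to discard the subsets that yield the zero codeword; your remark that the bound is vacuous when every subset gives zero is a harmless convention.
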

    
    \begin{theorem}[Weight-matrix permanent bound {\cite[Thm.~8]{Smarandache_2012}}, {\cite[Thm.~2]{mitchell2014quasicyclicldpccodesbased}}]\label{thm:perm-wmat}
    With $W(A)$ the integer weight matrix of Definition~\ref{def:weight-matrix},
    \begin{equation}\label{eq:thm-wmat}
    d(A_{\mathrm{bin}})\;\le\;\min^{*}_{\substack{S\subseteq[n_a]\\ |S|=m_a+1}}\;\sum_{j\in S}\operatorname{perm}\!\bigl(W(A)_{[m_a],\,S\setminus\{j\}}\bigr),
    \end{equation}
    the right-hand permanents being ordinary integer permanents of $m_a\times m_a$ submatrices.
    \end{theorem}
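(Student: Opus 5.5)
The plan is to deduce Theorem~\ref{thm:perm-wmat} from Theorem~\ref{thm:perm-poly}. Fix a column subset $S\subseteq[n_a]$ with $|S|=m_a+1$. I will show that each ring permanent in \eqref{eq:thm-poly} has weight at most the corresponding integer permanent of the weight matrix:
\[
\mathrm{wt}\bigl(\operatorname{perm}(A_{[m_a],T})\bigr)\le\operatorname{perm}\bigl(W(A)_{[m_a],T}\bigr),\qquad T=S\setminus\{j\}.
\]
This is a triangle inequality. Expand the ring permanent as a sum over bijections $\pi:[m_a]\to T$ of the products $\prod_i A_{i\pi(i)}$. Each product $\prod_i A_{i\pi(i)}$ is a product of polynomials with $W(A)_{i\pi(i)}$ monomials. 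Expanding it gives a formal sum of $\prod_i W(A)_{i\pi(i)}$ group elements, so its weight is at most that number, since $\F_2$-cancellation can only reduce weight. Weight is subadditive under addition in $\F_2[G]$. Summing over $\pi$ therefore gives the claimed bound. Summing over $j\in S$, each summand on the right of \eqref{eq:thm-poly} is bounded by the corresponding summand on the right of \eqref{eq:thm-wmat}.

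The only delicate point, and the main obstacle, is the $\min^{*}$ convention. The integer quantity for $S$ may be nonzero while the ring codeword $c$ from Lemma~\ref{lem:perm-codeword} is zero, so the per-$S$ inequality does not transfer directly to minima over nonzero values. I would handle this by choosing $S$ to attain the integer $\min^*$, with positive value $w_S$. If the corresponding ring codeword is nonzero, then $d(A_{\mathrm{bin}})\le\mathrm{wt}(\tilde c)\le w_S$ by Lemma~\ref{lem:perm-codeword}, and we are done.

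If the ring codeword vanishes, I would argue as in \cite{Smarandache_2012}. Nonzero integer permanents for $S$ mean some $m_a\times m_a$ submatrix has a nonzero $W$-permanent. In that case one can either pass to a monomial-restricted submatrix where no cancellation occurs, or select a subset $S'$ whose ring permanent vector is nonzero with weight at most $w_S$. If needed, I would simply cite \cite[Thm.~8]{Smarandache_2012} and \cite[Thm.~2]{mitchell2014quasicyclicldpccodesbased} for this degenerate case. Those results are stated for cyclic $G$, and the remark after Lemma~\ref{lem:perm-codeword} observes that only commutativity is used, so the argument carries over to any abelian $G$.
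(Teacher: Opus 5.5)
Your core argument is the paper's proof. The paper's proof consists only of the crude bound $\mathrm{wt}\bigl(\operatorname{perm}(M)\bigr)\le\operatorname{perm}\bigl(W(M)\bigr)$, obtained by counting the terms of the permanent expansion before $\F_2$-cancellation, applied to each summand of Theorem~\ref{thm:perm-poly}.

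The $\min^{*}$ issue you raise is genuine, and the paper's one-line proof passes over it silently. However, your proposed repairs do not work as stated:
\begin{itemize}
\item Replacing entries by single monomials changes the parity-check matrix, so the codewords you obtain are not codewords of $A_{\mathrm{bin}}$.
\item A subset $S'$ with a nonzero codeword from Lemma~\ref{lem:perm-codeword} need not exist. If two rows of $A$ coincide, every $m_a\times m_a$ ring permanent vanishes in characteristic~$2$, while the integer permanents can still be positive.
\end{itemize}

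So as written, the degenerate case rests entirely on the citation. A self-contained argument would need a codeword of a different kind. One candidate is the cofactor vector of a maximal nonvanishing lower-order minor of $A_{[m_a],S}$, bordered by one more column of $S$. It is nonzero, and the remaining rows annihilate it by maximality. You would still have to bound its weight by $w_S$, which is not immediate.
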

    
    \begin{proof}
    The crude inequality $\mathrm{wt}_{\F_2[G]}(\operatorname{perm}(M))\le\operatorname{perm}(W(M))$ holds for any ring matrix $M$, since the right-hand side counts the total support of the permanent expansion before $\F_2$-cancellations. Applying this to each summand in \eqref{eq:thm-poly} gives \eqref{eq:thm-wmat}.
    \end{proof}
    
    \begin{remark}
        The bound of Theorem~\ref{thm:perm-poly} is tighter than that of Theorem~\ref{thm:perm-wmat}; in practice we use the weight-matrix bound of Theorem~\ref{thm:perm-wmat} to choose an ansatz and the polynomial bound of Theorem~\ref{thm:perm-poly} as a finer filter.
    \end{remark}
    
    \begin{corollary}[Factorial bound for monomial $A$ {\cite[Cor.~9]{Smarandache_2012}}, {\cite[Thm.~1]{mitchell2014quasicyclicldpccodesbased}}]\label{cor:fact-bound}
    If $A$ is monomial, then
    \begin{equation}\label{eq:fact-bound}
    d(A_{\mathrm{bin}})\;\le\;(m_a+1)!.
    \end{equation}
    \end{corollary}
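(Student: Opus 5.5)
The plan is to obtain Corollary~\ref{cor:fact-bound} directly from the weight-matrix permanent bound, Theorem~\ref{thm:perm-wmat}. We may assume $m_a+1\le n_a$, since otherwise no admissible column subset exists; this is automatic for the wide base matrices considered throughout. Since $A$ is monomial, every entry of $W(A)$ lies in $\{0,1\}$. Hence for any column subset $S$ with $|S|=m_a+1$ and any $j\in S$, the integer permanent of the $m_a\times m_a$ submatrix $W(A)_{[m_a],S\setminus\{j\}}$ counts the permutations $\pi$ with $W_{i,\pi(i)}=1$ for all $i$. It is therefore at most $m_a!$. Summing over the $m_a+1$ choices of $j\in S$ gives
\[
\sum_{j\in S}\operatorname{perm}\bigl(W(A)_{[m_a],S\setminus\{j\}}\bigr)\le (m_a+1)\,m_a!=(m_a+1)!.
\]

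The only subtlety is the $\min^{*}$ in Theorem~\ref{thm:perm-wmat}, which skips subsets $S$ whose permanent codeword is zero. If some admissible $S$ gives a nonzero codeword, that codeword has weight at most $(m_a+1)!$ by the estimate above, and we are done. Note that a nonzero ring codeword automatically has nonzero integer sum, because $\mathrm{wt}(\operatorname{perm}M)\le\operatorname{perm}W(M)$.

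If instead every Lemma~\ref{lem:perm-codeword} codeword vanishes, I would fall back on the standard argument from \cite{Smarandache_2012}. Choose a set of columns $S'$ together with rows on which the submatrix of $A$ has full rank over the fraction field, or over a suitable local factor of $\F_2[G]$. Apply the same cofactor construction to this smaller square-plus-one minor, which yields a nonzero kernel element whose weight is bounded by $(r+1)!\le(m_a+1)!$ for some $r\le m_a$. The remaining rows are then handled by the rank dependence.

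I expect this degenerate case to be the main obstacle, because the group ring is not a domain and the rank argument has to be done carefully. Since the corollary is cited from \cite{Smarandache_2012,mitchell2014quasicyclicldpccodesbased}, I would defer this case to those references and present only the permanent count explicitly.
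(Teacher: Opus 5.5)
Your argument is essentially the paper's. The paper counts at most $m_a!$ monomials in each ring permanent $\operatorname{perm}(A_{[m_a],S\setminus\{j\}})$ and invokes Theorem~\ref{thm:perm-poly}. You instead count permutations in the integer permanent of the $0/1$ matrix $W(A)$, which dominates the same weight, and then use the codeword of Lemma~\ref{lem:perm-codeword} directly.

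The degenerate case you flag, where every permanent codeword vanishes, is not addressed in the paper's proof either. Your proposed routes do not work as stated: $\F_2[G]$ has zero divisors, so there is no fraction field, and projecting onto a local factor via its idempotent does not preserve the monomial weight count. The case closes directly over the commutative ring itself:
\begin{itemize}
\item Take $r$ maximal such that some $r\times r$ minor of $A$ is nonzero. If $A=0$, any unit vector is already a weight-one codeword.
\item Form the cofactor vector $c$ of an $r\times(r+1)$ submatrix containing that minor.
\item $c\neq 0$, because its entry at the added column is the chosen minor.
\item The rows of $Ac$ indexed by the submatrix's rows vanish by the repeated-row identity.
\item Every other row of $Ac$ is an $(r+1)\times(r+1)$ minor of $A$, hence zero by maximality of $r$.
\end{itemize}
So $c$ is a nonzero codeword of weight at most $(r+1)\,r!\le(m_a+1)!$. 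When $r=m_a$ this argument also recovers the nondegenerate case.
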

    
    \begin{proof}
    When $A$ is monomial, every entry of $A_{[m_a],S\setminus\{j\}}$ is a single group element, so
    \begin{equation}
    \operatorname{perm}\!\bigl(A_{[m_a],S\setminus\{j\}}\bigr)\;=\;\sum_{\sigma\in\mathrm{Sym}(m_a)}\,\prod_{i=1}^{m_a}A_{i,\sigma(i)}
    \end{equation}
    is a sum of $m_a!$ monomials with $\F_2$-weight at most $m_a!$. Summing over the $m_a+1$ indices in $S$ in \eqref{eq:perm-weight} bounds the right-hand side by $(m_a+1)\cdot m_a!=(m_a+1)!$, and minimising over $S$ in Theorem~\ref{thm:perm-poly} yields \eqref{eq:fact-bound}.
    \end{proof}
    
    \begin{remark}[The monomial ceiling and the role of polynomial entries]\label{rem:poly-vs-mono}
    Corollary~\ref{cor:fact-bound} caps monomial QC-LDPC codes at $d\le 6$ for $m_a=2$, $d\le 24$ for $m_a=3$, and $d\le 120$ for $m_a=4$, irrespective of the lifting factor $\ell$. \emph{Polynomial entries are the lever that breaks this ceiling}: each weight-$w$ entry contributes $w$ rather than $1$ monomials to the permanent expansion, so the right-hand side of \eqref{eq:thm-wmat} can scale up to $(m_a+1)!\cdot\prod_{ij}W(A)_{ij}$ in the worst case, while \eqref{eq:thm-poly} can shrink it further only through $\F_2$-cancellations among those monomials. Concretely, \cite{Smarandache_2012} exhibits a type-$2$ $(3,4)$-regular QC-LDPC code with $d=32>24=(3+1)!$ and a type-$3$ one with $d=54$ at the same regularity. Our pipeline therefore samples over polynomial entries from the outset, restricted only by the desired check weight and by the girth obstructions of Theorem~\ref{thm:girth-bounds} below.
    \end{remark}
    
    \begin{theorem}[Girth bounds from weight-matrix substructure {\cite[Thm.~18]{Smarandache_2012}}]\label{thm:girth-bounds}
    Let $g(A_{\mathrm{bin}})$ denote the girth of the Tanner graph of $A_{\mathrm{bin}}$. Whenever $W(A)$ admits one of the substructures below (modulo row and column permutations and transposition), the girth is bounded accordingly:
    \begin{equation}\label{eq:girth-bounds}
    [\,3\,]\;\Rightarrow\;g\le 6,\qquad
    [\,2,\;2\,]\;\Rightarrow\;g\le 8,\qquad
    \bigl[\begin{smallmatrix}1&1\\ 1&1\end{smallmatrix}\bigr]\;\Rightarrow\;g\le 10,\qquad
    \bigl[\begin{smallmatrix}1&1&1\\ 1&1&1\end{smallmatrix}\bigr]\;\Rightarrow\;g\le 12.
    \end{equation}
    \end{theorem}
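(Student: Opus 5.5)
The plan is to exhibit, for each listed substructure, an explicit closed walk in the Tanner graph of $A_{\mathrm{bin}}=\Lrep{A}$ of the stated length. A bit of the lifted code is indexed by a pair $(j,x)$ with $j\in[n_a]$ and $x\in G$; a check is indexed by a pair $(i,y)$. The bit $(j,x)$ is adjacent to the check $(i,gx)$ for every monomial $g$ in $A_{ij}$. A closed walk therefore alternates between entries of the base matrix, choosing one monomial at each visit. Write $g_1,g_2,\ldots$ for the monomials chosen along the walk, each with the sign $\pm1$ determined by whether the step goes bit$\to$check or check$\to$bit. The walk closes exactly when the alternating product $g_1 g_2^{-1} g_3 g_4^{-1}\cdots$ equals $e$. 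Since $G$ is abelian, this is the condition that the signed sum of exponents vanishes.

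The key step is to design, for each substructure, a walk of the stated length whose alternating product cancels identically because of commutativity, without any assumption on which monomials appear.

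\emph{Case $[3]$}, an entry with three monomials $a,b,c$, giving $g\le 6$: traverse $A_{ij}$ six times in the order $a,b,c,a,b,c$, with the signs arranged so that the product is $a b^{-1} c a^{-1} b c^{-1}=e$.

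\emph{Case $[2,2]$}, two entries $\{a,b\}$ and $\{c,d\}$ in the same row, giving $g\le 8$: use the length-8 word $a b^{-1} c d^{-1} b a^{-1} d c^{-1}$.

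\emph{Case $\left[\begin{smallmatrix}1&1\\1&1\end{smallmatrix}\right]$}, a $2\times 2$ block of monomials $p,q,r,s$, giving $g\le 10$: the basic 4-cycle $p q^{-1} s r^{-1}$ need not close. Instead, take a length-10 walk that traverses the block so that every monomial appears equally often with each sign, e.g.\ combining the 4-cycle word with its reverse together with a 2-step backtrack-free detour, so that the total signed count of each letter is zero. The $2\times 3$ all-ones case, giving $g\le 12$, is handled analogously with a length-12 word over the six monomials in which each appears once with each sign.

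In each case I will write down the sequence of base-matrix positions explicitly, check that consecutive positions share a row or a column as required by the alternation, and check that each letter occurs once with $+$ and once with $-$, so the product is $e$ in the abelian group.

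The main obstacle is showing that the closed walk actually contains a \emph{cycle} of length at most the stated bound, rather than being a degenerate walk that backtracks or revisits vertices. The standard argument I will use is the following. A closed walk that is not a tree-like backtrack contains a cycle of length at most its own length. Non-backtracking holds because consecutive edges use distinct monomials or distinct base-matrix positions. I will also need to rule out the case where the entire walk collapses onto a tree, i.e.\ where the walk retraces itself. This is exactly where the precise ordering of letters matters, and I would borrow the ordering from \cite[Thm.~18]{Smarandache_2012}, citing it for the verification of the $2\times 3$ case if the direct check is messy. Transposition and row/column permutations are harmless, since they correspond to relabeling the Tanner graph.
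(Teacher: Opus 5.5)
The genuine gap is the $\bigl[\begin{smallmatrix}1&1\\1&1\end{smallmatrix}\bigr]$ case. It cannot be repaired, because that item of the statement is false as written.

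Restricted to this block, the base graph is a single $4$-cycle. So every non-backtracking closed walk using only these four entries is that $4$-cycle traversed $m$ times in one direction. Each letter then has net signed count $\pm m\neq 0$, and no word in which every monomial appears equally often with each sign exists. Your construction fails concretely:
\begin{itemize}
\item concatenating the $4$-cycle word with its reverse backtracks at the junction and collapses to the empty walk;
\item there is no backtrack-free detour inside the block, since each entry is a single monomial;
\item the hypothesis says nothing about the other entries of $W(A)$, so you cannot borrow them.
\end{itemize}
Explicitly, let $G=\langle x\rangle$ be cyclic of order $L\ge 3$ and $A=\bigl[\begin{smallmatrix}1&1\\1&x\end{smallmatrix}\bigr]$. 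Every node of the Tanner graph has degree $2$. The walk bit $(1,t)$, check $(1,t)$, bit $(2,t)$, check $(2,xt)$, bit $(1,xt)$, \dots\ shows the graph is one cycle of length $4L>10$.

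Your method works exactly when the base graph contains a subgraph of cycle rank two: a theta graph, or two cycles sharing a vertex. The commutator-type walk on such a subgraph has zero signed count for every letter, giving lengths $2(1+1+1)=6$ for $[3]$, $2(2+2)=8$ for $[2,2]$, and $2(2+2+2)=12$ for the $2\times 3$ block.

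The substructure that genuinely forces $g\le 10$, and presumably the one intended from the cited theorem, is $\bigl[\begin{smallmatrix}2&1\\1&1\end{smallmatrix}\bigr]$. Take $a,b\in A_{11}$, $A_{12}=p$, $A_{21}=q$, $A_{22}=r$. The base graph is then a theta graph with paths of lengths $1,1,3$ between $v_1$ and $c_1$. The word $a\,b^{-1}\,q\,r^{-1}\,p\,a^{-1}\,b\,p^{-1}\,r\,q^{-1}$ is a cyclically non-backtracking closed walk of length $10$ that uses each letter once with each sign.

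Two smaller points:
\begin{itemize}
\item Your $[2,2]$ word $ab^{-1}cd^{-1}ba^{-1}dc^{-1}$ is only consistent when $\{a,b\}$ and $\{c,d\}$ lie in the same \emph{column}. After $ab^{-1}$ you are back on a bit of the column containing $a,b$, and cannot leave via $c$. For two entries in one row, use $b\,c^{-1}\,d\,b^{-1}\,a\,d^{-1}\,c\,a^{-1}$.
\item The ``collapse onto a tree'' you flag as the main obstacle is already excluded by cyclic non-backtracking, including the wrap-around step. A closed walk inside a tree must reverse the edge at some leaf of the subtree it spans. The real obstacle is whether the word exists at all, and that fails precisely in the case above.
\end{itemize}
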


\begin{definition}[Check weight of the LP code]\label{def:check-weight}
For a weight matrix $W(A)$, write $R_i(A)\coloneqq\sum_j W(A)_{ij}$ and $C_j(A)\coloneqq\sum_i W(A)_{ij}$ for its row and column sums. The \emph{check weight} of the lifted product code built from $(A,B)$ is the largest Hamming weight of any stabilizer generator; using $\max_{i,j}\bigl(f(i)+g(j)\bigr)=\max_i f(i)+\max_j g(j)$, it equals
\begin{equation}\label{eq:check-weight}
w_{\mathrm{ch}}(A,B)\;=\;\max\!\Bigl\{\,\max_i R_i(A)+\max_j C_j(B),\;\;\max_j C_j(A)+\max_i R_i(B)\,\Bigr\}.
\end{equation}
In the abelian case $B=A$ this reduces to $w_{\mathrm{ch}}=R+C$, where $R\coloneqq\max_i R_i(A)$ and $C\coloneqq\max_j C_j(A)$.
\end{definition}

\begin{theorem}[Distance ceiling for LP codes at check weight $\le 9$ with $2\times 4$ classical base matrix]\label{thm:2x4-w9-ceiling}
Let $A$ be a $2\times 4$ ring matrix over a commutative group algebra $\F_2[G]$, paired with itself ($B=A$) in the LP construction. Write $W=W(A)$, $R_i\coloneqq \sum_jW_{ij}$, $C_j\coloneqq \sum_iW_{ij}$, and set
\[R\;\coloneqq \;\max_{i\in\{0,1\}}R_i,\qquad C\;\coloneqq \;\max_{j\in\{0,1,2,3\}}C_j.\]
If the check weight of the LP code satisfies
\begin{equation}\label{eq:2x4-w9-hyp}
w_{\mathrm{ch}}\;=\;\max_{(i_a,i_b)\in\{0,1\}\times\{0,1,2,3\}}\!\bigl(R_{i_a}+C_{i_b}\bigr)\;=\;R+C\;\le\;9,
\end{equation}
Then the distance of the LP code satisfies $d_q\leq 14$.
\end{theorem}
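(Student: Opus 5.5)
The plan is to reduce the quantum statement to a classical one, and then to a finite combinatorial problem about the integer weight matrix $W$.

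\emph{Step 1 (reduction to the classical code).} By Theorem~\ref{thm:dc_ub_dq}, with $B=A$, we have $d_q\le d(A_{\mathrm{bin}})$, provided $\Lrep{A}$ has full row rank. This holds for every base matrix the pipeline retains (Remark~\ref{rem:applicability}). If the rank condition is not available, I would still try to mimic the proof of Theorem~\ref{thm:dc_ub_dq}: in the abelian case, check directly that the candidate operator built from a permanent codeword of $A$ is not an $X$-stabilizer. I flag this as a possible gap to settle, and otherwise state the rank condition explicitly as a standing hypothesis.

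\emph{Step 2 (weight-matrix form of the bound).} It then suffices to bound $d(A_{\mathrm{bin}})$ via Theorem~\ref{thm:perm-wmat}. For a $2\times4$ matrix, each admissible $S$ has $|S|=3$ and each summand is a $2\times2$ permanent. Grouping terms gives the closed form
\[
\Phi(S)\coloneqq\sum_{j\in S}\operatorname{perm}\bigl(W_{[2],S\setminus\{j\}}\bigr)=\sum_{\substack{k\neq l\\ k,l\in S}}W_{0k}W_{1l}=\Bigl(\sum_{k\in S}W_{0k}\Bigr)\Bigl(\sum_{k\in S}W_{1k}\Bigr)-\sum_{k\in S}W_{0k}W_{1k}.
\]
So the task becomes: for every nonnegative integer $2\times4$ matrix with $R+C\le 9$, exhibit a column $j$ to drop with $S=[4]\setminus\{j\}$ and $0<\Phi(S)\le14$. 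If $\Phi(S)=0$ for all $S$, the matrix is degenerate: a row vanishes or the support is tiny. In that case I would instead write down a small explicit codeword, for instance from a zero entry or a weight-$1$ column, or use the polynomial version, Theorem~\ref{thm:perm-poly}.

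\emph{Step 3 (the finite optimisation).} Increasing entries of $W$ can only increase each product term. So one natural route is to restrict to maximal matrices for each split $(R,C)\in\{(1,8),\dots,(8,1)\}$ and choose $j$ greedily. For example, drop the column maximising $W_{0j}W_{1j}$-adjusted mass, so the remaining row sums $a=R_0-W_{0j}$ and $b=R_1-W_{1j}$ are small. For instance, when $C\le 4$ and $R\le5$, dropping the heaviest column should force $ab-\sum_{k\in S}W_{0k}W_{1k}\le14$. When $R$ is small the product $ab\le R^2$ is already small. When $C$ is large relative to $R$, the row sums $R\le 9-C$ dominate and again bound $ab$.

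\emph{Main obstacle.} The delicate part is Step 3. Monotonicity is not literally valid for the $\min^{*}$ (nonzero) version, and the diagonal subtraction interacts with the choice of column. So I expect a genuine case split on $(R,C)$ and on the column profile. Since the search space is finite (entries at most $\min(R,C)$), I would first verify the claim by exhaustive enumeration. I would then extract a human-readable case argument from the extremal matrices, which should be those attaining $14$, for example with row sums $(5,5)$ and column sums $(3,3,2,2)$ at $R+C=9$.
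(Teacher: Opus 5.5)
Your caution in Step~1 is justified: the paper's proof simply uses $d_q\le d(A_{\mathrm{bin}})$, which via Theorem~\ref{thm:dc_ub_dq} needs $\Lrep{A}$ to have full row rank. Your closed form for $\Phi(S)$ is also correct.

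The genuine gap is the reduction in Step~2. It is not true that every admissible $W$ has some $S$ with $0<\Phi(S)\le 14$, with explicit codewords needed only when all $\Phi(S)$ vanish. Take $A=\begin{pmatrix}a&b&0&0\\0&0&c&d\end{pmatrix}$ with $a,b,c,d$ of weight $3$, so
\[
W=\begin{pmatrix}3&3&0&0\\0&0&3&3\end{pmatrix},\qquad R=6,\quad C=3,\quad R+C=9 .
\]
This $W$ is maximal in your sense, yet $\Phi(S)=6\cdot 3-0=18$ for all four $S$. Theorem~\ref{thm:perm-poly} does not rescue it either. The permanent codeword from the first three columns is $(bc,ac,0,0)=c\,(b,a,0,0)$, generically of weight $18$, and the other choices of $S$ behave the same way. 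So your exhaustive enumeration would produce a configuration that neither permanent theorem settles, and the plan has no tool for it.

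The missing idea is a codeword that is not a $2\times 3$ permanent codeword at all. If $A_{1k}=A_{1l}=0$, place $A_{0l}$ in column $k$ and $A_{0k}$ in column $l$. By commutativity and characteristic $2$ this is a codeword of weight at most $W_{0k}+W_{0l}$; if it is zero, there is a zero column and hence a weight-$1$ codeword. In the example this gives $(b,a,0,0)$ of weight at most $6$, which is exactly how the paper disposes of this case. The same construction also covers your degenerate cases, since $\Phi(S)=0$ forces some row to vanish on at least two columns of $S$.

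To isolate such exceptions without brute force, the paper uses that each pair of columns lies in exactly two of the four $S$. Hence $\sum_S\Phi(S)=2(R_0R_1-D)$ with $D=\sum_iW_{0i}W_{1i}$, so some $\Phi(S)\le 14$ unless $R_0R_1-D\ge 30$. In that case $R^2\ge R_0R_1\ge 30$ and $R_0R_1\le (R_0+R_1)^2/4\le 4C^2$ force $R=6$ and $C=3$. That leaves only a handful of column profiles to check by hand, of which the block pattern above is the only failure.

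The critical region is also not the row sums $(5,5)$ you guessed; there averaging already gives $\min_S\Phi(S)\le 12$. It is $R_0=R_1=6$, where $14$ is attained, e.g.\ by the column profile $(2,1),(2,1),(1,2),(1,2)$.
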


\begin{proof}
We prove it by showing that the classical QC-LDPC code $\mathcal C_A=\ker(A_{\mathrm{bin}})$ satisfies $d(A_{\mathrm{bin}})\le14$; since $d_q\le d(A_{\mathrm{bin}})$, this gives $d_q\le14$.

For each size-$3$ column subset $S\subset\{0,1,2,3\}$ write
\[\operatorname{perm}_S\coloneqq\sum_{j\in S}\operatorname{perm}\bigl(W_{\{0,1\},\,S\setminus\{j\}}\bigr).\]

Set $D\coloneqq \sum_iW_{0i}W_{1i}\ge 0$. Each unordered pair $\{i,j\}\subset\{0,1,2,3\}$ lies in exactly two of the $\binom{4}{3}=4$ size-$3$ subsets, and the $2\times2$ permanent of columns $\{i,j\}$ equals $W_{0i}W_{1j}+W_{1i}W_{0j}$, so
\begin{equation}\label{eq:2x4-avg-sum}
\sum_{S}\operatorname{perm}_S=2\sum_{i<j}\bigl(W_{0i}W_{1j}+W_{1i}W_{0j}\bigr)=2(R_0R_1-D),
\end{equation}
where the last equality uses $\sum_{i\ne j}W_{0i}W_{1j}=R_0R_1-D$. Hence
\begin{equation}\label{eq:2x4-min-le-avg}
\min^{*}_{S}\operatorname{perm}_S\;\le\;\frac{1}{4}\sum_S\operatorname{perm}_S=\frac{R_0R_1-D}{2}.
\end{equation}

\textbf{Case 1: $R_0R_1-D\le 28$.} Inequality \eqref{eq:2x4-min-le-avg} gives $\min^{*}_S\operatorname{perm}_S\le 14$, and Theorem~\ref{thm:perm-wmat} yields $d(A_{\mathrm{bin}})\le 14$.

\textbf{Case 2: $R_0R_1-D\ge 30$.} We show that the configurations satisfying this inequality together with \eqref{eq:2x4-w9-hyp} are extremely restricted, and in each one either $\min^{*}_S\operatorname{perm}_S\le 14$ holds anyway or $A$ admits a size-$2$ codeword of Hamming weight $\le 6$.

In the following, we will focus on discussing the second case:

Since $D\ge 0$, $R_0R_1\ge 30$, and $R_0,R_1\le R$ implies $R^2\ge 30$, so $R\ge 6$. The total-weight identity gives $R_0+R_1=\sum_jC_j\le 4C$. Therefore,
\[
R_0R_1\le\tfrac{(R_0+R_1)^2}{4}\le 4C^2,\]
hence $4C^2\ge 30$, so $C\ge 3$. As check-weight $w_{\mathrm{ch}}=R+C\le 9$ and $R_0R_1\ge 29$, the only possible configuration is $R=6$ and $C=3$.

Since $R_0+R_1\ge 2\sqrt{R_0R_1}\ge 2\sqrt{30}>10$ and $R_0+R_1\le 4C=12$, the only possibilities are $R_0+R_1\in\{11,12\}$, which we treat in turn.

For $R_0+R_1 = 11$, without loss of generality, we assume $R_0=5$, $R_1=6$, so $R_0R_1=30$ and the constraint $R_0R_1-D\ge 30$ gives $D= 0$.  The only column multiset consistent with $R_0=5$, $R_1=6$, $D=0$, and $\max_jC_j=3$ is $\{(3,0)^\top,(2,0)^\top,(0,3)^\top,(0,3)^\top\}$. Direct computation on this multiset gives the triplet-sum multiset $\{12,15,15,18\}$, so $\min^{*}_S\operatorname{perm}_S=12$ and Theorem~\ref{thm:perm-wmat} yields $d(A_{\mathrm{bin}})\le 12<14$.

For $R_0+R_1 = 12$, as each row sum $\le R=6$, then $R_0=R_1=6$. Also since $\sum_jC_j=12=4C$ each column sum is exactly $3$. Let $w_i\coloneqq W_{0i}\in\{0,1,2,3\}$, we have $W_{1i}=3-w_i$ and $\sum_iw_i=6$. The partitions of $6$ into four ordered parts in $\{0,1,2,3\}$ are, up to permutation, the five listed below where in each case $D=18-\sum_iw_i^2$, and the multi-set of the four triplet sums $\{\operatorname{perm}_S\}_{|S|=3}$ is invariant under permutations of $W$'s columns, so a direct computation on a single representative gives:
\begin{center}\renewcommand{\arraystretch}{1.25}
\begin{tabular}{c|c|c|c}
$(w_i)$ partition & $D$ & $\{\operatorname{perm}_S\}_{|S|=3}$ & $\min^{*}_S\operatorname{perm}_S$ \\\hline
$(2,2,1,1)$ & $8$ & $\{14,14,14,14\}$ & $14$ \\
$(3,2,1,0)$ & $4$ & $\{14,14,18,18\}$ & $14$ \\
$(2,2,2,0)$ & $6$ & $\{12,16,16,16\}$ & $12$ \\
$(3,1,1,1)$ & $6$ & $\{12,16,16,16\}$ & $12$ \\
$(3,3,0,0)$ & $0$ & $\{18,18,18,18\}$ & $18$ \\
\end{tabular}
\end{center}
For the first four partitions, $\min^{*}_S\operatorname{perm}_S\le 14$, so Theorem~\ref{thm:perm-wmat} directly gives $d(A_{\mathrm{bin}})\le 14$. The remaining partition $(3,3,0,0)$ has $D=0$, which forces $W_{0i}W_{1i}=0$ for every $i$: each column of $W$ has at most one nonzero entry, and with $R_0=R_1=6$ and column sums $3$, the matrix $W$ is, up to column permutation,
\begin{equation}\label{eq:abelian-2x4-pattern-III-proof}
W=\begin{pmatrix}3&3&0&0\\ 0&0&3&3\end{pmatrix}.
\end{equation}
The Tanner graph of a code with this $W$ splits into at least two connected components, and a codeword supported on a single component, which corresponds to a $1\times 2$ check matrix, already bounds the distance of the whole code. As each connected component corresponds to a check matrix $A^{sub}$ with weight matrix
\begin{equation}
    W_{sub} = \begin{pmatrix}3&3\end{pmatrix},
\end{equation}
applying the distance bound on it gives
\begin{equation}
    d(A_{\mathrm{bin}})\leq d(A^{sub}_{\mathrm{bin}})\leq 3+3 = 6 <14.
\end{equation}
Combining all these cases, we have $d(A_{\mathrm{bin}})\le 14$. Thus,
\begin{equation}
    d_q \leq d(A_{\mathrm{bin}}) \leq 14.
\end{equation}
This completes the proof.
\end{proof}

\begin{remark}
    We have found $\llbracket 560,112,\leq14 \rrbracket $ codes that reach this $d=14$ theoretical upper bound as shown in Table~\ref{tab:code_param}.
\end{remark}

\renewcommand{\NSMxskcpreff}{\tilde{N}_{x,\mathrm{RREF}}'}
\newcounter{sqetchalgorithm}

\section{\texorpdfstring{\textsf{sQetch}}{sQetch}: A fast GPU-based  CSS-code distance estimator}
\label{app:sqetch}

We introduce \textsf{sQetch}~\cite{knitkitgithub}, a very fast GPU-based algorithm for estimating the minimum distance of a quantum CSS code. Like the QDistRnd estimator~\cite{Pryadko_2022}, \textsf{sQetch} is a form of random information-set decoding (ISD). Its key ingredient is that each trial operates on a random low-dimensional \emph{sketch} of a check-matrix null space rather than on the full null space. This both reduces the cost per trial and maps naturally to GPU shared memory. On a single NVIDIA RTX 5090, \textsf{sQetch} achieves a speedup of roughly $100{,}000\times$ over the CPU-based QDistRnd baseline, making it feasible to explore the vast design space of high-rate fault-tolerant processors, in which viable candidates are extremely sparse. The same estimator also offers significant acceleration in the search over surgery gadgets and the estimation of circuit-level distance.

\subsection{Notation}
\label{app:sq:setup}

We consider a quantum CSS code with parameters $\llbracket n,k\rrbracket$ and check matrices $\Hx,\Hz$ satisfying $\Hx\Hz^\top=0$ (Appendix~\ref{app:prelim}). Its distance is $d=\min(\dx,\dz)$. Here, $\dx$ is the minimum Hamming weight of a nontrivial logical $X$ operator, namely a vector in $\ker(\Hz)\setminus\rowspan(\Hx)$, whereas $\dz$ is the minimum Hamming weight of a nontrivial logical $Z$ operator, namely a vector in $\ker(\Hx)\setminus\rowspan(\Hz)$, as defined in Appendix~\ref{app:distance_bounds}.
Let the null spaces be:
\begin{equation}
    \NSx \coloneqq \ker(\Hx),
    \qquad
    \NSz \coloneqq \ker(\Hz),
\end{equation}
with dimensions
\begin{equation}
    \kerx \coloneqq \dim(\NSx)=n-\rank(\Hx),
    \qquad
    \kerz \coloneqq \dim(\NSz)=n-\rank(\Hz),
\end{equation}
respectively. After fixing a basis for each null space, let $\NSMx\in\F_2^{\kerx\times n}$ and $\NSMz\in\F_2^{\kerz\times n}$ be the matrices whose rows are these basis vectors. Thus,
\begin{equation}
    \rowspan(\NSMx)=\NSx,
    \qquad
    \rowspan(\NSMz)=\NSz.
\end{equation}
Because $\ker(\NSMz)=\rowspan(\Hz)$, a vector $v\in\NSx$ represents a nontrivial logical $Z$ operator exactly when $\NSMz v\neq0$. Similarly, a vector $u\in\NSz$ represents a nontrivial logical $X$ operator exactly when $\NSMx u\neq0$. Therefore, \textsf{sQetch} searches over
\begin{equation}
    \label{eq:dx_dz_def2}
    \begin{split}
        \dx &= \min\bigl\{\wt{u}:u\in\NSz,\ \NSMx u\neq0\bigr\},\\
        \dz &= \min\bigl\{\wt{v}:v\in\NSx,\ \NSMz v\neq0\bigr\}.
    \end{split}
\end{equation}
Recall that
\begin{equation}
    k=n-\rank(\Hx)-\rank(\Hz).
\end{equation}
Since $\kerx=k+\rank(\Hz)$, and symmetrically $\kerz=k+\rank(\Hx)$, both null-space dimensions satisfy $\kerx,\kerz\geq k$.

\begin{remark}
    If $\Hz=0$, estimating $\dz$ reduces to estimating the distance of the classical code with parity-check matrix $\Hx$.
\end{remark}

\subsection{The \textsf{sQetch} algorithm}

Computing the exact distance of a general linear code is NP-hard~\cite{641542,Kapshikar_2023}. Instead, \textsf{sQetch} performs many random trials biased toward low-weight codewords. Every nontrivial logical operator found in this way gives an upper bound on the distance, and the running estimate decreases whenever a lower-weight operator is found. The heuristic is co-designed with the GPU. Without loss of generality, we describe the estimation of $\dz$; the same procedure estimates $\dx$ after exchanging $\Hx$ and $\Hz$. The null-space matrix $\NSMx\in\F_2^{\kerx\times n}$, satisfying $\rowspan(\NSMx)=\NSx$, is computed once. Each trial forms a random $\kappa$-row \emph{sketch} $\NSMxsk$ by sampling row indices independently and uniformly with replacement:
\begin{equation}
    i_1,\ldots,i_\kappa
    \stackrel{\mathrm{i.i.d.}}{\sim}
    \mathrm{Unif}\bigl(\{0,\ldots,\kerx-1\}\bigr),
    \qquad
    \NSMxsk[s,:]=\NSMx[i_s,:],
    \quad
    s=1,\ldots,\kappa.
\end{equation}
The GPU constructs $\gamma$ such sketches in parallel, each from an independent random seed. For each sketch $\NSMxsk^{(j)}$, it samples a uniformly random permutation $\pi^{(j)}\in S_n$, permutes the columns to obtain $\NSMxskcp^{(j)}$, and applies Gauss-Jordan elimination to obtain the row-reduced echelon form $\NSMxskcpreff^{(j)}$. Let $P_\pi^{(j)}\subseteq\{0,\ldots,n-1\}$ denote its pivot columns and let
\begin{equation}
    F_\pi^{(j)}
    \coloneqq
    \{0,\ldots,n-1\}\setminus P_\pi^{(j)}
\end{equation}
denote its free columns. Every nonzero row $r$ of the row-reduced matrix has exactly one nonzero entry in $P_\pi^{(j)}$; all its remaining nonzero entries lie in $F_\pi^{(j)}$. Hence,
\begin{equation}
    \wt{r}
    =
    1+\bigl|\supp(r)\cap F_\pi^{(j)}\bigr|.
\end{equation}
Each nonzero row $r_\ell^{(j)}$ represents a vector in $\NSx$ and is a nontrivial logical $Z$ operator precisely when $\NSMz(r_\ell^{(j)})^\top\neq0$. Whenever such a row has weight below the current best value, \textsf{sQetch} lowers its estimate of $\dz$. Algorithm~\ref{alg:sqetch} summarizes one trial.

\begin{figure}[t]
\noindent\rule{\linewidth}{0.6pt}\\[-2pt]
\refstepcounter{sqetchalgorithm}\label{alg:sqetch}
\textbf{Algorithm~\thesqetchalgorithm.}\quad
\textsc{\textsf{sQetch} trial}: one independent sketched-ISD trial for estimating $\dz$.
\\[-4pt]\rule{\linewidth}{0.3pt}\\[1pt]
\textbf{Input:}\;\;
   $\NSMx\in\F_2^{\kerx\times n}$ with $\rowspan(\NSMx)=\NSx=\ker(\Hx)$;\;
   $\NSMz\in\F_2^{\kerz\times n}$ with $\rowspan(\NSMz)=\NSz=\ker(\Hz)$;\;
   sketch size $\kappa\leq\kerx$;\;
   current target weight $d^\star$.\\
\textbf{Output:}\;\;
   the lowest-weight nontrivial logical candidate $v^\star\in\NSx$ found in this trial and its weight $\hat{\dz}$, or $(\bot,+\infty)$ if no such candidate is found; the algorithm also signals \textsc{Found} when $\hat{\dz}<d^\star$.
\\[-4pt]\rule{\linewidth}{0.3pt}\\[2pt]
\begingroup
\setlength{\tabcolsep}{0pt}
\renewcommand{\arraystretch}{1.05}
\noindent
\begin{tabular}{@{}r@{\;\;}l@{}}
\textbf{1:}  & $\pi\gets\textsc{Fisher--Yates}(\{0,\ldots,n-1\})$
                \quad\textit{// uniform random column permutation}\\
\textbf{2:}  & $i_1,\ldots,i_\kappa
                \stackrel{\mathrm{i.i.d.}}{\sim}
                \mathrm{Unif}\bigl(\{0,\ldots,\kerx-1\}\bigr)$\\
\textbf{3:}  & $\NSMxsk\gets
                \bigl[\NSMx[i_s,:]\bigr]_{s=1}^{\kappa}$
                \quad\textit{// row sketch of $\NSMx$}\\
\textbf{4:}  & $\NSMxskcp[:,c]\gets\NSMxsk[:,\pi(c)]$
                for $c=0,\ldots,n-1$
                \quad\textit{// permute columns}\\
\textbf{5:}  & $\mathit{pr}\gets0$\\
\textbf{6:}  & \textbf{for} $c=0,1,\ldots,n-1$ \textbf{do}
                \quad\textit{// column-ordered Gauss--Jordan elimination}\\
\textbf{7:}  & \quad scan rows
                $r\in\{\mathit{pr},\ldots,\kappa-1\}$
                for a $1$ in column $c$ of $\NSMxskcp$\\
\textbf{8:}  & \quad \textbf{if} one is found in row $r^\star$ \textbf{then}\\
\textbf{9:}  & \quad\quad swap rows
                $r^\star\leftrightarrow\mathit{pr}$ of $\NSMxskcp$\\
\textbf{10:} & \quad\quad \textbf{for}
                $r\in\{0,\ldots,\kappa-1\}$ with
                $r\neq\mathit{pr}$ \textbf{do}\\
\textbf{11:} & \quad\quad\quad \textbf{if}
                $\NSMxskcp[r,c]=1$ \textbf{then}
                $\NSMxskcp[r,:]\gets
                \NSMxskcp[r,:]\oplus\NSMxskcp[\mathit{pr},:]$\\
\textbf{12:} & \quad\quad \textbf{end for}\\
\textbf{13:} & \quad\quad $\mathit{pr}\gets\mathit{pr}+1$\\
\textbf{14:} & \quad \textbf{end if}\\
\textbf{15:} & \quad \textbf{if} $\mathit{pr}=\kappa$ \textbf{then break}\\
\textbf{16:} & \textbf{end for}\\
\textbf{17:} & $\NSMxskcpreff\gets\NSMxskcp$\\
\textbf{18:} & $\hat{\dz}\gets+\infty;\;\;v^\star\gets\bot$\\
\textbf{19:} & \textbf{for} each nonzero row $r'$ of $\NSMxskcpreff$ \textbf{do}
                \quad\textit{// undo permutation, test logical nontriviality, and compute weight}\\
\textbf{20:} & \quad define $r\in\F_2^n$ by
                $r[\pi(c)]\gets r'[c]$ for $c=0,\ldots,n-1$\\
\textbf{21:} & \quad \textbf{if}
                $\NSMz r^\top\neq0$ \textbf{and}
                $\wt{r}<\hat{\dz}$ \textbf{then}
                $\hat{\dz}\gets\wt{r};\;\;v^\star\gets r$\\
\textbf{22:} & \textbf{end for}\\
\textbf{23:} & \textbf{if} $\hat{\dz}<d^\star$ \textbf{then}
                signal \textsc{Found};
                \textbf{return} $v^\star,\hat{\dz}$\\
\textbf{24:} & \textbf{return} $v^\star,\hat{\dz}$
\end{tabular}
\endgroup
\\[-3pt]\rule{\linewidth}{0.6pt}
\end{figure}

\subsection{Hit-probability analysis}

Consider a constant-rate CSS code, for which $\kerx=\Theta(n)$, and assume $\dz\ll n$, as is typical in practice. Fix a minimum-weight nontrivial logical $Z$ operator $c^\star$ with $\wt{c^\star}=\dz$. Because the rows of $\NSMx$ form a basis, $c^\star$ has a unique expansion in these rows. Let $j^\star$ be the number of basis rows with nonzero coefficients in this expansion. The span of a $\kappa$-row sketch contains $c^\star$ exactly when all $j^\star$ required basis rows are sampled at least once. Because the rows are sampled independently with replacement, inclusion--exclusion gives
\begin{equation}
    P_{\operatorname{sketch}}
    =
    \sum_{t=0}^{j^\star}
    (-1)^t
    \binom{j^\star}{t}
    \left(1-\frac{t}{\kerx}\right)^\kappa.
\end{equation}
When $j^\star\ll\kappa\ll\kerx$, the leading approximation is
\begin{equation}
    P_{\operatorname{sketch}}
    \approx
    \left(\frac{\kappa}{\kerx}\right)^{j^\star}.
\end{equation}
Conditioned on the sketch span containing $c^\star$, the random column permutation exposes $c^\star$ as a row of $\NSMxskcpreff$ exactly when one element of $\supp(c^\star)$ lies in the pivot set. Under the genericity assumptions that the sketch has full row rank and that its pivot set behaves as a uniformly random $\kappa$-subset of the $n$ columns, the conditional hit probability is
\begin{equation}
    P\bigl(\mathrm{hit}\mid\operatorname{sketch}\bigr)
    \approx
    \frac{\dz\binom{n-\dz}{\kappa-1}}{\binom{n}{\kappa}}.
\end{equation}
For $n\gg\dz$, this becomes
\begin{equation}
    P\bigl(\mathrm{hit}\mid\operatorname{sketch}\bigr)
    \approx
    \frac{\dz\kappa}{n}
    \left(1-\frac{\dz}{n}\right)^{\kappa-1}.
\end{equation}
Therefore,
\begin{equation}
    P_{\mathrm{hit}}
    =
    P\bigl(\mathrm{hit}\mid\operatorname{sketch}\bigr)
    P_{\operatorname{sketch}}
    \approx
    \left(\frac{\kappa}{\kerx}\right)^{j^\star}
    \frac{\dz\kappa}{n}
    \left(1-\frac{\dz}{n}\right)^{\kappa-1}.
\end{equation}

\subsection{Benchmark}

We benchmark \textsf{sQetch} against QDistRnd~\cite{Pryadko_2022}, a state-of-the-art CPU distance estimator, on the workload relevant to our high-rate qLDPC processor discovery pipeline. Figure~\ref{fig:sqetch_runtime} reports the wall-clock time required to screen approximately $10^7$ candidate codes using approximately $10^5$ random-ISD trials per code, or about $10^{12}$ trials in total. The benchmark compares a single GPU with a high-end CPU as a function of code size.

\begin{figure}[t]
    \centering
    \includegraphics[width=0.85\linewidth]{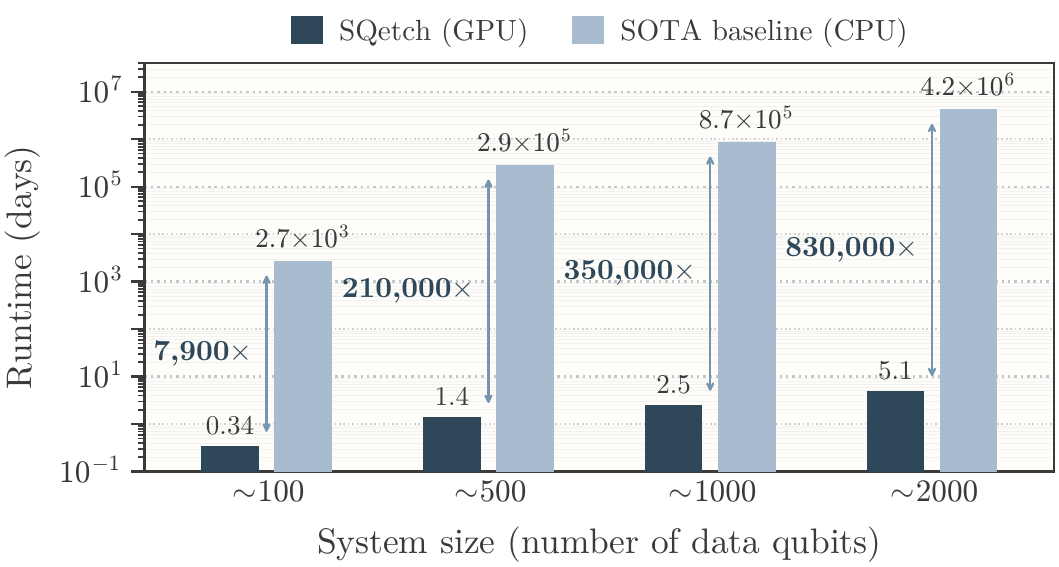}
    \caption{Wall-clock time, in days on a logarithmic scale, required to screen approximately $10^7$ candidate qLDPC codes using approximately $10^5$ distance-estimation trials per code, or approximately $10^{12}$ trials in total, as a function of code size. Dark bars show \textsf{sQetch} running on a single GPU; light bars show the state-of-the-art CPU baseline QDistRnd~\cite{Pryadko_2022}. \textsf{sQetch} completes the full workload in under one week at every size shown, whereas QDistRnd requires years at approximately $100$ data qubits and millennia at approximately $2000$ data qubits. The resulting speedup is of order $10^5\times$ and increases with system size.}
    \label{fig:sqetch_runtime}
\end{figure}

The workloads in Figure~\ref{fig:sqetch_runtime} reflect a realistic search. Good CSS codes are extremely sparse in their natural design spaces. For the lifted-product family with a fixed group, base-matrix shape, and check weight, only a fraction between $10^{-5}$ and $10^{-3}$ of the candidates that survive inexpensive classical prefilters have a useful distance. One must therefore sample on the order of $10^6$--$10^7$ candidates to find good codes reliably. For each candidate, the probability that a single random-ISD trial finds a minimum-weight logical operator is of a comparable order, motivating a budget of approximately $10^5$ trials per code. The resulting total of approximately $10^{12}$ trials is therefore the natural scale of one discovery run. At this scale, \textsf{sQetch} reduces a computation that would require years to millennia of QDistRnd CPU time to a few GPU-days, making the automated pipeline of Appendix~\ref{app:knitter_pipeline} practical.

\subsection{Hook-error-free syndrome extraction}
\label{app:hookfree}

A fixed pair of check matrices $H_X$ and $H_Z$ can be implemented by many orderings of the two-qubit gates in a syndrome-extraction (SE) cycle. The ordering matters because a fault on an ancilla can propagate through subsequent gates to several data qubits, producing a \emph{hook error}. If the resulting spacetime fault is undetected and flips a logical observable, it can reduce the circuit-level distance.

In a mitten code, a ring entry such as $a=g_1+g_2+g_3$ in the check matrices of \cref{eq:HxHz} corresponds to three separate group-element layers. On an atom-array platform, each layer associated with a group element $g_i$ is implemented by a permutation (\cref{app:connectivity-atomarrays}). We therefore preserve the group-element layers instead of applying an arbitrary Tanner-graph coloring, and search only over their order in the SE cycle. Each of the $X$- and $Z$-check halves contains $12$ layers: six left-regular layers acting on the data blocks $D_1,\ldots,D_4$ and six right-regular layers acting on $D_5$. To minimize the SE-cycle time, we keep the left- and right-regular sections separate and permute only the six group-element layers within each section.

For each candidate ordering, we compile a two-round memory circuit with Stim~\cite{Gidney_2021} and extract its detector-error model. We convert this model into a spacetime check/logical pair $(H,L)$: columns correspond to circuit fault mechanisms, rows of $H$ encode detector parities, and rows of $L$ indicate which logical observables each mechanism flips. A fault pattern $e$ is undetected when $He=0$ and is harmful when, in addition, $Le\neq0$. We use \textsf{sQetch} to estimate the distance of this spacetime code, which is the circuit-level distance of the memory experiment, and discard schedules with low estimated distance. Because the spacelike logical operators of the spacetime code include the logical operators of the underlying quantum code, the circuit-level distance is upper-bounded by the code distance. We find distance-preserving schedules for all mitten codes except the $\llbracket150,30,10\rrbracket$ code. For that code, the best schedule contains weight-$8$ fault patterns in both the $X$ and $Z$ blocks, giving $d_{\mathrm{circ}}=8<10$. The exact schedules and corresponding SE-cycle visualizations are available in Ref.~\cite{knitkitgithub}. Further details of the schedule search are given in Table~\ref{tab:hook-free-perms}.

\begin{table}[t]
\centering

\setlength{\tabcolsep}{4pt}
\renewcommand{\arraystretch}{1.1}
\begin{tabular}{@{}l c c c c r@{}}
\toprule
\textbf{$\llbracket n,k,d \rrbracket $} & \textbf{group structure} & \textbf{circuit-level distance} & \textbf{x-block} & \textbf{z-block} & \textbf{trials/side} \\
\midrule
$\llbracket 150,30,10 \rrbracket $   & $C_5 \times S_3$                            & \textbf{8}  & 8  & 8  & 60M  \\
$\llbracket 200,40,12 \rrbracket $   & $C_4 \times D_{10}$                         & \textbf{12} & 12 & 12 & 60M  \\
$\llbracket 300,60,14 \rrbracket $   & $C_{10} \times S_3$                         & \textbf{14} & 14 & 14 & 60M  \\
$\llbracket 500,100,16 \rrbracket $   & $C_5 \rtimes C_{20}$           & \textbf{16} & 19 & 16 & 150M  \\
$\llbracket 540,108,18 \rrbracket $  & $C_9 \rtimes C_{12}$             & \textbf{18} & 28 & 18 & 60M  \\
$\llbracket 630,126,\leq 20 \rrbracket $  & $C_7 \rtimes C_{18}$         & \textbf{20} & 27 & 20 & 150M \\
$\llbracket 780,156,\leq 22 \rrbracket $  & $C_{13} \times A_4$ & \textbf{22} & 38 & 22 & 150M \\
$\llbracket 975,195,\leq 24 \rrbracket $  & $C_{13} \rtimes C_{15}$                 & \textbf{24} & 44 & 28 & 150M \\
\bottomrule
\end{tabular}
\caption{Verification of circuit-level distance for our syndrome-extraction schedules for our mitten codes. The reported circuit-level distance is the upper bound on the spacetime distance of the two-round memory circuit, estimated under single-qubit Pauli faults and defined as the minimum of the distance of the $X$ and $Z$ blocks. These results indicate that the syndrome-extraction schedules preserve the code distance in all cases except for the $\llbracket 150,30,10\rrbracket$ code, whose circuit-level distance is two less than the code distance.
}
\label{tab:hook-free-perms}
\end{table}

\section{Decoding}
\label{app:decoding}
In this section, we describe our telescoping decoder and provide additional details on how we performed our memory and surgery simulations. We also compare the performance of our decoder to the current state-of-the-art decoders for qLDPC codes and perform a worst-case analysis to demonstrate the capability of our decoder to support sub-millisecond real-time decoding when ported over to FPGAs.

\subsection{Telescoping decoder construction}
Our decoding infrastructure is designed to meet two requirements. First, we want to understand the best logical error rates mitten codes can achieve down to the low logical error rate (LER) regime. This requires many Monte Carlo samples and hence a decoding pipeline with high throughput. Second, we want these estimates to reflect a realistic decoding stack that in principle can be optimized to enable real-time decoding. To satisfy both, we adopt a telescoping design \cite{delfosse2020hierarchicaldecodingreducehardware, zhao2026towards, toshio2025decoderswitchingbreakingspeedaccuracy, ravi2022betterworstcasedecodingquantum} in which fast, high-throughput decoders handle the vast majority of shots, and progressively more expensive decoders are invoked only on the small residual of hard shots that earlier tiers defer.

Concretely, our telescoping decoder consists of four stages. The first two run on the GPU, using custom CUDA kernels~\cite{cuda} that implement layered variants of belief propagation (BP) and Relay BP~\cite{mueller2025improvedbeliefpropagationsufficient}; by decoding large batches of shots concurrently, these stages absorb the vast majority of the Monte Carlo samples at high throughput. The third stage runs on the CPU, using custom C kernels that implement serial variants of BP and Relay-BP, and is applied only to the harder shots the GPU tiers defer. The final stage solves the most-likely-error (MLE) decoding problem through integer programming with Gurobi~\cite{gurobi}, and is reserved for the hardest shots that all belief-propagation stages leave undecided. Across all of our decoding experiments we construct the full correlated $XYZ$ detector-error model (DEM), but each stage decodes the representation of this model best suited to its role in our telescoping design. One representation restricts the decoding graph to the initialization-basis detectors, giving a much smaller graph and hence higher throughput. The other is the GARI transform \cite{gari_paper}, which adds auxiliary error and check nodes to remove the short cycles that $Y$ errors induce. We find this improves BP convergence and lowers the LER, and although GARI increases the number of nodes, it reduces the number of edges in the decoding graph by roughly a factor of three for our codes. This selective choice of decoding graph per stage is a key feature of our design that lets us trade off between throughput and accuracy without altering the decoders themselves. We now describe each of the stages in more detail. 

\textit{Stage 1 (GPU BP).}
Stage 1 $(S_1)$ decodes the initialization-basis restriction of the detector-error model. We tune it for maximum convergence without introducing logical errors. Depending on the code and the decoding experiment, between $80\%$ and $99.6\%$ of all shots converge in this stage for physical error rates less than or equal to $p=0.2\%.$ Every shot receives the same fixed number of BP iterations, after which we check for convergence. The first few iterations use the exact sum-product update and the remainder use the cheaper min-sum approximation. We find that this improves convergence and is worth the extra cost of needing to invoke the GPU's special function unit for the sum-product updates. Within each iteration the message updates are layered. BP passes messages between the check nodes (detectors) and the error nodes of the decoding graph. We partition the check nodes into groups, and each group, together with the error nodes its checks touch, forms a layer. All checks within a layer update in parallel from the same current beliefs. After the layer, the error nodes it touched refresh their beliefs, so the next layer acts on updated information. A fully serial schedule, which updates one check at a time, converges marginally better on our codes but cannot exploit the parallelism of the GPU as effectively as the layered schedule. For circuit-level memory experiments on the $\llbracket 150, 30, 10 \rrbracket$ mitten code and the $\llbracket 144, 12, 12 \rrbracket$ gross code~\cite{bravyi2024high} we are able to process over a million rounds of syndrome extraction per second on a single NVIDIA H100 GPU in this stage.

\textit{Stage 2 (GPU Relay-BP).}
Stage 2 $(S_2)$ decodes the GARI transform \cite{gari_paper} of the detector-error model with a layered version of Relay-BP~\cite{mueller2025improvedbeliefpropagationsufficient}. Relay-BP augments BP with a memory term that blends each error node's fixed channel prior with its earlier beliefs. We run many BP legs in sequence. Each leg warm-starts this memory from the beliefs the previous leg ended on, but it restarts message passing from scratch and draws a fresh memory strength for every error node. The strength is signed so it may pull a node's belief back toward the warm start or push it away. This randomness helps convergence; a shot that stalls in one leg can often converge in a later one. To minimize the logical error rate, we operate this stage with a quorum. A converged leg votes for the logical class corresponding to its candidate correction and we only accept the decoding result when the first $q$ converged legs cast the same vote. Otherwise, we defer to the later CPU stages. The quorum helps protect against BP convergences to the wrong logical class.

\textit{Stage 3 (CPU serial BP).}
Stage 3 $(S_3)$ receives the small residual of hard shots that the GPU stages defer. With few shots left, throughput matters less than convergence, so this stage runs on CPUs and uses the fully serial BP schedule. Checks update one at a time, and each update sees the beliefs refreshed by all the updates before it. The stage is split into three substages of increasing depth, all decoding the GARI transform~\cite{gari_paper} of the full correlated $XYZ$ detector-error model. A shot proceeds to the next substage only if the previous one fails to decode it.

$S_{3A}$ runs plain serial BP over a small ensemble of variants that perturb the channel priors, scaling them up or down or adding per-node random noise. The first variant to converge is accepted.

$S_{3B}$ runs a serial version of Relay-BP over various choices of perturbed priors. As in $S_2$, we use a quorum acceptance rule: we accept a class once $q$ converged legs vote for it. However, unlike $S_2$, where the first $q$ collected legs must be unanimous, the legs of all variants here accumulate in one shared tally, and the first class to reach $q$ votes is accepted even if other converged legs dissent.

$S_{3C}$ runs serial BP for many more iterations than any previous stage over a much wider array of prior perturbations. We accept the first converged correction since the goal of this stage is to converge as many of the remaining shots in order to minimize the number that make it to the integer programming stage ($S_4$). At low physical error rates, we never observe a logical error at this stage, with most logical errors coming from $S_2$ or $S_{3B}$.

\textit{Stage 4 (Integer Programming)} Stage 4 $(S_4)$ decodes the few shots that survive every BP stage by solving the most likely error decoding problem exactly through integer programming using Gurobi~\cite{gurobi}. To minimize the size of the problem, we initially only use the initialization basis detectors. We give each shot one CPU hour running on a single CPU core and if the solver times out before solving to optimality, we accept its solution as long as the relative mixed integer programming optimality gap (MIPGap~\cite{gurobi_mipgap_parameter}) is less than $0.1$. At high physical error rates, a non-negligible amount of shots reach the timeout with a gap greater than $0.1$ and we simply count these as logical errors. At low physical error rates, a negligible fraction of shots reach the timeout (at $p=0.1\%$ on the $\llbracket 300, 60, 14 \rrbracket$ mitten code only $3$ out of $7.522$ billion reach the timeout) and for these we employ the following strategies to quickly solve them after the timeout:
\begin{enumerate}
    \item For the memory experiments at the lowest physical error rate probed for each code, we re-decode each timed-out shot on a sub-DEM before elevating to the full correlated $XYZ$ detector error model. The sub-DEM restricts the full $XYZ$ model to the neighborhood of the shot's triggered detectors. It keeps every error mechanism that touches a triggered detector, and every detector those mechanisms touch. We solve this much smaller problem with a timeout of a few hundred seconds and accept its solution if the MIPGap is below $0.2$. Otherwise, we promote the shot to the full $XYZ$ model for the remaining hour, accepting at a MIPGap below $0.1$. All of these solves run on a single CPU core and this strategy is sufficient to solve all remaining shots.

    \item For the surgery experiments, we simply re-solve the same problem on the initialization-basis detectors with Gurobi parallelized across multiple CPU cores. This dramatically reduces the time to solve to optimality, with all previously timed-out shots converging in tens of seconds to a few minutes at most.
\end{enumerate}

\subsection{Comparison with prior decoders}

\begin{figure}
    \centering
    \includegraphics[width=0.5\linewidth]{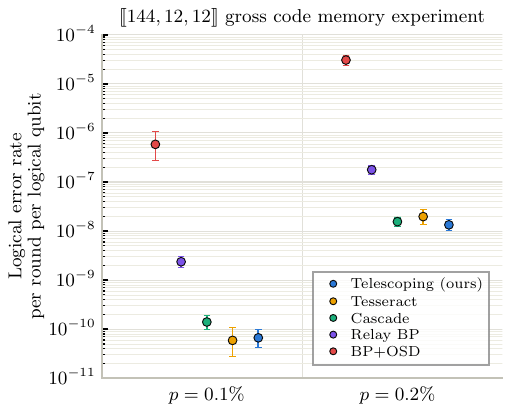}
    \caption{\textbf{Decoder comparison on the $\llbracket 144, 12, 12 \rrbracket$ gross code.}
    We plot the logical error rate per round per logical qubit for memory experiments under uniform depolarizing circuit-level noise at physical error rates $p=0.1\%$ and $p=0.2\%$, using the detector error models of Ref.~\cite{relayBicycleBivariateTestdata}. For our
    telescoping decoder, we averaged the logical error rate over $X$ and $Z$ basis experiments with an equal number of shots in each basis (Table~\ref{tab:gross144_gari}), and error bars are two-sided $95\%$ confidence intervals. The BP+OSD, Relay BP, Tesseract,
    and Cascade results are those of Ref.~\cite{gu2026scalableneuraldecoderspractical} which we extracted from the vector graphics of Figs.~1(b) and 2(b) of that reference using PyMuPDF~\cite{pymupdf}. At $p=0.1\%$ our telescoping decoder reaches a lower logical error rate than Cascade and matches Tesseract within error bars. It does so at nearly twice the throughput of Cascade, and at orders of magnitude higher throughput than Tesseract, which approximates most-likely error decoding (see Table~\ref{tab:gross144_gari}).}
    \label{fig:bb144_dec_comp}
\end{figure}
We compare our telescoping decoder to current state-of-the-art decoders by benchmarking it on memory experiments on the $\llbracket 144, 12, 12 \rrbracket$ gross code using the uniform depolarizing noise detector error models from Ref.~\cite{mueller2025improvedbeliefpropagationsufficient}. These detector error models can be downloaded from Ref.~\cite{relayBicycleBivariateTestdata} and are the same~\cite{gu2026private} as the ones used in Ref.~\cite{gu2026scalableneuraldecoderspractical}. Our results are reported in Fig.~\ref{fig:bb144_dec_comp} and Table~\ref{tab:gross144_gari}. Averaged over $X$ and $Z$ basis memory experiments at a physical error rate of $p=0.1\%$ we achieve a logical error error rate per round per logical qubit\footnote{In this section and in Table~\ref{tab:gross144_gari}, the superscripts and subscripts denote the $95\%$ confidence interval, unlike in the main text where they denote the more standard $68.27\%$ confidence interval.} of {$6.7^{+3.3}_{-2.4}\times10^{-11}$}  with a decoding throughput of $160{,}000$--$180{,}000$ syndrome extraction cycles per second or equivalently, in reciprocal units, $5.5$--$6.2$ $\mu$s per syndrome extraction cycle. At a physical error rate of $p=0.2\%$, we achieve a logical error rate per round per logical qubit of {$1.3^{+0.4}_{-0.3}\times10^{-8}$}. These logical error rates match or improve upon the logical error rates of all current state-of-the-art decoders benchmarked in Ref.~\cite{gu2026scalableneuraldecoderspractical} including BP+OSD, Relay BP, Tesseract, Cascade, and other neural decoders\footnote{It would also be interesting to compare to the beam search decoder of Ref.~\cite{ye2025beamsearchdecoderquantum} using the same detector error models; we leave that for future exploration.}~\cite{Blue_2026}. Additionally, at low physical error rates such as $p=0.1\%$, the throughput of our telescoping decoder achieves nearly double the throughput of the comparable best neural decoder in Ref.~\cite{gu2026scalableneuraldecoderspractical} that is closest in logical error rate to our decoder\footnote{The throughput in Ref.~\cite{gu2026scalableneuraldecoderspractical} is benchmarked using an NVIDIA H200 whereas our telescoping decoder uses an NVIDIA H100 and a single CPU core. Using an H200 would further increase the throughput of our decoder.} This demonstrates that a properly tuned telescoping decoder that relies only on variants of belief propagation can achieve state-of-the-art logical error rates and throughput even when compared to neural decoders that require the additional overhead of training. 

Finally, we remark that the throughput of our telescoping decoder decreases with increasing physical error rate since fewer shots are able to converge in the first few stages. By contrast, neural decoders typically maintain constant throughput across physical error rates. However, since the low logical error rate regime is precisely the one that is most costly to probe, maximizing the throughput in this regime is most important. Additionally, while the throughput drops at higher physical error rates, the number of shots needed to estimate the logical error rate also drops. We find that this more than compensates the decrease in throughput.

\begin{table}[t]
\centering
\begin{tabular}{l cc cc}
\toprule
 \textit{Convergence and throughput} & \multicolumn{2}{c}{$X$ basis} & \multicolumn{2}{c}{$Z$ basis} \\
\cmidrule(lr){2-3}\cmidrule(lr){4-5}
Stage & \shortstack{Fraction \\reaching} & \shortstack{Reciprocal throughput\\($\mu$s/cycle)}
      & \shortstack{Fraction \\reaching} & \shortstack{Reciprocal throughput\\($\mu$s/cycle)} \\
\midrule
$S_1$          & $1$                & $0.77$            & $1$                & $0.83$ \\
$S_2$          & $6.5\times10^{-3}$ & $8.3\times10^{2}$ & $6.0\times10^{-3}$ & $7.9\times10^{2}$ \\
$S_{3A}$ & $9.8\times10^{-8}$ & $4.2\times10^{4}$ & $1.6\times10^{-7}$ & $9.8\times10^{4}$ \\
$S_{3B}$ & $5.7\times10^{-8}$ & $4.4\times10^{5}$ & $7.3\times10^{-8}$ & $3.3\times10^{5}$ \\
$S_{3C}$ & $1.6\times10^{-8}$ & $9.6\times10^{5}$ & $1.8\times10^{-8}$ & $1.5\times10^{6}$ \\
\cmidrule(lr){1-5}
Average reciprocal throughput & \multicolumn{2}{c}{$6.2$ $\mu$s/cycle} & \multicolumn{2}{c}{$5.5$ $\mu$s/cycle} \\
\addlinespace[3pt]
\toprule
\addlinespace[3pt]
 \textit{Logical error rates}                                & \multicolumn{2}{c}{$p=0.1\%$}                          & \multicolumn{2}{c}{$p=0.2\%$} \\
\cmidrule(lr){2-3}\cmidrule(lr){4-5}
Logical errors / shots           & \multicolumn{2}{c}{$23 \, / \, 2.4 \text{ billion}$}    & \multicolumn{2}{c}{$54 \, / \, 28 \text{ million}$} \\
\addlinespace[3pt]
Block LER                        & \multicolumn{2}{c}{$9.6^{+4.8}_{-3.5}\times10^{-9}$}   & \multicolumn{2}{c}{$1.9^{+0.6}_{-0.5}\times10^{-6}$} \\
\addlinespace[3pt]
LER per round                    & \multicolumn{2}{c}{$8.0^{+4.0}_{-2.9}\times10^{-10}$}   & \multicolumn{2}{c}{$1.6^{+0.5}_{-0.4}\times10^{-7}$} \\
\addlinespace[3pt]
LER per round per logical qubit & \multicolumn{2}{c}{$6.7^{+3.3}_{-2.4}\times10^{-11}$}  & \multicolumn{2}{c}{$1.3^{+0.4}_{-0.3}\times10^{-8}$} \\
\bottomrule
\end{tabular}
\caption{\textbf{Benchmark of our decoding pipeline on memory experiments on the $\llbracket 144, 12, 12 \rrbracket$ gross code}. We decode $1.2$ billion shots in each basis at a physical error rate of $p = 0.1\%$ and $14$ million shots in each basis at a physical error rate of $p=0.2\%$ using the uniform depolarizing noise circuit-level detector error models of Ref.~\cite{relayBicycleBivariateTestdata}. In the top panel, we track the fraction of shots that reach each stage as well as the throughput of each stage at $p=0.1\%$. We report the decoding throughput in units of microseconds per syndrome extraction cycle for ease of comparison with the results in Ref.~\cite{gu2026scalableneuraldecoderspractical}. The throughput numbers for $S_1$ and $S_2$ are benchmarked using a single NVIDIA H100 GPU. For all $S_3$ stages we use a single CPU core. Since so few shots reach $S_3$, its latency can be completely hidden by running the CPU core in parallel with the GPU so it does not contribute to the average throughput. Remarkably, belief propagation decodes nearly every shot. We suspect this is due to the single basis Tanner graph of the gross code having girth $6$ compared to girth $4$ for mitten codes. At $p=0.1\%$, only two shots reach $S_4$, the integer programming stage, and only four reach it at $p=0.2\%$. We conservatively count every shot that reaches $S_4$ as a logical error, even though both $S_4$ shots at $p=0.1\%$ are decoded correctly after solving to optimality in under $5$ seconds on a single CPU core. }
\label{tab:gross144_gari}
\end{table}

\subsection{Real-time decoding} \label{subsec:real_time_decoding}

A real-time decoder faces two distinct requirements. The first is a \emph{rate} requirement: to avoid the backlog problem~\cite{Terhal_2015}, the decoder must decode syndromes at least as fast as the processor produces syndrome data. Our decoding stack as described already meets this requirement on any hardware platform, since parallel window decoding~\cite{Skoric_2023} allows adding compute to each stage of the pipeline until the amortized decoding time per window falls below the syndrome extraction cycle time. The second is a \emph{latency} requirement, which emerges when the circuit contains feedforward operations. An operation conditioned on a decoded logical measurement outcome cannot be scheduled until one specific window is decoded. What matters here is the \emph{reaction time}, the time to decode that single window. 

In this section, we perform a worst-case analysis of both requirements on neutral atom quantum processors, assuming a syndrome extraction cycle time of $T_{\mathrm{cyc}} = \SI{1}{\milli\second}$. We show that a single dedicated decoder per stage sustains even the most demanding commitment schedule and that the average reaction time is nevertheless below $T_{\mathrm{cyc}}$, with its full distribution across stages given in Table~\ref{tab:rt-load}. We consider an implementation that processes windows one at a time, with the belief propagation (BP) stages running on FPGAs and the integer programming stage parallelized across multiple CPU cores. The FPGA timing is extrapolated from the BP implementations of Refs.~\cite{bascones2026scalablefpgaarchitecturerealtime, maurer2025realtimedecodinggrosscode}.

We assume a sliding-window decoder~\cite{Skoric_2023} that processes a window of $W = \mathcal{O}(d)$ rounds of syndrome extraction at a time. In the most demanding mode, the window advances by only one round each time, so the decoder must \emph{complete one full window decode every syndrome extraction cycle}. This is the worst-case commitment schedule: it maximizes the required decode rate, and committing a wider region less often would only reduce this rate. The reaction time, by contrast, is the duration of a single window decode, set by the decode times of the stages that window traverses, and is unaffected by the commitment schedule. We adopt this worst-case commitment assumption throughout and hence never amortize the cost of a window decode over its $\mathcal{O}(d)$ rounds.

Our telescoping decoder maps onto this requirement stage by stage. We assume each stage has its own hardware so that windows that reach later stages do not hold up incoming windows. Let $f_i$ be the fraction of syndrome rounds whose window reaches stage $i$, with $f_{S_1} = 1$, and let $t_i$ be the worst-case time of one stage-$i$ window decode.  A dedicated decoder for stage $i$ keeps up with the incoming syndromes when the utilization ratio
\begin{equation}
  \rho_i \;=\; \frac{f_i\, t_i}{T_{\mathrm{cyc}}} \;<\; 1.
  \label{eq:rt-utilization}
\end{equation}

We now extrapolate what the per-iteration times of the BP stages would be on an FPGA for memory and surgery on mitten codes. We focus on the $\llbracket 200, 40, 12 \rrbracket$ and $\llbracket 300, 60, 14 \rrbracket$ codes for memory and on the $\llbracket 540, 108, 18 \rrbracket$ code for $XX$ surgery, all at $p=0.1\%$ physical error rate. We base our extrapolation on the numbers measured in the FPGA implementations of Refs.~\cite{bascones2026scalablefpgaarchitecturerealtime, maurer2025realtimedecodinggrosscode} which focus on the $\llbracket 144,12,12 \rrbracket$ gross code memory. The implementation of Ref.~\cite{maurer2025realtimedecodinggrosscode} lays out the entire decoding graph on the FPGA. Every check node and error node has its own processing unit, and the edges of the graph are physical wires between them. All nodes compute at once in a flooding BP manner, with one flooding BP iteration taking $\SI{24}{\nano\second}$ when using detectors only from one basis. The decoder of Ref.~\cite{bascones2026scalablefpgaarchitecturerealtime} instead focuses on the GARI decoding graph. It operates in a serial BP manner, processing one detector check per clock cycle while small parallel units handle the auxiliary consistency checks alongside the serial sweep. One full iteration, where every check is updated once, takes $\SI{6.3}{\micro\second}$ with each individual check update taking $\SI{3.647}{\nano\second}$.

We time each of our stages with the FPGA implementation that processes the same kind of decoding graph. Ref.~\cite{maurer2025realtimedecodinggrosscode} decodes 
the $X$ and $Z$ detector families separately, and its $\SI{24}{\nano\second}$ figure is measured on those single-family decoding graphs. These are the same kind of decoding graphs that we use in $S_1$ of the memory experiments and in every BP stage of the $\llbracket 540, 108, 18 \rrbracket$ surgery experiment. The serial timing applies to the stages that decode the GARI transform, $S_2$ and $S_3$ of the memory experiments. The final stage $S_4$ is not extrapolated at all; we benchmark its timing directly on a 32 CPU core machine.

The gross-code graph is smaller than ours, so we must scale the $\SI{24}{\nano\second}$ figure up. To do so, we keep the circuit size fixed and stream the larger graph through it in several passes, each pass processing one chunk of gross-code size. The per-iteration time is set by the number of passes an iteration needs:
\begin{equation}
  t_{\mathrm{iter}} \;=\; \SI{24}{\nano\second} \times N_{\mathrm{pass}},
  \label{eq:rt-F}
\end{equation}
where $N_{\mathrm{pass}}$ depends on the size of our decoding graph and on the BP schedule. Our stages do not run flooding BP but rather layered or fully serial schedules. For the layered stages, a pass cannot span two layers, because the beliefs must refresh between them, so each layer needs a whole number of passes and an iteration whose layer $k$ carries $E_k$ edges takes
\begin{equation}
  N_{\mathrm{pass}} \;=\; \sum_{k=1}^{K} \left\lceil \frac{E_k}{E_{\mathrm{gross}}} \right\rceil
  \;\le\; \sum_{k=1}^{K} \left( \frac{E_k}{E_{\mathrm{gross}}} + 1 \right)
  \;=\; F + K
  \label{eq:rt-passes}
\end{equation}
passes where $K$ is the total number of layers and $F = E / E_{\mathrm{gross}}$, where $E$ is the edge count of our decoding graph and $E_{\mathrm{gross}} = 30{,}672$ is the edge count of the gross-code single-family graph on which the $\SI{24}{\nano\second}$ was measured.\footnote{Ref.~\cite{maurer2025realtimedecodinggrosscode} does not report the number of edges in the decoding graph, so we rebuilt it from the depth-7 syndrome extraction circuits of Ref.~\cite{bravyi2024high}.} In all of our decoding experiments, we have $E_k < 0.3 \cdot E_{\mathrm{gross}}$ so we may take $N_{\mathrm{pass}} = K$.  

Streaming a graph through a fixed circuit works because the structure of the graph can live in memory instead of wiring: the beliefs and messages sit in on-chip memory banks, and address tables tell each processing unit what to read on each pass. This approach pays additional overheads the hard-wired measurement does not: address logic, memory access, and possibly a slower clock. We do not model these; instead we rely on having sufficient margin in our worst case analysis. Most stages extrapolated from the $\SI{24}{\nano\second}$ number sit at utilization ratios below $0.05$ and would survive a twentyfold overhead; the busiest, the surgery $S_2$ stage at $\rho = 0.24$, still survives a fourfold overhead (see Table~\ref{tab:rt-load}).

For the serial BP stages, the per-iteration time is the number of detector checks in the window times the $\SI{3.647}{\nano\second}$ clock. The $\llbracket 200,40,12 \rrbracket$ and $\llbracket 300,60,14 \rrbracket$ windows have $1920$ and $3360$ detector checks, giving $\SI{7.0}{\micro\second}$ and $\SI{12.3}{\micro\second}$ per iteration. Even though the $\llbracket 540, 108, 18 \rrbracket$ surgery experiment only uses one basis of detectors, we adopt the same $\SI{3.647}{\nano\second}$ per check update timing. It has $5112$ detector checks which gives $\SI{18.6}{\micro\second}$ per iteration.

We use worst-case iteration counts throughout. $S_1$ always runs its full fixed budget: $10$ iterations for the memory experiments and $30$ for the surgery experiment. $S_2$ is charged its warm-up leg plus all $40$ relay legs with no early stop, $2520$ iterations in total ($2480$ for surgery). The $S_3$ substages are charged all iterations of every variant: $500$ iterations for $S_{3A}$, $24{,}480$ for $S_{3B}$, and $79{,}500$ for $S_{3C}$. Most shots never use all the maximum allotted iterations in every stage, so all of our extrapolations based on these numbers are conservative upper bounds. 

Finally, for the integer-programming stage, we directly benchmark rather than extrapolate. We collected the actual shots that reached $S_4$ and re-solved them with Gurobi parallelized across $32$ CPU cores, requiring only for the MIPGap~\cite{gurobi_mipgap_parameter} to fall below $0.1$. Accepted solves are fast, with mean times of $5$ to $\SI{8}{\second}$ and a worst accepted solve of about a minute, with all shots we tested giving the correct logical correction.

\begin{table}[tb]
\centering
\footnotesize
\setlength{\tabcolsep}{4pt}
\begin{tabular}{@{}l c c c @{\hskip 1em} c c c @{\hskip 1em} c c c @{}}
\toprule
 & \multicolumn{3}{c}{$\llbracket 200,40,12 \rrbracket$ memory} & \multicolumn{3}{c}{$\llbracket 300,60,14 \rrbracket$ memory} & \multicolumn{3}{c}{$\llbracket 540,108,18 \rrbracket$ $XX$ surgery} \\
\cmidrule(lr){2-4}\cmidrule(lr){5-7}\cmidrule(lr){8-10}
Stage & $f$ & $t$ & $\rho$ & $f$ & $t$ & $\rho$ & $f$ & $t$ & $\rho$ \\
\midrule
$S_1$ & $1$ & \SI{7.68}{\micro\second} & $7.7\times10^{-3}$ & $1$ & \SI{30.7}{\micro\second} & $3.1\times10^{-2}$ & $1$ & \SI{46.1}{\micro\second} & $4.6\times10^{-2}$ \\
$S_2$ & $2.7\times10^{-2}$ & \SI{17.6}{\milli\second} & $0.47$ & $2.1\times10^{-2}$ & \SI{30.9}{\milli\second} & $0.64$ & $6.3\times10^{-2}$ & \SI{3.81}{\milli\second} & $0.24$ \\
$S_{3A}$ & $7.8\times10^{-6}$ & \SI{3.50}{\milli\second} & $2.7\times10^{-5}$ & $1.3\times10^{-5}$ & \SI{6.13}{\milli\second} & $7.8\times10^{-5}$ & $6.1\times10^{-4}$ & \SI{9.32}{\milli\second} & $5.7\times10^{-3}$ \\
$S_{3B}$ & $3.7\times10^{-6}$ & \SI{171}{\milli\second} & $6.3\times10^{-4}$ & $7.4\times10^{-6}$ & \SI{300}{\milli\second} & $2.2\times10^{-3}$ & $4.8\times10^{-4}$ & \SI{456}{\milli\second} & $0.22$ \\
$S_{3C}$ & $1.3\times10^{-6}$ & \SI{557}{\milli\second} & $7.2\times10^{-4}$ & $2.3\times10^{-6}$ & \SI{974}{\milli\second} & $2.2\times10^{-3}$ & $1.8\times10^{-5}$ & \SI{1.48}{\second} & $2.7\times10^{-2}$ \\
\midrule
$S_4$ mean & $8.5\times10^{-8}$ & \SI{7.8}{\second} & $6.6\times10^{-4}$ & $1.9\times10^{-7}$ & \SI{5.0}{\second} & $9.6\times10^{-4}$ & $8.3\times10^{-6}$ & \SI{6.9}{\second} & $5.7\times10^{-2}$ \\
$S_4$ worst & & \SI{60}{\second} & $5\times10^{-3}$ & & \SI{14}{\second} & $3\times10^{-3}$ & & \SI{17}{\second} & $0.14$ \\
\midrule
$\bar{t} =  \sum_i f_i t_i$ & \multicolumn{3}{c}{\SI{483}{\micro\second}} & \multicolumn{3}{c}{\SI{678}{\micro\second}} & \multicolumn{3}{c}{\SI{597}{\micro\second}} \\
\bottomrule
\end{tabular}
\caption{Worst-case analysis of real-time decoding, assuming FPGA implementations of all BP stages extrapolated from Refs.~\cite{maurer2025realtimedecodinggrosscode, bascones2026scalablefpgaarchitecturerealtime}. For each experiment at $p = 0.1\%$: $f$ is the fraction of syndrome rounds whose window reaches the stage, pooled over the $X$- and $Z$-basis runs ($f_{S_1} = 1$); $t$ is the worst-case time to decode one window at that stage; and $\rho = f\,t/T_{\mathrm{cyc}}$ is the utilization ratio at $T_{\mathrm{cyc}} = \SI{1}{\milli\second}$. The $S_2$ and $S_3$ rows of the memory experiments and the $S_3$ rows of the surgery experiment are timed from the serial decoder of Ref.~\cite{bascones2026scalablefpgaarchitecturerealtime}; the remaining BP rows are timed from the spatially parallel decoder of Ref.~\cite{maurer2025realtimedecodinggrosscode} via Eq.~\eqref{eq:rt-F}. The $S_4$ times are not extrapolated but measured directly on a $32$-core CPU, giving the mean and the worst accepted solve over the hardest production shots (same $f$). $\bar{t} = \sum_i f_i t_i$ is the average decoding latency, using the mean $S_4$ time.}
\label{tab:rt-load}

\end{table}

Our results are summarized in Table~\ref{tab:rt-load}, and both real-time requirements are met. For the rate requirement, every stage of every experiment has a utilization ratio below one, so a single decoding unit per stage sustains even the worst-case commitment schedule with no backlog. For the latency requirement, the average decoding latency $\bar{t} = \sum_i f_i t_i$ is comfortably below $T_{\mathrm{cyc}} = \SI{1}{\milli\second}$ in all three experiments: the typical window resolves in $S_1$ within tens of microseconds, the $2$--$6\%$ of windows that defer to $S_2$ react within a few tens of milliseconds at worst, and later stages are reached too rarely ($f \lesssim 10^{-3}$) to affect the average. This holds despite uniformly worst-case assumptions: we commit only one round per window decode, every stage pays its full iteration budget, and we even consider the case where all $S_4$ windows take their maximum measured time. 

We expect real-time performance can be pushed well past this worst case. Our pipeline is tuned to maximize Monte Carlo throughput on GPUs and CPUs; a stack designed for real-time decoding could likely use fewer stages. By relaxing our worst-case assumptions, further optimizing the stages, and using a fast neural decoder~\cite{zhao2026towards, gu2026scalableneuraldecoderspractical} in $S_4$ instead of integer programming, we expect that it should also be possible to push towards the real-time decoding requirements at microsecond time-scales needed for superconducting quantum processors. 

Finally, we caution that our noise models do not account for atom loss, a major error source in current neutral-atom processors~\cite{bluvstein2026fault}. Loss is more challenging to decode than Pauli noise because it is typically detected only at a later readout, with the exact moment the atom was lost unknown. A loss-aware decoder must therefore condition on each shot's detected loss pattern, either by reconstructing the decoding hypergraph shot by shot~\cite{Baranes_2026} or by reweighting the priors of a fixed decoding graph, as in the recently introduced Pauli-envelope framework~\cite{liu2026achieving}. However, these works focus primarily on the surface code and understanding the achievable performance of qLDPC codes more broadly under realistic atom loss remains an important direction for future work. On the overhead side there is room for optimism. Replacing worst-case iteration budgets with typical iteration counts shortens the window decodes themselves---and hence the reaction time---by roughly an order of magnitude, while committing $\mathcal{O}(d)$ rounds per window decode rather than a single round reduces the required decode rate by another. A loss-aware decoder an order of magnitude more expensive than the Pauli-only decoding benchmarked here could therefore still achieve reaction times comparable to those reported above, and even at two orders of magnitude the stack would still meet the rate requirement, with average reaction times growing to the order of several syndrome extraction cycles.

\subsection{Details of decoding experiments}
All of our decoding experiments use circuit-level depolarizing noise with a single strength $p$. Every CNOT gate is followed by two-qubit depolarizing noise of strength $p$. Every qubit receives single-qubit depolarizing noise of strength $p$ after it is initialized and again before it is measured. Motivated by the long coherence times of neutral atoms, no noise is applied to idling qubits.

For mitten code memory experiments, we use the hook-error free syndrome extraction schedules described in Appendix~\ref{app:hookfree}. For the surgery experiments, we use a coloration circuit for syndrome extraction which we verify using \textsf{sQetch} that the circuit level distance is likely preserved. 

In a surgery experiment, we fault-tolerantly measure one or more logical Pauli products using the gadgets of Appendix~\ref{app:surgery}, which merge the code with a set of ancillary gadget qubits and checks. We describe the measurement of $X$-type products; the $Z$-type case is the transpose. The gadget qubits are prepared in the $Z$ basis, the data qubits in the $X$ or $Z$ basis, and the merged code then undergoes $t_s$ rounds of syndrome extraction with an edge-coloring schedule. Finally, the gadget qubits are read out in the $Z$ basis and the data qubits in their preparation basis. The value of each measured product is reconstructed from the first round of gadget-check outcomes, and the later rounds protect this value. Detectors follow the same convention as in the memory experiments. The observables are the logical operators that are deterministic in the preparation basis. In the $X$-basis run these include the measured products themselves, which the data preparation pins to $+1$, so their decoded values give a direct end-to-end check of the surgery. The $Z$-basis run instead checks how much the surgery disturbs the surviving $Z$ logicals. Our experiments measure a single logical operator, a joint product on two logical qubits, and up to ten joint products simultaneously in the high-rate setting. 

\section{Hardware Complexity}
\label{app:experimental_complexity}
In this section, we describe and compute various metrics which characterize the complexity of implementing mitten codes experimentally both in atom arrays and superconducting qubits.
\subsection{Atom arrays}
Our first set of metrics represent the complexity of implementing a syndrome extraction (SE) cycle with current hardware, in which atoms are transported by acousto-optic deflectors (AODs) under rigid movement constraints. The second metric targets hardware that may be available within the next few years, particularly a fast spatial light modulator (SLM) where each atom can be steered independently on its own path.
For both models, we report the SE cycle time, as well as other experimentally relevant metrics.
Before defining these metrics, we will first describe the connectivity required and atom movements that realize a SE cycle.

\subsubsection{Connectivity needed for a syndrome extraction cycle}
\label{app:connectivity-atomarrays}
The connectivity between check qubits and data qubits is defined by $H_X$ and $H_Z$, which for our codes (Appendix~\ref{sec:app_structured-mittencodes}), are
\begin{equation}
\begin{split}
    H_X &=
\bordermatrix{
    & D_1    & D_2    & D_3    & D_4    & D_5 \cr
X_0 & L(a_0) & 0      & L(a_1) & 0      & R(b_0^*) \cr
X_1 & 0      & L(a_0) & 0      & L(a_1) & R(b_1^*) \cr
} \\
H_Z &=
\bordermatrix{
    &   &  &   &   & \cr
Z_0 & R(b_0) & R(b_1) & 0      & 0      & L(a_0^*) \cr
Z_1 & 0      & 0      & R(b_0) & R(b_1) & L(a_1^*) \cr
}.
\end{split}
\label{eq:HxHz}
\end{equation}

In all our mitten code instances, all ring elements $a_0, a_1, b_0, b_1 \in \F_2[G]$ have support over 3 group elements $g_i \in G$ as $a_0 = g_1 + g_2 + g_3, a_1 = g_4 + g_5 + g_6$, and similarly for $b_0, b_1$.

\textit{Check matrix in binary.}
When a check matrix $H$ is written in binary, each row represents a check, each column represents a data qubit, and a nonzero entry $H_{ij} = 1$ means that check qubit $i$ interacts with data qubit $j$ during syndrome extraction. Written compactly in Eq.~\eqref{eq:HxHz}, each ring entry expands to a binary matrix through the left and right regular representations. Recall from Definition~\ref{def:group-ring-and-regular-reps} that, for $g \in G$, $\Lrep{g}$ and $\Rrep{g}$ are the binary matrix representations of left (and right) multiplication by the group element $g$ (and $g^{-1}$) respectively. More precisely, they act on the standard basis vectors $\{\b{h} :h\in G\}$ as 
\[\Lrep{g}\b{h}=\b{gh}, \qquad \Rrep{g}\b{h}=\b{h g^{-1}}.\]
Here, for $h \in G, \, \b{h} \in \F_2^{|G|}$ is the binary vector of length $|G|$ with a single $1$ in the index corresponding to $h$, and $0$'s elsewhere. The maps $h \mapsto gh$ (and $h \mapsto hg^{-1}$) are bijections of $G$ to itself, as they are invertible by left multiplication of $g^{-1}$ (and right multiplication by $g$) respectively. Therefore $\Lrep{g}$ and $\Rrep{g}$ permute the standard basis vectors of $\mathbb F_2^{|G|}$, and hence are $|G|\times |G|$ binary permutation matrices.

\textit{Each atom is a group element.}
Because $\Lrep{g}$ and $\Rrep{g}$ are $|G| \times |G|$ permutation matrices, every block of Eq.~\eqref{eq:HxHz} expands to a binary block with $|G|$ check indices and $|G|$ data indices. Each data block ($D_1,\dots,D_5$) and each check block ($X_0,X_1,Z_0,Z_1$) therefore holds $|G|$ physical atoms, each corresponding to one group element of $G$.

\textit{The SE cycle is a sequence of gate layers.}
In our construction, each ring element is composed of 3 group elements $a_0 = g_1 + g_2 + g_3$, and so $\Lrep{a_0} = \Lrep{g_1} + \Lrep{g_2} + \Lrep{g_3}$ (likewise for $\Rrep{\cdot}$ entries). During syndrome extraction, a check block must perform an entangling pulse with each of the single-element layers $\Lrep{g_i}$ (or $\Rrep{g_i}$) in its row of~\eqref{eq:HxHz}. The order in which these layers are performed is determined by the hook-error-free schedules of Appendix~\ref{app:hookfree}. In our implementation, we keep the data atoms fixed and move the check atoms to perform these subsequent layers.

\textit{The movement between layers is multiplication by a group element.}
It remains to describe what exact movements of the check atoms are required for one such move. Layer $\Lrep{g_i}$ requires check atom $k$ to sit adjacent to the data atom $h = g_i^{-1}k$ (since $\Lrep{g_i}_{k,h} = 1 \Longleftrightarrow k = g_ih.$) To then perform layer $\Lrep{g_j}$, every check atom must move from data atom $h = g_i^{-1}k$ to data atom $h' = g_j^{-1} k$. This rearrangement of the check atom is exactly performing the movement that corresponds to left multiplication
\begin{equation}
    \Lrep{g_j^{-1}g_i}:\, g_i^{-1} k \longmapsto (g_j^{-1}g_i) (g_i^{-1} k ) = g_j^{-1} k
    \label{eq:atom-rel-perm}
\end{equation} on the check qubits in that block. 

The same reasoning applies to right regular layers $\Rrep{g_i}$, but we must be careful about the inverse in our convention for $\Rrep{\cdot}$ ($\Rrep{\cdot}$ is not just literal right multiplication but rather right multiplication by the inverse $\Rrep{g} h = hg^{-1}$, \cref{def:group-ring-and-regular-reps}).
Recall that $\Rrep{g_i}_{k,h} = 1 \Longleftrightarrow k = hg_i^{-1}$.
Thus, during layer $\Rrep{g_i}$, check atom $k$ sits next to data atom $h = k g_i$, and must move to $h' = k g_j$ for the next layer $\Rrep{g_j}$.
This transition is therefore literal right multiplication by $g_i^{-1} g_j$, which under our convention for the right regular representation, is the map \begin{equation}
    \Rrep{g_j^{-1} g_i} :\, kg_i \longmapsto (k g_i) (g_i^{-1} g_j) = kg_j.
    \label{eq:atom-rel-perm-R}
\end{equation}

\textit{Atom layout on a 2D grid for a direct product group.}
Because an AOD only allows separable row and column movement, we place the $|G|$ atoms of each block on a 2D grid such that group multiplication is equivalent to row and column permutations. 
For direct products $G = G_1 \times G_2$, we can place one subgroup along each axis of the 2D grid. The atom at grid coordinate $(a,b)$ corresponds to group element $(a,b) \in G_1 \times G_2$. Because multiplication in a direct product is componentwise $(a_1, b_1) \cdot (a_2,b_2) = (a_1 a_2, b_1 b_2),$ the required movements from~\cref{eq:atom-rel-perm} factor into independent permutations of the two coordinates. Namely, if $g_j^{-1} g_i = (a,b)$, then left multiplication by $g_j ^{-1} g_i$ sends each check atom $k' = (a', b')$ to \[(a', b') \longmapsto (a \cdot a', b \cdot b')\] under left multiplication by $(a,b)$. Physically, this means that the movement can be decomposed into a row permutation corresponding to left multiplication by $a$ and column permutation for left multiplication by $b$.

\begin{figure}[t]
    \centering
    \includegraphics[width=\linewidth]{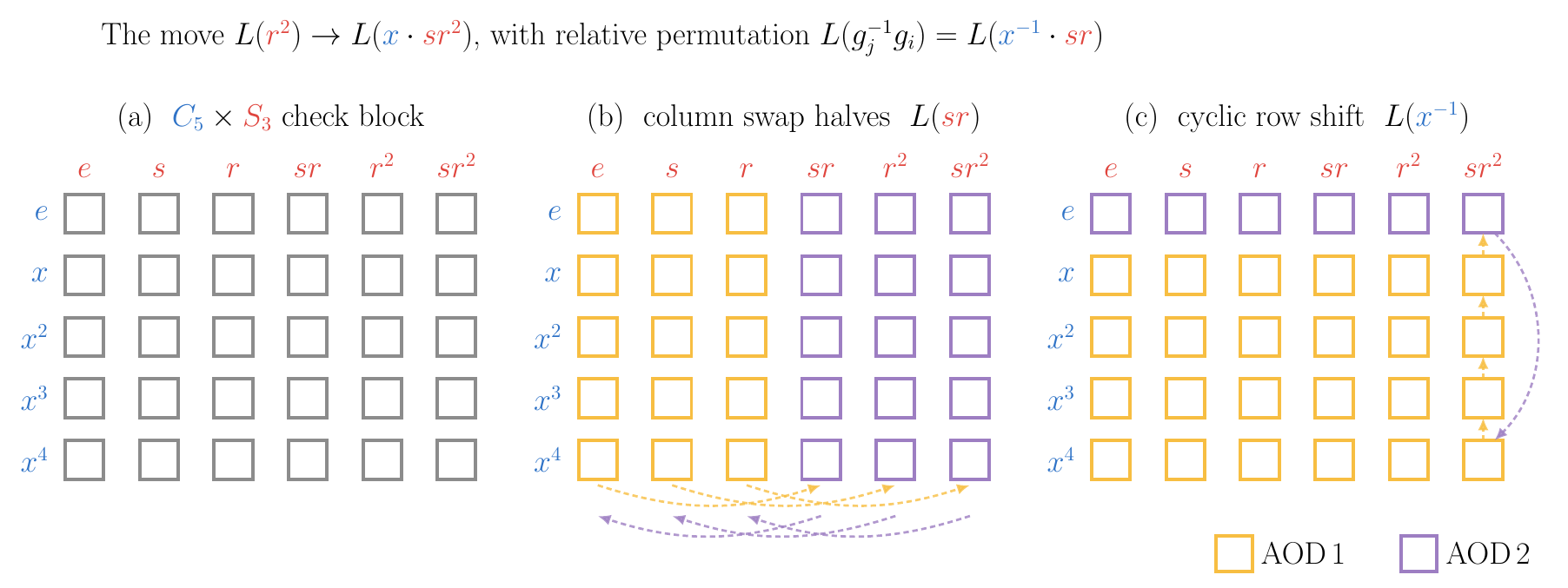}
    \caption{Example atom movement in the SE cycle for the $\llbracket  150, 30, 10 \rrbracket$ code. (a) Each of the $|G| = 30$ atoms are labeled by a group element $(x^a, \sigma) \in C_5 \times S_3$, the row giving the $x^a \in C_5$ element (blue) and the column the $\sigma \in S_3$ element (red). The specific move is the layer transition $\Lrep{g_i} = \Lrep{r^2}$ to $\Lrep{g_j} = \Lrep{x \cdot sr^2}$ performed by $X_0$ and $X_1$. Thus, the relative permutation to perform is $\Lrep{g_j^{-1} g_i} = \Lrep{x^{-1} \cdot sr}.$ This factorizes into (b) the $S_3$ column permutation $\Lrep{sr}$ and (c) the $C_5$ row shift $\Lrep{x^{-1}}$. $\Lrep{sr}$ is simply swapping the 3 left and right columns, and $\Lrep{x^{-1}}$ is just a cyclic shift of the rows by 1, both moves which can be done in 1 swift motion with 2 AODs. Each check atom is outlined by the AOD pair that carries it, those in AOD 1 are in yellow and those in AOD 2 are in purple. We draw them as squares to emphasize the check qubits are the ones moving. Note too, that here, we leave the coordinate labeling fixed while the qubits permute, as opposed to the main figure.}
    \label{fig:c150-move}
\end{figure}
\textit{Example: $\llbracket  150, 30, 10\rrbracket$.}
For the $\llbracket 150, 30, 10 \rrbracket$ code, the group is $G = C_5 \times S_3$. We place the $C_5$ coordinate along the vertical axis, and $S_3$ along the horizontal axis. Thus, each of the $|G| = 30$ atoms are labeled by a group element $(x^a, \sigma) \in C_5 \times S_3$ where $x$ is the generator of $C_5$, $a \in \{0,1,2,3,4\}$, and $\sigma \in S_3$. This same group-element labelling is used for each of the five data blocks ($D_1, \dots, D_5$) and four check blocks ($X_0, X_1, Z_0, Z_1$). Now, it is clear that multiplication by an element of $C_5$ is a cyclic shift of the rows, while multiplication by an element of $S_3$ applies that permutation to the columns. An explicit movement from the hook error free SE schedule determined in Appendix \ref{app:hookfree} is depicted in \cref{fig:c150-move}.

\textit{Semidirect products.} For semidirect products $G=G_1\rtimes G_2$, we again place $G_1$ along the vertical axis and $G_2$ along the horizontal axis, but the multiplication rule is now different. We label the check qubit in row $a' \in G_1$ and  column $b' \in G_2$ by the group element $a' b'$, exactly as specified in \cref{def:semidirect}.
Let the required permutation from~\cref{eq:atom-rel-perm} be $g_j^{-1} g_i = a b$. Then left multiplication by $g_j ^{-1} g_i$ sends each check atom $a' b'$ to
\[(a b)\cdot(a' b')=(a \,\varphi_{b}(a')) (b b'),\] by \cref{def:semidirect} of the group multiplication of a semidirect product. Thus, left multiplication by a fixed element $(a, b)$ sends the $B$ coordinate to $b' \mapsto bb',$ a fixed cyclic shift of the columns when $B$ is cyclic. On the other hand, right multiplication by $(ab)$ acts on check qubit $(a'b')$ as \[(a' b')\cdot(a b)=(a' \varphi_{b'}(a)) (b'b).\] The column coordinate $b'$ is once again a simple fixed cyclic shift. But the $A$ coordinate is transformed as $a' \mapsto a' \, \varphi_{b'} (a),$ where the row shift depends on the column $b'$. Unlike the direct product case, the semidirect product twist appears as column-dependent row shifts rather than a uniform one.

\begin{figure}[t]
    \centering
    \includegraphics[width=\linewidth]{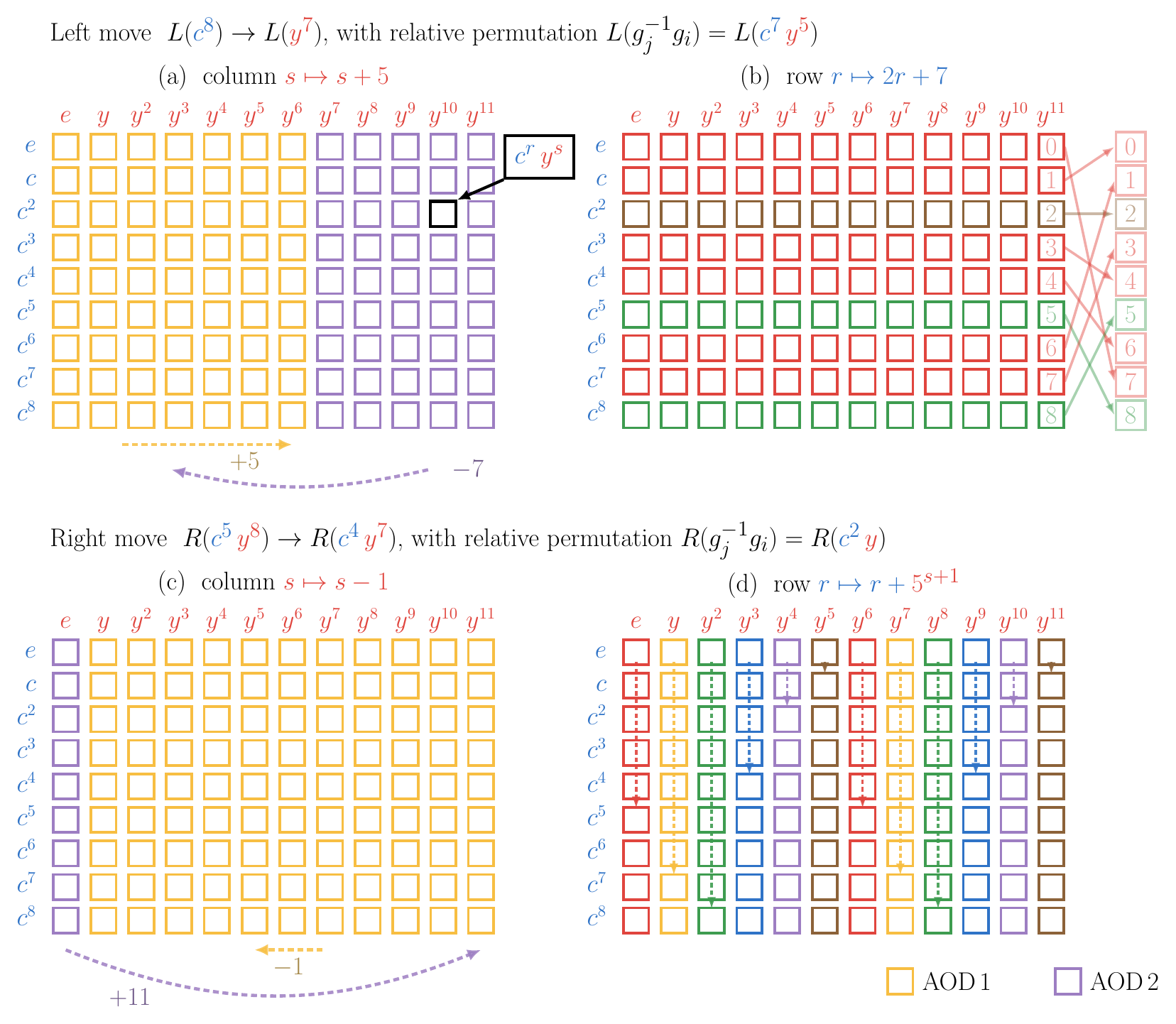}
    \caption{Example atom movement in the SE cycle for the $\llbracket 540, 108, 18 \rrbracket$ code with $G = C_9 \rtimes C_{12}$. Each atom in a block is labeled by a group element $c^r y^s$. The column index $s\in\{0,\dots,11\}$ is the $C_{12}$ complement power $y^s$ (red), the row index $r\in\{0,\dots,8\}$ is the normal $C_9$ power $c^r$ (blue). (a,b) The left regular transition $\Lrep{c^8}\to\Lrep{y^7}$ has relative permutation $\Lrep{g_j^{-1}g_i} = \Lrep{c^7y^5}$. Using the group multiplication defined in \cref{tab:semidirect}, it acts on coordinates as $(s,r)\mapsto(s+5,\,2r+7)$. The column motion is a uniform cyclic shift, while the row motion is a single permutation of the rows applied identically in every column. (c,d) The right regular transition $\Rrep{c^5y^8}\to\Rrep{c^4y^7}$ on data block $D_5$ has relative element $\Rrep{g_j^{-1} g_i} = \Rrep{c^2y}$ and acts as $(s,r)\mapsto(s-1,\,r+5^{s+1})$. Once again, the column motion is a uniform cyclic shift, but the row shift depends on the column.}
    \label{fig:c540-move}
\end{figure}

\textit{Example: $\llbracket 540, 108, 18 \rrbracket$.} For the $\llbracket 540, 108, 18 \rrbracket$ code, the group is $G = C_9 \rtimes C_{12}$ with $\varphi_y(c) =ycy^{-1}=c^5$ for $c$ is the generator of $C_9$ and $y$ is the generator of $C_{12}$ (see \cref{tab:semidirect}). We place $C_9$ along the vertical axis and $C_{12}$ along the horizontal axis, so the atom in row $r$ and column $s$ is labeled by $c^r y^s$. 
For left multiplication by $c^p y^q$, we have
\[ c^p y^q \cdot c^r y^s = c^{p+5^q r} y^{q+s}.\]
The check qubits at coordinates transform as \[ (s,r)\longmapsto (s+q,\, 5^q r+ p). \]
In \cref{fig:c540-move}(a,b), we outline a left multiplication permutation from the hook error free SE cycle found in \cref{app:hookfree}. The relative permutation from
$\Lrep{c^8} \to \Lrep{y^7}$ is $\Lrep{g_j^{-1} g_i} = \Lrep{y^5 c^8} = \Lrep{c^7y^5}$. The check qubits at col $s$, row $r$ shift as
\[ (s,r)\longmapsto (s+5,\, 2r+7).\] The column move is still a uniform cyclic shift by 5, but the row move is not just a shift: it is the map $r\mapsto 2r+7$, which can be implemented in 3 cyclic shifts. 

Right multiplication is a little less uniform.  
For the move shown in \cref{fig:c540-move}(c,d), the transition
$\Rrep{c^5y^8} \to \Rrep{c^4y^7}$ has relative element $\Rrep{g_j^{-1}g_i} = \Rrep{(c^4 y^7)^{-1} (c^5y^8)} = 
\Rrep{c^2y}$ (\cref{eq:atom-rel-perm-R}). 
Since $\Rrep{c^2y}$ acts by literal right multiplication by
$(c^2y)^{-1}=c^5y^{11}$, and literal right multiplication by $c^p y^q$ gives 
\[ c^r y^s \cdot c^p y^q = c^{r+5^s p}y^{s+q},\] then \[
c^r y^s\longmapsto c^r y^s(c^5y^{11})
 =c^{r+5^s\cdot5}y^{s+11}
 =c^{r+5^{s+1}}y^{s-1}.
\]
Thus, the check qubits at coordinates transform as \[ (s,r)\longmapsto (s-1,\, r + 5^{s+1}). \] The column shift is again just a cyclic shift by 1. However, the row shift now depends on the current column $s$, which decomposes into $6$ cyclic shifts done sequentially with 2 AODs.

\textit{Semidirect decomposition of $G$.} There are different ways to decompose groups, and in reality, the final decomposition is optimized over SE cycle time as well. Each such decomposition provides a possible coordinate grid, and thus different row and column permutations. 
For example, the group used for the
$\llbracket540,108,18\rrbracket$ code admits the descriptions
\[ G\cong C_9\rtimes C_{12}
 \cong (C_9\rtimes C_3)\rtimes C_4
 \cong (C_9\rtimes C_4)\rtimes C_3, \]
with the appropriate conjugation action understood in each semidirect product. These descriptions yield 2D grids of size $9\times12$, $27\times4$, and $36\times3$, respectively. We use the $C_9\rtimes C_{12}$ description, with $12$ columns and $9$ rows (shown in \cref{fig:c540-move}), because it gives a faster SE-cycle time. 

\textit{Optimized data qubit layouts.} The coordinate grids in \cref{fig:c150-move,fig:c540-move} label the data qubit sites by group elements $x \in G$. In deriving the check atom movements above, we assumed that data qubit $x$ occupies the site with coordinate $x$. To reduce transport time, these data qubit layouts can be optimized within each data block $D$. Optimization for a block $D$ assigns a data qubit $x$ a new coordinate $P_D(x)$, where $P_D: G \to G$ is a permutation. Now, the site with coordinate $s$ hosts data qubit $P_D^{-1}(s)$.

Let $\pi:G\to G$ denote one of the check atom site permutations for the unoptimized data qubit assignment. Specifically,
$\pi=\Lrep{g_j^{-1}g_i}$ or $\Rrep{g_j^{-1}g_i}$ as derived in \cref{eq:atom-rel-perm,eq:atom-rel-perm-R}. Suppose a transition takes a check atom from data block $D$ to data block $D'$. Under the unoptimized layouts, it moves from coordinate $x$ to coordinate $\pi(x)$. Under the optimized layouts, the source and destination coordinates are now $P_D(x)$ and $P_{D'}(\pi(x))$. Thus, the physical movement that the check qubits perform is \[\pi_{\text{phys}} = P_{D'} \circ \pi \circ P_{D}^{-1}.\]

\textit{Figures and videos.} The example moves described for $\llbracket150,30,10\rrbracket$ and $\llbracket540,108,18\rrbracket$ are depicted using the chosen data qubit labeling in \cref{fig:c150-move} and \cref{fig:c540-move}. The accompanying videos in \cite{knitkitgithub} animate the complete SE cycles using the optimized data qubit assignments $P_D$. A particular transition is unchanged under the optimized layout when \[ \pi_{\mathrm{phys}}=\pi. \]
The transitions selected for both figures have this property under their optimized layouts. So, the exact movement shown in \cref{fig:c150-move} appears at 00:32--00:34 of the $\llbracket150,30,10\rrbracket$ 2-AOD SE-cycle video, and the $D_5$ transition $\Rrep{c^5y^8}\to\Rrep{c^4y^7}$ shown in \cref{fig:c540-move} appears at 00:59--01:05 of the corresponding $\llbracket540,108,18\rrbracket$ video.

In summary, an SE cycle reduces to a sequence of relative permutations $\Lrep{r_{ij}}$ and $\Rrep{\cdot}$ applied to each check block. The SE cycle videos animate the resulting optimized physical permutations $P_{D'}\pi P_D^{-1}$ over the complete schedule.
The metrics reported in \cref{table:atomarray-metrics} measure the costs of executing exactly these permutations under the two hardware options: AODs available today and idealized fast SLMs.

\begin{figure}[t]
    \centering
    \includegraphics[width=\linewidth]{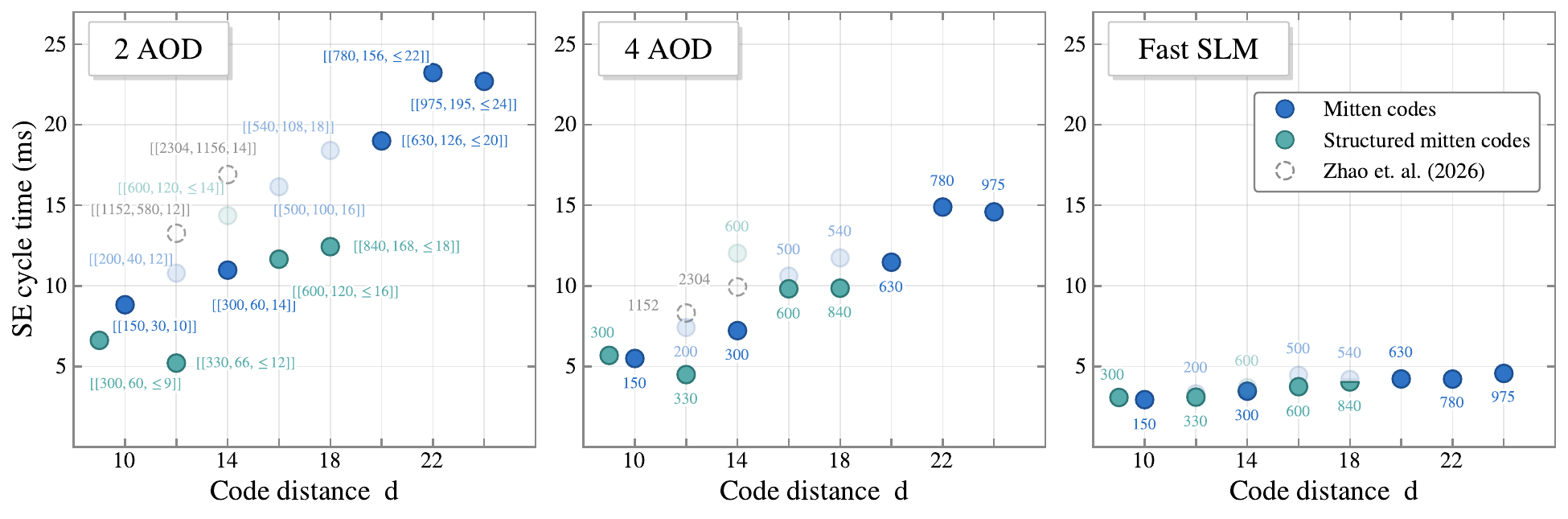}
    \caption{SE cycle time for mitten codes and structured mitten codes. To represent current day hardware, we estimate SE cycle times given 2 (left) or 4 (center) pairs of crossed AODs. With AODs, atom movement is restricted to a strict product grid set of atoms and rigid translations. To represent idealized, futuristic hardware, a fast SLM (right) can steer atoms independently and quickly. 
    For comparison, we assume the same physical layout and acceleration assumptions as the AOD estimates.
    All SE time estimates assume $5500 \mathrm{\, m/s^2}$ for acceleration and deceleration, following prior theoretical SE cycle estimates \cite{zhao2026towards}, though this parameter is tuneable. 
    The exact values for all plotted points are listed in \cref{table:atomarray-metrics}. When multiple codes have the same distance, the fastest is shown in full color and the others are faded for readability. 
    Blue points are instances of the mitten codes detailed throughout this paper. Green points are structured mitten codes of \cref{sec:app_structured-mittencodes}, which are designed to reduce SE cycle time by requiring the $D_5$ polynomials to coincide so the corresponding moves execute simultaneously. Grey dashed points are the codes analyzed in Ref.~\cite{zhao2026towards} from which we pasted their SE cycle times from. 
    In the 2 AOD panel, we label the $\llbracket n, k, d \rrbracket$ parameters, and for the subsequent plots, we just keep the $n$ value.
    }
    \label{fig:atom_SE_cycle_times}
\end{figure}

\subsubsection{Hardware Option 1: AODs}
Having fixed the required check-atom permutations and a physical 2D layout, we now estimate the cost of implementing a SE cycle using present day hardware.

\begin{definition}[Crossed AOD pair]
A crossed pair of AODs consists of two acousto-optic deflectors with orthogonal deflection axes. One AOD is driven by a set of RF tones that determines the horizontal coordinates $X(t)=\{x_1(t),\ldots,x_{n_x}(t)\}$, while the other determines the vertical coordinates $Y(t)=\{y_1(t),\ldots,y_{n_y}(t)\}$. Together, they produce the product grid of tweezer positions \[ X(t)\times Y(t) = \{(x_i(t),y_j(t)):1\leq i\leq n_x,\ 1\leq j\leq n_y\}. \]

Changing $x_i(t)$ moves the entire corresponding column, while changing $y_j(t)$ moves the entire corresponding row. Consequently, atoms sharing a row or column coordinate cannot follow completely independent trajectories.
\end{definition}

In this work, \emph{$q$-AODs} means $q$ independently controlled crossed AOD pairs. Thus, ``2 AOD'' and ``4 AOD'' refer to two and four crossed pairs, respectively, rather than to two and four individual one-axis deflectors.
Using this convention, we estimate the time for a SE cycle with 2 or 4 pairs of crossed AODs (Table~\ref{table:atomarray-metrics}). 

For the spatial layout and transport timing, we follow the same assumptions listed in points 1-7 of Appendix E.4 of Ref.~\cite{zhao2026towards}. In particular, data qubits are spaced by $12~\mu\mathrm{m}$, and an ancilla qubit must be brought within $2~\mu\mathrm{m}$ of a data qubit to perform an entangling gate. We also use the same transport model, that atoms accelerate and decelerate at a fixed $5500~\mathrm{m/s^2}$, a value motivated by previous experimental data \cite{bluvstein2022quantum}.

\begin{table}[t]
\centering
\setlength{\tabcolsep}{6pt}
\renewcommand{\arraystretch}{1.15}
\begin{tabular}{@{}l l c c c c c@{}}
\toprule
 & & \multicolumn{4}{c}{\textbf{AOD}} & \textbf{Fast SLM} \\
\cmidrule(lr){3-6}\cmidrule(lr){7-7}
Code $\llbracket n,k,d \rrbracket$ & Group $G$ & \makecell{2 AOD SE time\\(ms)} & \makecell{4 AOD SE time\\(ms)}
 & \makecell{Avg.\\transfers} & \makecell{Max speed\\(m/s)} & \makecell{SE time\\(ms)} \\
\midrule
\multicolumn{7}{@{}l}{\emph{Mitten codes}}\\
\addlinespace[2pt]
$\llbracket 150,30,10 \rrbracket$ & $C_5 \times S_3$ & 8.83 & 5.49 & 6.0 & 0.81 & 2.94 \\
$\llbracket 200,40,12 \rrbracket$        & $C_4 \times D_{10}$     & 10.79 &  7.41 &  6.2 & 1.12 & 3.31 \\
$\llbracket 300,60,14 \rrbracket$        & $C_{10} \times S_3$     & 10.98 &  7.22 &  6.4 & 0.96 & 3.47 \\
$\llbracket 500,100,16 \rrbracket$       & $C_5 \rtimes C_{20}$    & 16.15 & 10.61 &  6.8 & 1.15 & 4.45 \\
$\llbracket 540,108,18 \rrbracket$       & $C_9 \rtimes C_{12}$    & 18.41 & 11.74 &  7.6 & 1.21 & 4.19 \\
$\llbracket 630,126,\leq20 \rrbracket$   & $C_7 \rtimes C_{18}$    & 19.00 & 11.47 &  6.0 & 1.33 & 4.22 \\
$\llbracket 780,156,\leq22 \rrbracket$   & $C_{13} \times A_4$     & 23.24 & 14.90 &  9.1 & 1.26 & 4.21 \\
$\llbracket 975,195,\leq24 \rrbracket$   & $C_{13} \rtimes C_{15}$ & 22.70 & 14.60 &  5.4 & 1.52 & 4.57 \\
\midrule
\multicolumn{7}{@{}l}{\emph{Structured mitten codes}}\\
\addlinespace[2pt]
$\llbracket 300,60,\leq9 \rrbracket$     & $C_{10} \times S_3$     &  6.62 &  5.68 &  7.5 & 1.09 & 3.07 \\
$\llbracket 330,66,\leq12 \rrbracket$    & $C_{11} \times S_3$     &  5.21 &  4.49 &  5.8 & 1.09 & 3.09 \\
$\llbracket 600,120,\leq 14 \rrbracket$  & $C_5 \times S_4$        & 14.36 & 12.03 & 12.7 & 1.41 & 3.69 \\
$\llbracket 600,120,\leq 16 \rrbracket$  & $C_5 \times S_4$        & 11.66 &  9.82 & 10.7 & 1.54 & 3.74 \\
$\llbracket 840,168, \leq 18 \rrbracket$ & $C_7 \times S_4$        & 12.43 &  9.86 & 11.0 & 1.54 & 4.04 \\
\bottomrule
\end{tabular}
\caption{SE metrics for implementing mitten and structured mitten codes in atom arrays. All the SE time estimates for either 2 or 4 pairs of crossed AODs, and the fast SLM, are plotted in \cref{fig:atom_SE_cycle_times}. In addition, we report two experimental considerations relevant for AOD implementations. First, the number of trap handoffs experienced by a check qubit across the entire SE cycle, averaged over all check qubits. Second, the maximum speed reached by any atom throughout the SE cycle.} 
\label{table:atomarray-metrics}
\end{table}

These acceleration and distance assumptions are the two main spatial inputs to the timing model. The other points in \cite{zhao2026towards} assume that atoms travel along straight lines and that entangling pulses take $1~\mu \mathrm{s}$, which is negligible compared to the timing of the movements. Thus, we only time the movement of the check qubits (data qubits remain fixed). We do not include Hadamards and measurement time because those operations can be parallelized.

We emphasize that this is not an optimized model over all possible SE strategies. Faster implementations may be possible by also moving data qubits, using Shor-style syndrome extraction with GHZ states for ancillas, or even using different transport methods. 
We optimize the data qubit layout for 2 AOD schedule. We then use this same layout for estimating the 4 AODs and fast SLM SE cycle times, and other metrics. For the 4 AOD estimates, we pipeline successive SE rounds, and report the interval between successive rounds, rather than the time to just do 1 isolated round.

\subsubsection{Hardware Option 2: Fast SLM} 

The second metric assumes possible hardware that may be available in the next coming years. Static tweezers of arbitrary patterns are currently generated with spatial light modulators (SLMs), and the refresh rate of SLMs available today is too slow to transport atoms quickly. However, there is ongoing progress towards developing faster SLMs, with refresh rates ranging from a few kHz \cite{meadowlark_slm_1536}, to 100s of kHz \cite{SLM_DPM}, to even the MHz-scale \cite{wei202610, bytyqi2026device}. A fast SLM would remove the restriction of AODs to allow arbitrary movements of atoms. The trap pattern could be updated so that each atom follows its own path to the next interaction location. We model this idealized device by keeping the same qubit spacing and acceleration value $5500~\mathrm{m/s^2}$ used for the AOD estimates. Because this hardware model is forward-looking, the raw SE cycle times are not to be emphasized, but rather the scaling is.

\subsubsection{Discussion on the transport parameter assumptions}
\label{sec:transport_assumption_discussion}
For consistency with the hardware model of Ref.~\cite{zhao2026towards}, we evaluate all layouts using the same acceleration $a=5500~\mathrm{m/s^2}$ during atom transport.
We want to emphasize that this acceleration is a tunable parameter based on the trap depth of each tweezer. However, deeper traps entail other trade offs, such as higher laser power.

For some classical intuition, one upper bound on the acceleration follows from the finite tweezer depth. Assume we are in the regime where transport is limited by the finite tweezer trap depth, rather than by AOD technicalities. In the frame of a tweezer accelerating at $a$, the atom experiences an inertial force of $m a$. As the tweezer pulls the atom, the atom lags behind the trap center, and the gradient of the tweezer potential produces a restoring force \[F_{\mathrm{rest}}(x)=-\frac{dU(x)}{dx}\] that pulls the atom back toward the center. To keep the atom trapped, the inertial force must be lower than this maximum restoring force, $ma < F_{\mathrm{rest, \, max}}$.
For a Gaussian tweezer potential $U(x) = -U_0 e^{-2x^2/w_0^2}$ of fixed waist $w_0$, the restoring force is maximal at $|x| = w_0/2$, giving \[F_{\mathrm{rest, \, max}} =\frac{2e^{-1/2}}{w_0}U_0.\] Then, the maximum acceleration is therefore
\begin{equation*}
    a_{\mathrm{conf}}
    =\frac{F_{\mathrm{rest,max}}}{m}
    =\frac{2e^{-1/2}U_0}{mw_0}.
\end{equation*}
At fixed tweezer waist $w_0$ and detuning, $U_0\propto I$, so this classical bound scales as $a_{\mathrm{max}}\propto I$.
If this bound alone determined the transport rate, then a move over fixed distance $d$ which starts and ends at rest, takes \[T_{\mathrm{move}} =  2\sqrt{d/a} \propto I^{-1/2}.\]

In practice, high-fidelity transport is generally limited before this classical escape threshold is reached. This scaling is quite a bit higher than how fast atoms accelerate in practice, and is usually limited by transitions to adjacent vibrational states. Ref~\cite{bluvstein2022quantum} show that the resulting excitation depends on the Fourier component of the acceleration profile at the trap frequency $\omega_0$. For the constant-jerk trajectory analyzed there, maintaining fixed motional excitation gives
\[
    T_{\mathrm{move}}\propto\omega_0^{-3/4}.
\]
To simplify these nuances in exact scaling, we keep this approximate quadratic scaling as a comparison with atom number in the main text.

\textit{Speed limit with AODs.} In current AOD systems, the maximum transport speed is limited by AOD-induced optical aberrations, including cylindrical lensing effects, which can limit transport speed \cite{dickson1972optical}.
In our schedules however, the maximum speed reached is $1.54~\mathrm{m/s}$ (\cref{table:atomarray-metrics}). Peak speeds \textit{around this scale} have been achieved using deeper traps, ie. during Ref.~\cite{manetsch2025tweezer}'s $270 \mu m$ transport in $400 \mu s$ using an adiabatic sine trajectory. Thus, our modeled speed is on the scale of experimentally demonstrated fast transport. Note, however, that Ref.~\cite{manetsch2025tweezer} reported a lower transport fidelity for the deeper-trap move (approximately 99.85\% compared to the shallower trap benchmark 99.95\%), so this comparison establishes experimental accessibility of the speed rather than equivalent high-fidelity transport.

\textit{Trap transfers.} Consecutive movement events may require a check atom to be carried by different crossed-AOD pairs. We therefore report the mean number of trap transfers per check atom in an SE cycle. The number is a valid consideration when performing SE cycles, as recent rigorous experimental characterization of AOD transfer and movement errors \cite{manetsch2025tweezer} indicates that, for the approximate number of transfers and total travel distance required per SE cycle, transfer errors can be more limiting than the motion itself.
In \cref{table:atomarray-metrics}, the average transfers column reports the AOD-to-AOD and AOD-to-SLM trap handoffs that an atom experiences. The number of handoffs is averaged over all check qubits because the data qubits remain stationary in our implementation. Check qubits are handed off to SLM traps if they are stationary for the upcoming move.

All in all, we want to emphasize that SE cycle time is only one of several hardware considerations. While experiments have moved faster with deeper traps, they also saw lower coherence compared to their shallower traps \cite{manetsch2025tweezer}. More laser power also means less qubits which can be moved at a time. Other considerations include number of trap transfers and the maximum speed limit in current AODs. Thus, a holistic analysis is therefore needed when selecting transport speeds and assessing SE cycle times and fidelity.

\subsection{Superconducting qubits}
\label{app:superconducting_hw}

The simplest metric for classifying the hardware complexity of implementing a quantum error correcting code on a superconducting qubits platform is the thickness of the code's Tanner graph~\cite{bravyi2024high,mutzel1998thickness,tremblay2022constant}. 

\begin{definition}[Planar thickness] \label{def:thickness}
The \emph{planar thickness}  $\theta(G)$ of a graph $G = (V,E)$ is the minimum integer $t$ such that the set of edges of $G$ can be decomposed as 
\[
E = E_1 \sqcup E_2 \sqcup \cdots \sqcup E_t
\]
where each graph $G_i=(V,E_i)$ is planar.
\end{definition}
A graph is planar if there exists some placement of its vertices in the plane such that its edges may be routed so that they only intersect at vertices. At first glance, this makes the thickness of the Tanner graph seem unrelated to the complexity of laying out a code on a superconducting qubits chip; the vertices of the Tanner graph correspond to physical data and ancilla qubits whose locations must remain fixed on the chip whereas in Definition~\ref{def:thickness} each subgraph is free to choose the locations of its vertices in order to arrange for planarity. 

Fortunately, this issue is resolved by a theorem due to Pach and Wenger~\cite{Pach2001}. 
\begin{theorem}[Planar graph embedding {\cite[Theorem~1]{Pach2001}}]
    Any planar graph $G = (V,E)$ can be embedded with its vertices placed at any prescribed set of $|V|$ distinct points in the plane, provided the edges are allowed to be drawn as polygonal curves. Furthermore, such an embedding can be computed in $O(|V|^2)$ time.
\end{theorem}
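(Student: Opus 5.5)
The plan is to reduce the problem to a two-page book embedding and then carry that embedding onto the prescribed points with a homeomorphism of the plane that keeps vertical lines vertical. Let $n=|V|$ and let $p_1,\dots,p_n$ be the prescribed points. After a generic rotation of the coordinate system we may assume that the $p_i$ have distinct $x$-coordinates, and we relabel them so that $x(p_1)<\dots<x(p_n)$.

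\emph{Step 1 (Hamiltonian augmentation).} We first make $G$ Hamiltonian in a controlled way, in the spirit of Pach--Wenger and Kaufmann--Wiese:
\begin{enumerate}
\item fix a plane embedding of $G$ and add edges until it is a maximal planar graph, or otherwise triangulate it;
\item subdivide some edges, each at most once, and add further edges, so that the result $G'$ is planar and has a Hamiltonian cycle $C$ through all original vertices and all subdivision vertices.
\end{enumerate}
Cutting $C$ at one edge turns it into a spine. Every edge of $G'$ not on $C$ lies either inside or outside $C$. Hence $G$ gets a topological two-page book embedding in which each original edge crosses the spine at most once, namely at its subdivision vertex. This step is purely combinatorial, and I expect it to take $O(n)$ time once a planar embedding is available.

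\emph{Step 2 (drawing on a line).} Next we place the $n$ original vertices on the horizontal axis, in the order they appear along $C$, at the $x$-coordinates $x(p_1),\dots,x(p_n)$. We then draw the edges as polygonal curves:
\begin{itemize}
\item an edge entirely on the inner page becomes a nested arch above the axis;
\item an edge entirely on the outer page becomes a nested arch below the axis;
\item a subdivided edge crosses the axis once, at a point strictly between its neighbours on the spine.
\end{itemize}
Because the arches on each page are properly nested (the pages are outerplanar), we can realize every arch as a three-segment polygonal curve whose height increases with nesting depth, and no two arches cross. A crossing edge then uses at most two such arcs and has $O(1)$ bends at this stage.

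\emph{Step 3 (transporting to the prescribed points).} Let $f$ be the piecewise-linear function with $f(x(p_i))=y(p_i)$ for each $i$, extended linearly beyond the first and last points. Apply the shear
$$\Phi(x,y)=(x,\,y+f(x)).$$
This is a homeomorphism of the plane, so it preserves non-crossing, and it sends the $i$-th spine vertex to $p_i$. It maps segments to polygonal curves, but each segment acquires a new bend at every breakpoint of $f$ it passes over. So each edge ends up with $O(n)$ bends, and writing down the drawing takes $O(n^2)$ time overall.

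The main obstacle is Step 1. We need a planar Hamiltonian augmentation in which each edge is subdivided at most once and the resulting curves cross the spine correctly, and we need it in linear or quadratic time. I would try first to follow the inductive construction on a canonical ordering of the triangulation. In that construction each new vertex is inserted into the current cycle, and edges that would violate the page assignment are split once.
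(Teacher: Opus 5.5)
\textbf{There is a genuine gap: your construction does not let you choose which vertex goes to which point.} In Step~2 you place the vertex that is $i$-th along $C$ at abscissa $x(p_i)$, and the shear $\Phi$ then carries it to $p_i$. So the vertex-to-point correspondence is dictated by the Hamiltonian cycle you constructed.

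Pach and Wenger's Theorem~1 prescribes the image of each vertex: for an arbitrary bijection $f$ from $V$ onto the given points, $v$ must be drawn at $f(v)$. This is also the version the paper needs. Every qubit has one fixed location on the chip, and that location must be shared by all three planar layers $E_1,E_2,E_3$ of the thickness decomposition. A statement in which each layer may pick its own bijection is useless there.

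Your shear preserves $x$-coordinates, so it can never change the left-to-right order of the vertices. That order is the order along $C$, up to the choice of cut edge, the direction of traversal, and the finitely many orders obtainable by rotating the axes. What you have sketched is therefore the weaker free-bijection statement, essentially the Kaufmann--Wiese setting, where in fact $O(1)$ bends per edge already suffice.

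\textbf{The missing step is not cosmetic.} With a prescribed bijection, Pach and Wenger show that a linear number of edges may need a linear number of bends; a perfect matching assigned at random to points in convex position is an example. This is why the theorem carries $O(n)$ bends per edge and $O(|V|^2)$ time. No construction whose spine order is the Hamiltonian order, with each edge made of $O(1)$ arcs, can reach it.

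You need a step that reconciles the order of $C$ with the $x$-order of the points $f(v)$. For instance, it would suffice to have a topological book embedding of $G$ whose spine order is the $x$-order of $f(V)$ and in which every edge crosses the spine $O(n)$ times. Then draw the spine as the $x$-monotone chain through the sorted points. Draw each page arc as a vertical--horizontal--vertical polyline above or below the chain, with heights increasing with nesting depth. This gives $O(n)$ bends per edge and $O(n^2)$ in total. Reusing your shear on such arcs instead would cost up to $O(n)$ bends per arc.

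Producing such an order-prescribed book embedding from the Hamiltonian one is exactly the nontrivial content of the theorem, and your plan has no mechanism for it. By contrast, the part you flag as the main obstacle, Step~1, is a known lemma of Kaufmann and Wiese: every planar graph has a two-page topological book embedding with at most one spine crossing per edge. You can simply cite it.
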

Hence, the planar thickness of the Tanner graph is still a justified proxy for the minimum number of planar gate layers needed to realize the code.

\begin{proposition}[Tanner graph thickness lower bound]
Let $\mathcal T$ be the Tanner graph of an $\llbracket n,k,d \rrbracket $ CSS stabilizer code with $m$ stabilizer generators. Let $\bar w$ be the average check weight of the code defined as $\bar w \coloneqq \frac{1}{m}\sum_{\alpha} |S_\alpha|$ where $|S_\alpha|$ denotes the number of qubits involved in the stabilizer check $S_\alpha$. The planar thickness $\theta(\mathcal T )$ satisfies the lower bound
\begin{equation}
    \theta(\mathcal T) \geq \left\lceil \frac{\bar w m} {2(n+m-2)} \right\rceil
\end{equation}
In particular, if the stabilizer generators are independent, so that $m=n-k$, and every stabilizer has weight $w$, then
\begin{equation}
    \theta(\mathcal T)
    \geq
    \left\lceil
    \frac{w(1-r)}{2\left(2 - r - \frac{2}{n}\right)}
    \right\rceil \label{eq:thickness_lb_simp}
\end{equation}
where $r = k/n$ is the encoding rate.
\end{proposition}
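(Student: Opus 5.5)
The plan is to combine Euler's formula for planar bipartite graphs with a pigeonhole argument over the $\theta(\mathcal T)$ planar layers. The Tanner graph $\mathcal T$ has $|V| = n+m$ vertices: the $n$ data qubits and the $m$ stabilizer checks. Each check $S_\alpha$ contributes $|S_\alpha|$ edges, so the edge count is exactly
\[
|E| = \sum_\alpha |S_\alpha| = \bar w\, m .
\]

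The key ingredient is the standard edge bound for bipartite planar graphs. Since $\mathcal T$ is bipartite (qubits versus checks), every subgraph $(V,E_i)$ in a thickness decomposition is also bipartite and therefore triangle-free. I would first argue that a planar bipartite graph on $N\ge 3$ vertices has at most $2N-4$ edges. For a connected graph this follows from Euler's formula $N - |E| + F = 2$ together with the observation that every face of a bipartite plane graph has length at least $4$, so $4F \le 2|E|$. For a disconnected graph I would add edges between components while keeping the graph planar and bipartite, which only increases the edge count, or else sum the bound over components, which gives an even smaller total. This yields
\[
|E_i| \le 2(n+m) - 4 = 2(n+m-2)
\]
for each planar layer. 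The degenerate case $n+m<3$ is trivial and can be dismissed.

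Summing over a decomposition into $t=\theta(\mathcal T)$ planar layers gives $\bar w m = |E| \le 2t(n+m-2)$, hence $t \ge \bar w m / (2(n+m-2))$. Since $t$ is an integer, we can take the ceiling, which gives the first claim.

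For the specialization, substitute $\bar w = w$ and $m = n-k = n(1-r)$. The numerator becomes $w\,n(1-r)$ and the denominator becomes $2(2n - nr - 2) = 2n\,(2 - r - 2/n)$. Cancelling $n$ gives \eqref{eq:thickness_lb_simp}. The only step requiring any care is the bipartite planar edge bound in the disconnected case, and it is routine.
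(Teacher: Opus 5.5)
Your proposal is correct and follows the paper's argument: Euler's formula combined with the girth-$4$ face bound $4|F|\le 2|E|$ gives at most $2(n+m-2)$ edges per planar layer, counting the $\bar w m$ edges over the layers gives the ceiling bound, and the specialization uses the same substitution $m=n(1-r)$. You are slightly more careful than the paper, which assumes connectivity even though the individual layers $(V,E_i)$ need not be connected. Your edge-adding reduction handles this cleanly; the componentwise-summing alternative needs a small fix, since $2N_j-4$ fails for components with one or two vertices, although the total bound still holds.
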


\begin{proof}
We assume $\mathcal T$ is connected. Since $\mathcal T $ is bipartite its girth is lower bounded by four. In fact, the Tanner graph of any non-trivial CSS code with distance greater than one must have girth exactly equal to four. By Euler's formula the number of vertices, edges, and faces of a planar subgraph must satisfy 
\begin{equation}
    |V| - |E| + |F| = 2.
\end{equation}
Each face must have at least four edges (since $\mathcal{T}$ has girth 4) and each edge participates in at most two faces, so 
\[4|F| \leq 2|E|.\] Plugging this into Euler's formula, we obtain 
\begin{equation}
    |E| \leq 2(|V| - 2),
\end{equation}
so a planar layer of $\mathcal{T}$ can have at most $2(|V| - 2)$ edges. Since $\mathcal{T}$ has $n + m$ vertices and $\bar w m$ edges and each planar layer can only support at most $2(|V| - 2)$ edges, we obtain the lower bound 
\begin{equation*}
        \theta(\mathcal T) \geq \left\lceil \frac{\bar w m} {2(n+m-2)} \right\rceil
\end{equation*}
which simplifies to \eqref{eq:thickness_lb_simp} after taking $m = n - k$, $\bar w = w$, and expressing everything in terms of $r$.
\end{proof} 

All of the mitten codes we consider in this work have block size $n \geq 150$, encoding rate $r = 20\%$, and  check weight $w = 9$. Plugging these numbers into \eqref{eq:thickness_lb_simp} produces the following immediate corollary.
\begin{corollary} \label{cor:thickness_lb}
    The thickness $\theta$ of a mitten code with check weight $w = 9$ and encoding rate $r = 20\%$ is lower bounded by 3.
\end{corollary}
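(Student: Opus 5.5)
The plan is to substitute the mitten-code parameters into the bound \eqref{eq:thickness_lb_simp} of the preceding proposition and evaluate the ceiling. We need three inputs. First, the stabilizer generators are independent, so $m=n-k$. Second, every stabilizer has weight exactly $w=9$. Third, $r=k/n=1/5$.

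For the rate, a mitten code has $n=5|G|$ and $k=|G|$. Together with $\operatorname{rank} H_X=\operatorname{rank} H_Z=2|G|$, this gives $m=4|G|=n-k$. Full rank of $H_X$ follows because its columns over $D_3,D_4$ form the block-diagonal matrix $\operatorname{diag}(\Lrep{a_1},\Lrep{a_1})$, which is invertible by Definition~\ref{def:mitten_processor_code}. The same argument applies to $H_Z$ using its columns over $D_2,D_4$, namely $\operatorname{diag}(\Rrep{b_1},\Rrep{b_1})$. For the weight, each row of $H_X$ meets three permutation blocks, each of weight $3$, giving total weight $9$. The same holds for $H_Z$, provided the group elements within each ring entry are distinct, as they are in the canonical form.

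Plugging in gives
\[
\frac{w(1-r)}{2\left(2-r-\frac{2}{n}\right)}=\frac{9\cdot\frac45}{2\left(\frac95-\frac2n\right)}=\frac{3.6}{3.6-\frac4n}>1
\]
for every finite $n$. Because the quantity exceeds $1$, its ceiling is at least $2$. To get the claimed bound of $3$, the ceiling must land at $3$, which requires the ratio to exceed $2$.

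This last requirement is the step I expect to be the obstacle. With the proposition's face bound $4|F|\le 2|E|$, the arithmetic above only reaches a value just above $1$, so the ceiling gives $2$, not $3$. I would therefore go back to the edge count in the proof of the proposition. For the Tanner graph, $|E|=\bar w m = 9\cdot 4|G|=36|G|$ and $|V|=n+m=9|G|$. A planar bipartite layer carries at most $2(|V|-2)=18|G|-4$ edges. Then $36|G|/(18|G|-4)>2$, so the ceiling is $3$.

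The discrepancy is in the edge count. The printed simplification uses $\bar w m$ with $m=n-k$, but if one instead writes the edges as $\bar w m/n$ per qubit, the value drops by the rate. So I would recompute directly from $|E|=36|G|$ and $|V|=9|G|$ rather than trusting \eqref{eq:thickness_lb_simp}. This gives $\theta\ge\lceil 36|G|/(18|G|-4)\rceil=3$ for all $|G|\ge1$, which also covers $n\ge150$.
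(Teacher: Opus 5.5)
Your final argument is correct and follows the paper's route: apply the proposition's edge-count bound $\bigl\lceil \bar w m/(2(n+m-2))\bigr\rceil$ with $n=5|G|$, $m=4|G|$, and $w=9$. Your rank check via $\operatorname{diag}(\Lrep{a_1},\Lrep{a_1})$ and $\operatorname{diag}(\Rrep{b_1},\Rrep{b_1})$ correctly justifies $m=n-k$ and $r=1/5$.

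The ``obstacle'' you describe, however, comes from an arithmetic slip on your part and not from a flaw in \eqref{eq:thickness_lb_simp}. Since $9\cdot\frac45=\frac{36}{5}=7.2$, not $3.6$, the printed formula evaluates to $\frac{7.2}{3.6-4/n}>2$ and gives the ceiling $3$ directly. This expression is identical to your recomputation $36|G|/(18|G|-4)$: divide numerator and denominator by $5|G|$. You should therefore delete the paragraph diagnosing a ``discrepancy in the edge count.''
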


\begin{figure}
    \includegraphics[width=\linewidth]{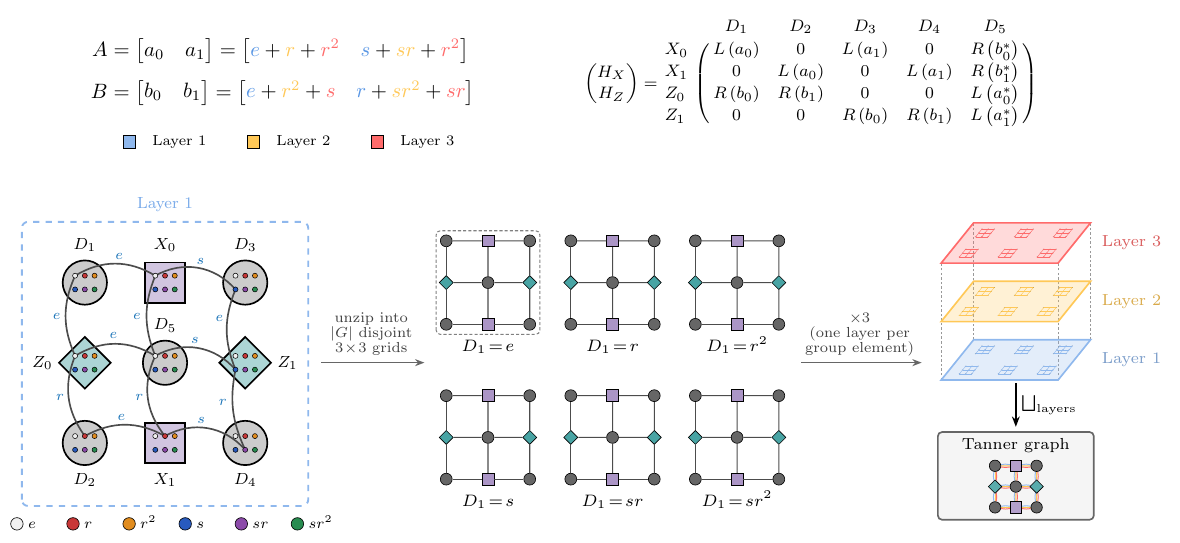}
    \caption{Thickness-3 decomposition of the Tanner graph of a mitten code, illustrated for the nonabelian group $G = S_3 = \langle r, s \rangle$, with $r$ a rotation and $s$ a reflection.
    \textbf{(top)}~Base matrices $A = [a_0\ a_1]$ and $B = [b_0\ b_1]$ over the group algebra $\mathbb{F}_2[S_3]$, together with the resulting parity-check matrices $H_X$ and $H_Z$. Each ring entry is a sum of three group-element monomials, color-coded by the layer they contribute to.
    \textbf{(bottom left)}~Layer~1 corresponds to the first group element in each ring entry (the ones labeled in blue). The data qubits in each block ($D_1, \dots, D_5$) and the check qubits in each block ($X_0, X_1, Z_0, Z_1$) are labeled by group elements: each block holds $|G|=6$ qubits arranged in a $3 \times 3$ grid as shown. Fixing the first $D_1$ qubit to the identity $e$, we extract one $3 \times 3$ grid by following the left and right multiplication---all horizontal edges from a data qubit block to a check qubit block act via the left regular representation, and all vertical edge from a data qubit block to a check qubit block act via the right regular representation.
    \textbf{(bottom center)}~Choosing a different starting qubit in $D_1$, labeled by some other group element, yields a different but isomorphic $3 \times 3$ grid; ranging the starting qubit over all of $G$ produces $|G|$ disjoint $3 \times 3$ grids, one per group element.
    \textbf{(bottom right)}~Repeating this construction for each of the three monomials in every ring entry, we obtain three copies of this $|G|$-disjoint-$3 \times 3$-grids graph, one per group element in each ring entry of the base matrices. The disjoint union of these layers reconstructs the Tanner graph $\mathcal{T}$ of the code.}
    \label{fig:thickness3_layout}
\end{figure}

We now show that this lower bound is tight by presenting an explicit decomposition of the Tanner graph of our mitten codes into three planar layers. We present the decomposition in a fully self-contained manner in Fig.~\ref{fig:thickness3_layout}. In what follows, we simply focus on proving the correctness of the decomposition.

\begin{theorem}\label{thm:thickness}
    The thickness $\theta$ of a mitten code with check weight $w = 9$ and encoding rate $r = 20\%$ is exactly 3. 
\end{theorem}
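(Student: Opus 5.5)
The lower bound $\theta\ge 3$ is exactly Corollary~\ref{cor:thickness_lb}, so what remains is to show that the edge set of the Tanner graph $\mathcal T$ splits into three planar layers. I would use the decomposition of Fig.~\ref{fig:thickness3_layout}. Write each ring entry as a sum of three monomials:
\[
a_0=g_1+g_2+g_3,\qquad a_1=g_4+g_5+g_6,\qquad b_0=h_1+h_2+h_3,\qquad b_1=h_4+h_5+h_6 .
\]
The canonical form of Definition~\ref{def:mitten_processor_code} has $g_4=h_4=e$. For $\ell\in\{1,2,3\}$, layer $E_\ell$ collects all edges coming from the $\ell$-th monomial of every entry. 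These edges are taken in every block where that entry appears, including the starred entries $a_i^*$ and $b_i^*$ in the $D_5$ column, where $(g)^*=g^{-1}$. Each binary edge of $\mathcal T$ arises from exactly one monomial in exactly one block. The layers $E_1,E_2,E_3$ therefore partition $E$, because $\Lrep{g}$ and $\Rrep{g}$ are permutation matrices and the three monomials of an entry are distinct, so no edge is counted twice.

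The key step is to show that each layer graph $(V,E_\ell)$ is a disjoint union of $|G|$ copies of the $3\times3$ grid graph, so that it is planar. Fix a layer and let its monomials be $\alpha_0,\alpha_1,\beta_0,\beta_1$ (one from each of $a_0,a_1,b_0,b_1$). Every vertex then has degree at most $4$ in this layer. A data qubit $x\in D_1$ meets exactly one $X_0$ check, namely $\alpha_0x$, through the left representation, and exactly one $Z_0$ check, namely $x\beta_0^{-1}$, through the right representation. I would trace the nine-vertex component containing $x$ using the check-matrix layout:
\[
\begin{array}{ccc}
D_1 & X_0 & D_3\\
Z_0 & D_5 & Z_1\\
D_2 & X_1 & D_4
\end{array}
\]
Horizontal edges are the left-representation edges (data to $X$-checks and $Z$-checks to $D_5$), and vertical edges are the right-representation edges. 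The center $D_5$ qubit $y$ must be adjacent to the four middle-edge vertices $X_0,X_1,Z_0,Z_1$.

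I would write the eight remaining vertices explicitly as products such as $\alpha_1^{-1}\alpha_0x$, $x\beta_0^{-1}\beta_1$, and so on. Each of the four $2\times2$ squares must then close up. Squares meeting the center close because of the associativity identity $(\alpha x)\beta^{-1}=\alpha(x\beta^{-1})$, which is exactly the commutation of $\Lrep{\cdot}$ and $\Rrep{\cdot}$ used in the CSS proof. The other squares are handled similarly once the $D_5$ adjacencies through $\Rrep{b_i^*}$ and $\Lrep{a_i^*}$ are unwound. Since each vertex has exactly the grid degree (corner $2$, edge-midpoint $3$, center $4$) and the map $x\mapsto$ component is a bijection, every component is exactly one grid. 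There are $|G|$ such components, and they exhaust all $9|G|$ vertices. A $3\times3$ grid is planar, so each $E_\ell$ is planar.

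The main obstacle is getting the conventions right at $D_5$. The blocks there use $\Rrep{b_i^*}$ and $\Lrep{a_i^*}$, which carry both the inverse from the involution and the inverse hidden in $\Rrep{g}h=hg^{-1}$. I have to check that the vertex reached from $X_0$ and the vertex reached from $Z_0$ into $D_5$ coincide, so that the component is a grid rather than a longer cycle or path. I would verify this first by a direct calculation using the coordinate formulas of Definition~\ref{def:group-ring-and-regular-reps}. If the labels fail to match for some choice, the fallback is to re-index which monomial of $b_i^*$ goes into layer $\ell$, since only the pairing of monomials across blocks matters. Combining the partition, the planarity of each layer, and Corollary~\ref{cor:thickness_lb} gives $\theta=3$.
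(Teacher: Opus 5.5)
Your proposal is correct and follows the paper's proof: the lower bound comes from Corollary~\ref{cor:thickness_lb}, the edges are partitioned by the $\ell$-th monomial of each ring entry, and each layer is identified with $|G|$ disjoint $3\times 3$ grids whose unit faces close because left and right multiplication commute. The $D_5$ convention check you flag goes through with the natural pairing ($g$ in $a_i$ or $b_i$ paired with $g^{-1}$ in $a_i^*$ or $b_i^*$), since both routes from $x\in D_1$ land on $\alpha_0 x\beta_0^{-1}$; so no re-indexing fallback is needed, and your explicit labeling plus degree count is a more concrete version of the paper's path-independence argument.
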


\begin{proof}
Let $\mathcal{T} = (V, E)$ be the Tanner graph of a mitten code. As illustrated in Fig.~\ref{fig:thickness3_layout}, for each layer $c \in \{1,2,3\}$ we keep only the edges of $\mathcal{T}$ that arise from the $c$-th group element in each ring entry of $A$ and $B$. Since each ring entry contributes one group element per layer, this partitions the edges of $\mathcal{T}$ into three sets $E_0, E_1,$ and $E_2$, and it suffices to show that each layer $(V, E_c)$ is planar.

We do this by exhibiting layer $c$ as $|G|$ disjoint copies of $3 \times 3$ grids as shown in Fig.~\ref{fig:thickness3_layout}. Fix $g \in G$ and anchor the qubit $(D_1, g)$. We label every other data and check qubit block $B \in \{D_2, \ldots, D_5, X_0, X_1, Z_0, Z_1\}$ by traversing the graph from $D_1$ to $B$ and accumulating group elements along the way. The rule is: a horizontal data-to-check step left-multiplies by the layer-$c$ group element on that edge, and a vertical data-to-check step right-multiplies by it; the corresponding check-to-data steps apply the inverse group elements. These labels tell us which qubits from each block participate in the $3 \times 3$ grid corresponding to choosing qubit $g$ in the $D_1$ block. All that remains to show is that this is well-defined i.e. that the label we assign to a block $B$ does not depend on which path we walked from $D_1$ to $B$, so that we really can traverse the graph in a self-consistent way and extract a single $3 \times 3$ grid.

Suppose $\gamma_1$ and $\gamma_2$ are two paths from $D_1$ to $B$. Walking
$\gamma_2$ in reverse undoes the multiplications $\gamma_2$ applied, so the
cycle $\gamma_2^{-1} \circ \gamma_1$, traversed starting from label $g$ at
$D_1$, returns to $D_1$ with label $g$ exactly when $\gamma_1$ and $\gamma_2$
assigned the same label at $B$. Hence, it suffices to verify that every cycle
in the $3 \times 3$ grid leaves the starting label fixed.

The cycle space of the $3 \times 3$ grid is generated by its four unit faces. We check only the $D_1 \to X_0 \to D_5 \to Z_0 \to D_1$ face since the argument is the same for the remaining faces. As demonstrated in
Fig.~\ref{fig:thickness3_layout}, in layer $c$ the $D_1 \to X_0$ edge
corresponds to left multiplication by $a_0^{(c)}$ and the $X_0 \to D_5$ edge corresponds to left multiplication by $\bigl(a_0^{(c)}\bigr)^{-1}$, where $a_0^{(c)}$ denotes the $c$-th group element in the ring entry $a_0$ of $A$. Similarly, the vertical edges $D_5 \to Z_0$ and $Z_0 \to D_1$ correspond to right multiplication by $b_0^{(c)}$ and $\bigl(b_0^{(c)}\bigr)^{-1}$ respectively. Since the left and right representation commute, the net effect of traversing the face is therefore left multiplication by $a_0^{(c)} \cdot \bigl(a_0^{(c)}\bigr)^{-1} = e$ and right multiplication by $b_0^{(c)} \cdot \bigl(b_0^{(c)}\bigr)^{-1} = e$, which is the identity. Hence, this face leaves the starting label fixed, and by the same argument so does every other unit face.

Therefore, the labeling is well-defined for each $g$, and varying $g$ over $G$ produces $|G|$ vertex-disjoint $3 \times 3$ grids $\mathcal{G}_g$ that together satisfy
\[
  (V,\, E_c) \;\cong\; \bigsqcup_{g \in G} \mathcal{G}_g.
\]
Hence, each subgraph $(V, E_c)$ is planar. Combined with the lower bound from Corollary~\ref{cor:thickness_lb}, this gives $\theta(\mathcal{T}) = 3$.
\end{proof}

The thickness of mitten codes is one higher than the gross codes, which have thickness 2~\cite{bravyi2024high}. However, the thickness is still only a coarse proxy of the complexity of laying out a code on multi-layer superconducting hardware and so thickness-3 need not preclude practical implementation. In practice, even if a code's thickness is small, there are many other factors to take into account when designing the layout. Ideally one would like to minimize the length, the number of bump bond transitions, and the number of through-silicon vias (TSVs) per coupler, and this design goal may conflict with the minimum thickness construction of the code \cite{HAL_paper}.   

As a first step toward evaluating the hardware feasibility of mitten codes beyond their thickness, we use HAL~\cite{HAL_paper}\textemdash a recently introduced heuristic algorithm that automates the placement and routing of arbitrary qLDPC codes on multilayer superconducting hardware. At a high level, HAL works by extracting a large planar subgraph of the Tanner graph to layout all the qubits on the first tier. Then it uses a modified version of $A^*$ to route all the remaining edges that cannot be placed on the first tier through higher tiers without collision \cite{HAL_paper, astar}. While HAL can automatically extract a large planar subgraph to fix the qubit layout on the first tier, it also has the nice feature of being able to pass a custom layout of qubits on the first tier. We find that this is crucial to achieving good hardware complexity with our codes. 

In order to achieve low hardware complexity, we lay out all the qubits on the first tier as a super-grid of $|G|$ $3 \times 3$ grids $\mathcal{G}_g$ as shown in the bottom center panel of Fig.~\ref{fig:thickness3_layout}. We refer to each $\mathcal{G}_g$ as a module. The edges in each module only correspond to the connections from the first group element in each ring entry of the base matrices used to construct the mitten code. To realize the full connectivity of the Tanner graph, we require additional inter-module connections between qubits in different $\mathcal{G}_g$. As a first approximation to the hardware complexity, we attempt to minimize the length of these connections by appropriately choosing where we place each $\mathcal{G}_g$ on the super-grid.

\begin{table}[t]
\centering
\begin{tabular}{lcccc@{\hspace{0.4cm}}c}
\toprule
Mitten Code & Tiers & Length & Bumps & TSV & $C_{\mathrm{hw}}$ \\
\midrule
$\llbracket 150,30,10 \rrbracket $ & 6 & 6.49 & 4.12 & 3.58 & 2.02 \\
$\llbracket 200,40,12 \rrbracket $ & 6 & 8.38 & 4.12 & 3.77 & 2.09 \\
$\llbracket 300,60,14 \rrbracket $ & 8 & 6.43 & 4.82 & 5.75 & 2.37 \\
$\llbracket 500,100,16 \rrbracket $ & 10 & 13.19 & 4.97 & 7.11 & 2.81 \\
$\llbracket 540,108,18 \rrbracket $ & 10 & 11.95 & 5.00 & 6.43 & 2.72 \\
$\llbracket 630,126,\leq20 \rrbracket $ & 14 & 17.62 & 5.65 & 8.96 & 3.37 \\
$\llbracket 780,156,\leq22 \rrbracket $ & 14 & 16.85 & 5.87 & 9.61 & 3.42 \\
$\llbracket 975,195,\leq24 \rrbracket $ & 17 & 21.49 & 5.71 & 12.24 & 3.95 \\
\midrule
\multicolumn{6}{l}{\textit{Gross codes}~\cite{HAL_paper}} \\
\midrule
$\llbracket 144,12,12 \rrbracket $ & 5 & 11.08 & 5.06 & 3.27 & 2.12 \\
$\llbracket 288,12,18 \rrbracket $ & 5 & 13.94 & 5.13 & 3.75 & 2.24 \\
\bottomrule
\end{tabular}
\caption{HAL hardware complexities for mitten codes. \emph{Tiers} is the number of vertical routing tiers needed by
HAL. \emph{Length} is the mean routed coupler length (in units of the nearest-neighbor coupler). \emph{Bumps} is the maximum over all tiers of the average number of bump-bond transitions per coupler in each tier, and \emph{TSV} is the average number of through-silicon vias per coupler. The composite hardware complexity $C_{\mathrm{hw}}$ is a weighted average of these four quantities, rescaled so that a surface code layout yields $C_{\mathrm{hw}}=1$ and a layout saturating the optimistic hardware targets of Ref.~\cite{HAL_paper} (5 tiers, coupler length $10\times$ nearest-neighbor, 4 bumps per coupler, 3 TSVs per coupler) yields $C_{\mathrm{hw}}=2$. Gross code values from Ref.~\cite{HAL_paper} are shown for comparison.}
\label{tab:paper_nonab_hw}
\end{table}

\smallskip
\noindent\textit{Placement as a Quadratic Assignment Problem.}
Each module $\mathcal{G}_g$ occupies one cell of a super-grid with $|G|$ sites on the chip. Since the intra-module (first-group-element) edges are fixed on the first tier by construction, the only free choice affecting the higher-tier edge lengths is the assignment $\pi : G \to \{1,\ldots,|G|\}$ that maps each module $\mathcal{G}_g$ to a slot on the super-grid. Given $\pi$, the total Manhattan length of the inter-module edges is
\begin{equation}
    C(\pi) \;=\; \sum_{(u,v)\in E_{2}\cup E_{3}} \bigl\lVert \mathrm{pos}_\pi(u) - \mathrm{pos}_\pi(v)\bigr\rVert_1,
    \label{eq:qap_cost}
\end{equation}
where $E_2,E_3$ are the layer-2 and layer-3 edge sets of the thickness-3 decomposition in Fig.~\ref{fig:thickness3_layout} and $\mathrm{pos}_\pi(\cdot)$ is the chip coordinate implied by $\pi$. Minimizing $C(\pi)$ over all $|G|!$ permutations is an instance of the quadratic assignment problem, which is NP-hard in general \cite{burkard2013, SahniGonzalez1976PCompleteApproximation}. We take two heuristic approaches to optimize $C(\pi)$.

\smallskip
\noindent\textit{Heuristic Search}
The first is a multi-restart local search combined with simulated annealing. For $n_\mathrm{iter}$ trials, we sample a uniformly-random permutation and then repeatedly apply the pairwise swap $(i,j)$ that decreases $C(\pi)$ the most among all $\binom{|G|}{2}$ candidate swaps, until no improving swap exists (i.e. the current permutation is a local minimum under transpositions). Cost evaluations are vectorized so a full pass through the $\binom{|G|}{2}$ candidates takes about $10$~ms at $|G|=30$ and about $1$~s at $|G|=195$; a complete greedy descent to a local minimum needs $5$–$20$ such passes. Different random starting permutations land in different basins of attraction; we keep the best local optimum found across all initial permutations. We then run simulated annealing warm-started from this local optimum which lets the search escape shallow basins. Finally, we pass the configuration output by simulated annealing into HAL. We repeat this process for a range of different super-grid aspect ratios and random seeds and take the best hardware layout found by HAL. While we find that lower cost $C(\pi)$ is correlated with lower hardware complexity, the best hardware complexities for our codes do not always come from the configuration with the minimum cost. This is because coupler lengths are not the only thing that dictate hardware complexity and configurations with lower $C(\pi)$ can sometimes require more couplers to be promoted to higher layers to avoid crossing. 

\smallskip
\noindent\textit{Spectral Placement} As an alternative to local search we also tried a deterministic approach that turns the combinatorial layout problem into a linear algebra one. Let $W_{gh}$ be the number of inter-module edges between $\mathcal{G}_g$ and $\mathcal{G}_h$, $D$ the diagonal degree matrix with $D_{gg}=\sum_h W_{gh}$, and $L = D - W$ the graph Laplacian of the module-connectivity graph. If we assign each module $\mathcal{G}_g$ a continuous 2D position $x_g \in \mathbb{R}^2$, the sum of squared edge lengths weighted by connectivity is
\begin{equation}
    \sum_{g < h} W_{gh}\,\lVert x_g - x_h\rVert_2^2 \;=\; \sum_{d\in\{1,2\}} v^{(d)\top} L\, v^{(d)},
    \label{eq:spectral_obj}
\end{equation}
where $v^{(d)}$ is the vector of the $d$-th coordinates of all $x_g$, and we have fixed a lexicographic ordering of the group elements of $G$. We want to minimize~\eqref{eq:spectral_obj}. However, because the objective splits into two identical quadratic forms $v^{(d)\top} L\, v^{(d)}$, minimizing it naively is degenerate in two distinct ways. To resolve this, we impose two constraints. First, we require each $v^{(d)}$ to be unit-norm and orthogonal to the all-ones vector $\mathbf{1}$: unit-norm rules out the trivial collapse $x_g = 0$, while orthogonality to $\mathbf{1}$ excludes the solution where all modules coincide or lie on a 1D line. Second, we require $v^{(1)}$ to be orthogonal to $v^{(2)}$. This constraint is the less obvious one: since both summands are the same function of their argument, without it both axes would independently select the single minimizer of $v^\top L v$, placing every module on a diagonal line and collapsing the 2D layout to 1D.

Since $L$ is symmetric and positive semi-definite by standard spectral theory arguments, the closed form solution to minimizing~\eqref{eq:spectral_obj} subject to the aforementioned constraints is given by choosing $v^{(1)}$ and $v^{(2)}$ to be the eigenvectors of $L$ associated with its second and third smallest eigenvalues $\lambda_2 \leq \lambda_3$, respectively. To see why, expand any feasible $v^{(d)}$ in an orthonormal
eigenbasis $\{u_1, u_2, \ldots, u_{|G|}\}$ of $L$ with eigenvalues
$0 = \lambda_1 \leq \lambda_2 \leq \cdots \leq \lambda_{|G|}$; note that $u_1 = \mathbf{1}/\sqrt{|G|}$ since our underlying graph is connected. 
The constraint $v^{(d)} \perp \mathbf{1}$ removes the $u_1$ component, so the quadratic form $v^{(d)\top} L\, v^{(d)}$ is a convex combination of $\lambda_2, \ldots, \lambda_{|G|}$ and is minimized by $v^{(1)} = u_2$, achieving the value $\lambda_2$. The mutual orthogonality constraint then forces $v^{(2)}$ into the orthogonal complement of $\operatorname{span}\{\mathbf{1}, u_2\}$, over which the minimizer is $v^{(2)} = u_3$ with value $\lambda_3$. The optimal value
of~\eqref{eq:spectral_obj} is therefore $\lambda_2 + \lambda_3$, and the resulting layout assigns module $\mathcal{G}_g$ the position $x_g = \big(u_2(g),\, u_3(g)\big)$. 

However, these positions are continuous, while the modules must occupy the $|G|$ integer slots of the super-grid. To resolve this, we first rescale each axis of the continuous embedding independently so that its range spans the bounding box of the super-grid. Let $\tilde{x}_g$ denote the resulting rescaled position of module $g$. We then snap the spectral placement onto the grid by solving the classical \emph{assignment problem}: given the chip coordinates $y_1, \ldots, y_{|G|}$ of the super-grid slots, we find the bijection $\pi : G \to \{1, \ldots, |G|\}$ that minimizes the total squared displacement
\begin{equation}
\sum_g \bigl\lVert \tilde{x}_g - y_{\pi(g)}\bigr\rVert_2^2 .
\label{eq:assignment}
\end{equation}
We solve the assignment problem in $O(|G|^3)$ time with the Hungarian algorithm~\cite{kuhn1955hungarian}. Since the eigenvectors $u_2, u_3$ are only unique up to sign and axis labeling, we run the matching across all eight sign flips and axis swaps of $(u_2, u_3)$ and keep the assignment with the lowest cost. We then feed this layout into HAL. 

Spectral placement produced the best hardware complexity we found for $\llbracket 300,60,14 \rrbracket $, whereas for all other codes the preprocessing from heuristic search gave the lowest hardware complexity. Table~\ref{tab:paper_nonab_hw} summarizes our results. 

\section{Code construction data}
\label{app:construction_data}
In Table~\ref{tab:code_construction_full} we present the base matrices used to construct all the codes used in this work based on their group element index in GAP~\cite{GAP4}. 
\begin{table}[H]
\centering\small
\setlength{\tabcolsep}{5pt}\renewcommand{\arraystretch}{1.3}
\begin{tabular}{@{}l c c c c@{}}
\toprule
\textbf{$\llbracket n,k,d\rrbracket$} & \textbf{group} & \textbf{GAP ID} & $A=[a_0\,\|\,a_1]$ & $B=[b_0\,\|\,b_1]$\\
\midrule
$\llbracket150,30,10\rrbracket$   & $C_5\times S_3$        & $(30,1)$   & $\{0,14,23\}\,\|\,\{0,2,11\}$      & $\{7,20,24\}\,\|\,\{0,2,29\}$\\
$\llbracket200,40,12\rrbracket$   & $C_4\times D_{10}$     & $(40,5)$   & $\{10,21,29\}\,\|\,\{0,17,18\}$    & $\{2,27,38\}\,\|\,\{0,19,21\}$\\
$\llbracket300,60,14\rrbracket$   & $C_{10}\times S_3$     & $(60,11)$  & $\{38,51,54\}\,\|\,\{0,6,45\}$     & $\{25,33,48\}\,\|\,\{0,16,58\}$\\
$\llbracket500,100,16\rrbracket$  & $C_5\rtimes C_{20}$    & $(100,9)$  & $\{19,84,87\}\,\|\,\{0,75,78\}$    & $\{39,45,71\}\,\|\,\{0,7,77\}$\\
$\llbracket540,108,18\rrbracket$  & $C_9\rtimes C_{12}$    & $(108,9)$  & $\{20,35,52\}\,\|\,\{0,36,39\}$    & $\{38,63,104\}\,\|\,\{0,35,94\}$\\
$\llbracket630,126, \leq 20\rrbracket$  & $C_7\rtimes C_{18}$    & $(126,1)$  & $\{50,117,123\}\,\|\,\{0,62,104\}$ & $\{4,39,82\}\,\|\,\{0,67,87\}$\\
$\llbracket780,156, \leq 22\rrbracket$  & $C_{13}\times A_4$     & $(156,13)$ & $\{38,46,88\}\,\|\,\{0,8,59\}$     & $\{13,40,131\}\,\|\,\{0,38,133\}$\\
$\llbracket975,195, \leq 24\rrbracket$  & $C_{13}\rtimes C_{15}$ & $(195,1)$  & $\{112,123,135\}\,\|\,\{0,104,185\}$ & $\{52,56,132\}\,\|\,\{0,62,75\}$\\
\midrule
$\llbracket 300,60, \leq 9\rrbracket$   & $C_{10} \times S_3$  & $(60,11)$  & $\{21,35,46\}\,\|\,\{0,11,59\}$    & $\{9,35,58\}\,\|\,\{0,23,47\}$\\
$\llbracket330,66, \leq12\rrbracket$   & $C_{11}\times S_3$     & $(66,1)$   & $\{0,19,49\}\,\|\,\{0,8,62\}$      & $\{0,14,56\}\,\|\,\{0,35,41\}$\\
$\llbracket600,120, \leq 14\rrbracket$  & $C_5\times S_4$        & $(120,37)$ & $\{1,28,99\}\,\|\,\{0,57,90\}$     & $\{1,60,80\}\,\|\,\{0,33,114\}$\\
$\llbracket600,120, \leq 16\rrbracket$  & $C_5\times S_4$        & $(120,37)$ & $\{16,43,107\}\,\|\,\{0,46,113\}$  & $\{2,46,113\}\,\|\,\{0,61,82\}$\\
$\llbracket840,168, \leq 18\rrbracket$  & $C_7\times S_4$        & $(168,45)$ & $\{1,33,162\}\,\|\,\{0,65,142\}$   & $\{1,57,138\}\,\|\,\{0,41,166\}$\\
\midrule
$\llbracket560,112, \leq 14\rrbracket$  & $C_{28}$ & $(28,2)$ & \multicolumn{2}{c}{$A=B=\left[\begin{smallmatrix}x^{10}{+}x^{11} & x^{26} & x^{2}{+}x^{19} & x^{6}\\[1pt] x^{13} & x^{15}{+}x^{27} & x^{15} & 1{+}x^{10}\end{smallmatrix}\right]$}\\
\bottomrule
\end{tabular}
\caption{Base matrices used to construct the mitten, structured mitten, and other lifted product codes in this work. Groups are identified by their
GAP SmallGroup ID $(\lvert G\rvert,i)$~\cite{GAP4}. For the $1\times2$ mitten codes,
$A,B$ are $1 \times 2$ matrices over $\mathbb{F}_2[G]$. For each group element in the ring entry, the listed number corresponds to the index of the group element in the list returned by \texttt{Elements}$(G)$. The abelian $2\times4$ case with $A=B$ is given as polynomial base matrices in terms of a single cyclic generator $x$.}
\label{tab:code_construction_full}
\end{table}

\end{document}